\documentclass[11pt,letterpaper]{article}
\usepackage{graphicx} 
\usepackage[utf8]{inputenc}
\usepackage[margin=1in]{geometry}
\usepackage{amsmath,amsthm,amssymb}
\usepackage{bbm}
\usepackage{xcolor}
\usepackage{float}
\usepackage{subcaption}
\usepackage{natbib}
\usepackage{bbold}
\usepackage{mathtools}
\usepackage{doi}
\usepackage{tikz}
\usepackage{thm-restate}
\usepackage{algorithm, algpseudocode}
\usepackage{hyperref}
\usepackage[capitalize,noabbrev,nameinlink]{cleveref}
\usepackage{crossreftools}
\usepackage{booktabs}

\pdfstringdefDisableCommands{%
  \let\Cref\crtCref
  \let\cref\crtcref
}

\definecolor{DarkGreen}{rgb}{0.1,0.5,0.1}
\definecolor{DarkRed}{rgb}{0.5,0.1,0.1}
\definecolor{DarkBlue}{rgb}{0.1,0.1,0.5}
\definecolor{Gray}{rgb}{0.2,0.2,0.2}

\hypersetup{
    unicode=false,          
    pdftoolbar=true,        
    pdfmenubar=true,        
    pdffitwindow=false,      
    pdfnewwindow=true,      
    colorlinks=true,       
    linkcolor=DarkBlue,          
    citecolor=DarkGreen,        
    filecolor=DarkRed,      
    urlcolor=DarkBlue,          
    %
    %
    pdftitle={},
    pdfauthor={},
}

\makeatletter
\def\maketag@@@#1{\hbox{\m@th\normalfont\normalsize#1}}
\makeatother

\usepackage{amsmath}
\usepackage{amssymb}
\usepackage{mathtools}
\usepackage{amsthm}
\usepackage{bbm}
\usepackage{makecell}


\theoremstyle{plain}
\newtheorem{theorem}{Theorem}[section]
\newtheorem{proposition}[theorem]{Proposition}
\newtheorem{lemma}[theorem]{Lemma}
\newtheorem{corollary}[theorem]{Corollary}
\theoremstyle{definition}
\newtheorem{definition}{Definition}[section]
\newtheorem{example}{Example}[section]
\newtheorem{assumption}{Assumption}[section]
\theoremstyle{remark}
\newtheorem{remark}{Remark}[section]

\crefname{assumption}{Assump.}{Assumps.}
\crefname{theorem}{Thm.}{Thms.}
\crefname{proposition}{Prop.}{Props.}
\crefname{lemma}{Lem.}{Lems.}
\crefname{equation}{Eq.}{Eqs.}
\crefname{section}{Sec.}{Secs.}
\crefname{appendix}{App.}{Apps.}
\crefname{table}{Tab.}{Tabs.}
\crefname{figure}{Fig.}{Figs.}
\crefname{example}{Ex.}{Exs.}
\crefname{corollary}{Cor.}{Cors.}
\crefname{algorithm}{Alg.}{Algs.}
\crefname{definition}{Def.}{Defs.}

\usepackage[textsize=tiny]{todonotes}

\newcommand{\bbR}{\mathbb{R}}
\newcommand{\bbE}{\mathbb{E}}

\newcommand{\bfII}{\mathbbm{1}}
\newcommand{\bftheta}{\boldsymbol{\theta}}
\newcommand{\hardMin}{\text{PROX}}
\newcommand{\softMinMSE}{\text{PROP}}

\newcommand{\ellmax}{\ell_{\max}}
\newcommand{\ellPartition}{\ell_D}

\DeclareMathOperator*{\argmax}{arg\,max}
\DeclareMathOperator*{\argmin}{arg\,min}

\newcommand{\htheta}{\hat{\theta}}
\newcommand{\hbftheta}{\hat{\bftheta}}
\newcommand{\hbeta}{\hat{\beta}}
\newcommand{\caltheta}{\vartheta}
\newcommand{\calK}{\mathcal{K}}
\newcommand{\calM}{\mathcal{M}}
\newcommand{\calQ}{\mathcal{Q}}
\newcommand{\bfq}{\boldsymbol{q}}
\newcommand{\tbfq}{\tilde{\boldsymbol{q}}}
\newcommand{\tq}{\tilde{q}}
\newcommand{\tk}{\tilde{k}}
\newcommand{\bfw}{\boldsymbol{w}}
\newcommand{\bfx}{\boldsymbol{x}}
\newcommand{\bfy}{\boldsymbol{y}}
\newcommand{\bartheta}{\bar{\theta}}
\newcommand{\tildetheta}{\tilde{\theta}}

\newcommand{\tProbHomoThre}{\underline{t}}
\newcommand{\tLowerProbHete}{t'}
\newcommand{\hthetaMonopoly}{\htheta^{\text{M}}}
\newcommand{\hbfthetaHomo}{\hat{\boldsymbol{\theta}}^{\text{Homo}}}
\newcommand{\hbfthetaHete}{\hat{\boldsymbol{\theta}}^{\text{Hete}}}
\newcommand{\hbfthetaProx}{\hat{\boldsymbol{\theta}}^{\text{Prox}}}
\newcommand{\hthetaHete}{\hat{\theta}^{\text{Hete}}}
\newcommand{\hthetaProx}{\hat{\theta}^{\text{Prox}}}

\newcommand{\one}{{(1)}}
\newcommand{\two}{{(2)}}
\newcommand{\three}{{(3)}}
\newcommand{\four}{{(4)}}
\newcommand{\LambdaMin}{\Lambda_{\min}}
\newcommand{\LambdaSum}{\Lambda_{\text{sum}}}
\newcommand{\alphaSum}{\alpha_{\text{sum}}}

\newcommand{\dMax}{d_{\max}}
\newcommand{\QUpper}{{(t, \beta)}}
\newcommand{\CProbHete}{\bar{C}}
\newcommand{\tProbHete}{\underline{t}_0}

\newcommand{\floor}[1]{\left\lfloor #1 \right\rfloor}
\newcommand{\ceil}[1]{\left\lceil #1 \right\rceil}

\title{Heterogeneous Data Game: Characterizing the Model Competition Across Multiple Data Sources}
\author{Renzhe Xu\thanks{Key Laboratory of Interdisciplinary Research of Computation and Economics, Institute for Theoretical Computer Science, Shanghai University of Finance and Economics, China. Email: \texttt{xurenzhe@sufe.edu.cn}.} \and Kang Wang\thanks{College of Management and Economics, Tianjin University, China. Email: \texttt{wangkang330@tju.edu.cn}.} \and Bo Li\thanks{School of Economics and Management, Tsinghua University, China. Email: \texttt{libo@sem.tsinghua.edu.cn}.}}

\date{}

\begin{document}

\maketitle

\begin{abstract}
Data heterogeneity across multiple sources is common in real-world machine learning (ML) settings. Although many methods focus on enabling a single model to handle diverse data, real-world markets often comprise multiple competing ML providers. In this paper, we propose a game-theoretic framework---the \emph{Heterogeneous Data Game}---to analyze how such providers compete across heterogeneous data sources. We investigate the resulting pure Nash equilibria (PNE), showing that they can be non-existent, homogeneous (all providers converge on the same model), or heterogeneous (providers specialize in distinct data sources). Our analysis spans monopolistic, duopolistic, and more general markets, illustrating how factors such as the ``temperature'' of data-source choice models and the dominance of certain data sources shape equilibrium outcomes. We offer theoretical insights into both homogeneous and heterogeneous PNEs, guiding regulatory policies and practical strategies for competitive ML marketplaces.
\end{abstract}

\section{Introduction}
Data heterogeneity is commonplace in real-world machine learning (ML) applications, where data often originate from multiple sources with distinct distributions~\citep{li2017deeper,hendrycks2020pretrained,gulrajani2021search,liu2023measure}. For example, in health care, patient data may be gathered from different hospitals, each serving varied demographics and disease prevalences. Such heterogeneous settings arise across diverse fields, including the digital economy and scientific research.

Much of the existing literature on heterogeneous data focuses on devising a single ML method that performs robustly across all data sources~\citep{arjovsky2019invariant,kuang2020stable,liu2021towards,duchi2021learning}. However, real-world markets typically have multiple ML providers~\citep{black2022model,jagadeesan2023improved}, each aiming to optimize its performance relative to others. For instance, competing diagnostic tool providers offer models to hospitals, which then choose a provider based on local performance criteria. This competitive interplay differs significantly from single-provider frameworks~\citep{nisan2027algorithmic} and can lead to market dynamics unaddressed by previous approaches.

Several works~\citep{ben2017best,ben2019regression,feng2022bias,jagadeesan2023improved,iyer2024competitive,einav2025market} have analyzed competition among multiple ML model providers, examining Nash equilibria, social welfare, and agents' strategies under competition. However, these studies mainly focus on a single data distribution and do not account for heterogeneity across multiple sources.

In this paper, we develop a game-theoretic framework to study multiple providers competing over heterogeneous data sources. We then analyze the resulting pure Nash equilibria to uncover how data heterogeneity and competitive forces shape providers’ strategies.

\subsection{Overview of the Heterogeneous Data Game}
We introduce the \emph{Heterogeneous Data Game} to model the competition among multiple ML model providers across diverse data sources. Consider $K$ distinct data sources, each associated with a weight $w_k$ representing its proportion, and joint distributions $P_k(x, y)$ over features $x$ and labels $y$. In this market, each of the $N$ ML model providers selects a model parameterized by $\htheta_n$. Following previous works on model and platform competition~\citep{jagadeesan2023improved,drezner2024competitive}, the utility of each model provider is determined by its market share across the different data sources. Specifically, each data source $k$ selects an ML model based on the observed losses of available models. A provider's utility is then the sum of $w_k$ from all data sources that adopt its model. Consequently, each provider strategically chooses $\htheta_n$ to maximize its utility.

Motivated by linear models, we represent each data source with two statistics: a ground-truth parameter $\theta_k$ for $P_k(y | x)$ and a covariance matrix $\Sigma_k$ for $P_k(x)$. From a distribution-shift perspective, variations in $\theta_k$ and $\Sigma_k$ across sources correspond to concept shift and covariate shift, respectively---two common types of distribution shifts in practice~\citep{liu2021towards}. Additionally, the loss of a model $\htheta_n$ on data source $k$ is calculated as the squared Mahalanobis distance, $ (\htheta_n - \theta_k)^\top \Sigma_k(\htheta_n - \theta_k)$, corresponding to the mean squared error (MSE) in linear model settings.

For data sources' choice models, we adopt two standard frameworks~\citep{drezner2024competitive}: the \emph{proximity choice model}~\citep{hotelling1929stability,plastria2001static,ahn2004competitive}, where each data source selects the provider with the lowest loss (with ties broken uniformly), and the \emph{probability choice model}~\citep{wilson1975retailers,hodgson1981location,bell1998determining}, where data sources may choose sub-optimal models based on a logit framework~\citep{train2009discrete}, controlled by a temperature parameter $t$.

\subsection{Overview of the Results}
An overview of these results is presented in \cref{tab:overview}. We investigate the pure Nash equilibria (PNE) of the Heterogeneous Data Game and identify three patterns of PNEs across different game setups: (1) \emph{Non-existence of PNE.} In this case, no PNE exists, leading to an unstable ML model market. (2) \emph{Homogeneous PNE.} Here, all model providers independently train their ML models to minimize the $w_k$-weighted loss across all data sources. As a result, this type of PNE leads to the homogeneity of models available in the market. (3) \emph{Heterogeneous PNE.} In this scenario, model providers offer different ML models. Most specialize in a single data source, typically adopting the ground-truth parameter $\theta_k$ of a specific data source $k$.


\paragraph{Monopoly ($N=1$).} In this setting, a single provider can achieve the same utility with any ML model parameter. However, it typically chooses the parameter that minimizes the weighted loss across all data sources, denoted by $\hthetaMonopoly$.

\paragraph{Duopoly ($N=2$).} Under the proximity choice model, we specify conditions for the existence of a PNE and show that, if a PNE exists, both providers choose the ground-truth parameter of the data source with the maximal weight. In contrast, under the probability choice model, any PNE must be homogeneous, with both providers choosing $\hthetaMonopoly$, the parameter that minimizes the weighted loss across sources.

\begin{table}[t]
    \centering
    \resizebox{\linewidth}{!}{
    \begin{tabular}{c|c|c}
        \toprule
        & Heterogeneous Data Game under the \emph{Proximity} Choice Model & Heterogeneous Data Game under the \emph{Probability} Choice Model \\
        \midrule
        \makecell[c]{Monopoly\\($N = 1$)} & \multicolumn{2}{c}{The single model chooses the parameter given by \cref{eq:htheta-monopoly}.} \\
        \midrule
        \makecell[c]{Duopoly\\($N = 2$)} & \makecell[l]{Equivalent condition for PNE existence (\cref{thrm:proximity-duopoly}) \\ PNE must be heterogeneous, if it exists (\cref{thrm:proximity-duopoly})} & \makecell[l]{Equivalent condition for PNE existence (\cref{thrm:probability-duopoly}) \\ PNE must be homogeneous, if it exists (\cref{thrm:probability-duopoly})} \\
        \midrule
        $N > 2$ & \makecell[l]{Sufficient condition for PNE existence (\cref{thrm:proximity-PNE-N-middle,thrm:proximity-PNE-N-large}) \\ PNE must be heterogeneous, if it exists (\cref{thrm:proximity-heterogeneity})} & \makecell[l]{Equivalent condition for homogeneous PNE existence (\cref{thrm:probability-homogeneous-PNE}) \\ Sufficient condition for heterogeneous PNE existence (\cref{thrm:probability-heterogeneous-PNE}) \\ Example when both types of PNE exist simultaneously (\cref{example:probability-both-existence})} \\
        \bottomrule
    \end{tabular}}
    \caption{Overview of the results.}
    \label{tab:overview}
\end{table}

\paragraph{More than two providers ($N>2$).} 
Under the proximity choice model, if a PNE exists, providers tend to pick different models, leading to a heterogeneous PNE. Moreover, when a few data sources have significantly larger weights~\citep{kairouz2021advances,li2020federated}, a PNE exists if $N$ lies within a certain range, and providers fully specialize in those dominant sources.  
In contrast, under the probability choice model, both homogeneous and heterogeneous PNE may arise, depending on the temperature $t$. Specifically, when $t$ is small, indicating that data sources are highly unlikely to choose sub-optimal models, only a heterogeneous PNE may exist. Conversely, when $t$ is large, meaning data sources are more likely to uniformly choose among all available models, only a homogeneous PNE may exist. We also present an example where both types of PNE exist simultaneously.

Our theoretical findings yield several insights for multi-provider ML markets. First, they illuminate how the interplay of data heterogeneity, choice models, and competition can produce either homogeneous or heterogeneous equilibria, thereby influencing the variety of models offered. Second, they indicate that when a few data sources dominate, providers tend to specialize in those sources, potentially overlooking smaller ones; this outcome calls for appropriate incentive mechanisms. Finally, market parameters---such as the temperature in the probability choice model---can be adjusted by market regulators to foster either heterogeneous model offerings or convergence toward homogeneous solutions. Taken together, these insights can inform both regulatory policy and practical strategies for building competitive ML marketplaces.

\section{Related Works} \label{sect:related-works}
\paragraph{Data heterogeneity.}
In real-world scenarios, data often exhibit significant heterogeneity due to variations in time, space, and population during the data collection process~\citep{liu2023measure}. The concept of data heterogeneity has been extensively studied across multiple disciplines, including ecology~\citep{li1995definition}, economics~\citep{rosenbaum2005heterogeneity}, and computer science~\citep{wang2019heterogeneous}. This work focuses on the implications of data heterogeneity in machine learning settings. In this context, considerable research has aimed to ensure that a single model performs robustly across diverse test environments~\citep{liu2021towards}, leading to a range of effective methodological frameworks, including causal learning~\citep{buhlmann2020invariance,peters2016causal}, invariant learning~\citep{arjovsky2019invariant,liu2021heterogeneous,koyama2020invariance}, stable learning~\citep{xu2022theoretical,kuang2020stable,yu2023stable}, and distributionally robust optimization~\citep{sinha2018certifying,duchi2021learning,liu2022distributionally}. However, these existing approaches largely overlook the presence of multiple competing model providers and the strategic interactions that arise in such settings.

\paragraph{Competition in machine learning.}
Our work extends prior research on competition among machine learning model providers under homogeneous data settings~\citep{ben2017best,ben2019regression,feng2022bias,jagadeesan2023improved,einav2025market}. Specifically, \citet{ben2017best,ben2019regression} studied best-response dynamics and algorithmic methods for finding pure Nash equilibria (PNE) in regression tasks, while \citet{einav2025market} extended these insights to classification. \citet{feng2022bias} explored the bias–variance trade-off in competitive environments, showing that competing agents tend to favor variance-induced error over bias. \citet{jagadeesan2023improved} demonstrated that increasing model size does not necessarily improve social welfare. In contrast to these studies, we consider \emph{heterogeneous} data sources with distinct distributions, uncovering novel equilibrium structures and establishing new conditions for their existence.

\paragraph{Competitive location models.} 
Our framework is technically related to competitive location models~\citep{hotelling1929stability,shaked1975non,d1979hotelling,eiselt1993competitive,plastria2001static,ahn2004competitive}, as comprehensively surveyed by \citet{drezner2024competitive}. However, most existing models focus on low-dimensional spaces or networks with uniform distance metrics, largely due to two factors: (1) applications in urban planning naturally align with one-dimensional~\citep{hotelling1929stability,d1979hotelling}, two-dimensional~\citep{tsai2005spatial,shaked1975non,lederer1986competition}, or network-based~\citep{eiselt1991locational,eiselt1993existence,dorta2005spatial} formulations; and (2) many models incorporate additional variables such as price or quantity, which reduce tractability and restrict attention to small-scale settings. While a few studies investigate high-dimensional competition, they primarily address quantity competition~\citep{anderson1990spatial} or pricing~\citep{bester1989noncooperative}, rather than spatial or parameter-based competition. By contrast, our setting considers source-specific distance metrics arising from distributional shifts, along with high-dimensional strategy spaces driven by a large number of data sources and model parameters. These distinctions introduce substantial challenges for theoretical analysis.


\paragraph{Other competitive frameworks.}
Finally, our work connects to competition scenarios in targeted advertising~\citep{iyer2024competitive,iyer2024precision}, online marketplaces~\citep{liu2020competing,hron2022modeling,jagadeesan2023competition,yao2024human,yao2024unveiling}, platform competition~\citep{jullien2021economics,calvano2021market}, and broader game-theoretic analyses~\citep{immorlica2011dueling}. Unlike these studies, we highlight how heterogeneous data distributions shape market equilibria among multiple ML model providers.
\section{Heterogeneous Data Game (HD-Game)} \label{sect:game}
\subsection{Notations}
We begin by introducing several essential notations. For a positive integer $N$, let $[N]$ denote the set $\{1, 2, \dots, N\}$. The $N$-dimensional simplex, denoted by $\Delta_N$, is defined as $\Delta_N = \{(x_1, x_2, \dots, x_N) : \sum_{i=1}^N x_i = 1 \text{ and } x_i \geq 0, \forall i \in [N]\}$. For any square matrix $A$, we use $A \succ 0$ to indicate that $A$ is positive definite, and $A \succeq 0$ to indicate that $A$ is positive semi-definite. Furthermore, given a positive definite square matrix $\Sigma \succ 0$, the Mahalanobis distance between two vectors $x$ and $y$ is defined as $d_M(x, y; \Sigma) = \sqrt{(x - y)^\top \Sigma^{-1} (x - y)}$.

\subsection{Game Setup}
\paragraph{Heterogeneous data.} Consider a setting with $K \ge 2$ data sources. Each source $k$ has a true model parameter $\theta_k \in \mathbb{R}^D$ and a covariance matrix $\Sigma_k$. These two terms capture concept shift (via $\theta_k$) and covariate shift (via $\Sigma_k$), respectively~\citep{liu2021towards}, as detailed in \cref{sect:example}. We further assume $\theta_k \neq \theta_{k'}$ for all $k \neq k'$, since any two sources with identical parameters can be merged into one.

For a model parameterized by $\theta \in \mathbb{R}^D$, the loss associated with data source $k$ is defined as the squared Mahalanobis distance between $\theta$ and $\theta_k$ with $\Sigma_k^{-1}$, i.e., $d_M^2(\theta, \theta_k; \Sigma_k^{-1})$. As shown in \cref{sect:example}, the Mahalanobis distance could correspond to the mean square error (MSE) of $\theta$ on  data source $k$ in linear model settings, and it can measure the error caused by both concept shift and covariate shift.

Additionally, each data source $k$ is assigned a weight $w_k$, representing its proportion within the total data. Without loss of generality, we assume the weights are ordered and $w_1 > w_2 > \dots > w_K > 0$, with $\sum_{k=1}^K w_k = 1$. Let $\bfw = (w_1, w_2, \dots, w_K)$ denote the vector of weights.

\paragraph{Model providers.} There are $N$ model providers (players)\footnote{We use the terms ``model provider'' and ``player'' interchangeably.} that need to compete the models in these $K$ data sources. Each player $n \in [N]$ needs to choose one model $\htheta_n \in \bbR^D$, and the loss of player $n$ for data source $k$, denoted as $\ell_{n,k}$, is
\begin{equation} \label{eq:ell-n-k}
\ell_{n,k} = d_M^2(\htheta_n, \theta_k; \Sigma_k^{-1}) = (\htheta_n - \theta_k)^\top\Sigma_k(\htheta_n - \theta_k).
\end{equation}

\paragraph{Data sources' choice model.} The data sources will choose which model to deploy based on the losses $\ell_{n,k}$. Formally, let $g: \bbR^N \rightarrow \Delta_N$ be the choice model. For a data source $k$, given $N$ losses $\ell_{1, k}, \ell_{2, k}, \dots, \ell_{N, k}$, the function $g(\ell_{1,k}, \dots, \ell_{N, k})$ will output an $N$-dimensional vector, and its $n$-th element, denoted as $g_n(\ell_{1, k}, \dots, \ell_{N,k})$, is the probability of choosing the $n$-th model. Following previous works~\citep{jagadeesan2023improved,drezner2024competitive}, we consider two types of choice models for estimating the market share of different participants:
\begin{itemize}
    \item \textbf{Proximity choice model}. Here, each data source chooses the model with the least loss. When several models exhibit the same loss, the data source will randomly choose one model with equal probabilities. Formally,
    \begin{equation}\label{eq:proximity-model}
        g_n^\hardMin(\ell_{1, k}, \dots, \ell_{N,k}) =
        \begin{cases}
            0, & \text{if } \exists j \in [N], \ell_{j,k} < \ell_{n,k} \\
            \frac{1}{\left|\left\{j \in [N]: \ell_{j,k} = \ell_{n,k}\right\}\right|}, & \text{otherwise}.
        \end{cases}
    \end{equation}
    \item \textbf{Probability choice model}. Following \citep{jagadeesan2023improved}, we assume that data sources may noisily choose the models based on the following logit model~\citep{train2009discrete},
    \begin{equation} \label{eq:probability-model-mse}
        g_n^\softMinMSE(\ell_{1,k}, \dots, \ell_{N,k}) = \frac{\exp(-\ell_{n,k}/t)}{\sum_{j=1}^N\exp(-\ell_{j,k}/t)}.
    \end{equation}
    with a temperature parameter $t > 0$. Intuitively, the parameter $t$ controls the willingness for each data source to choose sub-optimal models. When $t \rightarrow 0$, this model will become the proximity model as shown in \cref{eq:proximity-model}. By contrast, when $t \rightarrow \infty$, all models become indifferent and the data source tends to choose models randomly.
\end{itemize}

\paragraph{The Heterogeneous Data Game.}
Given a strategy profile $\hbftheta = (\htheta_1, \htheta_2, \dots, \htheta_N)$, the utility of player $n$ is
\begin{equation} \label{eq:player-utility}
    u_n(\hbftheta) = \sum_{k=1}^K w_k g_n(\ell_{1, k}, \dots, \ell_{N,k}).
\end{equation}
We note that for each $n \in [N]$, the term $\ell_{n,k}$ is implicitly a function of $\htheta_n$, as defined in \cref{eq:ell-n-k}. For any $\theta \in \mathbb{R}^D$, we use $(\theta, \hbftheta_{-n})$ to denote the strategy profile in which player $n$ deviates from their original strategy $\htheta_n$ to a new strategy $\theta \in \mathbb{R}^D$. We focus on the properties of the Pure Nash Equilibrium (PNE), formally defined as follows.

\begin{definition}[Pure Nash Equilibrium (PNE)]
    A strategy profile $\hbftheta = (\htheta_1, \dots, \htheta_N)$ is a pure Nash equilibrium if, for all $n \in [N]$ and $\theta \in \mathbb{R}^D$, $u_n(\theta, \hbftheta_{-n}) \le u_n(\hbftheta)$.
\end{definition}

In practice, model providers incur significant costs when retraining multiple models and typically deploy a single model rather than adopting a mixed strategy. Consequently, it is more realistic to analyze the pure Nash equilibrium (PNE), where each provider commits to a specific model, rather than the mixed Nash equilibrium (MNE), which assumes randomized selection among multiple models.

Note that the utility function depends on whether the choice model $g_n(\cdot)$ in \cref{eq:player-utility} is set as $g_n^\hardMin$ or $g_n^\softMinMSE$, as defined in \cref{eq:proximity-model,eq:probability-model-mse}. Consequently, different choice models yield different games and, therefore, different PNEs. For clarity, we refer to the heterogeneous data game under the proximity choice model as \textbf{HD-Game-Proximity} and under the probability choice model as \textbf{HD-Game-Probability} in the remainder of this paper.

\subsection{Motivating Example --- Linear Model} \label{sect:example}
Consider a linear model setting with $K$ data sources, each associated with a distribution $P_k(x, y)$ for $k \in [K]$, where $x \in \mathbb{R}^D$ denotes the feature vector and $y \in \mathbb{R}$ is the corresponding label. Assume that $x$ is normalized such that $\mathbb{E}_{P_k}[x] = 0$. The covariance matrix under $P_k$ is then given by $\Sigma_k = \mathbb{E}_{P_k}[xx^\top] \succ 0$. Furthermore, assume that the conditional distribution $P_k(y \mid x)$ follows a linear model with parameter $\beta_k$, perturbed by Gaussian noise $\varepsilon \sim \mathcal{N}(0, \sigma_k^2)$; that is, $y \mid x \sim \mathcal{N}(\beta_k^\top x, \sigma_k^2)$.

Consider $N$ players, each selecting a parameter $\hbeta_n$. The MSE of player $n$ on data source $k$ is given by

\begin{equation} \label{eq:example-linear-loss}
    \bbE_{P_k}\left[\left(\hbeta_n^\top x - y\right)^2\right] = \left(\hbeta_n - \beta_k\right)^\top\Sigma_k\left(\hbeta_n - \beta_k\right) + \sigma_k^2 = d_M^2\left(\hbeta_n, \beta_k; \Sigma_k^{-1}\right) + \sigma_k^2.
\end{equation}

It is easy to verify that the noise term in \cref{eq:example-linear-loss} does not affect the choice models in \cref{eq:proximity-model,eq:probability-model-mse}. Consequently, we confirm that using the squared Mahalanobis distance as a loss function aligns with the mean squared error (MSE), validating that the game effectively characterizes model provider competition in linear settings.

Moreover, linear probing—where only the final linear layer is updated—is a widely used technique for adapting pretrained models to downstream tasks, particularly when fine-tuning the full model is computationally expensive or prone to overfitting~\citep{kumar2022fine}. Since features $x$ can represent either raw inputs or embeddings from pretrained models, our framework also extends to scenarios where model providers employ the linear probing technique.

\section{Basic Properties and Assumptions} \label{sect:properties}
\subsection{Basic Properties of HD-Game}
We first characterize the possible strategy set in equilibria for each player.
\begin{proposition} \label{thrm:PNE-strategy-set}
    Denote the set $\caltheta$ as follows:
    \begin{equation} \label{eq:caltheta}
        \caltheta \triangleq \left\{\bartheta(\bfq): \bfq = (q_1, q_2, \dots, q_K) \in \Delta_K\right\},
    \end{equation}
    where
    \begin{equation} \label{eq:bartheta-q}
        \bartheta(\bfq) \triangleq \argmin_{\theta}\sum_{k=1}^K q_kd_M^2\left(\theta, \theta_k; \Sigma_k^{-1}\right) = \left(\sum_{k=1}^Kq_k\Sigma_k\right)^{-1}\left(\sum_{k=1}^Kq_k\Sigma_k\theta_k\right).
    \end{equation}
    Then, the following holds:
    \begin{itemize}
        \item In HD-Game-Proximity, if a PNE exists, there must be one where every player's strategy belongs to $\caltheta$.
        \item In HD-Game-Probability, any PNE necessarily requires all players' strategies to lie within $\caltheta$.
    \end{itemize}
\end{proposition}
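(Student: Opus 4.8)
The plan is to identify $\caltheta$ with the Pareto frontier of the vector-valued loss map $\theta \mapsto \bigl(\ell_1(\theta), \dots, \ell_K(\theta)\bigr)$, where I write $\ell_k(\theta) := (\theta - \theta_k)^\top \Sigma_k (\theta - \theta_k)$, and then to combine two elementary observations. The first is \emph{monotonicity}: for both $g^{\hardMin}$ and $g^{\softMinMSE}$, if player $n$'s loss at a source $k$ is decreased while the other players' losses at source $k$ are held fixed, then $g_n$ weakly increases and every $g_m$ with $m \ne n$ weakly decreases; for $g^{\softMinMSE}$ the increase of $g_n$ is strict whenever the decrease is strict. For $g^{\softMinMSE}$ this is immediate from the logit formula; for $g^{\hardMin}$ it is a short case split on whether $\ell_{m,k}$ attains $\min_j \ell_{j,k}$. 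The second is the \emph{conservation law} $\sum_{n=1}^N u_n(\hbftheta) \equiv 1$, valid since $\sum_n g_n \equiv 1$ and $\sum_k w_k = 1$, regardless of the choice model.

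The key lemma I would prove is: if $\theta_0 \notin \caltheta$, then there is $\theta^\star \in \caltheta$ with $\ell_k(\theta^\star) \le \ell_k(\theta_0)$ for every $k \in [K]$ and $\ell_{k_0}(\theta^\star) < \ell_{k_0}(\theta_0)$ for at least one $k_0$. Take $\theta^\star$ to be the unique minimiser (unique by strict convexity, since each $\Sigma_k \succ 0$) of $\sum_k \ell_k$ over the compact convex set $C = \{\theta : \ell_k(\theta) \le \ell_k(\theta_0)\ \forall k\}$, an intersection of genuine ellipsoids (note $\theta_0 \ne \theta_k$ for all $k$, else $\theta_0 = \bartheta(e_k) \in \caltheta$). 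Feasibility of $\theta^\star$ gives the weak domination for free. For membership in $\caltheta$, note that $\theta_0 \notin \caltheta$ is exactly the statement $0 \notin \mathrm{conv}\{\Sigma_k(\theta_0 - \theta_k) : k \in [K]\}$ (this just unpacks the definition of $\bartheta(\bfq)$), so a strictly separating direction $v$ satisfies $\langle \Sigma_k(\theta_0 - \theta_k), v\rangle < 0$ for all $k$, whence $\theta_0 + sv \in C$ strictly for small $s > 0$; thus Slater's condition holds and the KKT conditions yield multipliers $\mu_k \ge 0$ with $\sum_k (1 + \mu_k)\,\Sigma_k(\theta^\star - \theta_k) = 0$, that is, $\theta^\star = \bartheta(\bfq)$ for $q_k \propto 1 + \mu_k > 0$, so $\theta^\star \in \caltheta$. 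Finally $\theta^\star \ne \theta_0$ (otherwise $\theta_0 \in \caltheta$), so uniqueness of the minimiser over $C \ni \theta_0$ gives $\sum_k \ell_k(\theta^\star) < \sum_k \ell_k(\theta_0)$, forcing strictness in some coordinate. The degenerate case $C = \{\theta_0\}$ cannot arise precisely because $\theta_0 \notin \caltheta$.

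With these in hand, the two cases follow. For HD-Game-Probability: if a PNE had some $\htheta_n \notin \caltheta$, replacing $\htheta_n$ by the $\theta^\star$ of the lemma would strictly increase $u_n$ (a strict gain at source $k_0$, of weight $w_{k_0} > 0$, and weak gains elsewhere, by monotonicity), contradicting the equilibrium condition; hence every equilibrium strategy lies in $\caltheta$. For HD-Game-Proximity: given a PNE $\hbftheta$ and a player $n$ with $\htheta_n \notin \caltheta$, set $\hbftheta' = (\theta^\star, \hbftheta_{-n})$. Player $n$'s best-response value is unchanged (its opponents are unchanged) and $u_n$ weakly rose, so $u_n(\hbftheta')$ equals that value and $\theta^\star$ is a best response. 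By monotonicity each $u_m(\hbftheta')$, $m \ne n$, weakly dropped; but $\sum_{m \ne n} u_m(\hbftheta') = 1 - u_n(\hbftheta') = 1 - u_n(\hbftheta) = \sum_{m \ne n} u_m(\hbftheta)$, so with the sum pinned and every term weakly smaller, each $u_m(\hbftheta')$ is in fact unchanged. Moreover each opponent's whole deviation landscape $\theta \mapsto u_m(\theta, \hbftheta'_{-m})$ is pointwise $\le$ the old one (monotonicity again), so $\max_\theta u_m(\theta, \hbftheta'_{-m}) \le u_m(\hbftheta) = u_m(\hbftheta')$, and $\htheta_m$ remains a best response. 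Thus $\hbftheta'$ is a PNE with one more coordinate in $\caltheta$; iterating over the finitely many players produces a PNE with all strategies in $\caltheta$.

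I expect the main obstacle to be the proximity case's single-coordinate replacement: although it looks routine, it is not a priori clear that the perturbed profile stays an equilibrium, since changing $\htheta_n$ perturbs every opponent's payoff landscape. The resolution is exactly the conservation law $\sum_n u_n \equiv 1$, which upgrades ``all opponents' utilities weakly decrease'' into ``all opponents' utilities are unchanged,'' after which best-responsiveness is easy to re-verify. The other place requiring care is the lemma's Slater/KKT step together with the identification $\caltheta = \{\theta : 0 \in \mathrm{conv}\{\Sigma_k(\theta - \theta_k)\}\}$ and the exclusion of the degenerate intersection $C = \{\theta_0\}$.
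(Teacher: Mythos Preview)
Your proposal is correct and follows the same overall architecture as the paper: identify $\caltheta$ with the set of weighted minimisers, show that any $\theta_0 \notin \caltheta$ is Pareto-dominated by some $\theta^\star \in \caltheta$, and then replace offending coordinates one at a time.

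The interesting difference is in the proximity case. The paper's Lemma asserts \emph{strict} domination in every coordinate, and its replacement argument leans on this: from $\ell'_{n,k} < \ell_{n,k}$ at every $k$ together with $u_n$ being unchanged, it deduces that $g_n^{\hardMin}$ was already zero at every source, so the swap is invisible to the other players. Your lemma only gives weak domination with one strict coordinate, but your replacement step compensates by invoking the conservation law $\sum_n u_n \equiv 1$: once you know every opponent's utility weakly fell and the sum is pinned, each opponent's utility is in fact unchanged, and monotonicity of their deviation landscapes finishes the best-response check. This is a genuinely different (and slightly more robust) mechanism; in particular, it closes a small gap in the paper, whose lemma proof (via the cited Pareto references) actually delivers only weak domination with one strict inequality, not the all-strict version the paper's proximity argument consumes. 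Your self-contained KKT/Slater derivation of the domination lemma is also a nice touch compared to the paper's appeal to external references.
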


\begin{remark}
    \cref{thrm:PNE-strategy-set} suggests that players will generally choose strategies from the set $\caltheta$, which corresponds to minimizing a weighted loss over data sources, with each player determining their respective weights.
\end{remark}

\subsection{Assumptions}
We introduce the following regularity assumption.
\begin{assumption} \label{assum:linearly-independent}
    For any $\theta \in \bbR^D$, there is at most one $\bfq \in \Delta_K$ such that $\bartheta(\bfq) = \theta$.
\end{assumption}
\begin{remark}
    When $\Sigma_1 = \Sigma_2 = \cdots = \Sigma_K$, this assumption reduces to requiring that $\theta_1, \theta_2, \dots, \theta_K$ be affinely independent. For general settings, the number of parameters $D$ is typically large, whereas the number of data sources $K$ is relatively small, often satisfying $K \leq D$. Consequently, this assumption is generally reasonable in real-world settings.
\end{remark}

\begin{assumption} \label{assum:different-distance}
    For all $i, j, k \in [K]$ with $i \neq j$, $d_M(\theta_i, \theta_k; \Sigma_k^{-1}) \neq d_M(\theta_j, \theta_k; \Sigma_k^{-1})$.
\end{assumption}

\begin{remark}
    This assumption ensures that no two ground-truth models $\theta_i$ and $\theta_j$ from distinct data sources have identical losses on a given data source $k$. Since different data sources typically have distinct ground-truth models, this condition is generally satisfied in practice.
\end{remark}

\paragraph{Problem-dependent constants.}
We introduce the following constant based on the game's parameters:
\begin{equation} \label{eq:ellmax}
    \ellmax \triangleq \max_{\theta \in \caltheta, k \in [K]} d_M^2(\theta, \theta_k; \Sigma_k^{-1}),
\end{equation}
which represents the maximum possible loss for any strategy in $\caltheta$. Intuitively, $\ellmax$ quantifies the degree of heterogeneity among data sources. A small $\ellmax$ indicates that any model $\theta \in \caltheta$ incurs relatively low loss across all data sources, suggesting minimal variation among them. Conversely, a large $\ellmax$ implies greater difficulty in finding a single model that performs well across all sources, representing higher data heterogeneity.

\section{Pure Nash Equilibria Analysis} \label{sect:pne}
In this section, we formally characterize the pure Nash equilibria (PNE) of our Heterogeneous Data Game under the different data-source choice models introduced earlier. As previewed in the introduction, we establish three possible outcomes, each governed by distinct sufficient conditions:

\begin{enumerate}
    \item \textbf{Non-existence of PNE.} 
    In certain settings, no PNE arises, indicating that the model market remains fundamentally unstable. 
    In other words, providers continually adjust their strategies in response to each other, preventing any long-term equilibrium.
    \item \textbf{Homogeneous PNE.} 
    Here, a stable equilibrium exists in which all model providers converge on the same parameter (e.g., the one minimizing the $w_k$-weighted loss across sources). 
    This outcome yields a market dominated by essentially one ``universal'' model.
    \item \textbf{Heterogeneous PNE.} 
    Here, model providers differentiate themselves by specializing in distinct data sources. 
    Typically, each provider adopts the ground-truth parameter $\theta_k$ of one source, resulting in a diverse array of models.
\end{enumerate}

We observe that the concepts of ``homogeneous PNE'' and ``heterogeneous PNE'' are analogous to the ideas of ``minimal'' and ``maximal'' differentiation in location theory~\citep{drezner2024competitive}. However, we maintain the terms ``homogeneous'' and ``heterogeneous'' because our distance metric differs across data sources.

Below, we present the theoretical results for each outcome in the contexts of monopoly (\cref{sect:monopoly}), duopoly (\cref{sect:duopoly}), and general multi-provider markets (\cref{sect:more-general}).

\subsection{Monopoly Setting} \label{sect:monopoly}
When $N = 1$, the model choice is arbitrary, as there is no competition. However, the model provider typically selects a strategy that minimizes the overall loss across all data sources. Consequently, the chosen strategy, denoted as $\hthetaMonopoly$, is given by:
\begin{equation} \label{eq:htheta-monopoly}
    \hthetaMonopoly \triangleq \bartheta(\bfw) = \argmin_{\theta} \sum_{k=1}^K w_k d_M^2\left(\theta, \theta_k; \Sigma_k^{-1}\right).
\end{equation}

\subsection{Duopoly Setting} \label{sect:duopoly}
In this subsection, we consider the duopoly setting where there are only $2$ model providers in the market.
\subsubsection{HD-Game-Proximity}
\begin{theorem} \label{thrm:proximity-duopoly}
    Consider HD-Game-Proximity with $N = 2$, and suppose \cref{assum:linearly-independent} holds. Then:
    \begin{enumerate}
        \item If $w_1 < 0.5$, a PNE does not exist.
        \item If $w_1 \geq 0.5$, a PNE exists, and $\hbftheta = (\theta_1, \theta_1)$ is a PNE. Moreover, if $w_1 > 0.5$, $(\theta_1, \theta_1)$ is the unique PNE.
    \end{enumerate}
\end{theorem}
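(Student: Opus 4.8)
The plan is to reduce the whole statement to a single deviation-construction lemma. Write $f_k(\theta)=d_M^2(\theta,\theta_k;\Sigma_k^{-1})=(\theta-\theta_k)^\top\Sigma_k(\theta-\theta_k)$ for the loss on source $k$ and $\gamma_k(\theta)=\Sigma_k(\theta-\theta_k)$ for half of its gradient. I would first record that \emph{in any PNE $\hbftheta=(\htheta_1,\htheta_2)$ one has $u_1(\hbftheta)=u_2(\hbftheta)=\tfrac12$}: under the proximity model each source contributes its full weight split between the two players, so $u_1+u_2=1$; and since any player can deviate to the opponent's current model---making the profile symmetric and every source a tie---each player can always secure $\tfrac12$, hence $u_1,u_2\ge\tfrac12$.

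The engine of the argument is the claim: \emph{if $\hbftheta$ is a PNE then for each $i$ we have $\htheta_i\in\caltheta$, and writing $\htheta_i=\bartheta(\bfq^{(i)})$ for the---by \cref{assum:linearly-independent} unique---$\bfq^{(i)}\in\Delta_K$, every $m$ with $q^{(i)}_m>0$ satisfies $w_m\ge\tfrac12$.} To establish it I would let player $j\ne i$ deviate to $\htheta_i+\epsilon v$. Because $f_k(\htheta_i+\epsilon v)-f_k(\htheta_i)=2\epsilon\langle\gamma_k(\htheta_i),v\rangle+\epsilon^2 v^\top\Sigma_k v$ and $\Sigma_k\succ0$, for every small enough $\epsilon>0$ player $j$ wins source $k$ outright whenever $\langle\gamma_k(\htheta_i),v\rangle<0$; so if there is a direction $v$ with $\langle\gamma_k(\htheta_i),v\rangle<0$ for all $k$ outside some set $T$, then player $j$ can grab utility $\ge\sum_{k\notin T}w_k$. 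By strict separation, such a $v$ exists as soon as $0\notin\mathrm{conv}\{\gamma_k(\htheta_i):k\notin T\}$ and none of these vectors vanishes. This is exactly where \cref{assum:linearly-independent} enters: a vanishing convex combination $\sum_{k\in S}\lambda_k\gamma_k(\htheta_i)=0$ ($\lambda$ a probability vector supported on $S\subseteq[K]$) is precisely the optimality condition $\htheta_i=\bartheta(\lambda)$, so injectivity of $\bartheta$ forces $\htheta_i\in\caltheta$ with $\mathrm{supp}(\bfq^{(i)})\subseteq S$; likewise $\gamma_k(\htheta_i)=0$ forces $\htheta_i=\theta_k=\bartheta(\mathbf{e}_k)\in\caltheta$. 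Hence if $\htheta_i\notin\caltheta$ I take $T=\emptyset$ and player $j$ grabs all of $[K]$, contradicting $u_j=\tfrac12$; and if $\htheta_i=\bartheta(\bfq^{(i)})$ with $q^{(i)}_m>0$ I take $T=\{m\}$ (for which neither obstruction can occur), so player $j$ grabs $[K]\setminus\{m\}$ and $1-w_m\le u_j=\tfrac12$, i.e.\ $w_m\ge\tfrac12$.

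To finish, note that since $w_1>w_2>\dots>w_K>0$ sum to $1$, the set $\{k:w_k\ge\tfrac12\}$ is $\{1\}$ if $w_1\ge\tfrac12$ and empty otherwise, while the engine says that in any PNE the nonempty supports $\mathrm{supp}(\bfq^{(i)})$ lie inside this set. So for $w_1<\tfrac12$ no PNE exists (part~1). For $w_1\ge\tfrac12$, every PNE must have $\bfq^{(i)}=\mathbf{e}_1$, i.e.\ $\htheta_i=\theta_1$, leaving $(\theta_1,\theta_1)$ as the only candidate; and it is a PNE because any deviation of player~$1$ to $\theta\ne\theta_1$ gives loss $f_1(\theta)=(\theta-\theta_1)^\top\Sigma_1(\theta-\theta_1)>0=f_1(\theta_1)$ on source~$1$, hence loses source~$1$ and yields at most $1-w_1\le\tfrac12$, whereas $\theta=\theta_1$ yields $\tfrac12$ (symmetrically for player~$2$). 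This gives part~2, with uniqueness in fact for all $w_1\ge\tfrac12$, in particular for $w_1>\tfrac12$.

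The hard part is the engine, and the two insights behind it: that the profitable deviation should be an infinitesimal perturbation of the \emph{opponent's} model (not a jump to some $\theta_k$), and that ``which families of sources can be undercut simultaneously'' is controlled by whether $0$ sits in the convex hull of the relevant gradients $\gamma_k(\htheta_i)$---with \cref{assum:linearly-independent} being precisely the hypothesis that keeps $0$ out of the ``wrong'' hulls and thereby furnishes the separating direction. Everything else (balancing the utilities, verifying $(\theta_1,\theta_1)$) is routine, modulo minor care with ties in the proximity model and with picking one $\epsilon>0$ valid for all finitely many sources.
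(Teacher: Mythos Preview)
Your proof is correct and follows essentially the same route as the paper: both rest on a Gordan/separation argument applied to the gradients $\Sigma_k(\htheta_i-\theta_k)$ to manufacture a deviation that strictly undercuts the opponent on every source outside a chosen index in the support of $\bfq^{(i)}$ (the paper isolates exactly this step as a standalone lemma, \cref{thrm:distance-smaller}). Your packaging is a bit more unified---by first pinning down $u_1=u_2=\tfrac12$ in any PNE and then running the same separation argument with $T=\emptyset$ and $T=\{m\}$, you bypass both the reduction to $\caltheta$ via \cref{thrm:PNE-strategy-set} and the paper's separate case analysis for uniqueness, and you in fact obtain uniqueness for all $w_1\ge\tfrac12$, marginally stronger than the stated theorem.
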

\begin{remark}
    \cref{thrm:proximity-duopoly} shows that in HD-Game-Proximity with $N = 2$, a PNE exists if and only if $w_1 \geq 0.5$, indicating the presence of a dominant data source. Moreover, when $w_1 > 0.5$, both model providers specialize in the dominant data source by selecting $\theta_1$. 
    
    Although both providers adopt the same strategy, this PNE is still classified as heterogeneous, consistent with the general HD-Game-Proximity results in \cref{sect:more-general}. Notably, unlike the homogeneous PNE in HD-Game-Probability, players in this PNE specialize in a single dominant data source rather than optimizing across all sources.
\end{remark}

\subsubsection{HD-Game-Probability} \label{sect:probability-duopoly}
\begin{theorem} \label{thrm:probability-duopoly}
    Consider HD-Game-Probability with $N = 2$, and suppose \cref{assum:linearly-independent} holds. If a PNE exists, then the only possible PNE is that both players choose $\hthetaMonopoly$ (defined in \cref{eq:htheta-monopoly}).
    
    Furthermore, there exists a constant $\tProbHomoThre \le 2\ellmax$, depending on all game parameters, such that $\hthetaMonopoly$ is a PNE if and only if $t \ge \tProbHomoThre$.
\end{theorem}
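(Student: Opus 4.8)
The plan treats the two assertions in turn.

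\textbf{Uniqueness of the PNE.} First reduce the search space: by \cref{thrm:PNE-strategy-set}, in any PNE both players play points of $\caltheta$. Writing $\sigma(z)=1/(1+e^{-z})$ and $\ell_k(\theta)=(\theta-\theta_k)^\top\Sigma_k(\theta-\theta_k)$, a short computation gives $u_1(\theta,\htheta_2)=\sum_{k=1}^K w_k\,\sigma\!\big((\ell_k(\htheta_2)-\ell_k(\theta))/t\big)$, and since $\sigma(z)+\sigma(-z)=1$ the game is constant-sum, $u_1+u_2\equiv 1$. The crux is that either player can guarantee utility exactly $\tfrac12$ by copying the opponent's parameter; hence in any PNE $u_1=u_2=\tfrac12$ and copying is a best response, so $\htheta_2$ is itself a global maximizer over $\bbR^D$ of the differentiable map $\theta\mapsto u_1(\theta,\htheta_2)$. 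Its gradient must vanish at $\htheta_2$, and since $\nabla_\theta u_1(\theta,\htheta_2)\big|_{\theta=\htheta_2}=-\tfrac{2\sigma'(0)}{t}\sum_k w_k\Sigma_k(\htheta_2-\theta_k)$, this forces $\big(\sum_k w_k\Sigma_k\big)\htheta_2=\sum_k w_k\Sigma_k\theta_k$, i.e. $\htheta_2=\bartheta(\bfw)=\hthetaMonopoly$ (see \cref{eq:htheta-monopoly}); the symmetric argument for player $2$ gives $\htheta_1=\hthetaMonopoly$.

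\textbf{Threshold structure.} Put $m_k:=\ell_k(\hthetaMonopoly)$ and let $\mathcal T$ be the set of $t>0$ at which $(\hthetaMonopoly,\hthetaMonopoly)$ is a PNE; by the copying bound, $t\in\mathcal T$ iff $\sup_{\theta}\sum_k w_k\,\sigma\!\big((m_k-\ell_k(\theta))/t\big)\le\tfrac12$. I claim $\mathcal T$ is up-closed. Indeed, if $\theta$ is a profitable deviation at temperature $t$, then for $\lambda\in(0,1]$ the point $\theta_\lambda:=\lambda\theta+(1-\lambda)\hthetaMonopoly$ is a profitable deviation at temperature $\lambda t$: convexity of $\ell_k$ gives $\ell_k(\theta_\lambda)\le\lambda\ell_k(\theta)+(1-\lambda)m_k$, hence $\tfrac{m_k-\ell_k(\theta_\lambda)}{\lambda t}\ge\tfrac{m_k-\ell_k(\theta)}{t}$, and monotonicity of $\sigma$ with $w_k>0$ propagates this to the weighted sum. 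Thus ``a profitable deviation exists'' is downward-closed in $t$, so $\mathcal T$ is an up-set; moreover $\mathcal T$ is closed, because the value $t\mapsto\max_{\theta\in\caltheta}\sum_k w_k\,\sigma\!\big((m_k-\ell_k(\theta))/t\big)$ is continuous — it is a supremum of a jointly continuous function over the compact set $\caltheta$, and one may restrict to $\caltheta$ since $\sum_k w_k\sigma\!\big((m_k-\ell_k(\theta))/t\big)\to 0$ as $\|\theta\|\to\infty$, so the maximizer is a critical point and (by the gradient computation above, with weights $w_k\sigma'(\cdot)$ in place of $w_k$) every critical point lies in $\caltheta$. Once $\mathcal T\neq\emptyset$ is established, $\mathcal T=[\tProbHomoThre,\infty)$ with $\tProbHomoThre:=\inf\mathcal T$, giving the ``if and only if $t\ge\tProbHomoThre$''.

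\textbf{The explicit bound, and the expected obstacle.} It remains to show $2\ellmax\in\mathcal T$. Restricting (as above) to $\theta\in\caltheta$, we have $m_k,\ell_k(\theta)\in[0,\ellmax]$, so at $t=2\ellmax$ every argument $z_k:=(m_k-\ell_k(\theta))/(2\ellmax)$ lies in $[-\tfrac12,\tfrac12]$, while $\sum_k w_k z_k=\tfrac1{2\ellmax}\big(\sum_k w_km_k-\sum_k w_k\ell_k(\theta)\big)\le 0$ by optimality of $\hthetaMonopoly$; one must show $\sum_k w_k\big(\sigma(z_k)-\sigma(0)\big)\le 0$. Write $\sigma(z)-\sigma(0)=\tfrac z4+R(z)$ (using $\sigma'(0)=\tfrac14$, $\sigma''(0)=0$); the remainder satisfies $R(z)\le 0$ on $[0,\infty)$, $R(z)\ge 0$ on $(-\infty,0]$, and $|R(z)|\le |z|^3/48$ (from $\tanh x\ge x-x^3/3$). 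The linear part contributes $\tfrac14\sum_k w_kz_k\le 0$, and the positive-$z_k$ remainders contribute $\le 0$; the negative-$z_k$ remainders have to be absorbed using $|z_k|\le\tfrac12$ together with the quantitative gap $-\sum_k w_kz_k=\tfrac1{2\ellmax}(\theta-\hthetaMonopoly)^\top\big(\sum_k w_k\Sigma_k\big)(\theta-\hthetaMonopoly)$ and the exact identities $\ell_k(\theta)=m_k-a_k^\top(\theta-\hthetaMonopoly)+(\theta-\hthetaMonopoly)^\top\Sigma_k(\theta-\hthetaMonopoly)$ with $\sum_k w_ka_k=0$. This balancing is the main obstacle: since $\sigma$ is S-shaped with slope exactly $\tfrac14$ at its inflection, the corrections from the concave (``gain'') and convex (``loss'') regimes push in opposite directions, and the bare constraint $\sum_k w_kz_k\le 0$ is not enough — one genuinely needs both the compactness bound $|m_k-\ell_k(\theta)|\le\ellmax$ (which fails for unrestricted $\theta$) and the fact that the aggregate loss gap is a positive-definite quadratic in $\theta-\hthetaMonopoly$ to make the estimate close at precisely $t=2\ellmax$. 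Everything else (uniqueness, up-closedness, closedness, and $\tProbHomoThre\ge 0$) is routine given these ingredients, and \cref{assum:linearly-independent} enters only through its use in \cref{thrm:PNE-strategy-set}.
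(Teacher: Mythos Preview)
Your uniqueness argument is correct and in fact slightly slicker than the paper's. The paper takes the first-order conditions at $\htheta_1$ and at $\htheta_2$, notices that for $N=2$ one has $p_{1,k}(1-p_{1,k})=p_{1,k}p_{2,k}=p_{2,k}(1-p_{2,k})$, subtracts to get $A(\htheta_1-\htheta_2)=0$ with $A\succ 0$, and then re-uses the first-order condition at the common point. Your copying-is-a-best-response route lands on the same first-order condition evaluated at the opponent's parameter in one step; both arguments hinge on $N=2$ (the paper via $p_1+p_2=1$, you via constant-sum). Your up-closedness and closedness arguments for $\mathcal T$ are exactly the content of the paper's \cref{lemma:prob-homo-exists}.

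The gap is the last paragraph, and you have correctly located it: you never actually show $2\ellmax\in\mathcal T$. The sigmoid-remainder strategy you outline does not close without substantially more work, and for good reason: the naked inequality ``$z_k\in[-\tfrac12,\tfrac12]$ and $\sum_k w_k z_k\le 0$ imply $\sum_k w_k\sigma(z_k)\le\tfrac12$'' is \emph{false} (take $K=2$, $w=(5/7,2/7)$, $z=(0.2,-0.5)$; then $\sum w_kz_k=0$ but $\sum w_k\sigma(z_k)>\tfrac12$). So the additional structure you allude to---the quadratic identity $-\sum_k w_kz_k=\tfrac1{2\ellmax}(\theta-\hthetaMonopoly)^\top(\sum_k w_k\Sigma_k)(\theta-\hthetaMonopoly)$ and the exact affine-plus-quadratic decomposition of $\ell_k$---is not optional, and the text gives no indication how the balancing would actually be carried out.

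The paper (in \cref{lemma:prob-homo-ellmax}) takes a completely different, second-order route: it computes
\[
\nabla^2 p_k(\theta)=\tfrac{2}{t}\,p_k(1-p_k)\,\Sigma_k^{1/2}\Big(\tfrac{2}{t}(1-2p_k)\,\Sigma_k^{1/2}(\theta-\theta_k)(\theta-\theta_k)^\top\Sigma_k^{1/2}-I\Big)\Sigma_k^{1/2},
\]
observes that the bracketed matrix has top eigenvalue at most $\tfrac{2}{t}\ell_k(\theta)-1\le \tfrac{2}{t}\ellmax-1$ for $\theta\in\caltheta$, and concludes that when $t\ge 2\ellmax$ every $p_k$ is concave there, hence so is $u_1(\cdot,\hbfthetaHomo_{-1})$; since the gradient vanishes at $\hthetaMonopoly$, it is the maximizer. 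This is the missing idea: instead of comparing function values through a delicate remainder estimate, show concavity of the deviation payoff once $t\ge 2\ellmax$ and let the first-order condition do the rest.
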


\begin{remark}
    Compared to \cref{thrm:proximity-duopoly}, in HD-Game-Probability, a PNE may fail to exist even if $w_1 \geq 0.5$ when $t$ is smaller than the threshold $\tProbHomoThre$. This may seem counterintuitive, as one might expect the probability choice model to converge to the proximity choice model as $t \to 0$. However, the properties of PNEs are not consistent in this limit. This inconsistency arises because, for $N = 2$, the only possible PNE requires both players to choose $\hthetaMonopoly$, as established in the first part of this theorem. Notably, this inconsistency may not hold for $N > 2$, which we demonstrate in \cref{thrm:probability-heterogeneous-PNE}.  

    Deriving a closed-form expression for the threshold $\tProbHomoThre$ is generally intractable. Empirically, we observe that $\tProbHomoThre \approx C_{0} \cdot (2\ell_{\max})$ with $0 < C_{0} < 1$, where $C_{0}$ depends on the game's parameters. Experiments (see \cref{sect:experiments}) consistently show that greater data-source heterogeneity---measured by $\ell_{\max}$ in \cref{eq:ellmax}---pushes the threshold $\tProbHomoThre$ upward. Hence, as heterogeneity grows, a homogeneous PNE can arise only when data sources exhibit an even stronger tendency to select sub-optimal models.
\end{remark}

\subsection{General Cases with More than Two Model Providers} \label{sect:more-general}
In this subsection, we analyze ML model markets with more than two model providers.

\subsubsection{HD-Game-Proximity}
\paragraph{Heterogeneity in PNE.} We first show that in HD-Game-Proximity, any existing PNE tends to be heterogeneous.

\begin{proposition} \label{thrm:proximity-heterogeneity}
    Consider HD-Game-Proximity, and suppose \cref{assum:linearly-independent} holds. Let $\hbftheta = \{\htheta_1, \dots, \htheta_N\}$ be a PNE. For any $\theta \in \mathbb{R}^D$ such that $\theta \not\in \{\theta_1, \dots, \theta_K\}$, let $m = |\{j: \htheta_j = \theta\}|$. Then, $m \leq 1$.
\end{proposition}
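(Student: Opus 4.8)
The plan is to argue by contradiction: suppose $\hbftheta$ is a PNE in which two distinct players $i \ne j$ both play the same strategy $\theta \notin \{\theta_1,\dots,\theta_K\}$, and derive a profitable deviation for one of them. The key quantitative idea is that a player sitting at $\theta$ shares a data source $k$ only with the other players sitting at $\theta$ (strictly closer players would block the source entirely, so on sources where player $i$ gets positive utility, $\theta$ is the minimal loss among all players). Hence the utility of player $i$ equals $\sum_{k \in S} w_k / c_k$, where $S$ is the set of sources on which $\theta$ is a minimizer and $c_k \ge m \ge 2$ is the number of players tied at that minimal loss on source $k$. In particular $u_i(\hbftheta) \le \tfrac{1}{2}\sum_{k \in S} w_k$.

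The core step is to exhibit a deviation $\theta'$ for player $i$ that captures all of $S$ outright (or nearly so), beating $\tfrac12 \sum_{k\in S} w_k$. The natural candidate is $\theta' = \bartheta(\bfq)$ for a weight vector $\bfq$ supported on $S$ — say $\bfq$ proportional to $(w_k)_{k \in S}$ — or, more robustly, a small perturbation of $\theta$ in a direction that strictly decreases the loss on every source in $S$ simultaneously. Since $\theta \notin \{\theta_1,\dots,\theta_K\}$, for each $k \in S$ we have $d_M^2(\theta,\theta_k;\Sigma_k^{-1}) > 0$, and the gradient of this loss at $\theta$ is $2\Sigma_k(\theta - \theta_k) \ne 0$; one must check there is a common descent direction. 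Here is where Assumption~\ref{assum:linearly-independent} enters: it guarantees $\theta$ is not simultaneously the weighted minimizer $\bartheta(\bfq)$ for two different $\bfq$, which rules out the degenerate configuration where $\theta$ is already a stationary point of every positively-weighted combination of the $S$-losses (that would force $\sum_{k\in S} q_k \Sigma_k(\theta-\theta_k) = 0$, i.e. $\theta = \bartheta(\bfq)$, for the $\bfq$ in question — but $\theta$ could also be $\bartheta$ of its ``true'' equilibrium weight vector, contradicting uniqueness unless these coincide, which needs care). After the deviation, player $i$ has strictly smaller loss than $\theta$ on every source in $S$, hence strictly smaller than every other player on those sources (the other non-tied players were already strictly beaten), so player $i$ now captures the full weight $\sum_{k \in S} w_k$ on $S$ alone — a strict improvement over $\le \tfrac12 \sum_{k \in S} w_k$, provided $S \neq \emptyset$.

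It remains to handle the case $S = \emptyset$, i.e. player $i$ currently has zero utility: then player $i$ can deviate to, say, $\theta_1$, where it gets at least $w_1/(\text{number of players at }\theta_1) > 0$ unless some other player is strictly closer to $\theta_1$ on source $1$ — but nothing can be strictly closer to $\theta_1$ than $\theta_1$ itself on source $1$, so the deviation yields positive utility, again a contradiction. (If several players already sit at $\theta_1$, a further perturbation argument as above breaks the tie.) Combining, no PNE can have $m \ge 2$, so $m \le 1$.

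\textbf{Main obstacle.} The delicate point is the common-descent-direction argument and precisely how Assumption~\ref{assum:linearly-independent} rules out the bad case: one needs to show that $\theta \notin \{\theta_1,\dots,\theta_K\}$, together with $\theta$ being an equilibrium strategy (hence $\theta = \bartheta(\bfq^\star)$ for its own equilibrium weight $\bfq^\star \in \Delta_K$), is incompatible with $\theta$ minimizing $\sum_{k \in S} q_k d_M^2(\cdot,\theta_k;\Sigma_k^{-1})$ for the chosen $\bfq$ supported on $S$ — unless $S$ is the full support and the weights match, a case that must be excluded separately (e.g.\ because then the tie-count forces a contradiction with the weights summing correctly, or because a slightly different $\bfq$ on $S$ still works). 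Getting this exclusion clean, rather than hand-waved, is the crux.
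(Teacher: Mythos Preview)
Your overall plan---assume two players share $\theta \notin \{\theta_1,\dots,\theta_K\}$, let $S$ be the set of sources on which $\theta$ achieves the minimum loss, note $u_i(\hbftheta) \le \tfrac12\sum_{k\in S}w_k$, and construct a profitable deviation---matches the paper exactly, including the $S=\varnothing$ case. The gap is precisely the one you flag as the ``main obstacle'' but do not close.

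You want a direction that strictly decreases the loss on \emph{every} source in $S$. That fails exactly when $\theta$ is Pareto-optimal with respect to the losses indexed by $S$, i.e.\ when $\theta = \bartheta(\bfq)$ for some $\bfq$ supported on $S$. Under \cref{assum:linearly-independent} this means the unique $\bfq^\star$ with $\theta=\bartheta(\bfq^\star)$ has $\operatorname{supp}(\bfq^\star)\subseteq S$---and nothing in the setup rules this out. Your proposed escape (``a slightly different $\bfq$ on $S$ still works'') does not help: any $\bartheta(\bfq')$ with $\bfq'$ supported on $S$ is again Pareto-optimal on $S$ and cannot strictly beat $\theta$ on every $k\in S$.

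The paper's fix is to stop insisting on improving on all of $S$. When $|S|\ge 2$ and $\theta\in\caltheta$, let $k_0$ be the index of the largest-weight source in $S$. Since $\theta\ne\theta_{k_0}$, the support of $\bfq^\star$ must contain some $k_1\ne k_0$. \Cref{thrm:distance-smaller} then produces a deviation with strictly smaller loss than $\theta$ on every source except $k_1$, so player $i$ captures all of $\sum_{k\in S\setminus\{k_1\}}w_k$. Because $k_1\ne k_0$, this sum is at least $\sum_{k\in S\setminus\{k_2\}}w_k$ where $k_2$ is the second-largest-weight source in $S$; and since $w_{k_0}>w_{k_2}$, one has
\[
\sum_{k\in S\setminus\{k_2\}}w_k \;=\; w_{k_0} + \sum_{k\in S\setminus\{k_0,k_2\}} w_k \;>\; \frac{w_{k_0}}{2}+\frac{w_{k_2}}{2}+\sum_{k\in S\setminus\{k_0,k_2\}}\frac{w_k}{2} \;\ge\; u_i(\hbftheta).
\]
This sacrifice-one-source arithmetic (together with the trivial $|S|=1$ case, handled by deviating to $\theta_{k_0}$ itself) is the missing ingredient in your proposal.
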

\begin{remark}
    \cref{thrm:proximity-heterogeneity} shows that in HD-Game-Proximity, if a PNE exists, no two players will adopt the same model unless it corresponds to a ground-truth model of a data source. This suggests that model providers tend to offer distinct models, leading to a heterogeneous PNE.

    Moreover, in some cases, achieving a PNE requires certain players to select strategies outside the set $\{\theta_1, \theta_2, \dots, \theta_K\}$. A detailed example is provided in \cref{sect:example-proximity-model}.
\end{remark}

\paragraph{Sufficient conditions for the existence of heterogeneous PNE.}  
We derive sufficient conditions under which a heterogeneous PNE exists and can be explicitly characterized.

\begin{theorem} \label{thrm:proximity-PNE-N-middle}
    Consider HD-Game-Proximity and suppose that \cref{assum:linearly-independent,assum:different-distance} hold. Assume there exists a constant $k_0 \in [K]$ such that $w_{k_0} > 3 \sum_{j=k_0+1}^K w_j$. Then PNE exists if $N$ satisfies
    \begin{equation} \label{eq:proximity-N-requirement-middle}
    \sum_{k=1}^{k_0} \floor{\frac{3w'_k}{w'_{k_0}}} \le N \le \sum_{k=1}^{k_0} \left(\ceil{\frac{w'_k}{\sum_{j=k_0+1}^K w_j}} - 1\right)
    \end{equation}
    where for any $k$ such that $1 \le k \le k_0$,
    \begin{equation} \label{eq:proximity-w-prime}
    w'_k = w_k + \sum_{j=k_0+1}^K w_j \bfII\left[k = \argmin_{1 \le j' \le k_0}d_M\left(\theta_{j'}, \theta_k; \Sigma_j^{-1}\right)\right].
    \end{equation}
    Moreover, for any PNE $\hbftheta = (\htheta_1, \dots, \htheta_N)$, it must hold that $\forall n \in [N], \htheta_n \in \{\theta_1, \theta_2, \dots, \theta_{k_0}\}$. In addition, let $m_k = |\{j \in [N]: \htheta_j = \theta_k\}|$ be the number of players choosing strategy $\theta_k$ in the PNE. Then,
    \begin{equation} \label{eq:thrm-proximity-PNE-N-middle-m-k}
        \forall k \in [k_0], \left|m_k - \floor{\frac{w'_k}{z^*}}\right| \le 1
    \end{equation}
    where $z^* = \sup\left\{z > 0: h(z) \triangleq \sum_{k=1}^{k_0} \floor{w'_k/z} \ge N\right\}$.
\end{theorem}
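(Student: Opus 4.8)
The plan is to restrict attention to strategies in $\{\theta_1,\dots,\theta_{k_0}\}$ and show that, there, the game reduces to a load‑balancing problem. Write $S=\sum_{j=k_0+1}^K w_j$, so the hypothesis reads $w_{k_0}>3S$. The role of \cref{eq:proximity-w-prime} is this: when every player sits at some $\theta_k$ with $k\le k_0$, source $k$ (loss $0$) is captured at $\theta_k$ and each low‑weight source $j>k_0$ is captured at its nearest dominant position, so a player at $\theta_k$ sharing with $m_k-1$ others earns exactly $w'_k/m_k$. Hence the candidate equilibria are the integer vectors $(m_1,\dots,m_{k_0})$ with $\sum_k m_k=N$ that are \emph{balanced}, i.e.\ $w'_k/m_k\ge w'_{k'}/(m_{k'}+1)$ for all $k,k'$; and a standard water‑filling fact — using $\sum_{k\le k_0}w'_k=1$ and that $h(z)=\sum_k\floor{w'_k/z}$ is non‑increasing and left‑continuous — says the balanced vectors are exactly those with $|m_k-\floor{w'_k/z^*}|\le 1$, which is \cref{eq:thrm-proximity-PNE-N-middle-m-k}. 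Everything then hinges on two comparisons: the upper bound on $N$ in \cref{eq:proximity-N-requirement-middle} is equivalent to ``$h(z)\ge N$ for some $z>S$'', i.e.\ $z^*>S$; and the lower bound should force $z^*$ small enough that $m_k\ge 2$ for every $k$, so that each dominant position stays occupied even after one player leaves it.

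For the existence direction I would run water‑filling explicitly: start from $\hat m_k=\floor{w'_k/z^*}$, whose sum $h(z^*)\ge N$, and remove the surplus $h(z^*)-N$ players one at a time from positions currently attaining the level $z^*$ exactly; left‑continuity of $h$ guarantees there are strictly more such positions than players to remove and that the outcome is balanced with $|m_k-\floor{w'_k/z^*}|\le 1$. Then place $m_k$ players at $\theta_k$ and check no unilateral deviation is strictly profitable. Moving to another dominant $\theta_{k'}$ yields $w'_{k'}/(m_{k'}+1)\le$ the current utility by balance (here $m_k\ge 2$ keeps the vacated position occupied, so the low‑weight sources do not reshuffle). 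Moving to any $\theta\notin\{\theta_1,\dots,\theta_{k_0}\}$ yields at most $S$, since the loss‑$0$ incumbents retain all of sources $1,\dots,k_0$; and $z^*>S$ (the upper bound on $N$) makes this strictly worse. Finally, again because every $m_k\ge 2$, a player cannot jump to $\theta_k+\varepsilon$ to poach source $k$ together with a neighbouring low‑weight source.

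For the converse I would argue in three steps. (L1) In any PNE all of $\theta_1,\dots,\theta_{k_0}$ are occupied: total utility is $\sum_k w_k=1$, so some player earns at most $1/N$, but the lower bound on $N$ makes $1/N<w_{k_0}$, which a player could guarantee by relocating onto any empty $\theta_k$ ($k\le k_0$), loss $0$ beating every incumbent — a contradiction. (L2) No player sits at a non‑dominant point: by (L1) such a player earns at most $S$, while the upper bound on $N$ keeps the dominant positions uncrowded enough that relocating to a least‑loaded dominant position strictly improves on $S$; hence $\htheta_n\in\{\theta_1,\dots,\theta_{k_0}\}$ for all $n$. (L3) With all players on $\{\theta_1,\dots,\theta_{k_0}\}$, the no‑deviation conditions among dominant positions are precisely the balance inequalities, so the water‑filling fact from the first paragraph yields \cref{eq:thrm-proximity-PNE-N-middle-m-k}.

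The hard part is (L2) pushed to the full admissible window for $N$: when $N$ is small a crude ``jump to the emptiest dominant slot'' bound suffices, but near the upper limit one must instead compare the would‑be deviator's payoff against the actual per‑position loads of the remaining players, exploiting that those players themselves satisfy the balance inequalities and that $\sum_k w'_k=1$, and using the hypothesis $w_{k_0}>3\sum_{j>k_0}w_j$ with the precise constant $3$ to make the comparison strict rather than merely weak. A secondary source of friction is bookkeeping the floor/ceiling boundary cases — $h(z^*)>N$ forcing removals in the construction, ties among the values $w'_k/z^*$, and the situations where $w'_k/S$ or $3w'_k/w'_{k_0}$ is an integer — which is exactly what dictates the particular floors and ceilings appearing in \cref{eq:proximity-N-requirement-middle}.
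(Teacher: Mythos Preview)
Your plan is essentially the paper's: the same water-filling construction for existence (define $z^*$, set $m'_k=\floor{w'_k/z^*}$, trim the surplus from positions with $w'_k/z^*$ integral, check deviations), and the same three-step converse (your (L1)--(L3) match the paper's Steps~I, II+III, and part~(3) almost verbatim). Your reading of the two ends of \cref{eq:proximity-N-requirement-middle} --- the lower bound forces $z^*\le w'_{k_0}/3$ so every dominant slot carries at least three (hence $\ge 2$ after one leaves), and the upper bound forces $z^*>S$ --- is exactly what the paper extracts.

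The one place you overshoot is the difficulty of (L2). You anticipate needing the balance inequalities of the remaining players near the top of the $N$-window; the paper avoids this entirely with a two-case split on the number of non-dominant players. If \emph{exactly one} player is off the dominant set, she earns $\le S$, and pigeonhole gives a slot $k'$ with at most $m^*_{k'}-1$ occupants, so relocating yields $\ge w'_{k'}/m^*_{k'}\ge z^*>S$. If \emph{at least two} are off, one of them earns $\le S/2$; pigeonhole now only guarantees a slot with fewer than $m^*_{k'}$ occupants, and the other stray player may be poaching low-weight sources, so the deviator is only guaranteed $w_{k'}$ (not $w'_{k'}$) at $\theta_{k'}$ --- but $w_{k'}/m^*_{k'}\ge (w'_{k'}-S)/m^*_{k'}\ge z^*-S/2>S/2$, since $m^*_{k'}\ge 2$. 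So the ``crude jump to the emptiest slot'' works throughout the window once you split cases; no appeal to the balance structure of the survivors is needed, and the constant~$3$ enters only through $m^*_k\ge 2$.
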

\begin{remark}
    \cref{thrm:proximity-PNE-N-middle} suggests that when a few data sources carry dominant weights---a phenomenon commonly observed in practice~\citep{kairouz2021advances,li2020federated}---and the number of model providers $N$ lies within a certain range, a PNE exists in which all providers specialize in serving a single data source. Moreover, in such equilibria, the number of providers allocated to each source is approximately proportional to its weight. 

    Concretely, the choice of $k_0$ indicates that the top-$k_0$ data sources hold significantly higher weights, each at least three times the total weight of the remaining data sources. The constraints on $N$ in \cref{eq:proximity-N-requirement-middle} ensure that (1) providers consider the $k_0$-th data source and (2) data sources with small weights are overlooked. Under these conditions, the exact form of the PNE can be derived. In a PNE, all model providers select a ground-truth model from the top-$k_0$ data sources. Consequently, the utility of non-dominant data sources is assigned to the nearest dominant data source, as characterized by \cref{eq:proximity-w-prime}. Furthermore, since $z^*$ is fixed in \cref{eq:thrm-proximity-PNE-N-middle-m-k}, $m_k$ is proportional to $w'_k$. This aligns with intuition, as data sources with higher weights typically attract more model providers optimizing for them.

    This insight implies that policymakers can mitigate imbalanced attention among different data sources by either introducing more providers or incentivizing focus on underrepresented sources. \cref{thrm:proximity-PNE-N-middle} provides a quantitative foundation for both interventions.
\end{remark}

We further provide an example to explain \cref{thrm:proximity-PNE-N-middle}.
\begin{example} \label{example:proximity-PNE-middle}
    Consider a setting with $K = 4$ data sources and weights $\bfw = (0.6, 0.35, 0.03, 0.02)$. The ground-truth models are given by $\theta_1 = (1, 0, 0)$, $\theta_2 = (-1, 0, 0)$, $\theta_3 = (1, 0.1, 0)$, and $\theta_4 = (-1, 0, 0.1)$, with covariance matrices $\Sigma_1 = \Sigma_2 = \Sigma_3 = \Sigma_4 = I$. In this case, the first two data sources have dominant weights.
    
    Setting $k_0 = 2$, a heterogeneous PNE is guaranteed to exist when $N$ lies within a specific range ($[8, 19]$ in this case). In this PNE, model providers will only select strategies from $\{\theta_1, \theta_2\}$. Since data source $3$ has a similar ground-truth model to data source $1$, and data source $4$ to data source $2$, providers selecting $\theta_1$ benefit from data source $3$, while those selecting $\theta_2$ benefit from data source $4$. 

    For instance, when $N = 10$, the PNE consists of six players choosing $\theta_1$ and four choosing $\theta_2$, approximately proportional to $(w'_1, w'_2)$, where $w'_1 = w_1 + w_3 = 0.63$ and $w'_2 = w_2 + w_4 = 0.37$.
\end{example}

In addition, \cref{thrm:proximity-PNE-N-middle} implies the following corollary.

\begin{corollary} \label{thrm:proximity-PNE-N-large}
    Consider HD-Game-Proximity and suppose that \cref{assum:linearly-independent,assum:different-distance} hold. When $N \ge \sum_{k=1}^K \floor{3w_k /w_K}$, a PNE exists. Moreover, for any PNE $\hbftheta = (\htheta_1, \dots, \htheta_N)$, it holds that $\forall n \in [N], \htheta_n \in \{\theta_1, \theta_2, \dots, \theta_{K}\}$. In addition, let $m_k = |\{j \in [N]: \htheta_j = \theta_k\}|$ be the number of players that choose strategy $\theta_k$ in the PNE. We have $\forall k \in [K]$, $\left|m_k - \floor{w_k/z^*}\right| \le 1$
    where $z^* = \sup\left\{z > 0: \sum_{k=1}^{K} \floor{w_k/z} \ge N\right\}$.
\end{corollary}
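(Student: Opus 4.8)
The plan is to obtain the corollary as the special case $k_0 = K$ of \cref{thrm:proximity-PNE-N-middle}. First I would check that the hypotheses of \cref{thrm:proximity-PNE-N-middle} are met with this choice: \cref{assum:linearly-independent,assum:different-distance} are already assumed, and the dominance condition $w_{k_0} > 3\sum_{j=k_0+1}^K w_j$ becomes $w_K > 0$, which holds since all weights are positive. With $k_0 = K$ the sum $\sum_{j=k_0+1}^K w_j$ is empty and equals $0$, so the redistributed weights in \cref{eq:proximity-w-prime} collapse to $w'_k = w_k$ for every $k \in [K]$ (the indicator term is multiplied by an empty sum).

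Next I would identify what the admissible range \cref{eq:proximity-N-requirement-middle} becomes under this substitution. Since $w'_k = w_k$ and $w'_{k_0} = w_K$, the lower endpoint is exactly $\sum_{k=1}^K \floor{3 w_k / w_K}$, matching the hypothesis of the corollary. The upper endpoint involves dividing by $\sum_{j=k_0+1}^K w_j = 0$; interpreting $\ceil{w_k/0}$ as $+\infty$ (equivalently, taking the limit as this residual mass tends to $0^+$), the upper constraint is vacuous, so every $N \ge \sum_{k=1}^K \floor{3 w_k / w_K}$ is admissible. Then \cref{thrm:proximity-PNE-N-middle} applies: a PNE exists, every equilibrium strategy lies in $\{\theta_1,\dots,\theta_{k_0}\} = \{\theta_1,\dots,\theta_K\}$, and the occupation numbers satisfy $|m_k - \floor{w'_k/z^*}| \le 1$ with $z^* = \sup\{z>0 : \sum_{k=1}^{k_0}\floor{w'_k/z}\ge N\}$, which after putting $w'_k = w_k$ and $k_0 = K$ is precisely the claimed bound. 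I would also remark in passing that $z^*$ is well-defined and positive: $z \mapsto \sum_{k=1}^K \floor{w_k/z}$ is finite and nonincreasing, equals $0$ for large $z$, and exceeds $N$ for small $z$, so the supremum exists.

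The one point needing genuine care — and which I expect to be the main obstacle — is justifying the degenerate upper-bound case, i.e.\ confirming that the proof of \cref{thrm:proximity-PNE-N-middle} still goes through when $k_0 = K$ and there is no residual mass. In the general argument the upper bound on $N$ is used only to guarantee that the small-weight sources $k > k_0$ are overlooked by every provider; when $k_0 = K$ there are no such sources, so this part of the reasoning should be either unused or trivially satisfied. I would therefore re-inspect the proof of \cref{thrm:proximity-PNE-N-middle}, isolate every step that invokes the upper bound on $N$ or the strict positivity of $\sum_{j>k_0} w_j$, and verify each such step is vacuous under $k_0 = K$. If some step resists this reduction, the fallback is to replay the existence construction and the occupation-number estimate of \cref{thrm:proximity-PNE-N-middle} directly with $k_0 = K$ and $w'_k = w_k$ — which is strictly simpler, since no redistribution of weights is involved.
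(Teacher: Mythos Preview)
Your proposal is correct and matches the paper's own argument, which simply invokes \cref{thrm:proximity-PNE-N-middle} with $k_0 = K$. Your additional care about the degenerate upper bound is well-placed but not ultimately problematic: in the proof of \cref{thrm:proximity-PNE-N-middle}, that bound is used only to ensure $z^* > \sum_{j>k_0} w_j$ and that deviating outside $\{\theta_1,\dots,\theta_{k_0}\}$ yields at most $\sum_{j>k_0} w_j$; both reduce to $z^* > 0$ and ``utility at most $0$'' when $k_0 = K$, which are immediate.
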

\begin{remark}
    This result follows directly from \cref{thrm:proximity-PNE-N-middle} with $k_0$ set to $K$. It implies that a PNE always exists when $N$ is sufficiently large.
\end{remark}

\subsubsection{HD-Game-Probability}
Unlike HD-Game-Proximity, we show that both homogeneous and heterogeneous PNEs can exist in HD-Game-Probability.

\paragraph{Homogeneous PNE.}  
We first derive equivalent conditions for the existence of a homogeneous PNE, as well as a sufficient condition for its uniqueness.

\begin{theorem} \label{thrm:probability-homogeneous-PNE}
    Consider HD-Game-Probability and suppose that \cref{assum:linearly-independent} holds. Let $\hbfthetaHomo = (\hthetaMonopoly, \hthetaMonopoly, \dots, \hthetaMonopoly)$, where $\hthetaMonopoly$ is defined in \cref{eq:htheta-monopoly}. Then there exist two constants: $0 < \tProbHomoThre \le 2\ell_{\max}$, depending on all game parameters, and $C > 0$, depending only on $\{\Sigma_k, \theta_k, w_k\}_{k=1}^K$, such that the following results hold:
    \begin{enumerate}
        \item $\hbfthetaHomo$ is a PNE if and only if $t \ge \tProbHomoThre$.
        \item If $t \ge \max\{6C/N, 2\ell_{\max}\}$, then $\hbfthetaHomo$ is the unique PNE.
    \end{enumerate}
\end{theorem}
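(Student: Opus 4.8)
\emph{Plan.} The plan is to reduce everything to a single player's deviation from the symmetric profile $\hbfthetaHomo$, exploit a scaling invariance of the logit shares to get the ``threshold'' structure, and then handle uniqueness by a contraction argument on the best-response map. Throughout write $\ell_k := d_M^2(\hthetaMonopoly,\theta_k;\Sigma_k^{-1})$, $\ell_k(\theta):=d_M^2(\theta,\theta_k;\Sigma_k^{-1})$, $g_k:=\nabla\ell_k(\hthetaMonopoly)=2\Sigma_k(\hthetaMonopoly-\theta_k)$, and note $\sum_k w_k g_k=0$ since $\hthetaMonopoly$ minimizes $\sum_k w_k\ell_k(\cdot)$; also introduce $\psi_t(a):=1/(1+(N-1)e^{-a/t})$, so a deviation of player $n$ to $\theta$ gives $u_n(\theta,\hbfthetaHomo_{-n})=\sum_k w_k\,\psi_t(\ell_k-\ell_k(\theta))$, and $\hbfthetaHomo$ is a PNE iff this is $\le 1/N=\psi_t(0)$ for all $\theta$. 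Since $\psi_t$ is increasing and each $\ell_k$ is strictly convex, the set $\{(\ell_1(\theta),\dots,\ell_K(\theta))\}+\bbR^K_{\ge0}$ is convex, so every $\theta$ is dominated coordinatewise in loss by some weighted-least-squares point $\bartheta(\bfq)\in\caltheta$; hence $\sup_\theta u_n(\theta,\hbfthetaHomo_{-n})=\max_{\theta\in\caltheta}u_n(\theta,\hbfthetaHomo_{-n})$, and $\phi(t):=\max_{\theta\in\caltheta}u_n(\theta,\hbfthetaHomo_{-n})-1/N$ is continuous, so $\mathcal T:=\{t>0:\phi(t)\le0\}$ is closed.

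The key structural step for claim~1 is that $\mathcal T$ is an up-set. Suppose $\theta$ is a profitable deviation at temperature $t'$ and let $0<t<t'$, $\lambda:=t/t'$. For $\theta'':=(1-\lambda)\hthetaMonopoly+\lambda\theta$, expanding the quadratic $\ell_k$ gives $\ell_k(\theta'')=(1-\lambda)\ell_k+\lambda\ell_k(\theta)-\lambda(1-\lambda)(\theta-\hthetaMonopoly)^\top\Sigma_k(\theta-\hthetaMonopoly)\le(1-\lambda)\ell_k+\lambda\ell_k(\theta)$, hence $\ell_k-\ell_k(\theta'')\ge\lambda(\ell_k-\ell_k(\theta))$; combined with $\psi_t(\lambda a)=\psi_{t'}(a)$ and monotonicity of $\psi_t$ this yields $u_n(\theta'',\hbfthetaHomo_{-n})\ge u_n(\theta,\hbfthetaHomo_{-n})>1/N$ at temperature $t$. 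Thus $t'\notin\mathcal T\Rightarrow t\notin\mathcal T$. It remains to show $[2\ellmax,\infty)\subseteq\mathcal T$ and $\mathcal T\neq(0,\infty)$: then $\tProbHomoThre:=\inf\mathcal T\in(0,2\ellmax]$ and, by closedness and the up-set property, $\mathcal T=[\tProbHomoThre,\infty)$. For $\mathcal T\neq(0,\infty)$ I would take a family $\theta_t:=\hthetaMonopoly+tv$; then $\ell_k-\ell_k(\theta_t)=-t\,g_k^\top v-t^2 v^\top\Sigma_k v$ and $u_n(\theta_t,\hbfthetaHomo_{-n})\to\sum_k w_k\rho(-g_k^\top v)$ as $t\to0$, where $\rho(x)=1/(1+(N-1)e^{-x})$; since $\sum_k w_k g_k=0$, $v=0$ is a critical point with Hessian $\rho''(0)\sum_k w_k g_kg_k^\top$ and $\rho''(0)=(N-1)(N-2)/N^3>0$ for $N\ge3$, so a suitable small $v$ makes the limit exceed $1/N$ (for $N=2$ one instead nudges slightly toward some $\theta_k$ and lets $t\to0$, as in the duopoly case), hence small $t\notin\mathcal T$.

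The main obstacle is $[2\ellmax,\infty)\subseteq\mathcal T$: showing no deviation is profitable once $t\ge2\ellmax$. Here I would work with $\theta\in\caltheta$, so $\ell_k,\ell_k(\theta)\in[0,\ellmax]$ and the arguments $\ell_k-\ell_k(\theta)\in[-\ellmax,\ellmax]$ of $\psi_t$ are confined to $[-\tfrac12,\tfrac12]$ in rescaled units, and use the exact identity $\ell_k-\ell_k(\theta)=-g_k^\top(\theta-\hthetaMonopoly)-(\theta-\hthetaMonopoly)^\top\Sigma_k(\theta-\hthetaMonopoly)$ together with the strong-convexity slack $\sum_k w_k(\ell_k-\ell_k(\theta))\le-(\theta-\hthetaMonopoly)^\top(\sum_k w_k\Sigma_k)(\theta-\hthetaMonopoly)$. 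The point is that the box constraint alone is too weak (one can push $\sum_k w_k\psi_t$ above $1/N$ with unconstrained $\{\ell_k-\ell_k(\theta)\}$ obeying only the budget), so the argument must use that all $K$ values are coupled through a common $\theta$; this is the core quadratic estimate and it generalizes the $N=2$ computation behind \cref{thrm:probability-duopoly}. Combining the three pieces gives claim~1.

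For claim~2, assume $t\ge\max\{6C/N,2\ellmax\}$; by \cref{thrm:PNE-strategy-set} every PNE lies in $\caltheta^N$. For a profile with $\hbftheta_{-n}\in\caltheta^{N-1}$, the map $\theta\mapsto u_n(\theta,\hbftheta_{-n})$ is coercive ($u_n\to0$ as $\|\theta\|\to\infty$) and has Hessian $-\tfrac1t\sum_k w_k p_{n,k}(1-p_{n,k})\big(2\Sigma_k-\tfrac1t(1-2p_{n,k})\nabla\ell_k(\theta)\nabla\ell_k(\theta)^\top\big)$, which is negative definite on the (uniformly bounded) ball containing its maximizers once $t$ dominates a constant depending only on $\{\Sigma_k,\theta_k,w_k\}$; hence the best response $BR_n(\hbftheta_{-n})$ is unique, and its first-order condition $\sum_k w_k p_{n,k}(1-p_{n,k})\Sigma_k(\theta-\theta_k)=0$ forces $BR_n(\hbftheta_{-n})=\bartheta(\tbfq^{(n)})\in\caltheta$ with $\tq^{(n)}_k\propto w_k p_{n,k}(1-p_{n,k})$. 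Thus $BR=(BR_1,\dots,BR_N)$ maps $\caltheta^N$ into itself, and every PNE is a fixed point of $BR$ in $\caltheta^N$. Finally, each $p_{n,k}$ has sensitivity $O(1/(tN^2))$ to each of the $N-1$ entries of $\hbftheta_{-n}$ (the logit denominator has $N-1$ comparable terms), so $\hbftheta_{-n}\mapsto\tbfq^{(n)}$ is $O(1/(tN))$-Lipschitz and $\bartheta(\cdot)$ is Lipschitz with a constant depending only on $\{\Sigma_k,\theta_k,w_k\}$; absorbing all these constants into $C$, $BR$ is a contraction on the complete space $\caltheta^N$ for $t\ge 6C/N$, so by the Banach fixed-point theorem there is at most one PNE, and since $\hbfthetaHomo$ is one by claim~1 (as $t\ge2\ellmax\ge\tProbHomoThre$), it is the unique PNE. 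I expect the only genuinely delicate points to be the core quadratic estimate for the $2\ellmax$ bound and verifying that the constant governing strict concavity of $u_n(\cdot,\hbftheta_{-n})$ can indeed be folded into the same $C$ used in the contraction bound.
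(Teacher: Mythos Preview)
Your overall strategy matches the paper's: the threshold structure via the scaling $\theta\mapsto(1-\lambda)\hthetaMonopoly+\lambda\theta$ is exactly the paper's argument (stated in contrapositive), and the uniqueness via a contraction on the first-order-condition map is the paper's route as well---the paper parametrizes by $\bfq\in\Delta_K$ through a map $\calM:\Delta_K^N\to\Delta_K^N$ sending $(\bfq_n)_n$ to your $(\tbfq^{(n)})_n$, which is equivalent to your best-response map on $\caltheta^N$ under \cref{assum:linearly-independent}. Your second-order argument for $\tProbHomoThre>0$ is a nice addition the paper does not spell out.

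Where you diverge is the inclusion $[2\ellmax,\infty)\subseteq\mathcal T$, and here you are making life much harder than necessary. You already wrote down the Hessian of $u_n(\cdot,\hbftheta_{-n})$ in your discussion of claim~2; the paper simply evaluates that same formula at the symmetric profile. Each summand $p_k(\theta)$ has Hessian proportional to $\Sigma_k^{1/2}\bigl[\tfrac{2}{t}(1-2p_k)\,v_kv_k^\top-I\bigr]\Sigma_k^{1/2}$ with $v_k=\Sigma_k^{1/2}(\theta-\theta_k)$, which is negative semidefinite whenever $\tfrac{2}{t}\,\ell_k(\theta)\le 1$. For $t\ge 2\ellmax$ this holds on the convex set $\{\theta:\ell_k(\theta)\le\ellmax\ \forall k\}\supseteq\caltheta$, so $u_n(\cdot,\hbfthetaHomo_{-n})$ is concave there; since its gradient vanishes at $\hthetaMonopoly$, we get $u_n(\theta,\hbfthetaHomo_{-n})\le 1/N$ on $\caltheta$, and hence everywhere by the Pareto-domination step you already invoked. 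No ``core quadratic estimate'' is needed. For claim~2, be aware that the paper's Lipschitz chain picks up a factor $\exp(2\ellmax/t)$ at the $\ell\mapsto p$ step, which is precisely why the hypothesis $t\ge 2\ellmax$ is reused there to make it $O(1)$; your $O(1/(tN))$ heuristic is correct once that factor is controlled.
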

\begin{remark}
    \Cref{thrm:probability-homogeneous-PNE} implies that a homogeneous PNE does not exist when $t$ is small. As $t$ increases, a homogeneous PNE may emerge, and for sufficiently large $t$, it becomes the unique PNE. The closed-form expressions of $\tProbHomoThre$ and $C$ are difficult to derive. However, similar to \cref{thrm:probability-duopoly}, synthetic experiments in \cref{sect:experiments} suggest that the threshold $\tProbHomoThre$ required for PNE existence is approximately $C_0 \times (2\ell_{\max})$, where $0 < C_0 < 1$ is a game-specific constant. Moreover, as $N$ increases, a homogeneous PNE is more likely to be unique, as indicated by the second part of \cref{thrm:probability-homogeneous-PNE}. This trend is further verified by our synthetic experiments in \cref{sect:experiments}.
\end{remark}

\paragraph{Heterogeneous PNE.}
We next demonstrate that for sufficiently small $t$, a heterogeneous PNE can exist.

\begin{theorem} \label{thrm:probability-heterogeneous-PNE}
    Consider HD-Game-Probability, and suppose that \cref{assum:linearly-independent,assum:different-distance} hold, with $N \ge \sum_{k=1}^K \floor{3 w_k / w_K}$. Additionally, assume that for all $n, n' \in [N]$ and distinct $k, k' \in [K]$, it holds that $w_k / n \neq w_{k'} / n'.$ Let $\hbfthetaProx = (\hthetaProx_1, \dots, \hthetaProx_N)$ be a PNE in the corresponding HD-Game-Proximity. Then, there exists a constant $\tLowerProbHete > 0$ such that for any $t \le \tLowerProbHete$, a PNE $\hbfthetaHete = (\hthetaHete_1, \dots, \hthetaHete_N)$
    exists in HD-Game-Probability and satisfies
    $$
    \forall n \in [N], \quad \left\|\hthetaHete_n - \hthetaProx_n\right\|_2 \le t^2.
    $$
\end{theorem}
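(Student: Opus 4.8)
The plan is to produce $\hbfthetaHete$ as a fixed point of the best-response map of HD-Game-Probability, obtained by a small perturbation of the given proximity equilibrium $\hbfthetaProx$; I expect the perturbation to be exponentially small in $1/t$, which comfortably beats the claimed $t^2$. First set up the base point. Since $N \ge \sum_{k=1}^K\floor{3w_k/w_K}$, Corollary \ref{thrm:proximity-PNE-N-large} applies to $\hbfthetaProx$: every $\hthetaProx_n$ equals some $\theta_k$, so writing $S_k=\{n:\hthetaProx_n=\theta_k\}$ and $m_k=|S_k|$ gives a partition of $[N]$. A short computation shows $m_k\ge 2$ for all $k$: the corollary's bound $|m_k-\floor{w_k/z^*}|\le1$ together with $z^*\le w_K/3$ (forced by $N\ge\sum_k\floor{3w_k/w_K}$) yields $\floor{w_k/z^*}\ge 3$. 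Because $\hbfthetaProx$ is a proximity PNE, a player in $S_k$ that deviates to $\theta_{k'}$ obtains exactly $w_{k'}/(m_{k'}+1)$, so $w_{k'}/(m_{k'}+1)\le w_k/m_k$; the hypothesis that the numbers $w_a/b$ are pairwise distinct across distinct sources makes this strict, giving a source-independent gap $\Delta:=\min_{k\neq k'}\big(w_k/m_k-w_{k'}/(m_{k'}+1)\big)>0$. Finally fix $\rho_0>0$ so small that the balls $B(\theta_k,\rho_0):=\{\theta:\|\theta-\theta_k\|_2\le\rho_0\}$ are pairwise disjoint, and put $\gamma:=\min_k\lambda_{\min}(\Sigma_k)\rho_0^2>0$, so that $\ell_{k'}(\theta):=(\theta-\theta_{k'})^\top\Sigma_{k'}(\theta-\theta_{k'})\ge\gamma$ for every $\theta\notin B(\theta_{k'},\rho_0)$.

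\textbf{Localization and the exponentially small perturbation.} Let $\mathcal D=\prod_n B(\hthetaProx_n,t^2)$, a compact convex set. Fix $\bftheta\in\mathcal D$ and a player $n\in S_k$, and study $\theta^\star:=\argmax_\theta u_n(\theta,\bftheta_{-n})$. Write $u_n(\theta,\bftheta_{-n})=\sum_{k'}w_{k'}\sigma_{k'}(\ell_{k'}(\theta))$ with $\sigma_{k'}(\ell)=(1+c_{k'}e^{\ell/t})^{-1}$ and $c_{k'}=\sum_{j\neq n}e^{-\ell_{j,k'}/t}$; since each $j\in S_{k'}$ has $\ell_{j,k'}=O(t^4)$, routine logit estimates give: $u_n(\theta_k,\bftheta_{-n})=w_k/m_k+o(1)$; for $k'\neq k$ and $\theta\in B(\theta_{k'},\rho_0)$ every source other than $k'$ contributes $O(e^{-\gamma/t})$, so $u_n(\theta,\bftheta_{-n})\le w_{k'}/(m_{k'}+1)+o(1)$; and $u_n(\theta,\bftheta_{-n})=O(e^{-\gamma/t})$ when $\theta\notin\bigcup_{k'}B(\theta_{k'},\rho_0)$. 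With the gap $\Delta$, for $t$ below a game-dependent threshold the maximizer $\theta^\star$ must lie in $B(\theta_k,\rho_0)$, hence is interior, so $\nabla_\theta u_n(\theta^\star,\bftheta_{-n})=0$, i.e. $w_k\sigma_k'(\ell_k(\theta^\star))\,2\Sigma_k(\theta^\star-\theta_k)=-\sum_{k'\neq k}w_{k'}\sigma_{k'}'(\ell_{k'}(\theta^\star))\,2\Sigma_{k'}(\theta^\star-\theta_{k'})$. For $k'\neq k$, $\ell_{k'}(\theta^\star)\ge\gamma$ and $c_{k'}\ge1$ give $|\sigma_{k'}'(\ell_{k'}(\theta^\star))|\le t^{-1}e^{-\gamma/t}$, so the right-hand side has norm $O(t^{-1}e^{-\gamma/t})$; conversely $u_n(\theta^\star,\bftheta_{-n})\ge u_n(\theta_k,\bftheta_{-n})=w_k/m_k+o(1)$ forces $\sigma_k(\ell_k(\theta^\star))\ge m_k^{-1}(1+o(1))$, hence $c_ke^{\ell_k(\theta^\star)/t}$ lies in a fixed compact subinterval of $(0,\infty)$, so $|\sigma_k'(\ell_k(\theta^\star))|=\tfrac{u}{t(1+u)^2}\ge\kappa/t$ for a constant $\kappa=\kappa(N)>0$. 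Since $\Sigma_k\succ0$, the gradient identity yields $\|\theta^\star-\theta_k\|_2\le C'e^{-\gamma/t}$ for a game-dependent $C'$; as $C'e^{-\gamma/t}=o(t^2)$, there is $\tLowerProbHete>0$ with $\|\theta^\star-\theta_k\|_2\le t^2$ for all $t\le\tLowerProbHete$.

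\textbf{Fixed point.} It remains to see that $\Phi(\bftheta):=\big(\argmax_\theta u_n(\theta,\bftheta_{-n})\big)_n$ is a single-valued continuous self-map of $\mathcal D$ (the self-map property is the previous paragraph). On $B(\theta_k,2t^2)$ the Hessian of $\theta\mapsto u_n(\theta,\bftheta_{-n})$ equals $w_k\sigma_k'(\ell_k(\theta))\,2\Sigma_k$ plus terms of size $O(t^2)+O(t^{-2}e^{-\gamma/t})$; since $\sigma_k'<0$ with $|\sigma_k'(\ell_k(\theta))|=\Theta(1/t)$ there (again using $\ell_k(\theta)=O(t^4)$ and $m_k\ge2$), this Hessian is negative definite for small $t$, so $u_n(\cdot,\bftheta_{-n})$ is strictly concave on $B(\theta_k,2t^2)$, the maximizer is unique, and by Berge's maximum theorem $\Phi$ is continuous. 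Brouwer's fixed-point theorem on $\mathcal D$ gives $\hbfthetaHete$ with $\Phi(\hbfthetaHete)=\hbfthetaHete$, i.e. a PNE of HD-Game-Probability with $\|\hthetaHete_n-\hthetaProx_n\|_2\le t^2$ for every $n$.

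\textbf{Main obstacle.} The delicate step is the localization — confining each best response to the fixed ball $B(\theta_k,\rho_0)$. This relies essentially on $m_k\ge2$ (so that when one member of cluster $k$ moves, the remaining members still ``pin'' source $k$ at $\theta_k$) and on the strict separation $w_{k'}/(m_{k'}+1)<w_k/m_k$; if some $m_k=1$, the logit's smoothing of the proximity model's tie-breaking would let that player drift an $\Omega(1)$ distance toward a neighboring $\theta_{k'}$ while retaining source $k$ and capturing part of source $k'$, so no localization — and no $t^2$ bound — could hold. The secondary difficulty is the quantitative bookkeeping in the first-order and Hessian estimates: tracking that the $\Theta(1/t)$ restoring gradient and curvature along source $k$ dominate the $O(t^{-1}e^{-\gamma/t})$ cross-source pull and the $O(t^{-2}e^{-\gamma/t})$ cross-source curvature.
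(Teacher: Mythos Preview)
Your approach is correct and genuinely different from the paper's, although both rest on Brouwer's fixed-point theorem. You apply Brouwer directly to the best-response map $\Phi:\prod_n B(\theta_{k_n},t^2)\to\prod_n B(\theta_{k_n},t^2)$ in parameter space, so a fixed point is automatically a PNE; the main work is showing $\Phi$ is a single-valued continuous self-map, which your localization, first-order, and concavity estimates accomplish in one sweep. The paper instead applies Brouwer to an auxiliary map $\calM$ on the weight-simplex product $\Delta_K^N$ (its \cref{alg:cal-M}), whose fixed points only satisfy the first-order condition $\nabla_\theta u_n=0$, and must then separately verify global optimality by partitioning the strategy set into four regions $\caltheta_{n,t}^{(1)},\dots,\caltheta_{n,t}^{(4)}$ (concavity near $\theta_{k_n}$; the intermediate shell $t^{3/2}<d_M^2\le\ell_D$; far from all sources; near some other $\theta_{k'}$). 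The paper's decoupling trades directness for simpler pieces --- $\calM$ is manifestly continuous, so Brouwer applies without checking uniqueness of best responses --- but the underlying estimates (local concavity, the strict gap $w_k/m_k>w_{k'}/(m_{k'}+1)$ forced by $w_k/n\neq w_{k'}/n'$, and the exponentially small cross-source pull) are the same ones you use. Your exponential bound $\|\theta^\star-\theta_k\|\le C'e^{-\gamma/t}$ is in fact sharper than the paper's $t^\beta$ (any $\beta>1$), because you solve the first-order balance directly rather than only showing the self-map property on a polynomial-scale box.
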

\begin{remark}
    For simplicity, we build on the conditions of \cref{thrm:proximity-PNE-N-large}. \cref{thrm:probability-heterogeneous-PNE} establishes that when $t$ is sufficiently small, a heterogeneous PNE exists and closely approximates the PNE in the corresponding HD-Game-Proximity. This is expected, as HD-Game-Probability approaches HD-Game-Proximity for large $N$ as $t \to 0$. However, proving this result is significantly more challenging due to the continuous nature of the potential strategy set for each model provider. 

    Additionally, while the specified range for $N$ is sufficient, it is not necessary. Our experiments in \cref{sect:experiments} suggest that a heterogeneous PNE may also exist for smaller values of $N$.

    \cref{thrm:probability-homogeneous-PNE,thrm:probability-heterogeneous-PNE} show that, to promote model diversity (i.e., the emergence of heterogeneous PNE) and prevent convergence to a single dominant model (i.e., homogeneous PNE), policymakers can increase the rationality of data sources—that is, enhance their ability to select higher-performing models—which, in turn, encourages the formation of heterogeneous PNE.
\end{remark}

\paragraph{Existence of both PNE at the same time.}
We now present an example to illustrate that both types of PNE can coexist in a single game.

\begin{example} \label{example:probability-both-existence}
    Let $N = 8$ and $K = 2$ with $\theta_1 = (1,1)$ and $\theta_2 = (0,1)$. The covariance matrices are $\Sigma_1 = \Sigma_2 = I$, and the weights are $\boldsymbol{w} = (0.53, 0.47)$. The temperature parameter is set to $t = 0.4$. Since $K = 2$, \cref{thrm:PNE-strategy-set} implies that each model provider selects a weight $\alpha_n \in [0,1]$, yielding the final model $\htheta_n = \alpha_n \theta_1 + (1 - \alpha_n) \theta_2 = (\alpha_n, 1)$.
    
    In this setting, we identify two types of PNE: (1) a homogeneous PNE, where all model providers choose $\alpha_n = 0.53$, and (2) a heterogeneous PNE, where four providers choose $\alpha_n \approx 0.76$ (type 1) and the other four choose $\alpha_n \approx 0.30$ (type 2). In the heterogeneous PNE, model providers specialize in different data sources, forming two distinct groups.

    \begin{figure}[ht]
        \centering
        \includegraphics[width=0.5\linewidth]{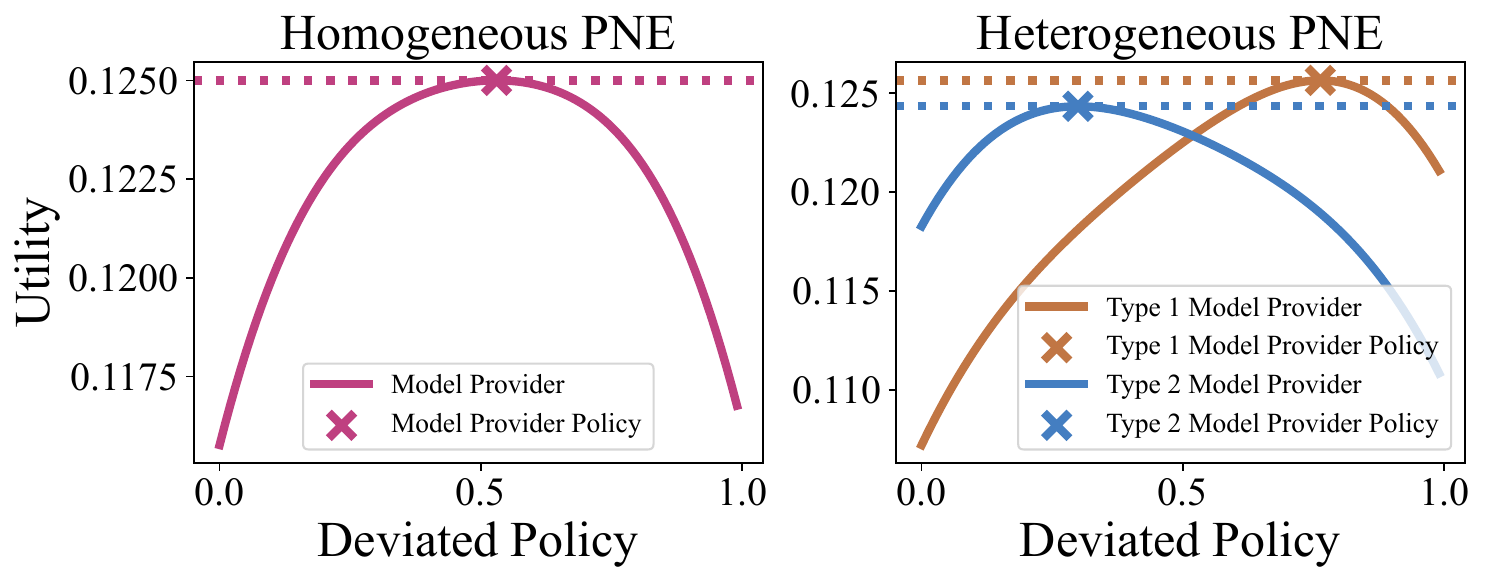}
        \caption{Utility of a single model provider with a deviated policy for both homogeneous and heterogeneous PNE in the configuration of \cref{example:probability-both-existence}.}
        \label{fig:probability-both-existence}
    \end{figure}

    As shown in \cref{fig:probability-both-existence}, we plot each player's utility if they deviate to a different policy. From the figure, it is evident that no player benefits from deviating, thereby verifying the correctness of the PNEs.
\end{example}

\section{Synthetic Experiments} \label{sect:experiments}
In this section, we conduct synthetic experiments to investigate how the temperature parameter $t$ in the probability choice model (\cref{eq:probability-model-mse}) influences the existence of homogeneous and heterogeneous PNEs in HD-Game-Probability.

\paragraph{Data-generating processes.}
Because our theoretical results do not depend on the number of data sources $K$, we set $K = 2$ for simplicity. We also choose $D = 2$ to fulfill \cref{assum:linearly-independent}. We randomly generate 10 game configurations with different $\{\Sigma_k, \beta_k, w_k\}_{k \in [K]}$. Each covariance matrix is constructed so that its largest eigenvalue does not exceed 1. In addition, we set $w_2 \ge 0.1$ to avoid a scenario where the first data source fully dominates the market. The number of model providers $N$ is varied from $\{2, 3, 4, \dots, 30\}$ to explore the effect of $N$ on the critical values of $t$.

\paragraph{Calculating critical temperature parameters.}
Following \cref{thrm:PNE-strategy-set}, each model provider $n$'s strategy $\htheta_n$ must take the form $\bartheta(\bfq_n)$ with $\bfq_n \in \Delta_2$ and $\bfq_n = (\alpha_n, 1 - \alpha_n)$ for $0 \le \alpha_n \le 1$. To verify whether a candidate strategy profile $\hbftheta$ is a PNE, we enumerate all possible deviations: for each provider $n$, we check every $\alpha_n \in \{0, 0.002, 0.004, \dots, 0.998, 1\}$ to see if a profitable deviation exists. Using this enumeration, we identify:

\begin{itemize}
    \item \textbf{Homogeneous PNE.} We seek the threshold $\tProbHomoThre$ given in \cref{thrm:probability-duopoly,thrm:probability-homogeneous-PNE}. Specifically, we search over $\tProbHomoThre \in \{0.001, 0.002, \dots, 0.999, 1\} \times (2\ellmax)$ to find the minimal $t$ for which the strategy profile $\hbfthetaHomo$ (from \cref{thrm:probability-homogeneous-PNE}) is indeed a PNE.
    \item \textbf{Heterogeneous PNE.} We seek the maximal $t$ for which a heterogeneous PNE exists. Since determining the exact maximum can be complex in game theory~\citep{gottlob2003pure,fabrikant2004complexity}, we adopt an empirical procedure inspired by the proof of \cref{thrm:probability-heterogeneous-PNE}. For each candidate $t$, we use \cref{alg:hete-PNE} (discussed in \cref{sect:empricial-approach}) to obtain a heterogeneous PNE candidate, then apply the same enumeration technique to verify whether it is a PNE. We thus find the largest $t$ in $\{0.001, 0.002, \dots, 0.999, 1\} \times (2\ellmax)$ for which our approach can produce a heterogeneous PNE. Although this does not guarantee the true maximum, it provides a useful lower bound.
\end{itemize}

\begin{figure}[t]
    \centering
    \includegraphics[width=\linewidth]{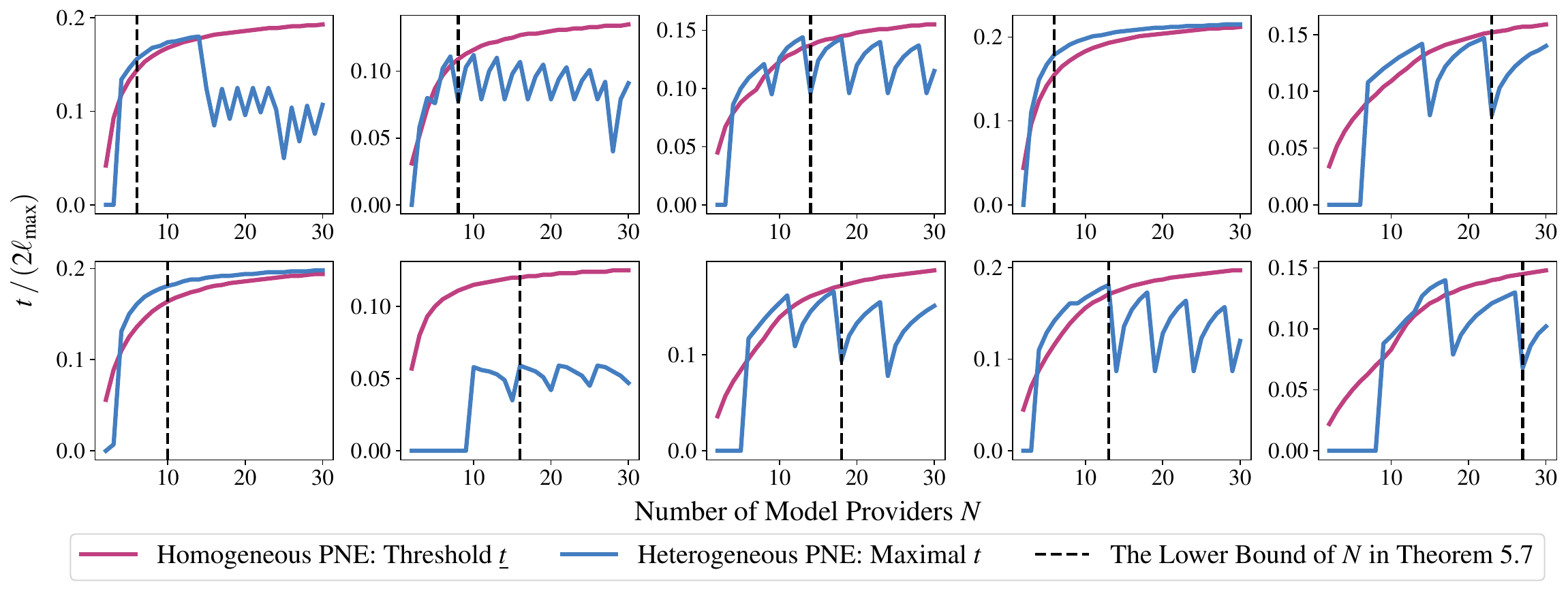}
    \caption{In the probability choice model, this figure reports, across 10 randomly generated games, the threshold $\tProbHomoThre$ that guarantees the existence of a homogeneous PNE and the approximate largest value of $t$ that guarantees the existence of a heterogeneous PNE, as $N$ varies.}
    \label{fig:single-all}
\end{figure}

\paragraph{Results and analysis.}
\cref{fig:single-all} presents our experimental results. We make the following observations:

\begin{enumerate}
    \item \textbf{Homogeneous PNE.} The threshold temperature $\tProbHomoThre$ given in \cref{thrm:probability-duopoly,thrm:probability-homogeneous-PNE} generally increases with $N$, but the growth curve flattens for larger $N$. This is consistent with \cref{thrm:probability-duopoly,thrm:probability-homogeneous-PNE}, which guarantee that $\tProbHomoThre \le 2\ell_{\max}$ and thus ensure the existence of a homogeneous PNE once $t \ge 2\ell_{\max}$, independent of $N$.  Moreover, in our setups, the minimal $t$ is often significantly less than $2\ellmax$ (roughly $0.1 \times (2\ellmax)$ to $0.2 \times (2\ellmax)$).

    \item \textbf{Heterogeneous PNE.} Our empirical approach effectively finds heterogeneous PNEs once $N$ exceeds the lower bound given in \cref{thrm:probability-heterogeneous-PNE}. Nonetheless, because the conditions in \cref{thrm:probability-heterogeneous-PNE} are sufficient but not necessary, we observe that a heterogeneous PNE can exist even when $N$ is smaller than that bound. The curve for the heterogeneous PNE is not smooth and exhibits periodic fluctuations. This is due to the fact that the heterogeneous PNE in HD-Game-Probability depends on the PNE in the corresponding HD-Game-Proximity, which itself has a non-smooth dependence on $N$ (\cref{thrm:proximity-PNE-N-middle,thrm:proximity-PNE-N-large}).

    \item \textbf{Coexistence of homogeneous and heterogeneous PNE.} In some games, the heterogeneous PNE curve appears above the homogeneous PNE curve, suggesting that both types may coexist. However, in other cases (e.g., the second row and second column of \cref{fig:single-all}), no such coexistence is observed. In addition, as $N$ increases, the maximal $t$ required for a heterogeneous PNE tends to be lower than the threshold $\tProbHomoThre$ required for a homogeneous PNE, indicating that the coexistence of both PNE types becomes increasingly unlikely for large $N$.
\end{enumerate}

\section{Conclusions} \label{sect:conclusions}  
We propose the \emph{Heterogeneous Data Game} to analyze competition among ML models in heterogeneous data markets. By studying PNE under proximity and probability choice models, we derive conditions for the existence of different PNE types, showing key factors that shape competitive ML marketplaces.
\section*{Impact Statement}
This work presents a game-theoretic framework to study competition among ML providers across heterogeneous data sources. By analyzing market equilibrium, it offers insights for designing policies that promote fair and diverse model offerings. Without such safeguards, smaller or less profitable data sources may be neglected, exacerbating inequalities. We aim for our findings to guide policymakers and stakeholders in shaping responsible and equitable ML marketplaces.

\bibliography{references}
\bibliographystyle{plainnat}

\clearpage

\appendix

\section{Omitted Examples} \label{sect:example-proximity-model}
\begin{example} \label{example:proximity-model}
    Let $N = K = 3$ with $\theta_1 = (0, 0, 1)$, $\theta_2 = (2, 0, 1)$, and $\theta_3 = (0, 1, 1)$. The covariance matrices are $\Sigma_1 = \Sigma_2 = \Sigma_3 = I$, and the weights are given by $\boldsymbol{w} = (w_1, w_2, w_3) = (0.6, 0.25, 0.15)$. A graphical explanation is provided in \cref{fig:proximity-model-example}, where $\theta_1$, $\theta_2$, and $\theta_3$ correspond to the vertices A, B, and C of the triangle, respectively.
    \begin{figure}[th]
        \centering
        \includegraphics[width=0.5\linewidth]{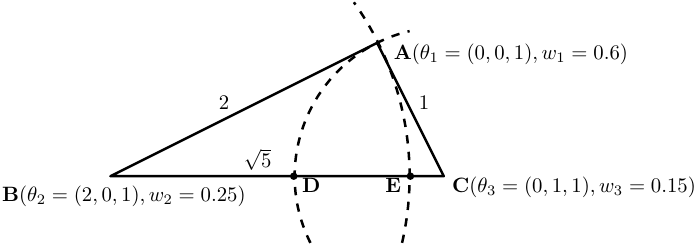}
        \caption{The graphical explanation of \cref{example:proximity-model}.}
        \label{fig:proximity-model-example}
    \end{figure}
    In this case, the Mahalanobis distance reduces to the standard Euclidean distance. It is straightforward to verify that, at a PNE, two players will choose strategy $\theta_1$, while the remaining player will adopt a strategy along the segment DE within the triangle (D and E satisfy that BA = BE and CA = CD), excluding the vertices D and E.
\end{example}

\section{An Approach to Find a Potential Heterogeneous PNE in HD-Game-Probability} \label{sect:empricial-approach}
\subsection{Approach Design}

We first define the following mapping $\calM$ from $\Delta_K^N = \underbrace{\Delta_K \times \cdots \times \Delta_K}_{N \text{ times}}$ to $\Delta_K^N$. For a $(\bfq_1, \bfq_2, \dots, \bfq_N) \in \Delta_K^N$, $\calM(\bfq_1, \bfq_2, \dots, \bfq_N)$ is calculated through \cref{alg:cal-M}.

\begin{algorithm}[ht]
    \caption{The $\calM$ mapping from $\Delta_K^N$ to $\Delta_K^N$}
    \label{alg:cal-M}
    \begin{algorithmic}[1]
        \State \textbf{Input:} $\bfq_1, \bfq_2, \dots, \bfq_N \in \Delta_K$
        \State $\htheta_n \leftarrow \bartheta(\bfq_n)$ for all $n \in [N]$ \label{line:cal-M-htheta-n}
        \State $\ell_{n,k} \leftarrow d_M^2(\htheta_n, \theta_k; \Sigma_k^{-1}) = (\htheta_n - \theta_k)^\top \Sigma_k (\htheta_n - \theta_k)$ for all $n \in [N]$ and $k \in [K]$ \label{line:cal-M-ell-n-k}
        \State $p_{n,k} \leftarrow \exp(-\ell_{n,k}/t)/(\sum_{i=1}^N \exp(-\ell_{i,k}/t))$ for all $n \in [N]$ and $k \in [K]$ \label{line:cal-M-p-n-k}
        \State $\tq_{n,k} \leftarrow w_kp_{n,k}(1-p_{n,k}) / (\sum_{j=1}^K w_jp_{n,j}(1-p_{n,j}))$ for all $n \in [N]$ and $k \in [K]$ \label{line:cal-M-tq-n-k}
        \State $\tbfq_n \leftarrow (\tq_{n,1}, \dots, \tq_{n,K})$ for all $n \in [N]$ \label{line:cal-M-tbfq-n}
        \State \textbf{Output:} $(\tbfq_1, \tbfq_2, \dots, \tbfq_N)$
    \end{algorithmic}
\end{algorithm}

We also need the following definition.
\begin{definition}[$k_n$] \label{defn:k-n}
    Given a PNE $\hbfthetaHete$ in HD-Game-Proximity, define $k_n$ as follows.
    $$
        k_n \triangleq \left(\text{the index } k \text{ such that } \theta_k = \hthetaProx_n\right).
    $$
\end{definition}

Based on the mapping $\calM$ and the constant $k_n$, the pseudocode of our approach is provided in \cref{alg:hete-PNE}. The approach consists of several steps. First, we compute the PNE $\hbfthetaProx$ of the corresponding HD-Game-Proximity using \cref{thrm:proximity-PNE-N-middle,thrm:proximity-PNE-N-large}. Then, starting from this strategy profile, we find a fixed point of the mapping $\calM$ defined in \cref{alg:cal-M}. Once a fixed point is identified, we output the corresponding strategy profile $\hbftheta = (\htheta_1, \dots, \htheta_N)$, where each $\hbftheta_n = \bartheta(\bfq_n)$.

\begin{algorithm}[ht]
    \caption{An Approach to Find a Potential Heterogeneous PNE in HD-Game-Probability}
    \label{alg:hete-PNE}
    \begin{algorithmic}[1]
        \State \textbf{Input:} Game parameters $\{\Sigma_k, \theta_k, w_k\}_{k \in [K]}$ and $N$
        \State Calculate the PNE $\hbfthetaProx$ based on \cref{thrm:proximity-PNE-N-middle,thrm:proximity-PNE-N-large} in HD-Game-Proximity
        \State Calculate $k_n$ for all $n \in [N]$ based on \cref{defn:k-n}
        \State $\bfq_n \leftarrow \underbrace{\left(0 ,0, \dots, 1, 0, \dots, 0\right)}_{\text{the } k_n\text{-th element is } 1}$
        \While{Not convergent}
            \State $(\bfq_1, \bfq_2, \dots, \bfq_N) \leftarrow \calM(\bfq_1, \bfq_2, \dots, \bfq_N)$ given in \cref{alg:cal-M}
        \EndWhile
        \State $\hthetaHete_n \leftarrow \bartheta(\bfq_n)$ for all $n \in [N]$
        \State $\hbfthetaHete \leftarrow (\hthetaHete_1, \dots, \hthetaHete_N)$
        \State \textbf{Output:} $\hbfthetaHete$
    \end{algorithmic}
\end{algorithm}

\subsection{The Idea of the Approach}
Let $\CProbHete$ be the constant that only depends on $\{\Sigma_k, \theta_k, w_k\}_{k=1}^K$ in \cref{thrm:probability-homogeneous-PNE-q-theta}.
For any $t > 0$ and $\beta \ge 1$, define the space
\begin{equation} \label{eq:probability-cal-Q}
    \calQ^\QUpper \triangleq \calQ^\QUpper_1 \times \calQ^\QUpper_2 \times \dots \times \calQ^\QUpper_N
\end{equation}
where
$$
\calQ^\QUpper_n \triangleq \left\{\bfq \in \Delta_K: \left\|\bfq - \underbrace{\left(0 ,0, \dots, 1, 0, \dots, 0\right)}_{\text{the } k_n\text{-th element is } 1}\right\|_{\infty} \le t^\beta/\CProbHete\right\}.
$$
Based on the above definition, we could provide two important properties of the mapping $\calM$.
\begin{enumerate}
    \item (\cref{thrm:probability-hete-gradient-zero}) For any $(\bfq_1, \dots, \bfq_N) \in \Delta_K^N$, let $\hbftheta = (\htheta_1, \htheta_2, \dots, \htheta_N)$ where $\htheta_n = \bartheta(\bfq_n)$, $\forall n \in [N]$. If $(\bfq_1, \dots, \bfq_N)$ is a fixed point of the mapping $\calM$, then for all $n \in [N]$, 
    $$
    \left.\frac{\partial u_n(\theta, \hbftheta_{-n})}{\partial \theta}\right|_{\theta = \htheta_n} = 0.
    $$
    \item (\cref{thrm:probability-hete-proper-mapping}) When $\beta > 1$, there exists a constant $\underline{t}$, depending only on $\{\Sigma_k, \theta_k, w_k\}_{k=1}^K$ and $\beta$, such that when $t \le \underline{t}$, for any $(\bfq_1, \bfq_2, \dots, \bfq_N) \in \calQ^\QUpper$ defined in \cref{eq:probability-cal-Q}, then $\calM(\bfq_1, \bfq_2, \dots, \bfq_N) \in \calQ^\QUpper$.
\end{enumerate}

Based on the Brouwer fixed-point theorem, the second point (\cref{thrm:probability-hete-proper-mapping}) guarantees the existence of a fixed point $(\bfq_1, \dots, \bfq_N) \in \calQ^\QUpper$ for the mapping $\calM$. Consequently, by the first point (\cref{thrm:probability-hete-gradient-zero}), the corresponding strategy profile $\hbfthetaHete = (\hthetaHete_1, \dots, \hthetaHete_N)$, where each $\hbfthetaHete_n = \bartheta(\bfq_n)$, satisfies the zero partial gradient condition at $\hthetaHete_n$ for $u_n(\theta, \hbfthetaHete_{-n})$ for all $n \in [N]$, which is a necessary condition for a PNE. Therefore, $\hbfthetaHete$ serves as a candidate for a heterogeneous PNE in HD-Game-Probability.

\section{Omitted Proofs}
\paragraph{Additional notations.} For any two vectors $\bfx = (x_1, \dots, x_M) \in \bbR^M$ and $\bfy = (y_1, \dots, y_M) \in \bbR^M$, we say $\bfx$ is dominated by $\bfy$ if for all $i \in [M]$, $y_i \le x_i$ and there exists a $j \in [M]$, $y_j < x_j$.

\subsection{Proof of \cref{thrm:PNE-strategy-set}} \label{sect:proof-PNE-strategy-set}
We first prove \cref{eq:bartheta-q}.
\begin{proof}[Proof of \cref{eq:bartheta-q}]
    According to the definition of the Mahalanobis distance, we have
    $$
    \bar{\theta}(\bfq) = \argmin_{\theta}\sum_{k=1}^K q_kd_M^2(\theta, \theta_k; \Sigma_k^{-1}) = \argmin_{\theta} \sum_{k=1}^K q_k (\theta-\theta_k)^T\Sigma_k(\theta - \theta_k)
    $$
    Denote the part in $\argmin$ in the right-hand side of the above equation as $L(\theta)$. And since $\bfq \in \Delta_K$ and $\Sigma_k \succ 0$, we have
    $$
        \nabla L(\theta) = 2\sum_{k=1}^K q_k\Sigma_k(\theta-\theta_k), \quad \nabla^2 L(\theta) = 2\sum_{k=1}^K q_k\Sigma_k \succ 0.
    $$
    As a result, $L(\theta)$ is strictly convex and it has the unique minimizer $\bar{\theta}(\bfq)$. In addition, $\bar{\theta}(\bfq)$ must satisfy that $\nabla L(\bar{\theta}(\bfq)) = 0$, which means
    $$
    2\sum_{k=1}^K q_k\Sigma_k(\bar{\theta}(\bfq) - \theta_k) = 0.
    $$
    As a result,
    $$
    \bartheta(\bfq) = \left(\sum_{k=1}^Kq_k\Sigma_k\right)^{-1}\left(\sum_{k=1}^Kq_k\Sigma_k\theta_k\right).
    $$
    Now the claim follows.
\end{proof}

For the proof of \cref{thrm:PNE-strategy-set}, we first introduce the following lemma.
\begin{lemma} \label{thrm:caltheta}
    Let $\caltheta$ be defined in \cref{eq:caltheta}. Then for any $\theta \not\in \caltheta$, there exists a $\theta^* \in \caltheta$ such that for all $k \in [K]$, $d_M(\theta^*, \theta_k; \Sigma_k^{-1}) < d_M(\theta, \theta_k; \Sigma_k^{-1})$.
\end{lemma}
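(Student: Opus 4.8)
The plan is to show that the set $\caltheta$ is precisely the set of \emph{Pareto-optimal} points with respect to the $K$ loss functions $\theta \mapsto d_M^2(\theta,\theta_k;\Sigma_k^{-1})$, and that any non-Pareto-optimal point is strictly dominated by some point of $\caltheta$. Concretely, fix $\theta \notin \caltheta$. I want to produce $\theta^* \in \caltheta$ with $d_M(\theta^*,\theta_k;\Sigma_k^{-1}) < d_M(\theta,\theta_k;\Sigma_k^{-1})$ simultaneously for \emph{all} $k \in [K]$ (strict in every coordinate, which is stronger than mere domination in the sense defined just above the proof section). The natural candidate is obtained by a small move of $\theta$ in a common descent direction for all $K$ quadratics.

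First I would set up the geometry. Each loss $L_k(\theta) = (\theta-\theta_k)^\top \Sigma_k (\theta-\theta_k)$ is a strictly convex quadratic with gradient $\nabla L_k(\theta) = 2\Sigma_k(\theta-\theta_k)$. The key dichotomy: either there is a direction $v$ with $\langle \nabla L_k(\theta), v\rangle < 0$ for every $k$ (a strict common descent direction), or there is none. If such a $v$ exists, then for small enough step size $\eta > 0$, $\theta - \eta v$ strictly decreases every $L_k$; but $\theta-\eta v$ need not lie in $\caltheta$, so I then project/flow further along descent directions, or more cleanly: among all points $\theta'$ with $L_k(\theta') \le L_k(\theta)$ for all $k$ (a compact set, since each $L_k$ is coercive), minimize $\sum_k L_k$; the minimizer lies in $\caltheta$ by the $\argmin$ characterization in \cref{eq:bartheta-q} (with uniform weights $q_k = 1/K$, or with whatever weights the minimizer's first-order conditions produce) and, because $\theta$ itself was not a minimizer of $\sum_k L_k$ restricted to this set (as $v$ gives a feasible strict-descent direction for the sum), the minimizer strictly beats $\theta$ in the sum; a short argument using strict convexity and the common descent direction upgrades this to strict improvement in every coordinate.

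The other case is that no strict common descent direction exists. By a standard separation/Gordan-type alternative, this forces $0$ to lie in the convex hull of $\{\nabla L_k(\theta)\}_{k\in[K]}$, i.e.\ there exist $q_k \ge 0$ with $\sum_k q_k = 1$ and $\sum_k q_k \Sigma_k(\theta-\theta_k) = 0$, which is exactly the first-order condition $\theta = \bartheta(\bfq)$, so $\theta \in \caltheta$ — contradicting our assumption. Hence the first case always applies.

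The main obstacle I anticipate is the \emph{simultaneous strictness} in all $K$ coordinates rather than weak domination: moving along a common descent direction or taking the constrained minimizer of $\sum_k L_k$ easily gives $L_k(\theta^*) \le L_k(\theta)$ for all $k$, but ruling out equality in some coordinate requires care. The clean fix is to choose $v$ strictly interior, e.g.\ $v = \sum_k \nabla L_k(\theta)/\|\nabla L_k(\theta)\|$ after checking no gradient vanishes (if some $\nabla L_k(\theta)=0$ then $\theta=\theta_k\in\caltheta$, again a contradiction), so that $\langle\nabla L_k(\theta),v\rangle<0$ for \emph{every} $k$; then take $\theta^{**} = \theta - \eta v$ for small $\eta$ to get strict decrease everywhere, and finally replace $\theta^{**}$ by $\theta^* = \bartheta(\bfq^{**})$ where $\bfq^{**}$ are weights making $\theta^{**}$'s neighborhood optimal — or, simpler, minimize $\sum_k L_k$ over the \emph{compact} sublevel region $\{\theta': L_k(\theta')\le L_k(\theta^{**})\,\forall k\}$, landing in $\caltheta$ while preserving the strict inequalities inherited from $\theta^{**}$. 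I would double-check that \cref{assum:linearly-independent} is not actually needed here (it should not be — this lemma is purely about the descent geometry), and that the $\argmin$ in \cref{eq:bartheta-q} indeed produces a point of $\caltheta$ for the weights arising from the KKT conditions of the constrained minimization.
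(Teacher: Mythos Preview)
Your approach is correct and is the same idea as the paper's: both identify $\caltheta$ as the Pareto set of the strictly convex losses $L_k(\theta)=(\theta-\theta_k)^\top\Sigma_k(\theta-\theta_k)$ and argue that any $\theta\notin\caltheta$ is dominated by some point of $\caltheta$. The paper's proof is two lines---it cites \citet{boyd2004convex} for the scalarization characterization of Pareto points (hence $\caltheta$ contains them all) and \citet{ehrgott2005multicriteria} for the fact that every non-efficient point is dominated by an efficient one. Your version unpacks this via Gordan's alternative, which the paper also uses (for the neighboring \cref{thrm:distance-smaller}), so the two arguments are the same route at different levels of detail. Your explicit Gordan step actually handles the \emph{all-coordinates-strict} inequality the lemma asserts more cleanly than the paper's appeal to weak domination does.

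One concrete slip: your explicit candidate $v=\sum_k \nabla L_k(\theta)/\|\nabla L_k(\theta)\|$ is \emph{not} a common strict descent direction in general---two nearly antipodal gradients will make $\langle\nabla L_j(\theta),v\rangle>0$ for some $j$. The formula is also unnecessary: the direction whose existence Gordan's theorem already guarantees satisfies $\langle\nabla L_k(\theta),v\rangle<0$ for every $k$, so just use that $v$, step to $\theta^{**}=\theta+\eta v$ for small $\eta>0$, and then pass to a Pareto point $\theta^*\in\caltheta$ weakly dominating $\theta^{**}$ (your KKT argument on the compact sublevel set works for this last step). You are right that \cref{assum:linearly-independent} plays no role here.
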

\begin{proof}[Proof of \cref{thrm:caltheta}]
    Define the following function $f: \bbR^D \rightarrow \bbR^K$ as
    $$
    f(\theta) = \left(d_M^2(\theta, \theta_1; \Sigma_1^{-1}), d_M^2(\theta, \theta_2; \Sigma_2^{-1}), \dots, d_M^2(\theta, \theta_K; \Sigma_K^{-1})\right).
    $$
    For any $k \in [K]$, define $f_k(\theta) = d_M^2(\theta, \theta_k; \Sigma_k^{-1})$. Note that for any $k \in [K]$, since $\Sigma_k \succ 0$, we have $d_M^2(\theta, \theta_k; \Sigma_k^{-1}) = (\theta - \theta_k)^\top \Sigma_k(\theta - \theta_k)$ is convex w.r.t. $\theta$. As a result, according to \citet{boyd2004convex}, for every Pareto optimal point $\theta$ of $f$, there is some $\bfq \in \Delta_K$ such that
    $$
    \theta  = \argmin_{\theta'} \sum_{k=1}^Kq_k d_M^2(\theta', \theta_k; \Sigma_k^{-1}) = \bartheta(\bfq).
    $$
    Hence, the set $\caltheta$ defined in \cref{eq:caltheta} contains all Pareto optimal points.

    Furthermore, for any points $\theta \not\in \caltheta$, there must exist a $\theta' \in \caltheta$ such that $f(\theta')$ dominates $f(\theta)$~\citep{ehrgott2005multicriteria}, which proves the claim.
\end{proof}
Now we could prove \cref{thrm:PNE-strategy-set}.
\begin{proof}[proof of \cref{thrm:PNE-strategy-set}]
(1) In the proximity model, let $\hbftheta = (\htheta_1, \htheta_2, \dots, \htheta_K)$ be a PNE. If $\htheta_k \in \caltheta, \forall k \in [K]$, then the claim is already satisfied. Now suppose there exists an index $n$ such that $\htheta_n \not\in \caltheta$. According to \cref{thrm:caltheta}, there exists a policy $\theta' \in \caltheta$ such that $\forall k \in [K], \ell'_{n,k} = d_M^2(\theta', \theta_k; \Sigma_k^{-1}) < d_M^2(\htheta_n, \theta_k; \Sigma_k^{-1}) = \ell_{n,k}$. Consider the new strategy profile $\tilde{\bftheta} = (\theta', \hbftheta_{-n})$. We now show that $\tilde{\bftheta}$ is also a PNE.

Since $g^\hardMin_n$ is decreasing on the $n$-th element and $\hbftheta$ is a PNE, we have that
\begin{small}
$$
u_n(\hbftheta) = \sum_{k=1}^K w_kg^\hardMin_n(\ell_{1,k}, \dots, \ell_{N,k}) \le \sum_{k=1}^K w_kg^\hardMin_n(\ell_{1,k}, \dots, \ell_{n-1, k}, \ell'_{n,k}, \ell_{n+1, k}, \dots, \ell_{N,k}) = u_n(\tilde{\bftheta}) \le u_n(\hbftheta).
$$
\end{small}
As a result, $u_n(\tilde{\bftheta}) = u_n(\hbftheta)$ and player $n$ will not benefit by deviation. Furthermore, $\forall k \in [K]$, we have $g^\hardMin_n(\ell_{1,k}, \dots, \ell_{N,k}) = g^\hardMin_n(\ell_{1,k}, \dots, \ell_{n-1, k}, \ell'_{n,k}, \ell_{n+1, k}, \dots, \ell_{N,k})$. (Otherwise, player $n$ would benefit by deviation.) Since $\ell'_{n,k} < \ell_{n,k}$, we have
$$
g^\hardMin_n(\ell_{1,k}, \dots, \ell_{N,k}) = g^\hardMin_n(\ell_{1,k}, \dots, \ell_{n-1, k}, \ell'_{n,k}, \ell_{n+1, k}, \dots, \ell_{N,k}) = 0
$$
and hence $\ell_{n,k} > \ell'_{n,k} > \min_{i \in [N]}\ell_{i,k}, \forall k \in [K]$. As a result, this deviation will not affect any other players' utility. Specifically, consider any other player $j \in [N] \backslash \{n\}$. We have
\begin{equation} \label{eq:proof-PNE-strategy-set-1}
g^\hardMin_j(\ell_{1,k}, \dots, \ell_{n-1, k}, \ell'_{n,k}, \ell_{n+1, k}, \dots, \ell_{N,k}) = g^\hardMin_j(\ell_{1,k}, \dots, \ell_{N,k}).
\end{equation}
Furthermore, since $\ell'_{n,k} < \ell_{n,k}$, we have that for any $\theta'' \in \bbR^D$ and $\ell''_{j,k} = d_M^2(\theta'', \theta_k; \Sigma_k^{-1})$, we have $\forall j \in [N]$,
\begin{equation} \label{eq:proof-PNE-strategy-set-2}
\begin{aligned}
& \, g^\hardMin_j(\ell_{1,k}, \dots, \ell_{j-1,k}, \ell''_{j,k}, \ell_{j+1,k}, \dots, \ell_{n-1, k}, \ell'_{n,k}, \ell_{n+1, k}, \dots, \ell_{N,k}) \\ 
\le & \, g^\hardMin_j(\ell_{1,k}, \dots, \ell_{j-1, k}, \ell''_{j,k}, \ell_{j+1, k}, \dots, \ell_{N,k}).
\end{aligned}
\end{equation}
As a result,
\begin{align*}
    & \, u_j((\theta'', \tilde{\bftheta}_{-j})) \\
    = & \, \sum_{k=1}^K w_k \cdot g^\hardMin_j(\ell_{1,k}, \dots, \ell_{j-1,k}, \ell''_{j,k}, \ell_{j+1,k}, \dots, \ell_{n-1, k}, \ell'_{n,k}, \ell_{n+1, k}, \dots, \ell_{N,k}) \\
    \le & \, \sum_{k=1}^K w_k \cdot g^\hardMin_j(\ell_{1,k}, \dots, \ell_{j-1, k}, \ell''_{j,k}, \ell_{j+1, k}, \dots, \ell_{N,k}) \tag{by \cref{eq:proof-PNE-strategy-set-2}}\\
    \le & \, \sum_{k=1}^K w_k \cdot g^\hardMin_j(\ell_{1,k}, \dots, \ell_{j-1, k}, \ell_{j,k}, \ell_{j+1, k}, \dots, \ell_{N,k}) \tag{$\hbftheta$ is a PNE and hence $u_j(\hbftheta) \ge u_j(\theta'', \hbftheta_{-j})$} \\
    = & \, \sum_{k=1}^Kw_k \cdot g^\hardMin_j(\ell_{1,k}, \dots, \ell_{n-1, k}, \ell'_{n,k}, \ell_{n+1, k}, \dots, \ell_{N,k}) \tag{by \cref{eq:proof-PNE-strategy-set-1}} \\
    = & \, u_j(\tilde{\bftheta}).
\end{align*}
Therefore, any player will not benefit by deviation from $\tilde{\bftheta}$ and $\tilde{\bftheta}$ is a PNE.

To prove the original claim, we can keep this procedure. After at most $N$ steps, we can get a new PNE $\bftheta^* = (\theta^*_1, \dots, \theta^*_N)$ from $\bftheta$ and $\theta^*_k \in \caltheta, \forall k \in [K]$. Now the claim follows.

(2) In the probability model, suppose there exists a PNE $\bftheta = (\theta_1, \theta_2, \dots, \theta_N)$ and an index $n \in [N]$ such that $\theta_n \not\in \caltheta$. Then according to \cref{thrm:caltheta}, there exists $\theta' \in \caltheta$ such that $\forall k \in [K], d_M(\theta', \theta_k; \Sigma_k^{-1}) < d_M(\theta_n, \theta_k; \Sigma_k^{-1})$. Since $g^\softMinMSE_n$ is strictly decreasing on the $n$-th element, we could conclude that player $n$ will benefit if deviating to the policy $\theta'$, which leads to a contradiction. Now the claim follows.
\end{proof}

\subsection{Proof of \cref{thrm:proximity-duopoly}} \label{sect:proof-proximity-duopoly}
We first introduce the following lemma.
\begin{lemma} \label{thrm:distance-smaller}
    Suppose that \cref{assum:linearly-independent} holds. Let $\bfq \in \Delta_K$ and $\theta = \bartheta(\bfq)$. Then for any $k_0 \in [K]$ such that $q_{k_0} > 0$, there exists a strategy $\tildetheta \in \caltheta$ such that
    $$
    \forall k \in [K] \backslash \{k_0\}, \quad d_M(\tildetheta, \theta_k; \Sigma_k^{-1}) < d_M(\theta, \theta_k; \Sigma_k^{-1}).
    $$
\end{lemma}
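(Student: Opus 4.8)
The plan is to build, at the point $\theta=\bartheta(\bfq)$, a single direction along which \emph{all} of the squared losses $f_k(\cdot):=d_M^2(\cdot,\theta_k;\Sigma_k^{-1})$ with $k\neq k_0$ strictly decrease, take an infinitesimal step in that direction, and---if that step leaves $\caltheta$---correct it using \cref{thrm:caltheta}. The only fact about $\theta$ that is needed is the first-order characterization already derived in the proof of \cref{eq:bartheta-q}: $\nabla f_k(\theta)=2\Sigma_k(\theta-\theta_k)$ and $\theta=\bartheta(\bfq)$ is equivalent to $\sum_{k=1}^K q_k\Sigma_k(\theta-\theta_k)=0$, i.e.\ $\sum_{k=1}^K q_k\nabla f_k(\theta)=0$.

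Next I would let $A\in\bbR^{(K-1)\times D}$ have rows $\nabla f_k(\theta)^\top$ for $k\in[K]\setminus\{k_0\}$ and apply Gordan's theorem of the alternative: either (i) there is $v\in\bbR^D$ with $\nabla f_k(\theta)^\top v<0$ for every $k\neq k_0$, or (ii) there is $y\ge 0$, $y\neq 0$, with $\sum_{k\neq k_0} y_k\nabla f_k(\theta)=0$. I would rule out (ii): normalizing $y$ to a probability vector and extending by $\mu_{k_0}=0$ yields $\boldsymbol{\mu}\in\Delta_K$ with $\sum_{k=1}^K \mu_k\Sigma_k(\theta-\theta_k)=0$, hence $\theta=\bartheta(\boldsymbol{\mu})$; but $\theta=\bartheta(\bfq)$ as well, and $\bfq\neq\boldsymbol{\mu}$ because $q_{k_0}>0=\mu_{k_0}$, contradicting \cref{assum:linearly-independent}. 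So (i) holds; fix such a $v$. Since each $f_k$ is the quadratic $f_k(\theta+\varepsilon v)=f_k(\theta)+\varepsilon\,\nabla f_k(\theta)^\top v+\varepsilon^2 v^\top\Sigma_k v$ with strictly negative linear coefficient, for all sufficiently small $\varepsilon>0$ the point $\theta':=\theta+\varepsilon v$ satisfies $d_M(\theta',\theta_k;\Sigma_k^{-1})<d_M(\theta,\theta_k;\Sigma_k^{-1})$ for every $k\neq k_0$. If $\theta'\in\caltheta$, set $\tildetheta=\theta'$; otherwise \cref{thrm:caltheta} provides $\tildetheta\in\caltheta$ with $d_M(\tildetheta,\theta_k;\Sigma_k^{-1})<d_M(\theta',\theta_k;\Sigma_k^{-1})$ for all $k\in[K]$, hence in particular for all $k\neq k_0$. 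Either way $\tildetheta$ is as required.

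I expect the crux to be ruling out alternative (ii), i.e.\ arranging a \emph{simultaneous} strict improvement on all $K-1$ remaining sources rather than one source at a time: a naive constructive choice such as $\bartheta$ of the weights $\bfq$ conditioned on $k\neq k_0$ only guarantees a decrease of a \emph{weighted} sum of the $f_k$, not of each individual $f_k$. This is precisely where the theorem of the alternative together with the injectivity of $\bartheta$ (\cref{assum:linearly-independent}) does the work---injectivity is exactly what forbids $\theta$ from being, at the same time, a positively-weighted minimizer that places no mass on source $k_0$. The remaining concern, that the descent step may land outside $\caltheta$, is handled at no cost by \cref{thrm:caltheta}, which only strengthens the inequalities.
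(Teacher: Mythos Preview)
Your proposal is correct and essentially identical to the paper's proof: both set up the gradients $\Sigma_k(\theta-\theta_k)$ for $k\neq k_0$, invoke Gordan's theorem, rule out the nonnegative-combination alternative via \cref{assum:linearly-independent} (since it would yield a second $\boldsymbol{\mu}\in\Delta_K$ with $\mu_{k_0}=0$ and $\bartheta(\boldsymbol{\mu})=\theta$), take a small step along the resulting descent direction, and then fall back on \cref{thrm:caltheta} if the step lands outside $\caltheta$. The only differences are cosmetic (rows versus columns, carrying the factor of~$2$).
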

\begin{proof}
    For any $k \in [K] \backslash \{k_0\}$, define $v_k = \Sigma_k(\theta - \theta_k)$. Furthermore, define the following matrix
    $$
    A = \begin{pmatrix}
        v_1 & v_2 & \dots & v_{k_0 - 1} & v_{k_0 + 1} & \dots v_K.
    \end{pmatrix}
    $$
    Then for any $\bfy \in \bbR^{K-1}$ such that $\bfy \ge 0$ and $\bfy \ne 0$, we must have $A\bfy \ne 0$. Otherwise, we can construct a new $\bfq'$ such that
    $$
    q'_k = \begin{cases}
        y_k / \left(\sum_{k'=1}^{K-1}y_{k'}\right) & \text{if } k < k_0, \\
        0 & \text{if } k = k_0, \\
        y_{k-1} / \left(\sum_{k'=1}^{K-1}y_{k'}\right) & \text{if } k > k_0.
    \end{cases}
    $$
    As a result,
    $$
    \sum_{k=1}^K q_k'\Sigma_k(\theta - \theta_k) = \sum_{k=1}^K q_k'v_k = A\bfy / \left(\sum_{k'=1}^{K-1}y_{k'}\right) = 0
    $$
    and $\theta = \bartheta(\bfq') = \bartheta(\bfq)$, which violates \cref{assum:linearly-independent}.

    According to Gordan's theorem (\cref{lemma:stiemke}, \citep{mangasarian1994nonlinear}), there must exist a vector $x \in \bbR^D$ such that $(-A)^\top x > 0$, which means for all $k \in [K] \backslash \{k_0\}$, we have $v_k^\top x < 0$. Now construct the following $\theta'_t = \theta + t\cdot x$ with any $t \ge 0$. We can get that, for any $k \in [K] \backslash \{k_0\}$,
    $$
    \begin{aligned}
        \left.\frac{\mathrm{d} d_M^2(\theta'_t, \theta_k; \Sigma_k^{-1})}{\mathrm{d} t}\right|_{t=0} & = \left.\frac{\mathrm{d} \left(\theta + t\cdot x - \theta_k\right)^\top \Sigma_k \left(\theta + t\cdot x - \theta_k\right)}{\mathrm{d} t}\right|_{t=0} = \left.2x^\top\Sigma_k(\theta + t \cdot x - \theta_k)\right|_{t=0} \\
        & = 2x^\top\Sigma_k(\theta - \theta_k) = 2x^\top v_k < 0.
    \end{aligned}
    $$
    As a result, there must exist a $t_k > 0$ such that for any $0 < t < t_k$, we have $d_M^2(\theta'_t, \theta_k; \Sigma_k^{-1}) < d_M^2(\theta, \theta_k; \Sigma_k^{-1})$. Now choose $t' = \min\{t_1, t_2, \dots, t_{k_0-1}, t_{k_0 + 1}, \dots, t_K\} / 2$ and let $\theta' = \theta + t' \cdot x$. Then for all $k \in [K] \backslash \{k_0\}$, we must have $d_M(\theta', \theta_k; \Sigma_k^{-1}) < d_M(\theta, \theta_k; \Sigma_k^{-1})$.

    Now if $\theta' \in \caltheta$, the claim has already follows. When $\theta' \not\in \caltheta$, according to the proof of \cref{thrm:caltheta} (see \cref{sect:proof-PNE-strategy-set}), $\caltheta$ contains all Pareto optimal points. As a result, there must exist a strategy $\tildetheta \in \caltheta$ such that for all $k \in [K] \backslash \{k_0\}, d_M(\tildetheta, \theta_k; \Sigma_k^{-1}) < d_M(\theta', \theta_k; \Sigma_k^{-1}) < d_M(\theta, \theta_k; \Sigma_k^{-1})$. Now the claim follows.
\end{proof}

Then we could prove \cref{thrm:proximity-duopoly}.

\begin{proof}[Proof of \cref{thrm:proximity-duopoly}]
    (1) We first consider the case where $w_1 < 0.5$. Suppose a PNE $\hbftheta = (\htheta_1, \htheta_2)$ exists. According to \cref{thrm:PNE-strategy-set}, we can assume that $\htheta_1, \htheta_2 \in \caltheta$. Since the sum of the utilities of two players is $1$, there must exist one player that has utility not greater than $0.5$. Without loss of generality, we assume that $u_1(\hbftheta) \le 0.5$ and $u_2(\hbftheta) \ge 0.5$. Now we construct a new policy $\theta'$ for player $1$ such that $u_1(\theta', \htheta_2) > 0.5$.
    
    According to \cref{assum:linearly-independent}, there exists a unique $\bfq \in \Delta_K$ such that $\htheta_2 = \bartheta(\bfq)$. Since $\bfq \in \Delta_K$, there must exist one element $k_0$ such that $q_{k_0} > 0$. According to \cref{thrm:distance-smaller}, there must exist a $\theta' \in \caltheta$ such that
    $$
    \forall k \in [K] \backslash \{k_0\}, \quad d_M(\theta', \theta_k; \Sigma_k^{-1}) < d_M(\theta, \theta_k; \Sigma_k^{-1}).
    $$
    Now by the proximity model as shown in \cref{eq:proximity-model}, we have that
    $$
    u_1(\theta', \htheta_2) \ge \sum_{k \in [K] \backslash \{k_0\}} w_{k} = 1 - w_{k_0} \ge 1 - w_1 > 1 - 0.5 = 0.5 \ge u_1(\hbftheta).
    $$
    As a result, $\hbftheta$ is not a PNE, which leads to a contradiction.

    (2) We then consider the case when $w_1 \ge 0.5$.

    We first show that $\hbftheta = (\theta_1, \theta_1)$ is a PNE. In this strategy profile, Since two players choose the same strategy $\theta_1$, we have $u_1(\hbftheta) = u_2(\hbftheta) = 0.5$. In addition, when any player deviate from the strategy $\theta_1$, he could have higher loss on data source $1$ and hence could have utility at most $\sum_{k = 2}^K w_k = 1 - w_1 \le 0.5$. As a result, $\hbftheta = (\theta_1, \theta_1)$ is a PNE.

    Furthermore, consider the case when $w_1 > 0.5$. Suppose there exists another PNE $\hbftheta' = (\htheta_1, \htheta_2) \ne \hbftheta$. We consider the following two cases.
    \begin{enumerate}
        \item Suppose $\htheta_1, \htheta_2 \ne \theta_1$. Since $u_1(\hbftheta') + u_2(\hbftheta') = 1$, there exist one player such that his utility is not greater than $0.5$. Without loss of generality, we assume $u_1(\hbftheta') \le 0.5$. Then if player $1$ choose strategy $\theta_1$, he will get utility at least $w_1 > 0.5 \ge u_1(\hbftheta')$, which leads to a contradiction.
        \item Suppose one player choose $\theta_1$ and the other player does not. Without loss of generality, we assume $\htheta_1' = \theta_1$ and $\htheta_2' \ne \theta_1$. In this case, $u_2(\hbftheta') \le \sum_{k = 2}^K w_k = 1 - w_1 < 0.5$. However, if player $2$ choose strategy $\theta_1$, he will have the same strategy with player $1$ and get utility $0.5 > u_2(\hbftheta')$, which leads to a contraction.
    \end{enumerate}
    To conclude, $\hbftheta = (\theta_1, \theta_1)$ is the unique PNE when $w_1 > 0.5$.

    Now the claim follows.
\end{proof}

\subsection{Proof of \cref{thrm:probability-duopoly}} \label{sect:proof-probability-duopoly}
\begin{proof}
    (1) We first show that, if a PNE exists, the only possible PNE is that both players choose $\bartheta(\bfw)$.

    Suppose that $\hbftheta = (\htheta_1, \htheta_2)$ is a PNE. According to the definition of PNE, we must have
    $$
    \htheta_1 \in \argmax_{\theta} u_1(\theta, \htheta_2) = \argmax_{\theta} \sum_{k=1}^K w_k\cdot p_{1,k}(\theta)
    $$
    where
    $$
    p_{1,k}(\theta) = \frac{\exp\left(-\left(\theta - \theta_k\right)^\top \Sigma_k \left(\theta - \theta_k\right) / t \right)}{\exp\left(-\left(\theta - \theta_k\right)^\top \Sigma_k \left(\theta - \theta_k\right) / t \right) + \exp\left(-\left(\htheta_2 - \theta_k\right)^\top \Sigma_k \left(\htheta_2 - \theta_k\right) / t \right)}.
    $$
    Note that
    \begin{small}
    $$
    \begin{aligned}
        & \, \frac{\partial p_{1,k}(\theta)}{\partial \theta} \\
        = & \, \frac{\exp\left(-\left(\theta - \theta_k\right)^\top \Sigma_k \left(\theta - \theta_k\right) / t \right)\cdot \exp\left(-\left(\htheta_2 - \theta_k\right)^\top \Sigma_k \left(\htheta_2 - \theta_k\right) / t \right) }{\left(\exp\left(-\left(\theta - \theta_k\right)^\top \Sigma_k \left(\theta - \theta_k\right) / t \right) + \exp\left(-\left(\htheta_2 - \theta_k\right)^\top \Sigma_k \left(\htheta_2 - \theta_k\right) / t \right)\right)^2} \cdot \left(-\frac{2}{t} \cdot \Sigma_k\left(\theta - \theta_k\right)\right) \\
        = & \, -\frac{2}{t} \cdot p_{1,k}(\theta)(1 - p_{1,k}(\theta))\Sigma_k (\theta - \theta_k).
    \end{aligned}
    $$
    \end{small}
    As a result,
    $$
        \frac{\partial u_1(\theta, \htheta_2)}{\partial \theta} = \sum_{k=1}^K w_k \cdot \frac{\partial p_{1,k}(\theta)}{\partial \theta} = -\frac{2}{t}\cdot \sum_{k=1}^K w_k p_{1,k}(\theta)(1 - p_{1,k}(\theta))\Sigma_k (\theta - \theta_k).
    $$
    Hence,
    \begin{equation} \label{eq:proof-probability-duopoly-gradient-0}
    \left.\frac{\partial u_1(\theta, \htheta_2)}{\partial \theta}\right|_{\theta = \htheta_1} = -\frac{2}{t}\cdot \sum_{k=1}^K w_k p_{1,k}(\htheta_1)(1 - p_{1,k}(\htheta_1))\Sigma_k (\htheta_1 - \theta_k) = 0.
    \end{equation}
    Similarly, we can get that
    $$
    \left.\frac{\partial u_2(\htheta_1, \theta)}{\partial \theta}\right|_{\theta = \htheta_2} = -\frac{2}{t}\cdot \sum_{k=1}^K w_k p_{2,k}(\htheta_2)(1 - p_{2,k}(\htheta_2))\Sigma_k (\htheta_2 - \theta_k) = 0.
    $$
    where
    $$
    p_{2,k}(\theta) = \frac{\exp\left(-\left(\theta - \theta_k\right)^\top \Sigma_k \left(\theta - \theta_k\right) / t \right)}{\exp\left(-\left(\htheta_1 - \theta_k\right)^\top \Sigma_k \left(\htheta_1 - \theta_k\right) / t \right) + \exp\left(-\left(\theta - \theta_k\right)^\top \Sigma_k \left(\theta - \theta_k\right) / t \right)}.
    $$
    Note that $p_{1,k}(\htheta_1) + p_{2,k}(\htheta_2) = 1$. As a result,
    \begin{small}
    $$
    \begin{aligned}
        \left.\frac{\partial u_2(\theta, \htheta_2)}{\partial \theta}\right|_{\theta = \htheta_1} & = -\frac{2}{t}\cdot \sum_{k=1}^K w_k p_{1,k}(\htheta_1)(1 - p_{1,k}(\htheta_1))\Sigma_k (\htheta_1 - \theta_k) = -\frac{2}{t}\cdot \sum_{k=1}^K w_k p_{1,k}(\htheta_1)p_{2,k}(\htheta_2)\Sigma_k (\htheta_1 - \theta_k) = 0. \\
        \left.\frac{\partial u_2(\htheta_1, \theta)} {\partial \theta}\right|_{\theta = \htheta_2} & = -\frac{2}{t}\cdot \sum_{k=1}^K w_k p_{2,k}(\htheta_2)(1 - p_{2,k}(\htheta_2))\Sigma_k (\htheta_2 - \theta_k) = -\frac{2}{t}\cdot \sum_{k=1}^K w_k p_{1,k}(\htheta_1)p_{2,k}(\htheta_2)\Sigma_k (\htheta_2 - \theta_k) = 0.
    \end{aligned}
    $$
    \end{small}
    Hence,
    $$
    \sum_{k=1}^K w_k p_{1,k}(\htheta_1)p_{2,k}(\htheta_2)\Sigma_k (\htheta_1 - \theta_k) = 0 = \sum_{k=1}^K w_k p_{1,k}(\htheta_1)p_{2,k}(\htheta_2)\Sigma_k (\htheta_2 - \theta_k)
    $$
    and therefore
    $$
    \sum_{k=1}^K w_k p_{1,k}(\htheta_1)p_{2,k}(\htheta_2)\Sigma_k (\htheta_1 - \htheta_2) = 0.
    $$
    Define matrix $A = \sum_{k=1}^K w_k p_{1,k}(\htheta_1)p_{2,k}(\htheta_2)\Sigma_k$ and we have $A (\htheta_1 - \htheta_2) = 0$. Note that for all $k \in [K]$, $w_k, p_{1,k}(\htheta_1), p_{2,k}(\htheta_2) > 0$ and $\Sigma_k \succ 0$. As a result, $A \succ 0$ and we must have $\htheta_1 = \htheta_2$. Note that when $\htheta_1 = \htheta_2$, $p_{1,k}(\htheta_1) = p_{2,k}(\htheta_2) = 1/2$. Now \cref{eq:proof-probability-duopoly-gradient-0} becomes
    $$
    -\frac{2}{t} \cdot \sum_{k=1}^K w_k \cdot \frac{1}{2} \cdot \frac{1}{2} \Sigma_k (\htheta_1 - \theta_k) = 0.
    $$
    As a result, $\htheta_1 = \bartheta(\bfw) = \htheta_2$. Hence, if a PNE exists, the only possible PNE is that both players choose $\bartheta(\bfw)$.

   The claim then follows from the proof of the more general result given by \cref{thrm:probability-homogeneous-PNE} (see \cref{lemma:prob-homo-ellmax,lemma:prob-homo-exists} in \cref{sect:proof-probability-homogeneous-PNE} for details).

\end{proof}

\subsection{Proof of \cref{thrm:proximity-heterogeneity}} \label{sect:proof-proximity-heterogeneity}
\begin{proof}
    Suppose there exists a PNE $\hbftheta = (\htheta_1, \htheta_2, \dots, \htheta_N)$ such that there are two players choose the same strategy and the strategy is outside the set $\{\theta_1, \theta_2, \dots, \theta_K\}$. Without loss of generality, we assume that $\htheta_1 = \htheta_2 \not\in \{\theta_1, \dots, \theta_K\}$.

    Define $\calK$ as the set of data sources that player 1 and 2 could get positive utility, i.e.,
    $$
    \calK \triangleq \{k: \forall j \in [N], d_M(\htheta_1, \theta_k; \Sigma_k^{-1}) \le d_M(\htheta_j, \theta_k; \Sigma_k^{-1})\}.
    $$
    Note that $\calK$ cannot be an empty set, as player 1 could otherwise deviate to $\theta_1$ and achieve a positive and higher utility. For any $k \in \calK$, let 
    $$
    k_n = \left|\{j: d_M(\htheta_j, \theta_k; \Sigma_k^{-1}) = d_M(\htheta_1, \theta_k; \Sigma_k^{-1})\}\right|
    $$
    be the number of players that achieve the minimal loss in data source $k$ in the PNE $\hbftheta$. Note that $k_n \ge 2$ since $\htheta_1 = \htheta_2$. Then we can get that
    $$
    u_1(\hbftheta) = u_2(\hbftheta) = \sum_{k \in \calK} \frac{w_k}{k_n}.
    $$
    We consider two cases about $\calK$.
    \begin{enumerate}
        \item Consider the case when $|\calK| = 1$ and $\calK = \{k_0\}$. If player $1$ deviates to policy $\theta_{k_0}$, he will become the only player that has the smallest loss on data source $k$ and hence have a utility at least $w_{k_0} > w_{k_0} / n_{k_0}$, which leads to a contradiction.
        \item Consider the case when $|\calK| \ge 2$. We further consider two cases about $\htheta_1$.
        \begin{enumerate}
            \item Consider the case when $\htheta_1 \not\in \caltheta$. Then according to the proof of \cref{thrm:caltheta} (see \cref{sect:proof-PNE-strategy-set}), there must exist a $\theta \in \caltheta$ such that for all $k \in [K]$, $d_M(\theta, \theta_k; \Sigma_k^{-1}) < d_M(\htheta_2, \theta_k; \Sigma_k^{-1})$. As a result, if player $1$ deviates to the policy $\theta$, he will get utility at least $\sum_{k \in \calK} w_k > \sum_{k \in \calK} w_k/k_n = u_1(\hbftheta)$, which leads to a contradiction.
            \item Consider the case when $\htheta_1 \in \caltheta$. Let $k_0$ be the smallest element in $\calK$. According to \cref{assum:linearly-independent}, let $\bfq \in \Delta_K$ be the unique vector such that $\htheta_1 = \bartheta(\bfq)$. Since $\htheta \ne \theta_{k_0}$ by assumption, there must exist a $k_1 \in [K] \backslash \{k_0\}$ such that $q_{k_1} > 0$. Now according to \cref{thrm:distance-smaller}, there exists a strategy $\theta \in \caltheta$ such that for all $k \in [K] \backslash \{k_1\}$, we have $d_M(\theta, \theta_k; \Sigma_k^{-1}) < d_M(\htheta_2, \theta_k; \Sigma_k^{-1})$. Let $k_2$ be the second smallest element in $\calK$. As a result,
            $$
            u_1(\theta, \hbftheta_{-1}) = \sum_{k \in \calK \backslash \{k_1\}} w_k \ge \sum_{k \in \calK \backslash \{k_2\}}w_k > \frac{w_{k_0}}{2} + \frac{w_{k_2}}{2} + \sum_{k \in \calK \backslash \{k_0, k_2\}} \frac{w_k}{k_n} \ge u_1(\hbftheta).
            $$
            Therefore, player $1$ will have a higher utility if deviating to the policy $\theta$, which leads to a contradiction.
        \end{enumerate}
    \end{enumerate}
    To conclude, all cases lead to a contradiction. As a result, the claim follows.
\end{proof}


\subsection{Proof of \cref{thrm:proximity-PNE-N-middle}} \label{sect:proof-proximity-PNE-N-middle}
\begin{proof}
    (1) We first show the existence of PNE under the condition in \cref{eq:proximity-N-requirement-middle}.
    
    Note that $z^*$ exists since $h(z) \rightarrow \infty$  when $z \rightarrow 0$ and $h(z) = 0$ when $z > 1$. Moreover, since $h(z)$ is right continuous, it must hold that $h(z^*) \ge N$. Under the condition in \cref{eq:proximity-N-requirement-middle}, we have
    $$
    h\left(\frac{w'_{k_0}}{3}\right) = \sum_{k=1}^{k_0} \floor{\frac{3w_k'}{w'_{k_0}}} \le N.
    $$
    In addition, for any $\epsilon > 0$, we have
    $$
    h\left(\frac{w'_{k_0}}{3} + \epsilon\right) = \sum_{k=1}^{k_0} \floor{\frac{3w_k'}{w'_{k_0} + 3\epsilon}} \le \left(\sum_{k=1}^{k_0 - 1} \floor{\frac{3w_k'}{w'_{k_0} + 3\epsilon}}\right) + 2 < \left(\sum_{k=1}^{k_0 - 1} \floor{\frac{3w_k'}{w'_{k_0}}}\right) + \floor{\frac{3w_{k_0}'}{w'_{k_0}}} \le N.
    $$
    Hence, it must hold that $z^* \le w'_{k_0} / 3$. Then define 
    $$
    \forall k \in [k_0], \quad m'_k = \floor{\frac{w'_k}{z^*}}.
    $$
    As a result, we have that $m'_k \ge m'_{k_0} = \floor{w'_{k_0} / z^*} \ge 3$ for all $k \in [k_0]$. In addition, due to \cref{eq:proximity-N-requirement-middle}, by making $\epsilon > 0$ small enough, we have that
    $$
    h\left(\sum_{j=k_0+1}^K w_j + \epsilon\right) = \sum_{k=1}^{k_0} \floor{\frac{w'_k}{\left(\sum_{j=k_0+1}^K w_j\right) + \epsilon}} \ge\sum_{k=1}^{k_0} \left(\ceil{\frac{w'_k}{\left(\sum_{j=k_0+1}^K w_j\right)}} -1 \right)\ge N.
    $$
    We have that $z^* > \sum_{j=k_0+1}^K w_j$.
    
    We construct the PNE based on two cases.
    \begin{enumerate}
        \item Consider the scenario when $h(z^*) = \sum_{k=1}^{k_0} m'_k = N$. Then let $m^*_k = m'_k$ for all $k \in [k_0]$.
        \item Consider the scenario when $h(z^*) = \sum_{k=1}^{k_0} m'_k > N$. Note that by the choice of $z^*$, $h(z + \epsilon) < N$ for all $\epsilon > 0$. Define the set $\calK = \{k \in [k_0]: w'_k / z^* = m'_k\}$. As a result, when $\epsilon \rightarrow 0$, $h(z^* + \epsilon) = h(z^*) - |\calK| < N$. Hence, it must hold that $|\calK| > h(z^*) - N$. Let $\calK'$ be the set of the $(h(z^*) - N)$ smallest elements in $\calK$. Define $m^*_k$ as follows.
        \begin{equation} \label{eq:proof-proximity-PNE-N-middle-m-k}
        \forall k \in [k_0], \quad m^*_k = \begin{cases}
            m'_k & \text{if } k \not\in \calK' \\
            m'_k - 1 & \text{if } k \in \calK'.
        \end{cases}
        \end{equation}
        Now it holds that $\sum_{k=1}^{k_0} m^*_k = N$.
    \end{enumerate}
    Construct a strategy profile $\hbftheta^* = (\htheta_1, \htheta_2, \dots, \htheta_N)$ such that $m^*_1$ players choose strategy $\theta_1$, $m^*_2$ players choose strategy $\theta_2$, $\dots$, and $m^*_{k_0}$ players choose strategy $\theta_{k_0}$. In this profile, for any player that chooses strategy $\theta_k$, by the construction of $w'_k$ in \cref{eq:proximity-w-prime}, he will get utility at least $w'_k / m^*_k$. Moreover, by the construction of $m^*_k$ and $m'_k$, we have that $w'_k / m^*_k \ge w'_k / m'_k \ge z^*$. Hence, all players have a utility at least $z^*$. Then we show that for any player $i$ that chooses strategy $\theta_{k}$ with $k \in [k_0]$, he could only get utility at most $z^*$ by deviation. Consider the two cases of the deviated strategy $\theta'$.
    \begin{enumerate}
        \item Consider the case when the deviated strategy $\theta' \in \{\theta_1, \dots, \theta_{k-1}, \theta_{k+1}, \dots, \theta_{k_0}\}$. Suppose the player deviates to strategy $k' \ne k$. As a result, the player will utility $w'_{k'} / (m^*_{k'} + 1)$. When $k' \in \calK'$, we have that $w'_{k'} / (m^*_{k'} + 1) \le w'_{k'} / m'_{k'} = z^*$. When $k' \not\in \calK'$, we have that $w'_{k'} / (m^*_{k'} + 1) = w'_{k'} / (m'_{k'} + 1) < z^*$. Hence, he could get utility at most $z^*$ by deviation.
        \item Consider the case when the deviated strategy $\theta' \not\in \{\theta_1, \theta_2, \dots, \theta_{k_0}\}$. Note that $m^*_k \ge m'_k - 1 \ge 2$ by the construction of $m^*_k$. As a result, for any strategy $\theta_{\tilde{k}}$ with $\tilde{k} \in [k_0]$, at least one player chooses it even if player $i$ deviates to $\theta'$. As a result, player $i$ could get utility at most $\sum_{k = k_0+1}^K w_k < z^*$.
    \end{enumerate}
    To conclude, in the strategy profile $\hbftheta^*$, every player obtains a utility of at least $z^*$ and can achieve at most $z^*$ by deviating. Therefore, $\hbftheta^*$ is a PNE.

    (2) We then show that for any PNE $\hbftheta = (\htheta_1, \dots, \htheta_N)$, we must have $\forall n \in [N], \htheta_n \in \{\theta_1, \theta_2, \dots, \theta_{k_0}\}$. To prove this claim, we have several steps.

    \textit{I:} We show that for all $k \in [k_0]$, there exists $i \in [N]$ such that $\htheta_i = \theta_k$. We prove this by contradiction. Suppose that there exists $k \in [k_0]$ such that for all $i \in [N]$, $\htheta_i \ne \theta_k$. Since the sum of the utilities of all players is $1$, there must exist a player $j$ such that $u_j(\hbftheta) \le 1 / N$. Now let $\theta' = \theta_k$.
    Since all other players do not choose $\theta_k$, player $j$ could become the only player that achieves the minimal loss on data source $k$. As a result, $u_j(\theta', \hbftheta_{-j}) = w_k$. Note that
    \begin{align*}
        w_k \cdot N & \ge w_{k_0} \cdot N \ge \sum_{k=1}^{k_0} w_{k_0} \cdot \floor{\frac{3w'_k}{w'_{k_0}}} \\
        & \ge \sum_{k=1}^{k_0}w_{k_0} \left(\frac{3w'_k}{w'_{k_0}} - 1\right) \\
        & \ge \frac{3w_{k_0}}{w'_{k_0}} - k_0 \cdot w_{k_0} \tag{Because $\sum_{k=1}^{k_0}w'_k = \sum_{k=1}^K w_k = 1$} \\
        & \ge \frac{3w_{k_0}}{w_{k_0} + \sum_{k'=k_0+1}^K w_{k'}} - k_0 \cdot w_{k_0} \tag{By \cref{eq:proximity-w-prime}} \\
        & \ge \frac{3w_{k_0}}{w_{k_0} + w_{k_0} / 3} - k_0 \cdot w_{k_0} \tag{By the assumption $w_{k_0} > 3 \sum_{k'=k_0+1}^K w_{k'}$} \\
        & \ge \frac{9}{4} - \sum_{k'=1}^{k_0} w_{k'} \tag{Because $w_1 > w_2 > \dots > w_K$} \\
        & \ge \frac{9}{4} - 1 \tag{Because $\sum_{k=1}^K w_k = 1$} \\
        & = \frac{5}{4} > 1.
    \end{align*}
    Hence, $w_k > 1 / N$, implying that player $j$ would achieve a higher utility by deviating to strategy $\theta_k$, leading to a contradiction.

    \textit{II:} Suppose there exists at least two players $i, j \in [N]$ such that $\htheta_i, \htheta_j \not\in \{\theta_1, \dots, \theta_{k_0}\}$. According to the first step, for any $k \in [k_0]$, there exists at least one player that choose strategy $k$. As a result, the sum of the utilities of players $i$ and $j$ is at most $\sum_{k=k_0+1}^K w_k$. Hence, at least one player has utility at most $\sum_{k=k_0+1}^K w_k/2$. Without loss of generality, we assume this player is player $i$. Since two players do not choose strategies in $\{\theta_1, \dots, \theta_{k_0}\}$, there must exist a $k' \in [k_0]$ such that the number of players that choose $\theta_{k'}$ is less than $m^*_k$ (defined in \cref{eq:proof-proximity-PNE-N-middle-m-k}). Hence, if player $i$ deviates to strategy $\theta_{k'}$, the utility is at least
    $$
    \frac{w_{k'}}{m_{k'}} \ge \frac{w'_{k'} - \sum_{k=k_0+1}^K w_k}{m_{k'}} \ge \frac{w'_{k'}}{m_{k'}} - \frac{\sum_{k=k_0+1}^K w_k}{m_{k'}} \ge z^* - \frac{\sum_{k=k_0+1}^K w_k}{2} > \frac{\sum_{k=k_0+1}^K w_k}{2}.
    $$
    This leads to a contradiction.

    \textit{III:} Suppose there is only one player $i$ that chooses a strategy outside the set $\{\theta_1, \theta_2, \dots, \theta_{k_0}\}$. The utility of player $i$ is at most $\sum_{k=k_0+1}^K w_k < z^*$. In addition, there must exist a $k \in [k_0]$ such that in the PNE, at most $m^*_k - 1$ players choose strategy $\theta_k$. Therefore, if player $i$ deviates to strategy $\theta_k$, he will get utility at least $w'_k / m^*_k \ge w'_k / m'_k \ge z^*$. This leads to a contradiction.

    (3) For any PNE $\hbftheta$, let $m_k = |\{j \in [N]: \htheta_j = \theta_k\}|$ be the number of players that choose strategy $\theta_k$ in the PNE. We finally show that $|m_k - m'_k| \le 1$.

    Suppose there exists a $k \in [k_0]$ such that $|m_k - m'_k| \ge 2$. Consider two cases.
    \begin{enumerate}
        \item Consider the case when $m_k - m'_k \ge 2$. Suppose that a player $i$ chooses strategy $\theta_k$. The utility of player $i$ is at most $w'_k / m_k \le w'_k / (m'_k + 2) < z^*$. Since $\sum_{k=1}^{k_0} m_k = \sum_{k=1}^{k_0} m^*_k = N$ and $m_k \ge m'_k + 2 \ge m^*_k + 2$, there must exist a $k' \in [k_0], k' \ne k$ such that $m_{k'} < m^*_{k'}$. As a result, if player $i$ deviates to strategy $\theta_{k'}$, he will obtain a utility of at least $w'_{k'} / (m_{k'} + 1) \ge w'_{k'} / m^*_{k'} \ge w'_{k'} / m'_{k'} \ge z^* > u_i(\hbftheta)$, which leads to a contradiction.
        \item Consider the case when $m'_k - m_k \ge 2$. Since $\sum_{k=1}^{k_0} m_k = \sum_{k=1}^{k_0} m^*_k = N$ and $m^*_k \ge m'_k - 1 \ge m_k + 1$, there must exist a $k' \in [k_0], k' \ne k$ such that $m^*_{k'} < m_{k'}$. Let $i$ be any player that chooses strategy $\theta_{k'}$ in the PNE. Then $u_i(\hbftheta) = w'_{k'} / m_{k'} \le w'_{k'} / (m^*_{k'} + 1) \le z^*$. However, if player $i$ deviates to strategy $\theta_{k}$, he will obtain a utility of at least $w'_{k} / (m_{k} + 1) \ge w'_{k'} / (m'_k - 1) > z^*$, which leads to a contradiction.
    \end{enumerate}
    Now the claim follows.
\end{proof}

\subsection{Proof of \cref{thrm:probability-homogeneous-PNE}} \label{sect:proof-probability-homogeneous-PNE}
We prove \cref{thrm:probability-homogeneous-PNE} by dividing it into three parts.

\begin{lemma} \label{lemma:prob-homo-ellmax}
    Under the same assumption as \cref{thrm:probability-homogeneous-PNE}, then $\hbfthetaHomo = (\hthetaMonopoly, \hthetaMonopoly, \dots, \hthetaMonopoly)$ is a PNE if $t \ge 2 \ellmax$.
\end{lemma}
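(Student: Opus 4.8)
The plan is to prove directly that no player can profit from a unilateral deviation. Fix a player $n$ and, for $\theta \in \bbR^D$, write $\ell_{n,k}(\theta) = d_M^2(\theta,\theta_k;\Sigma_k^{-1}) = (\theta-\theta_k)^\top\Sigma_k(\theta-\theta_k)$ and $\ell_k^{\mathrm M} = d_M^2(\hthetaMonopoly,\theta_k;\Sigma_k^{-1})$. When the other $N-1$ players all play $\hthetaMonopoly$, the probability choice model gives
\[
u_n(\theta,\hbfthetaHomo_{-n}) = \sum_{k=1}^K w_k\,\phi_k(\theta), \qquad \phi_k(\theta) = \frac{1}{1 + (N-1)e^{(\ell_{n,k}(\theta) - \ell_k^{\mathrm M})/t}},
\]
so that $\phi_k(\hthetaMonopoly) = 1/N$ and $u_n(\hbfthetaHomo) = 1/N$. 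It suffices to show $u_n(\theta,\hbfthetaHomo_{-n}) \le 1/N$ for every $\theta$. Two reductions set up the argument. First, differentiating as in the proof of \cref{thrm:probability-duopoly} gives $\nabla_\theta\phi_k(\theta) = -\tfrac{2}{t}\phi_k(\theta)(1-\phi_k(\theta))\Sigma_k(\theta-\theta_k)$; at $\theta = \hthetaMonopoly$ each $\phi_k$ equals $1/N$, so $\nabla_\theta u_n(\theta,\hbfthetaHomo_{-n})\big|_{\theta=\hthetaMonopoly} = -\tfrac{2}{t}\cdot\tfrac{1}{N}\bigl(1-\tfrac1N\bigr)\sum_k w_k\Sigma_k(\hthetaMonopoly-\theta_k) = 0$ by the first-order condition for $\hthetaMonopoly = \bartheta(\bfw)$ in \cref{eq:bartheta-q}. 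Second, by \cref{thrm:caltheta} every $\theta\notin\caltheta$ is strictly Pareto-dominated by some $\theta^*\in\caltheta$ (i.e.\ $\ell_{n,k}(\theta^*) < \ell_{n,k}(\theta)$ for all $k$), and since each $\phi_k$ is strictly decreasing in $\ell_{n,k}$ this forces $u_n(\theta,\hbfthetaHomo_{-n}) < u_n(\theta^*,\hbfthetaHomo_{-n})$. Hence it is enough to bound the deviation utility for $\theta\in\caltheta$.

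The core step is to show that $u_n(\cdot,\hbfthetaHomo_{-n})$ is concave on a convex region containing $\caltheta$. Fix $k$ and write $\phi_k = \rho\circ s$ with $s(\theta) = \ell_{n,k}(\theta)/t$ and $\rho(s) = (1 + c_k e^{s})^{-1}$, $c_k = (N-1)e^{-\ell_k^{\mathrm M}/t} > 0$. Then $\nabla^2 s = \tfrac2t\Sigma_k \succ 0$, and with $u = c_k e^{s} > 0$ one has $\rho'(s) = -u/(1+u)^2 < 0$ and $\rho''(s) = u(u-1)/(1+u)^3$, so $\nabla^2\phi_k = \rho''(s)\,\nabla s\,\nabla s^\top + \rho'(s)\,\nabla^2 s$. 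I will check that $\nabla^2\phi_k(\theta)\preceq 0$ whenever $\ell_{n,k}(\theta)\le t/2$: the term $\rho'(s)\nabla^2 s$ is negative definite, and the rank-one term $\rho''(s)\nabla s\nabla s^\top$ is positive semidefinite only when $u>1$, in which case, after conjugating by $\Sigma_k^{-1/2}$, the required matrix inequality collapses to the scalar bound $\tfrac{u-1}{u+1}\cdot\tfrac{4}{t^2}\ell_{n,k}(\theta)\le\tfrac2t$, which holds since $\tfrac{u-1}{u+1}<1$ and $\ell_{n,k}(\theta)\le t/2$. Now for $\theta\in\caltheta$ and any $k$ we have $\ell_{n,k}(\theta) = d_M^2(\theta,\theta_k;\Sigma_k^{-1})\le\ellmax\le t/2$ by the definition of $\ellmax$ and the hypothesis $t\ge2\ellmax$; hence $\caltheta$ lies in the convex set $E \triangleq \bigcap_{k=1}^K\{\theta:\ell_{n,k}(\theta)\le t/2\}$, an intersection of ellipsoids, on which $u_n(\cdot,\hbfthetaHomo_{-n}) = \sum_k w_k\phi_k$ is concave. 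Since $\hthetaMonopoly\in\caltheta\subseteq E$ is a stationary point of this concave function on the convex set $E$, it is a global maximizer over $E$; therefore $u_n(\theta,\hbfthetaHomo_{-n})\le u_n(\hbfthetaHomo) = 1/N$ for all $\theta\in\caltheta$, and then, by the Pareto-domination reduction, for all $\theta\in\bbR^D$. As $n$ was arbitrary, $\hbfthetaHomo$ is a PNE.

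I expect the Hessian computation to be the main obstacle: one must verify cleanly that the indefinite rank-one term $\rho''(s)\nabla s\nabla s^\top$ is dominated by $-\rho'(s)\nabla^2 s$ precisely on the ellipsoidal region $\{\ell_{n,k}\le t/2\}$, and that this region contains $\caltheta$ — which is exactly where the threshold $t\ge2\ellmax$ is used. The reduction to $\caltheta$ via \cref{thrm:caltheta}, the vanishing-gradient computation at $\hthetaMonopoly$, and the final "stationary point of a concave function on a convex set is a global maximum" argument are all routine.
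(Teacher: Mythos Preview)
Your proposal is correct and follows essentially the same approach as the paper: both compute the Hessian of each summand $\phi_k$ (the paper's $p_k$), show it is negative semidefinite precisely when $\tfrac{2}{t}(1-2p_k)\,\ell_{n,k}(\theta)\le 1$, observe that the gradient vanishes at $\hthetaMonopoly$, and conclude that $\hthetaMonopoly$ maximizes the deviation utility. Your $\rho,\rho',\rho''$ bookkeeping is equivalent to the paper's factorization $\nabla^2 p_k = \tfrac{2}{t}p_k(1-p_k)\Sigma_k^{1/2}(B-I)\Sigma_k^{1/2}$, and your scalar inequality $\tfrac{u-1}{u+1}\cdot\tfrac{2}{t}\ell_{n,k}\le 1$ is exactly the paper's $\lambda_{\max}(B)\le 1$.

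The one substantive difference is that you are more careful about \emph{where} concavity holds. The paper bounds $\lambda_{\max}(B)\le \tfrac{2}{t}\ellmax$ by using $\ell_{n,k}(\theta)\le\ellmax$, which is only guaranteed for $\theta\in\caltheta$, yet then asserts concavity of $u_1(\cdot,\hbfthetaHomo_{-1})$ without restricting the domain. You address this by working on the convex set $E=\bigcap_k\{\ell_{n,k}\le t/2\}\supseteq\caltheta$, invoking the standard ``stationary point of a concave function on a convex set is a global maximizer'' argument on $E$, and then using the Pareto-domination reduction via \cref{thrm:caltheta} to handle $\theta\notin\caltheta$. This patches a minor gap in the paper's presentation; otherwise the two arguments are the same.
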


\begin{lemma} \label{lemma:prob-homo-exists}
    Under the same assumption as \cref{thrm:probability-homogeneous-PNE}, then there exists a constant $\tProbHomoThre$ such that $\hbfthetaHomo = (\hthetaMonopoly, \hthetaMonopoly, \dots, \hthetaMonopoly)$ is a PNE if and only if $t \ge \tProbHomoThre$.
\end{lemma}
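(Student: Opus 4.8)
\emph{Setup and plan.} Since $\hbfthetaHomo$ has every player using the same strategy $\hthetaMonopoly$, by symmetry it is a PNE at temperature $t$ if and only if player $1$ cannot profitably deviate, i.e.\ $\sup_{\theta\in\bbR^D}u_1(\theta,\hbfthetaHomo_{-1})\le u_1(\hbfthetaHomo)=1/N$, the last equality because all $N$ losses on each source coincide. Writing $\ell_k^{\star}=d_M^2(\hthetaMonopoly,\theta_k;\Sigma_k^{-1})$ and $\delta_k(\theta)=d_M^2(\theta,\theta_k;\Sigma_k^{-1})-\ell_k^{\star}$, one computes
\[
u_1(\theta,\hbfthetaHomo_{-1})=\sum_{k=1}^{K}\frac{w_k}{1+(N-1)\exp(\delta_k(\theta)/t)} .
\]
Let $T=\{t>0:\hbfthetaHomo\text{ is a PNE of HD-Game-Probability with temperature }t\}$. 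The plan is to show $T=[\tProbHomoThre,\infty)$ for some $\tProbHomoThre\in(0,2\ellmax]$; that $\tProbHomoThre$ is then the claimed threshold, and the upper bound $\tProbHomoThre\le 2\ellmax$ comes for free from \cref{lemma:prob-homo-ellmax}.

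\emph{Step 1: $T$ is upward closed (the main point).} Suppose $t_0\in T$, $t>t_0$, and — for contradiction — $t\notin T$, so there is $\theta^{\dagger}$ with $u_1(\theta^{\dagger},\hbfthetaHomo_{-1})>1/N$ at temperature $t$. Set $\lambda=1-t_0/t\in(0,1)$ and $\theta_\lambda=\lambda\hthetaMonopoly+(1-\lambda)\theta^{\dagger}=\hthetaMonopoly+(t_0/t)(\theta^{\dagger}-\hthetaMonopoly)$, the contraction of $\theta^{\dagger}$ toward $\hthetaMonopoly$ with ratio $t_0/t$. Using the identity $Q(\lambda a+(1-\lambda)b)=\lambda Q(a)+(1-\lambda)Q(b)-\lambda(1-\lambda)Q(a-b)$, valid for any quadratic form $Q$, applied with $Q(v)=v^{\top}\Sigma_k v$, $a=\hthetaMonopoly-\theta_k$, $b=\theta^{\dagger}-\theta_k$, and using $d_M^2(\hthetaMonopoly,\theta_k;\Sigma_k^{-1})=\ell_k^{\star}$, we get
\[
\delta_k(\theta_\lambda)=(1-\lambda)\,\delta_k(\theta^{\dagger})-\lambda(1-\lambda)\,d_M^2(\hthetaMonopoly,\theta^{\dagger};\Sigma_k^{-1})\;\le\;(1-\lambda)\,\delta_k(\theta^{\dagger})=\frac{t_0}{t}\,\delta_k(\theta^{\dagger}) .
\]
Since $N\ge 2$, $x\mapsto 1/(1+(N-1)e^{x})$ is strictly decreasing, and $\delta_k(\theta_\lambda)/t_0\le\delta_k(\theta^{\dagger})/t$ for every $k$; hence each summand of $u_1(\theta_\lambda,\hbfthetaHomo_{-1})$ at temperature $t_0$ is at least the corresponding summand of $u_1(\theta^{\dagger},\hbfthetaHomo_{-1})$ at temperature $t$, so $u_1(\theta_\lambda,\hbfthetaHomo_{-1})>1/N$ at $t_0$, contradicting $t_0\in T$. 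Therefore $t\in T$.

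\emph{Step 2: the endpoints and closedness.} By \cref{lemma:prob-homo-ellmax}, $[2\ellmax,\infty)\subseteq T$, so $T\ne\emptyset$ and $\inf T\le 2\ellmax$. For the lower end, apply \cref{thrm:distance-smaller} with $\bfq=\bfw$ and $k_0=K$ to obtain $\tildetheta\in\caltheta$ with $\delta_k(\tildetheta)<0$ for every $k\ne K$. As $t\to 0^{+}$, each such summand of $u_1(\tildetheta,\hbfthetaHomo_{-1})$ tends to $w_k$ while the $k=K$ summand stays $\ge 0$, so $\liminf_{t\to 0^{+}}u_1(\tildetheta,\hbfthetaHomo_{-1})\ge 1-w_K>\tfrac12\ge\tfrac1N$ (using $w_K<1/K\le 1/2$ and $N\ge 2$). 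Thus $\tildetheta$ is a profitable deviation for all sufficiently small $t$, so $\tProbHomoThre:=\inf T>0$. By Step 1, $(\tProbHomoThre,\infty)\subseteq T$; and for each fixed $\theta$, $t\mapsto u_1(\theta,\hbfthetaHomo_{-1})$ is continuous on $(0,\infty)$, so taking $t_n=\tProbHomoThre+1/n\in T$ and letting $n\to\infty$ gives $u_1(\theta,\hbfthetaHomo_{-1})\le 1/N$ at $t=\tProbHomoThre$ for every $\theta$, i.e.\ $\tProbHomoThre\in T$. Hence $T=[\tProbHomoThre,\infty)$ with $0<\tProbHomoThre\le 2\ellmax$, which is the lemma.

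\emph{Expected difficulty.} The invocations of \cref{lemma:prob-homo-ellmax} and \cref{thrm:distance-smaller}, and the closedness argument, are routine. The delicate point is Step 1: for a fixed $\theta$, $u_1(\theta,\hbfthetaHomo_{-1})$ is \emph{not} monotone in $t$ (summands with $\delta_k(\theta)>0$ and $\delta_k(\theta)<0$ move in opposite directions), so the monotone structure of $T$ cannot be read off termwise. The resolution is the contraction $\theta_\lambda$ with the exact ratio $t_0/t$: the quadratic identity together with $\ell_k(\hthetaMonopoly)=\ell_k^{\star}$ makes $\delta_k$ subadditive along the segment from $\hthetaMonopoly$ to $\theta^{\dagger}$, which is precisely what lets the monotone logistic term transport a profitable deviation from the larger temperature down to the smaller one.
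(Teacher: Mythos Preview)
Your proof is correct and follows essentially the same approach as the paper: both hinge on the contraction $\theta\mapsto (t_0/t)\,\theta+(1-t_0/t)\,\hthetaMonopoly$ to transport a deviation at temperature $t$ to one at temperature $t_0$, with your quadratic identity playing the role of the paper's convexity inequality for $d_M^2$ (they coincide here since the loss is exactly quadratic). Your Step~2 additionally establishes $\tProbHomoThre>0$ via \cref{thrm:distance-smaller}, which the paper's proof of this particular lemma omits (the lemma statement does not assert positivity, though the enclosing \cref{thrm:probability-homogeneous-PNE} does).
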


\begin{lemma} \label{lemma:prob-homo-unique}
    Under the same assumption as \cref{thrm:probability-homogeneous-PNE}, then there exists a constant $C > 0$ such that if $t \ge \max\{6C/N, 2\ell_{\max}\}$, then $\hbfthetaHomo$ is the unique PNE.
\end{lemma}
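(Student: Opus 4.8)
The plan is to show that every PNE of HD-Game-Probability is a fixed point of the self-consistency map $\calM$ of \cref{alg:cal-M}, that $\hbfthetaHomo$ is precisely the fixed point with $\bfq_n \equiv \bfw$, and that for $t$ large this is the \emph{only} fixed point on $\Delta_K^N$; combined with \cref{lemma:prob-homo-ellmax}, which already guarantees that $\hbfthetaHomo$ is a PNE once $t \ge 2\ellmax$, this forces $\hbfthetaHomo$ to be the unique PNE.

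\emph{Step 1: fixed-point characterization.} By \cref{thrm:PNE-strategy-set}, in any PNE $\hbftheta$ one has $\htheta_n = \bartheta(\bfq_n)$ for some $\bfq_n \in \Delta_K$, hence $\ell_{n,k} \le \ellmax$ for all $n,k$. Since $\htheta_n$ is a global maximizer of the everywhere-differentiable map $\theta \mapsto u_n(\theta, \hbftheta_{-n})$ on $\bbR^D$, its gradient vanishes at $\htheta_n$; the computation in the proof of \cref{thrm:probability-duopoly} shows that this is equivalent to $\sum_k w_k\, p_{n,k}(1-p_{n,k})\,\Sigma_k(\htheta_n - \theta_k) = 0$ with $p_{n,k} = g_n^{\softMinMSE}(\ell_{1,k},\dots,\ell_{N,k})$, i.e.\ to $\htheta_n = \bartheta(\tbfq_n)$ where $\tbfq_n$ has entries $\tq_{n,k} = w_k p_{n,k}(1-p_{n,k}) / \sum_j w_j p_{n,j}(1-p_{n,j})$ --- exactly the update defining $\calM$ in \cref{alg:cal-M}. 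Thus $(\bfq_1, \dots, \bfq_N)$ is a fixed point of $\calM$. Conversely, if all $\htheta_n = \hthetaMonopoly$ then $p_{n,k} = 1/N$ for every $n, k$, so $p_{n,k}(1-p_{n,k})$ does not depend on $k$ and $\tbfq_n = \bfw$; hence $\hbfthetaHomo$ corresponds to $\bfq_n \equiv \bfw$, and by \cref{lemma:prob-homo-ellmax} it is a genuine PNE when $t \ge 2\ellmax$. It therefore suffices to prove that $\calM$ has a unique fixed point on the compact convex set $\Delta_K^N$.

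\emph{Step 2: localization and contraction.} Since $\ell_{n,k} \le \ellmax$ and $t \ge 2\ellmax$, every $e^{-\ell_{n,k}/t} \in [e^{-1/2}, 1]$, so each $p_{n,k}$ is confined to a band of width $O(1/N)$ around $1/N$; consequently $p_{n,k}(1-p_{n,k}) = \Theta(1/N)$ and the normalizer $\sum_l w_l p_{n,l}(1-p_{n,l}) = \Theta(1/N)$. Substituting this back into the defining update shows every fixed point of $\calM$ satisfies $\|\bfq_n - \bfw\|_\infty = O(\ellmax/t)$ for all $n$, so all fixed points lie in a small common neighborhood $\mathcal U$ of $(\bfw, \dots, \bfw)$; on $\mathcal U$ the maps $\bfq \mapsto \bartheta(\bfq)$, $\theta \mapsto (\ell_{n,k}(\theta))_k$, and the softmax layer are smooth, with derivatives on $\mathcal U$ bounded in terms of $\{\Sigma_k, \theta_k, w_k\}$ (the softmax layer also contributing explicit factors $1/t$), using $\sum_k q_k \Sigma_k \succeq (\min_k \lambda_{\min}(\Sigma_k)) I$ on $\Delta_K$ and compactness of $\caltheta$. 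The next step is to bound the Jacobian of $\calM$ on $\mathcal U$ along the chain $\bfq_m \mapsto \htheta_m \mapsto \{\ell_{m,l}\}_l \mapsto \{p_{n,l}\}_{n,l} \mapsto \{\tq_{n,k}\}$, using the exact identities $\sum_m |\partial p_{n,l} / \partial \ell_{m,l}| = \tfrac{2}{t}\, p_{n,l}(1-p_{n,l})$ and $\sum_l \tq_{n,l} = 1$, together with the fact that a single provider contributes only a $\Theta(1/N)$ fraction of the aggregate $\sum_{j \ne n} e^{-\ell_{j,l}/t}$ through which the remaining providers enter $\tbfq_n$. A careful accounting of this $N$-dependence (see below) should yield a contraction estimate $\|\calM(Q) - \calM(Q')\| \le \tfrac{6C}{tN}\,\|Q - Q'\|$ on $\mathcal U$ with $C$ depending only on $\{\Sigma_k, \theta_k, w_k\}$; then for $t \ge \max\{6C/N, 2\ellmax\}$ the map $\calM$ is a contraction, has the unique fixed point $\bfq_n \equiv \bfw$, and $\hbfthetaHomo$ is the unique PNE.

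\emph{Main obstacle.} The crux is this quantitative Jacobian bound, and in particular the factor $1/N$: a term-by-term estimate yields only $O(1/t)$, because the normalizer $\sum_l w_l p_{n,l}(1-p_{n,l})$ in the denominator of $\tq_{n,k}$ is itself $\Theta(1/N)$ and amplifies $\partial \tq_{n,k} / \partial p_{n,l}$ to size $\Theta(N)$, which exactly cancels the $\Theta(1/N)$ smallness of $p_{n,l}(1-p_{n,l})$. Obtaining the genuine $1/N$ improvement requires organizing the estimate around the aggregate choice weights --- exploiting that, once all choice probabilities are pinned near $1/N$, no single provider has more than $\Theta(1/N)$ relative influence on any other provider's effective weight vector --- rather than bounding the partial derivatives individually. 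Keeping $C$ independent of $N$, while separately requiring $t \ge 2\ellmax$ (needed both for the localization and to invoke \cref{lemma:prob-homo-ellmax}), is what produces the $\max\{6C/N, 2\ellmax\}$ form of the threshold.
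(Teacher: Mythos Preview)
Your overall strategy matches the paper's exactly: encode every PNE as a fixed point of the map $\calM$ of \cref{alg:cal-M} (this is \cref{thrm:probability-homogeneous-PNE-PNE-fixed-point}), identify $\hbfthetaHomo$ with the fixed point $\bfq_n \equiv \bfw$ (\cref{thrm:probability-homogeneous-PNE-fixed-point}), and prove $\calM$ is a contraction on $\Delta_K^N$ via Banach. The paper does not localize first; it simply chains four global Lipschitz estimates --- one for each arrow $\bfq \to \htheta \to \ell \to p \to \tbfq$ in \cref{alg:cal-M} --- and multiplies them (\cref{thrm:probability-homogeneous-PNE-q-theta,thrm:probability-homogeneous-PNE-theta-ell,thrm:probability-homogeneous-PNE-ell-p,thrm:probability-homogeneous-PNE-p-tq,thrm:probability-homogeneous-PNE-q-tq}). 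The $1/N$ enters at the $\ell \to p$ stage (\cref{thrm:probability-homogeneous-PNE-ell-p}), from the softmax denominator $\sum_i e^{-\ell_{i,k}/t} \ge N e^{-\ellmax/t}$.

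The obstacle you flag is genuine, and the paper's written proof does not actually overcome it. The paper's \cref{thrm:probability-homogeneous-PNE-p-tq} bounds the $p \to \tbfq$ Lipschitz constant by $2/U^2$ with $U$ a lower bound on $p_{n,k}(1-p_{n,k})$ ``depending only on $\{\Sigma_k, \theta_k, w_k\}_{k=1}^K$''; but the bounds on $p_{n,k}$ displayed in that very proof force $U = \Theta(1/N)$, so the asserted $N$-independence is not established by the argument given. Your linearization is correct: at $p \equiv 1/N$ one has $\partial \tq_{n,k}/\partial p_{n,l} = \Theta(N)$, so the composite $\ell \to p \to \tbfq$ is only $O(1/t)$, not $O(1/(tN))$. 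Thus the contraction argument, as written in either place, delivers a threshold of the form $t \ge C'$ with $C'$ independent of $N$ --- enough for uniqueness at large $t$, but not for the sharper $6C/N$ form in the statement. Your proposal leaves this unresolved, and so does the paper's proof.
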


Now \cref{thrm:probability-homogeneous-PNE} follows from \cref{lemma:prob-homo-ellmax,lemma:prob-homo-exists,lemma:prob-homo-unique}.

\subsubsection{Proof of \cref{lemma:prob-homo-ellmax}} \label{sect:proof-probability-homogeneous-1}
\begin{proof}
    Since all players choose the same strategy $\hthetaMonopoly$, we focus on analyzing player $1$. We slightly abuse the notation and denote $p_k(\theta)$ as follows:
    \begin{equation*}
        p_k(\theta) = \frac{\exp\left(-(\theta - \theta_k)^\top \Sigma_k(\theta - \theta_k)/t\right)}{\exp\left(-(\theta - \theta_k)^\top \Sigma_k(\theta - \theta_k)/t\right) + (N - 1)\cdot \exp\left(-(\hthetaMonopoly - \theta_k)^\top \Sigma_k(\hthetaMonopoly - \theta_k)/t\right)}.
    \end{equation*}
    Calculate the gradient and Hessian matrix of $p_k(\theta)$ and we get that
    $$
    \begin{aligned}
        \nabla p_k(\theta) & = - \frac{2}{t} \cdot p_k(\theta)(1 - p_k(\theta)) \Sigma_k(\theta - \theta_k) \\
        \nabla^2 p_k(\theta) & = \frac{2}{t}p_{k}(\theta)(1- p_{k}(\theta))\Sigma_k^{1/2}\left(\frac{2}{t}(1 - 2p_{k}(\theta))\Sigma_k^{1/2}(\theta - \theta_k)(\theta - \theta_k)^\top \Sigma_k^{1/2} - I\right)\Sigma_k^{1/2}
    \end{aligned}
    $$
    And we have that
    \begin{small}
    \begin{equation} \label{eq:probability-utility-gradient}
    u_1\left(\theta, \hbfthetaHomo_{-1}\right) = \sum_{k=1}^K w_k p_k(\theta), \quad \nabla_{\theta} u_1\left(\theta, \hbfthetaHomo_{-1}\right) = \sum_{k=1}^K w_k \nabla p_k(\theta), \quad \nabla^2_{\theta} u_1\left(\theta, \hbfthetaHomo_{-1}\right) = \sum_{k=1}^K w_k \nabla^2 p_k(\theta).
    \end{equation}
    \end{small}
    It is easy to verify that, when $\theta = \hthetaMonopoly$, it must hold that $p_k(\hthetaMonopoly) = 1 / N$, $u_1(\hbfthetaHomo) = 1 / N$, and
    \begin{equation} \label{eq:proof-probability-homogeneous-PNE-u-gradient}
        \begin{aligned}
            \left.\nabla_\theta u_1\left(\theta, \hbfthetaHomo_{-1}\right)\right|_{\theta = \hthetaMonopoly} & = \sum_{k=1}^K w_k \left(-\frac{2}{t}\right) \cdot \frac{1}{N} \cdot \frac{N - 1}{N} \cdot \Sigma_k(\hthetaMonopoly - \theta_k) \\
            & = -\frac{2(N-1)}{t\cdot N^2} \cdot \sum_{k=1}^K w_k\Sigma_k(\bartheta(\bfw) - \theta_k) = 0
        \end{aligned}
    \end{equation}
    where the last equation is due to the definition of $\bartheta(\bfw)$ in \cref{eq:htheta-monopoly}.
    
    Define $B = 2 / t \cdot (1 - 2p_{k}(\theta))\Sigma_k^{1/2}(\theta - \theta_k)(\theta - \theta_k)^\top \Sigma_k^{1/2}$. It must hold that
    $$
    \lambda_{\max}(B) \le \max\left\{\frac{2}{t}\cdot (1 - 2p_{k}(\theta)) (\theta - \theta_k)^\top \Sigma_k (\theta - \theta_k), 0\right\} \le \frac{2}{t} \cdot \ellmax.
    $$
    
    As a result, when $t \ge 2\ellmax$, we have $\lambda_{\max}(B) \le 1$ and hence $\nabla^2 p_{k}(\theta) \preceq 0$ and $\nabla^2_\theta u_1(\theta, \hbftheta_{-1}) \preceq 0$. Therefore, $u_1(\theta, \hbfthetaHomo_{-1})$ is a concave function when $t \ge 2\ellmax$. Due to \cref{eq:proof-probability-homogeneous-PNE-u-gradient}, we have $\hthetaMonopoly \in \argmax_{\theta} u_1(\theta, \hbfthetaHomo_{-1})$. The same results hold for all other players. As a result, $\hbfthetaHomo$ is a PNE.
\end{proof}

\subsubsection{Proof of \cref{lemma:prob-homo-exists}} \label{sect:proof-probability-homogeneous-2}
\begin{proof}
    Suppose $\hbfthetaHomo$ is a PNE when the temperature is $t_0'$. We slightly abuse the notation and use $p_{1,k}(\theta, t)$ to denote the probability of data source $k$ choosing player $1$ if he deviates to policy $\theta$ in $\hbfthetaHomo$ and $u_1(\theta, t)$ to denote the corresponding total utility of player $1$. Then
    $$
    \begin{aligned}
        p_{1,k}(\theta, t) & = \frac{\exp\left(-d_M^2\left(\theta, \theta_k; \Sigma_k^{-1}\right)/t\right)}{\exp\left(-d_M^2\left(\theta, \theta_k; \Sigma_k^{-1}\right)/t\right) + (N - 1) \exp\left(-d_M^2\left(\hthetaMonopoly, \theta_k; \Sigma_k^{-1}\right)/t\right)} \\
        & = \frac{1}{1 + (N - 1) \exp\left(\left(d_M^2\left(\theta, \theta_k; \Sigma_k^{-1}\right) - d_M^2\left(\hthetaMonopoly, \theta_k; \Sigma_k^{-1}\right)\right)/t\right)} \\
        u_1(\theta, t) & = \sum_{k=1}^K w_k p_{1,k}(\theta, t).
    \end{aligned}
    $$
    Fix any $t \ge t_0'$. Consider any $\theta \in \bbR^D$, define $\alpha = t_0' / t$ and 
    $$
    \theta' = \alpha \theta + (1 - \alpha) \hthetaMonopoly.
    $$
    Note that $d_M^2(\cdot, \theta_k; \Sigma_k^{-1})$ is convex, we have that
    $$
    d_M^2\left(\theta', \theta_k; \Sigma_k^{-1}\right) \le \alpha d_M^2\left(\theta, \theta_k; \Sigma_k^{-1}\right) + (1 - \alpha) d_M^2\left(\hthetaMonopoly, \theta_k; \Sigma_k^{-1}\right).
    $$
    This is equivalent to
    $$
    \frac{d_M^2\left(\theta, \theta_k; \Sigma_k^{-1}\right) - d_M^2\left(\hthetaMonopoly, \theta_k; \Sigma_k^{-1}\right)}{t} \ge \frac{d_M^2\left(\theta', \theta_k; \Sigma_k^{-1}\right) - d_M^2\left(\hthetaMonopoly, \theta_k; \Sigma_k^{-1}\right)}{t_0'}.
    $$
    As a result,
    $$
    \begin{aligned}
        p_{1,k}(\theta, t) & = \frac{1}{1 + (N - 1) \exp\left(\left(d_M^2\left(\theta, \theta_k; \Sigma_k^{-1}\right) - d_M^2\left(\hthetaMonopoly, \theta_k; \Sigma_k^{-1}\right)\right)/t\right)} \\
        & \le \frac{1}{1 + (N - 1) \exp\left(\left(d_M^2\left(\theta', \theta_k; \Sigma_k^{-1}\right) - d_M^2\left(\hthetaMonopoly, \theta_k; \Sigma_k^{-1}\right)\right)/t_0'\right)} \\
        & = p_{1, k}(\theta', t_0').
    \end{aligned}
    $$
    Therefore,
    $$
    u_1(\theta, t) = \sum_{k=1}^K w_k p_{1,k}(\theta, t) \le \sum_{k=1}^K w_k p_{1,k}(\theta', t_0') = u_1(\theta', t_0') \le u_{1}(\hthetaMonopoly, t_0') = 1 / N.
    $$
    Here $u_1(\theta', t_0') \le u_{1}(\hthetaMonopoly, t_0')$ is due to the fact that $\hbfthetaHomo$ is a PNE when the temperature is $t_0'$. Note that $u_1(\hthetaMonopoly, t) = 1 / N$ and hence $u_1(\hthetaMonopoly, t) \ge u_1(\theta, t)$ for all $\theta \in \bbR^D$, which means that $\hbfthetaHomo$ is a PNE for any $t \ge t_0'$.

    As a result, it holds that if $\hbfthetaHomo$ is a PNE when the temperature is $t_0'$, then it is still a PNE for any $t \ge t_0'$. Now let $\tProbHomoThre = \inf\{t: \hbfthetaHomo \text{ is a PNE when the temperature is } t\}$. Since $u_1(\theta, t)$ is continuous, it holds that $\hbfthetaHomo$ is a PNE when the temperature is $\tProbHomoThre$. Now the claim follows.
\end{proof}

\subsubsection{Proof of \cref{lemma:prob-homo-unique}} \label{sect:proof-probability-homogeneous-3}
We first prove the following lemmas.

\begin{lemma} \label{thrm:probability-homogeneous-PNE-PNE-fixed-point}
    Suppose that \cref{assum:linearly-independent} holds. Suppose $\hbftheta = (\htheta_1, \dots, \htheta_N)$ is a PNE. Let $\bfq_n$ be the unique vector in $\Delta_K$ such that $\htheta_n = \bartheta(\bfq_n)$ according to \cref{thrm:PNE-strategy-set}. Then we have $\calM(\bfq_1, \bfq_2, \dots, \bfq_N) = (\bfq_1, \bfq_2, \dots, \bfq_N)$
\end{lemma}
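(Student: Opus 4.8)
The plan is to read the fixed-point equation $\calM(\bfq_1,\dots,\bfq_N)=(\bfq_1,\dots,\bfq_N)$ as an encoding of the first-order optimality conditions that each player satisfies at a PNE. Because every player's strategy space is all of $\bbR^D$ and the utility $\theta\mapsto u_n(\theta,\hbftheta_{-n})=\sum_{k=1}^K w_k\,p_{n,k}(\theta)$ is smooth --- where $p_{n,k}(\theta)$ is the logit probability that source $k$ selects player $n$ when player $n$ deviates to $\theta$ --- the PNE condition $u_n(\theta,\hbftheta_{-n})\le u_n(\hbftheta)$ forces $\htheta_n$ to be an unconstrained maximizer, hence $\nabla_\theta u_n(\theta,\hbftheta_{-n})\big|_{\theta=\htheta_n}=0$ for every $n\in[N]$.

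First I would reuse the gradient computation already carried out in the proof of \cref{thrm:probability-duopoly}: $\partial p_{n,k}(\theta)/\partial\theta=-\tfrac{2}{t}\,p_{n,k}(\theta)(1-p_{n,k}(\theta))\Sigma_k(\theta-\theta_k)$, so the stationarity condition at $\htheta_n$ becomes
$$\sum_{k=1}^K w_k\,p_{n,k}(1-p_{n,k})\,\Sigma_k(\htheta_n-\theta_k)=0,$$
where $p_{n,k}=p_{n,k}(\htheta_n)$ is exactly the quantity $p_{n,k}$ computed inside \cref{alg:cal-M} when it is run on $(\bfq_1,\dots,\bfq_N)$, since at $\theta=\htheta_n=\bartheta(\bfq_n)$ the losses are the PNE losses $\ell_{n,k}$.

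Next I would set $c_{n,k}=w_k p_{n,k}(1-p_{n,k})$ and $C_n=\sum_{k=1}^K c_{n,k}$. As $w_k>0$ and $0<p_{n,k}<1$, every $c_{n,k}>0$, so $C_n>0$ and $\tbfq_n\triangleq(c_{n,1}/C_n,\dots,c_{n,K}/C_n)$ is a well-defined point of $\Delta_K$ --- and it is precisely the vector $\tbfq_n$ output by \cref{alg:cal-M}. Dividing the stationarity identity by $C_n$ and rearranging yields $\big(\sum_k\tq_{n,k}\Sigma_k\big)\htheta_n=\sum_k\tq_{n,k}\Sigma_k\theta_k$; since $\sum_k\tq_{n,k}\Sigma_k\succ 0$, the closed form in \cref{eq:bartheta-q} gives $\htheta_n=\bartheta(\tbfq_n)$. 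But $\htheta_n=\bartheta(\bfq_n)$ by definition of $\bfq_n$, so \cref{assum:linearly-independent} forces $\tbfq_n=\bfq_n$. Since this holds for every $n$, $\calM(\bfq_1,\dots,\bfq_N)=(\bfq_1,\dots,\bfq_N)$, as claimed.

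I do not expect a substantial obstacle here --- the proof is essentially "write down the first-order condition and recognize it" --- but two points need a little care. One is the justification that a PNE strategy is automatically an interior stationary point; this is immediate because the feasible set is all of $\bbR^D$ and $u_n(\cdot,\hbftheta_{-n})$ is differentiable, so no boundary or subgradient subtleties arise. The other is verifying $C_n>0$, which is what guarantees that $\tbfq_n$ genuinely lies in $\Delta_K$ and that $\sum_k\tq_{n,k}\Sigma_k$ is invertible --- exactly the hypotheses needed to apply \cref{eq:bartheta-q} and then \cref{assum:linearly-independent}.
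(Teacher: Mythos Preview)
Your proposal is correct and follows essentially the same argument as the paper's own proof: both derive the first-order optimality condition $\sum_{k} w_k p_{n,k}(1-p_{n,k})\Sigma_k(\htheta_n-\theta_k)=0$ from the PNE property, recognize the normalized weights as $\tbfq_n$ from \cref{alg:cal-M}, deduce $\htheta_n=\bartheta(\tbfq_n)=\bartheta(\bfq_n)$, and invoke \cref{assum:linearly-independent} to conclude $\tbfq_n=\bfq_n$. You are slightly more careful than the paper in explicitly justifying that $C_n>0$ and that stationarity follows from unconstrained optimality, but these are cosmetic elaborations of the same proof.
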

\begin{proof}
    Since $\hbftheta$ is a PNE, it must hold that for all $n \in [N]$,
    $$
    \begin{aligned}
        \left.\frac{\partial u_n(\theta, \hbftheta_{-n})}{\partial \theta}\right|_{\theta = \htheta_n} = -\frac{2}{t}\cdot \sum_{k=1}^K w_k p_{n,k}(1 - p_{n,k})\Sigma_k(\htheta_n - \theta_k) = 0.
    \end{aligned}
    $$
    Then by the definition of $\tbfq_n$ in \cref{alg:cal-M}. Then according to above equation, we have that $\htheta_n = \bartheta(\bfq_n) = \bartheta(\tbfq_n)$. According to \cref{assum:linearly-independent}, it must hold that $\bfq_n = \tbfq_n$. Now the claim follows.
\end{proof}

Now we introduce the following constants that only depend on $\{\Sigma_k, \theta_k, w_k\}_{k=1}^K$.
\begin{enumerate}
    \item Define $\LambdaMin$ as the minimal eigenvalue of all covariance matrices $\Sigma_k$, i.e., $\LambdaMin = \min_{k \in [K]} \LambdaMin(\Sigma_k)$. Note that $\LambdaMin > 0$ because $\Sigma_k \succ 0$ for all $k \in [K]$.
    \item Define $\LambdaSum$ as the sum of the  2-norm of all covariance matrices, i.e., $\LambdaSum = \sum_{k=1}^K \|\Sigma_k\|_2$.
    \item Define $\alphaSum$ as the sum of the 2-norm of $\Sigma_k \theta_k$, i.e., $\alphaSum = \sum_{k=1}^K \|\Sigma_k\theta_k\|_2$.
    \item Define $\dMax$ as the maximal distance between elements in $\caltheta$, i.e., $\dMax = \sup_{\htheta_1, \htheta_2 \in \caltheta} \left\|\htheta_1 - \htheta_2\right\|_2$. Note that for any $\theta = \bartheta(\bfq ) \in \caltheta$, we have that
    $$
    \begin{aligned}
        \|\theta\|_2 & = \left\|\left(\sum_{k=1}^Kq_k\Sigma_k\right)^{-1}\left(\sum_{k=1}^Kq_k\Sigma_k\theta_k\right)\right\|_2 \le \left\|\left(\sum_{k=1}^Kq_k\Sigma_k\right)^{-1}\right\|_2\left\|\sum_{k=1}^Kq_k\Sigma_k\theta_k\right\|_2 \le \frac{\alphaSum}{\LambdaMin}.
    \end{aligned}
    $$
    Hence, $\dMax < \infty$.
\end{enumerate}

\begin{lemma} \label{thrm:probability-homogeneous-PNE-fixed-point}
    Let $\bfq_1 = \bfq_2 = \cdots = \bfq_N = \bfw$. Then $\calM(\bfq_1, \bfq_2, \dots, \bfq_N) = (\bfq_1, \bfq_2, \dots, \bfq_N)$.
\end{lemma}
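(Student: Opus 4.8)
The plan is to simply trace the symmetric input $\bfq_1 = \cdots = \bfq_N = \bfw$ through \cref{alg:cal-M} and observe that every intermediate quantity collapses to something $n$-independent, so the map returns the same profile.

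First I would note that, since $\bfq_n = \bfw$ for every $n$, \cref{line:cal-M-htheta-n} gives $\htheta_n = \bartheta(\bfw) = \hthetaMonopoly$ for all $n \in [N]$; in particular all players' strategies coincide. Consequently, in \cref{line:cal-M-ell-n-k} the loss $\ell_{n,k} = d_M^2(\hthetaMonopoly, \theta_k; \Sigma_k^{-1})$ does not depend on $n$; write it as $\ell_k$. Then \cref{line:cal-M-p-n-k} yields
$$
p_{n,k} = \frac{\exp(-\ell_k/t)}{\sum_{i=1}^N \exp(-\ell_k/t)} = \frac{1}{N}
$$
for every $n$ and $k$. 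Hence $p_{n,k}(1 - p_{n,k}) = \tfrac{1}{N}\bigl(1 - \tfrac{1}{N}\bigr) = \tfrac{N-1}{N^2}$, a constant independent of both $n$ and $k$.

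Next I would substitute this into \cref{line:cal-M-tq-n-k}. The common factor $\tfrac{N-1}{N^2}$ cancels between numerator and denominator, leaving
$$
\tq_{n,k} = \frac{w_k \cdot \frac{N-1}{N^2}}{\sum_{j=1}^K w_j \cdot \frac{N-1}{N^2}} = \frac{w_k}{\sum_{j=1}^K w_j} = w_k,
$$
where the last step uses $\sum_{j=1}^K w_j = 1$ from the game setup. Therefore $\tbfq_n = (\tq_{n,1}, \dots, \tq_{n,K}) = \bfw$ for every $n$, i.e. $\calM(\bfw, \dots, \bfw) = (\bfw, \dots, \bfw)$, which is the claim.

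There is essentially no obstacle here: the argument is a direct evaluation of the algorithm on a symmetric point. The only things to be careful about are (i) using \cref{eq:bartheta-q}/\cref{eq:htheta-monopoly} to identify $\bartheta(\bfw)$ consistently across all coordinates, and (ii) invoking the normalization $\sum_k w_k = 1$ in the final cancellation; both are immediate from the setup.
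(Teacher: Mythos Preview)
Your proposal is correct and follows essentially the same approach as the paper: both trace the symmetric input through \cref{alg:cal-M}, observe that all losses $\ell_{n,k}$ coincide across $n$, conclude $p_{n,k}=1/N$, and then read off $\tbfq_n=\bfw$. Your write-up is simply a more detailed version of the paper's terse argument.
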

\begin{proof}
    In this case, we must have that for all $k \in [K]$, $\ell_{1,k} = \ell_{2,k} = \cdots = \ell_{N,k}$ in \cref{alg:cal-M}. As a result, $p_{n,k} = 1 / N$. Hence $\tbfq_n = \bfw$. Now the claim follows.
\end{proof}

\begin{lemma} \label{thrm:probability-homogeneous-PNE-q-theta}
    Let $\bfq^\one, \bfq^\two \in \Delta_K$. Suppose $\|\bfq^\one - \bfq^\two\|_{\infty} \le \epsilon$. Then, there exists a constant $C > 0$, depending only on $\{\Sigma_k, \theta_k, w_k\}_{k=1}^K$, such that
    $$
    \|\bartheta(\bfq^\one) - \bartheta(\bfq^\two)\|_2 \le C \cdot \epsilon.
    $$
\end{lemma}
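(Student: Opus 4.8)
The plan is to use the closed form $\bartheta(\bfq)=A(\bfq)^{-1}b(\bfq)$ with $A(\bfq)\triangleq\sum_{k=1}^K q_k\Sigma_k$ and $b(\bfq)\triangleq\sum_{k=1}^K q_k\Sigma_k\theta_k$, and to show the map $\bfq\mapsto\bartheta(\bfq)$ is globally Lipschitz on the simplex $\Delta_K$ with an explicit constant built from the problem-dependent quantities $\LambdaMin,\LambdaSum,\alphaSum$ introduced above. First I would record the two uniform estimates that hold for \emph{every} $\bfq\in\Delta_K$: since each $\Sigma_k\succeq\LambdaMin I$ and $\sum_k q_k=1$, we get $A(\bfq)\succeq\LambdaMin I$, hence $\|A(\bfq)^{-1}\|_2\le 1/\LambdaMin$; and, exactly as in the derivation of $\dMax$ above, $\|\bartheta(\bfq)\|_2\le\alphaSum/\LambdaMin$. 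The key point is that these bounds are uniform over the whole simplex, including its boundary, because the lower bound on $A(\bfq)$ uses only $\sum_k q_k=1$.

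Next, writing $A_i=A(\bfq^\one),\,b_i=b(\bfq^\two)$ (abbreviating $A_1,A_2,b_1,b_2$), I would apply the resolvent identity $A_1^{-1}-A_2^{-1}=-A_1^{-1}(A_1-A_2)A_2^{-1}$ to split
\[
\bartheta(\bfq^\one)-\bartheta(\bfq^\two)=A_1^{-1}(b_1-b_2)-A_1^{-1}(A_1-A_2)A_2^{-1}b_2.
\]
From $\|\bfq^\one-\bfq^\two\|_\infty\le\epsilon$ one gets $\|b_1-b_2\|_2\le\sum_k|q^\one_k-q^\two_k|\,\|\Sigma_k\theta_k\|_2\le\alphaSum\,\epsilon$ and $\|A_1-A_2\|_2\le\sum_k|q^\one_k-q^\two_k|\,\|\Sigma_k\|_2\le\LambdaSum\,\epsilon$. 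Combining these with $\|A_1^{-1}\|_2\le1/\LambdaMin$ and $\|A_2^{-1}b_2\|_2=\|\bartheta(\bfq^\two)\|_2\le\alphaSum/\LambdaMin$ yields
\[
\|\bartheta(\bfq^\one)-\bartheta(\bfq^\two)\|_2\le\frac{\alphaSum}{\LambdaMin}\,\epsilon+\frac{1}{\LambdaMin}\cdot\LambdaSum\,\epsilon\cdot\frac{\alphaSum}{\LambdaMin}=\frac{\alphaSum(\LambdaMin+\LambdaSum)}{\LambdaMin^2}\,\epsilon,
\]
so the lemma holds with $C=\alphaSum(\LambdaMin+\LambdaSum)/\LambdaMin^2$, which depends only on $\{\Sigma_k,\theta_k\}_{k=1}^K$ and hence (a fortiori) only on $\{\Sigma_k,\theta_k,w_k\}_{k=1}^K$.

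There is no genuine obstacle here; the only step requiring a moment's care is the one already flagged — verifying that $\|A(\bfq)^{-1}\|_2$ and $\|\bartheta(\bfq)\|_2$ admit bounds that are uniform across \emph{all} of $\Delta_K$, so that the Lipschitz constant can be taken independent of $\bfq^\one,\bfq^\two$. Everything else is the routine resolvent-identity bookkeeping sketched above.
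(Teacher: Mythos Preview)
Your proposal is correct and follows essentially the same approach as the paper: the paper likewise adds and subtracts the intermediate term $A_1^{-1}b_2$, bounds $\|A_1^{-1}-A_2^{-1}\|_2$ via (what amounts to) the resolvent identity, and arrives at the identical constant $C=\dfrac{\alphaSum}{\LambdaMin}\Bigl(1+\dfrac{\LambdaSum}{\LambdaMin}\Bigr)$. (There is a harmless typo in your abbreviation line ``$A_i=A(\bfq^\one),\,b_i=b(\bfq^\two)$''; the computation that follows makes the intended meaning $A_i=A(\bfq^{(i)}),\,b_i=b(\bfq^{(i)})$ clear.)
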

\begin{proof}
    It holds that
    \begin{small}
    $$
    \begin{aligned}
        & \, \left\|\bartheta(\bfq^\one) - \bartheta(\bfq^\two)\right\|_2 \\
        = & \, \left\|\left(\sum_{k=1}^Kq_k^\one\Sigma_k\right)^{-1}\left(\sum_{k=1}^Kq_k^\one\Sigma_k\theta_k\right) - \left(\sum_{k=1}^Kq_k^\two\Sigma_k\right)^{-1}\left(\sum_{k=1}^Kq_k^\two\Sigma_k\theta_k\right)\right\|_2 \\
        \le & \, \left\|\left(\sum_{k=1}^Kq_k^\one\Sigma_k\right)^{-1}\left(\sum_{k=1}^Kq_k^\one\Sigma_k\theta_k\right) - \left(\sum_{k=1}^Kq_k^\one\Sigma_k\right)^{-1}\left(\sum_{k=1}^Kq_k^\two\Sigma_k\theta_k\right)\right\|_2 \\
        & + \left\|\left(\sum_{k=1}^Kq_k^\one\Sigma_k\right)^{-1}\left(\sum_{k=1}^Kq_k^\two\Sigma_k\theta_k\right) - \left(\sum_{k=1}^Kq_k^\two\Sigma_k\right)^{-1}\left(\sum_{k=1}^Kq_k^\two\Sigma_k\theta_k\right)\right\|_2 \\
        = & \, \left\|\left(\sum_{k=1}^Kq_k^\one\Sigma_k\right)^{-1}\left(\sum_{k=1}^K\left(q_k^\one - q_k^\two\right)\Sigma_k\theta_k\right)\right\|_2 + \left\|\left(\left(\sum_{k=1}^Kq_k^\one\Sigma_k\right)^{-1} - \left(\sum_{k=1}^Kq_k^\two\Sigma_k\right)^{-1}\right)\left(\sum_{k=1}^Kq_k^\two\Sigma_k\theta_k\right)\right\|_2 \\
        \le & \, \underbrace{\left\|\left(\sum_{k=1}^Kq_k^\one\Sigma_k\right)^{-1}\right\|_2}_{\text{Term 1}}\underbrace{\left\|\sum_{k=1}^K\left(q_k^\one - q_k^\two\right)\Sigma_k\theta_k\right\|_2}_{\text{Term 2}} + \underbrace{\left\|\left(\sum_{k=1}^Kq_k^\one\Sigma_k\right)^{-1} - \left(\sum_{k=1}^Kq_k^\two\Sigma_k\right)^{-1}\right\|_2}_{\text{Term 3}}\underbrace{\left\|\sum_{k=1}^Kq_k^\two\Sigma_k\theta_k\right\|_2}_{\text{Term 4}}.
    \end{aligned}
    $$
    \end{small}
    We then analyze the upper bounds on the four terms, respectively. For the first term, according to Weyl's theorem, we have that
    $$
    \text{Term 1} = \frac{1}{\LambdaMin\left(\sum_{k=1}^Kq_k^\one\Sigma_k\right)} \le \frac{1}{\sum_{k=1}^Kq_k^\one \LambdaMin(\Sigma_k)} \le \frac{1}{\LambdaMin}.
    $$
    For the second term, since $\|\bfq^\one - \bfq^\two\|_\infty \le \epsilon$ we have that
    $$
    \text{Term 2} \le \sum_{k=1}^K\left\|\left(q_k^\one - q_k^\two\right)\Sigma_k\theta_k\right\|_2 \le \epsilon \cdot \sum_{k=1}^K \|\Sigma_k \theta_k\|_2 = \epsilon \cdot \alphaSum.
    $$
    For the third term, we have
    $$
    \begin{aligned}
        \text{Term 3} & \le \left\|\left(\sum_{k=1}^Kq_k^\one\Sigma_k\right)^{-1}\right\|_2\left\|\left(\sum_{k=1}^Kq_k^\two\Sigma_k\right)^{-1}\right\|_2\left\|\sum_{k=1}^K(q_k^\one - q_k^\two)\Sigma_k\right\|_2 \\
        & \le \frac{1}{\LambdaMin} \cdot \frac{1}{\LambdaMin} \cdot \epsilon \cdot \left(\sum_{k=1}^K \|\Sigma_k\|_2\right) = \epsilon \cdot \frac{\LambdaSum}{\LambdaMin^2}.
    \end{aligned}
    $$
    Finally, for the fourth term, we have that
    $$
    \text{Term 4} \le \sum_{k=1}^K \left\|q_k^\two \Sigma_k\theta_k\right\|_2 \le \sum_{k=1}^K \left\|\Sigma_k\theta_k\right\|_2 = \alphaSum.
    $$
    As a result, we have
    $$
    \left\|\bartheta(\bfq^\one) - \bartheta(\bfq^\two)\right\|_2 \le \frac{1}{\LambdaMin} \cdot \epsilon \cdot \alphaSum + \epsilon \cdot \frac{\LambdaSum}{\LambdaMin^2}\cdot \alphaSum = \epsilon \cdot \frac{\alphaSum}{\LambdaMin}\left(1 + \frac{\LambdaSum}{\LambdaMin}\right).
    $$
    Now the claim follows.
\end{proof}

\begin{lemma} \label{thrm:probability-homogeneous-PNE-theta-ell}
    Let $\htheta_n^\one, \htheta_n^\two \in \caltheta$. Let $\ell_{n,k}^{(\cdot)} = (\htheta_n^{(\cdot)} - \theta_k)^\top \Sigma_k (\htheta_n^{(\cdot)} - \theta_k)$. Suppose $\|\htheta_n^\one - \htheta_n^\two\| \le \epsilon$. Then, there exists a constant $C > 0$, depending only on $\{\Sigma_k, \theta_k, w_k\}_{k=1}^K$, such that
    $$
    \forall k \in [K], \left|\ell^\one_{n,k} - \ell^\two_{n,k}\right| \le C \cdot \epsilon.
    $$
\end{lemma}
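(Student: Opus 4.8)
The plan is to reduce the claim to local Lipschitz continuity of the map $\theta \mapsto (\theta - \theta_k)^\top \Sigma_k(\theta - \theta_k)$ on the bounded set $\caltheta$, using the simple algebraic identity $a^\top \Sigma_k a - b^\top \Sigma_k b = (a-b)^\top \Sigma_k (a+b)$ together with the a priori bound $\|\theta\|_2 \le \alphaSum/\LambdaMin$ for $\theta \in \caltheta$ that was established in the list of problem-dependent constants preceding \cref{thrm:probability-homogeneous-PNE-fixed-point}.

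Concretely, I would first set $a = \htheta_n^\one - \theta_k$ and $b = \htheta_n^\two - \theta_k$, so that $\ell^\one_{n,k} - \ell^\two_{n,k} = a^\top \Sigma_k a - b^\top \Sigma_k b = (a-b)^\top \Sigma_k (a+b) = (\htheta_n^\one - \htheta_n^\two)^\top \Sigma_k (\htheta_n^\one + \htheta_n^\two - 2\theta_k)$. Applying the Cauchy--Schwarz inequality and submultiplicativity of the operator norm then gives $|\ell^\one_{n,k} - \ell^\two_{n,k}| \le \|\htheta_n^\one - \htheta_n^\two\|_2 \,\|\Sigma_k\|_2 \,\|\htheta_n^\one + \htheta_n^\two - 2\theta_k\|_2 \le \epsilon \,\|\Sigma_k\|_2 \,\|\htheta_n^\one + \htheta_n^\two - 2\theta_k\|_2$.

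Next I would bound the remaining factors by game-level quantities. Since $\htheta_n^\one, \htheta_n^\two \in \caltheta$, the computation before \cref{thrm:probability-homogeneous-PNE-fixed-point} gives $\|\htheta_n^{(i)}\|_2 \le \alphaSum/\LambdaMin$, so $\|\htheta_n^\one + \htheta_n^\two - 2\theta_k\|_2 \le 2\alphaSum/\LambdaMin + 2\max_{j \in [K]}\|\theta_j\|_2$; and $\|\Sigma_k\|_2 \le \LambdaSum$. Taking $C = \LambdaSum\big(2\alphaSum/\LambdaMin + 2\max_{j \in [K]}\|\theta_j\|_2\big)$, which depends only on $\{\Sigma_k, \theta_k, w_k\}_{k=1}^K$, yields $|\ell^\one_{n,k} - \ell^\two_{n,k}| \le C\epsilon$ for every $k \in [K]$.

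There is no real obstacle here; the only point requiring a little care is that the constant $C$ must be uniform over $n$, over $k$, and over the choice of points in $\caltheta$, which is immediate because every quantity entering $C$ ($\alphaSum$, $\LambdaMin$, $\LambdaSum$, and $\max_j\|\theta_j\|_2$) is a fixed game parameter independent of $n$, $k$, and the particular elements of $\caltheta$ selected.
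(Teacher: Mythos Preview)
Your argument is correct. Both your proof and the paper's rest on the same underlying fact---that a quadratic form is Lipschitz on a bounded set---but the executions differ slightly. The paper uses the second-order Taylor (smoothness) upper bound
\[
l(\htheta_n^\one) \le l(\htheta_n^\two) + \nabla l(\htheta_n^\two)^\top(\htheta_n^\one - \htheta_n^\two) + \lambda_{\max}(\Sigma_k)\,\|\htheta_n^\one - \htheta_n^\two\|_2^2,
\]
bounds the gradient term via $\|\Sigma_k(\htheta_n^\two - \theta_k)\|_2 \le \LambdaSum\,\dMax$ and the quadratic remainder via $\|\htheta_n^\one - \htheta_n^\two\|_2^2 \le \dMax\,\epsilon$, and then repeats with the roles of $\htheta_n^\one$ and $\htheta_n^\two$ swapped, arriving at $C = 3\LambdaSum\dMax$. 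Your polarization identity $a^\top\Sigma_k a - b^\top\Sigma_k b = (a-b)^\top\Sigma_k(a+b)$ handles both directions at once and avoids the second-order remainder entirely, which is a little cleaner. Either route yields a constant depending only on the game data; in fact, since $\theta_k \in \caltheta$, you could also bound $\|\htheta_n^\one + \htheta_n^\two - 2\theta_k\|_2 \le 2\dMax$ directly and obtain $C = 2\LambdaSum\dMax$, matching the paper's constants up to a factor.
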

\begin{proof}
    Because $l(\theta) \triangleq (\theta-\theta_k)^\top \Sigma_k(\theta - \theta_k)$ is a convex function on $\theta$, we have that
    $$
    \begin{aligned}
        \ell_{n,k}^\one = l(\htheta_n^\one) & \le l(\htheta_n^\two) + \nabla l(\htheta_n^\two)^\top (\htheta_n^\one - \htheta_n^\two) + \frac{2\lambda_{\max}(\Sigma_k)}{2}\cdot \left\|\htheta_n^\one - \htheta_n^\two\right\|_2^2 \\
        & \le \ell_{n,k}^\two + \left\|2\Sigma_k(\htheta_n^\two - \theta_k)\right\|_2\left\|\htheta_n^\one - \htheta_n^\two\right\|_2 + \LambdaSum \cdot \dMax \cdot \epsilon \\
        & \le \ell_{n,k}^\two + 2 \left\|\Sigma_k\right\| \left\|\htheta_n^\two - \theta_k\right\|_2 \cdot \epsilon + \LambdaSum \cdot \dMax \cdot \epsilon \\
        & \le \ell_{n,k}^\two + \epsilon \cdot \left(2 \LambdaSum \cdot \dMax + \LambdaSum \cdot \dMax\right) = \ell_{n,k}^\two + \epsilon \cdot \left(3 \LambdaSum \cdot \dMax\right).
    \end{aligned}
    $$
    Similarly, we could get that
    $$
    \ell_{n,k}^\two \le \ell_{n,k}^\one + \epsilon \cdot \left(3 \LambdaSum \cdot \dMax\right).
    $$
    Therefore,
    $$
    \left|\ell_{n,k}^\one - \ell_{n,k}^\two\right| \le \epsilon \cdot \left(3 \LambdaSum \cdot \dMax\right).
    $$
    Now the claim follows.
\end{proof}

\begin{lemma} \label{thrm:probability-homogeneous-PNE-ell-p}
    Let $\ell^\one = \{\ell_{n,k}^\one\}$ and $\ell^\two = \{\ell_{n,k}^\two\}$ where $\ell^\one_{n,k}, \ell^{\two}_{n,k} \in [0, \ellmax]$. Let $p_{n,k}^{(\cdot)} = \frac{\exp(-\ell^{(\cdot)}_{n,k}/t)}{(\sum_{i=1}^N \exp(-\ell^{(\cdot)}_{i,k}/t))}$. Suppose $|\ell_{n,k}^\one - \ell_{n,k}^\two| \le \epsilon$ for all $n \in [N], k \in [K]$. Then
    $$
    \forall n \in [N], k \in [K], \quad \left|p^\one_{n,k} - p^\two_{n,k}\right| \le \frac{2\epsilon\cdot \exp(2\ellmax/t)}{tN}.
    $$
\end{lemma}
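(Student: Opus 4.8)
The plan is to fix a data source $k$ and view $p_{1,k},\dots,p_{N,k}$ as the output of a temperature-$t$ softmax applied to the logits $z_i = -\ell_{i,k}/t$, then control how far this output can move when the losses are perturbed by at most $\epsilon$ in $\ell^\infty$-norm. Concretely, I would interpolate between the two loss vectors: for $s\in[0,1]$ set $\ell_{i,k}(s) = \ell^{(2)}_{i,k} + s(\ell^{(1)}_{i,k} - \ell^{(2)}_{i,k})$ and let $p_{n,k}(s)$ be the corresponding softmax value, which is $C^1$ in $s$. By the fundamental theorem of calculus,
$$
p^{(1)}_{n,k} - p^{(2)}_{n,k} = \int_0^1 \sum_{j=1}^N \frac{\partial p_{n,k}}{\partial \ell_{j,k}}(s)\,\bigl(\ell^{(1)}_{j,k} - \ell^{(2)}_{j,k}\bigr)\,ds ,
$$
so that $\bigl|p^{(1)}_{n,k} - p^{(2)}_{n,k}\bigr| \le \epsilon \cdot \sup_{s\in[0,1]} \sum_{j=1}^N \bigl|\partial p_{n,k}/\partial \ell_{j,k}\bigr|$, since $|\ell^{(1)}_{j,k}-\ell^{(2)}_{j,k}|\le\epsilon$ for every $j$. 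The point is to pair the $\ell^\infty$ control on the perturbation with the $\ell^1$ norm of a Jacobian row.

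Next I would compute the softmax Jacobian directly: $\partial p_{n,k}/\partial \ell_{j,k} = -\tfrac{1}{t}\,p_{n,k}(\delta_{nj} - p_{j,k})$ (this is exactly the $-\tfrac2t p(1-p)\Sigma(\cdot)$-type derivative already used in the proof of \cref{thrm:probability-duopoly}, specialized to scalar logit perturbations). Hence
$$
\sum_{j=1}^N \left|\frac{\partial p_{n,k}}{\partial \ell_{j,k}}\right| = \frac{p_{n,k}}{t}\Bigl[(1 - p_{n,k}) + \sum_{j\neq n} p_{j,k}\Bigr] = \frac{2\,p_{n,k}(1 - p_{n,k})}{t},
$$
using $\sum_{j\neq n}p_{j,k} = 1-p_{n,k}$. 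This bound holds uniformly along the segment because at each $s$ the vector $(p_{1,k}(s),\dots,p_{N,k}(s))$ is again a probability vector.

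Finally I would use the boundedness of the losses to bound $p_{n,k}$. Since $\ell_{i,k}(s)\in[0,\ellmax]$ for all $i$ and $s$, we get $\exp(-\ell_{i,k}(s)/t)\ge \exp(-\ellmax/t)$ and $\exp(-\ell_{n,k}(s)/t)\le 1$, whence $p_{n,k}(s) \le \exp(\ellmax/t)/N$. Therefore $p_{n,k}(s)(1-p_{n,k}(s)) \le p_{n,k}(s) \le \exp(\ellmax/t)/N \le \exp(2\ellmax/t)/N$, and combining the three displays gives
$$
\bigl|p^{(1)}_{n,k} - p^{(2)}_{n,k}\bigr| \le \epsilon\cdot \frac{2}{t}\cdot \frac{\exp(2\ellmax/t)}{N} = \frac{2\epsilon\exp(2\ellmax/t)}{tN},
$$
as claimed. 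There is no real obstacle in this lemma; the only care needed is the choice of norms noted above, and the crude step $1-p_{n,k}\le 1$ is harmless since the exponent $2\ellmax/t$ (rather than $\ellmax/t$) already absorbs all the slack (one could tighten the constant by keeping $1-p_{n,k}\le 1-\exp(-\ellmax/t)/N$, but this is unnecessary for the downstream argument).
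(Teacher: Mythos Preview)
Your proof is correct and takes a genuinely different route from the paper's. The paper proceeds by a direct algebraic decomposition: it adds and subtracts the cross term $\exp(-\ell^{(1)}_{n,k}/t)/\sum_i \exp(-\ell^{(2)}_{i,k}/t)$, bounds the two resulting pieces separately using $|\exp(-a/t)-\exp(-b/t)|\le 1-\exp(-\epsilon/t)$ and $\sum_i \exp(-\ell_{i,k}/t)\ge N\exp(-\ellmax/t)$, and finishes with $1-\exp(-\epsilon/t)\le \epsilon/t$. Your interpolation argument is more structural: it identifies the row-$\ell^1$ norm of the softmax Jacobian as $2p_{n,k}(1-p_{n,k})/t$ and bounds that uniformly along the segment. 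Your approach in fact yields the sharper intermediate bound $2\epsilon\exp(\ellmax/t)/(tN)$ before you relax it to match the statement, whereas the paper's Term~1 genuinely needs the $\exp(2\ellmax/t)$ factor from the squared denominator. Either argument feeds equally well into the downstream contraction in \cref{thrm:probability-homogeneous-PNE-q-tq}.
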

\begin{proof}
    It holds that
    $$
    \begin{aligned}
        & \, \left|p^\one_{n,k} - p^\two_{n,k}\right| \\
        = & \, \left|\frac{\exp\left(-\ell^\one_{n,k}/t\right)}{\sum_{i=1}^N \exp\left(-\ell^\one_{i,k}/t\right)} - \frac{\exp\left(-\ell^\two_{n,k}/t\right)}{\sum_{i=1}^N \exp\left(-\ell^\two_{i,k}/t\right)}\right| \\
        \le & \, \left|\frac{\exp\left(-\ell^\one_{n,k}/t\right)}{\sum_{i=1}^N \exp\left(-\ell^\one_{i,k}/t\right)} - \frac{\exp\left(-\ell^\one_{n,k}/t\right)}{\sum_{i=1}^N \exp\left(-\ell^\two_{i,k}/t\right)}\right| + \left|\frac{\exp\left(-\ell^\one_{n,k}/t\right)}{\sum_{i=1}^N \exp\left(-\ell^\two_{i,k}/t\right)} - \frac{\exp\left(-\ell^\two_{n,k}/t\right)}{\sum_{i=1}^N \exp\left(-\ell^\two_{i,k}/t\right)}\right| \\
        \le & \, \underbrace{\left|\frac{1}{\sum_{i=1}^N \exp\left(-\ell^\one_{i,k}/t\right)} - \frac{1}{\sum_{i=1}^N \exp\left(-\ell^\two_{i,k}/t\right)}\right|}_{\text{Term 1}} + \underbrace{\left| \frac{\exp\left(-\ell^\one_{n,k}/t\right) - \exp\left(-\ell^\two_{n,k}/t\right)}{\sum_{i=1}^N \exp\left(-\ell^\two_{i,k}/t\right)}\right|}_{\text{Term 2}}.
    \end{aligned}
    $$
    For the first term, we have that
    $$
    \begin{aligned}
        \text{Term 1} & = \frac{\left|\sum_{i=1}^N \exp\left(-\ell^\two_{i,k}/t\right) - \sum_{i=1}^N \exp\left(-\ell^\one_{i,k}/t\right)\right|}{\left(\sum_{i=1}^N \exp\left(-\ell^\one_{i,k}/t\right)\right)\left(\sum_{i=1}^N \exp\left(-\ell^\two_{i,k}/t\right)\right)} \\
        & \le \frac{N \cdot (1 - \exp(-\epsilon/t)}{\left(N \cdot \exp(-\ellmax/t)\right)^2} = \frac{(1 - \exp(-\epsilon / t))\exp(2\ellmax/t)}{N}.
    \end{aligned}
    $$
    For the second term, we have that
    $$
    \text{Term 2} \le \frac{1 - \exp(-\epsilon / t)}{N \cdot \exp(-\ellmax/t)} \le \frac{(1 - \exp(-\epsilon / t))\exp(2\ellmax/t)}{N}.
    $$
    As a result, we have that
    $$
    \left|p^\one_{n,k} - p^\two_{n,k}\right| \le \frac{2(1 - \exp(-\epsilon / t))\exp(2\ellmax/t)}{N} \le \frac{2\epsilon\cdot \exp(2\ellmax/t)}{tN}.
    $$
\end{proof}

\begin{lemma} \label{thrm:probability-homogeneous-PNE-p-tq}
    Let $p^\one = \{p_{n,k}^\one\}$ and $p^\two = \{p_{n,k}^\two\}$ where $p^\one_{n,k}, p^{\two}_{n,k} \in [0, 1]$. Let $\tq_{n,k}^{(\cdot)} = w_kp^{(\cdot)}_{n,k}(1-p^{(\cdot)}_{n,k}) / (\sum_{j=1}^K w_jp^{(\cdot)}_{n,j}(1-p^{(\cdot)}_{n,j}))$. Suppose $|p_{n,k}^\one - p_{n,k}^\two| \le \epsilon$ for all $n \in [N], k \in [K]$. Then, there exists a constant $C > 0$, depending only on $\{\Sigma_k, \theta_k, w_k\}_{k=1}^K$, such that
    $$
    \forall n \in [N], k \in [K], \quad \left|\tq^\one_{n,k} - \tq^\two_{n,k}\right| \le C \cdot \epsilon.
    $$
\end{lemma}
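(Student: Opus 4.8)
The plan is to read $\tq_{n,k}$ off the vector $(p_{n,1},\dots,p_{n,K})$ as the composition of two Lipschitz operations---the entrywise quadratic $p\mapsto w_k\,p(1-p)$ followed by an $\ell_1$-normalization---and then compose the two Lipschitz bounds. Fix $n$, and for $i\in\{1,2\}$ set $a_k^{(i)}=w_k\,p_{n,k}^{(i)}\bigl(1-p_{n,k}^{(i)}\bigr)$ and $S^{(i)}=\sum_{j=1}^K a_j^{(i)}$, so that $\tq_{n,k}^{(i)}=a_k^{(i)}/S^{(i)}$ exactly as in \cref{alg:cal-M}.

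First I would bound the numerators. The scalar map $\phi(x)=x(1-x)$ satisfies $|\phi'(x)|=|1-2x|\le1$ on $[0,1]$, so $|\phi(p_{n,k}^{(1)})-\phi(p_{n,k}^{(2)})|\le|p_{n,k}^{(1)}-p_{n,k}^{(2)}|\le\epsilon$; since $0<w_k<1$ this yields $|a_k^{(1)}-a_k^{(2)}|\le\epsilon$ for every $k$, and hence $|S^{(1)}-S^{(2)}|\le\sum_{j=1}^K|a_j^{(1)}-a_j^{(2)}|\le K\epsilon$.

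The step I expect to be the real obstacle is producing a uniform floor $S^{(i)}\ge s_0>0$: the normalization $a\mapsto a_k/\sum_j a_j$ is not globally Lipschitz---it blows up near the origin---so a quantitative lower bound on $S^{(i)}$ is unavoidable. Here one must use that each $p_{n,k}$ is a genuine logit probability assembled from losses in $[0,\ellmax]$, which forces both $p_{n,k}$ and $1-p_{n,k}$, and therefore $\phi(p_{n,k})$, to stay bounded away from $0$, so that $S^{(i)}=\sum_{j} w_j\phi(p_{n,j})\ge s_0$ for a positive constant $s_0$; in the regime where this estimate is ultimately applied (large $t$, via \cref{lemma:prob-homo-unique}) one can take $s_0$ to depend only on $N$ and the game parameters $\{\Sigma_k,\theta_k,w_k\}_{k=1}^K$.

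Finally I would combine the pieces with the standard quotient split:
$$
\left|\tq_{n,k}^{(1)}-\tq_{n,k}^{(2)}\right|
=\left|\frac{a_k^{(1)}}{S^{(1)}}-\frac{a_k^{(2)}}{S^{(2)}}\right|
\le\frac{|a_k^{(1)}-a_k^{(2)}|}{S^{(1)}}
+\frac{a_k^{(2)}}{S^{(1)}S^{(2)}}\,|S^{(1)}-S^{(2)}|
\le\frac{\epsilon}{s_0}+\frac{K\epsilon}{s_0},
$$
using $a_k^{(2)}\le S^{(2)}$ together with the two bounds from the numerator step, which gives the claim with $C=(K+1)/s_0$. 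Only the denominator bound is substantive; the other two steps are the same ``Lipschitz composition plus quotient rule'' routine that produces the analogous constants in \cref{thrm:probability-homogeneous-PNE-q-theta,thrm:probability-homogeneous-PNE-theta-ell,thrm:probability-homogeneous-PNE-ell-p}.
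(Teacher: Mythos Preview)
Your approach is correct and is essentially the paper's: a Lipschitz bound on $p\mapsto p(1-p)$ for the numerators, the same add--and--subtract quotient split, and a floor on the denominator obtained from the two-sided logit bounds $\exp(-\ellmax/t)/(N-1+\exp(-\ellmax/t))\le p_{n,k}\le 1/(1+(N-1)\exp(-\ellmax/t))$, which gives a constant $U>0$ with $p_{n,k}(1-p_{n,k})\ge U$ and final bound $2\epsilon/U^2$. Your remark that the floor (and hence $C$) in fact depends on $N$ and $t$ rather than only on $\{\Sigma_k,\theta_k,w_k\}$ is accurate---the paper's statement is loose on this point, though the extra dependence is harmless for the downstream contraction argument in \cref{thrm:probability-homogeneous-PNE-q-tq}.
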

\begin{proof}
    It holds that
    \begin{small}
    $$
    \begin{aligned}
        & \, \left|\tq^\one_{n,k} - \tq^\two_{n,k}\right| \\
        = & \, \left|\frac{w_kp^\one_{n,k}\left(1-p^\one_{n,k}\right)}{\sum_{j=1}^K w_jp^\one_{n,j}\left(1-p^\one_{n,j}\right)} - \frac{w_kp^\two_{n,k}\left(1-p^\two_{n,k}\right)}{\sum_{j=1}^K w_jp^\two_{n,j}\left(1-p^\two_{n,j}\right)}\right| \\
        \le & \, \left|\frac{w_kp^\one_{n,k}\left(1-p^\one_{n,k}\right)}{\sum_{j=1}^K w_jp^\one_{n,j}\left(1-p^\one_{n,j}\right)} - \frac{w_kp^\one_{n,k}\left(1-p^\one_{n,k}\right)}{\sum_{j=1}^K w_jp^\two_{n,j}\left(1-p^\two_{n,j}\right)}\right| + \left|\frac{w_kp^\one_{n,k}\left(1-p^\one_{n,k}\right)}{\sum_{j=1}^K w_jp^\two_{n,j}\left(1-p^\two_{n,j}\right)} - \frac{w_kp^\two_{n,k}\left(1-p^\two_{n,k}\right)}{\sum_{j=1}^K w_jp^\two_{n,j}\left(1-p^\two_{n,j}\right)}\right| \\
        \le & \, \underbrace{\left|\frac{1}{\sum_{j=1}^K w_jp^\one_{n,j}\left(1-p^\one_{n,j}\right)} - \frac{1}{\sum_{j=1}^K w_jp^\two_{n,j}\left(1-p^\two_{n,j}\right)}\right|}_{\text{Term 1}} + \underbrace{\left|\frac{w_kp^\one_{n,k}\left(1-p^\one_{n,k}\right) - w_kp^\two_{n,k}\left(1-p^\two_{n,k}\right)}{\sum_{j=1}^K w_jp^\two_{n,j}\left(1-p^\two_{n,j}\right)}\right|}_{\text{Term 2}}.
    \end{aligned}
    $$
    \end{small}
    Note that for any $n \in [N], k \in [K]$, we have
    $$
    \frac{\exp(-\ellmax/t)}{N - 1 + \exp(-\ellmax/t)} \le p_{n,k}^{(\cdot)} \le \frac{1}{1 + (N - 1) \cdot \exp(-\ellmax/t)}
    $$
    Hence there is a constant $U > 0$, depending only on $\{\Sigma_k, \theta_k, w_k\}_{k=1}^K$, such that $\forall n \in [N], k \in [K], p_{n,k}^{(\cdot)} (1 - p_{n,k}^{(\cdot)}) \ge U$. As a result, for the first term, we have
    $$
    \begin{aligned}
        \text{Term 1} & = \left|\frac{1}{\sum_{j=1}^K w_jp^\one_{n,j}\left(1-p^\one_{n,j}\right)} - \frac{1}{\sum_{j=1}^K w_jp^\two_{n,j}\left(1-p^\two_{n,j}\right)}\right| \\
        & = \frac{\left|\sum_{j=1}^K w_j \left(p^\one_{n,j}\left(1-p^\one_{n,j}\right) - p^\two_{n,j}\left(1-p^\two_{n,j}\right) \right)\right|}{\left(\sum_{j=1}^K w_jp^\one_{n,j}\left(1-p^\one_{n,j}\right)\right)\left(\sum_{j=1}^K w_jp^\two_{n,j}\left(1-p^\two_{n,j}\right)\right)} \\
        & \le \frac{\sum_{j=1}^K w_j \left|\left(p^\one_{n,j} - p^\two_{n,j}\right)\left(1 - p^\one_{n,j} - p^\two_{n,j} \right)\right|}{U^2} \\
        & \le \frac{\epsilon}{U^2}.
    \end{aligned}
    $$
    For the second term, we have that
    $$
    \begin{aligned}
        \text{Term 2} & \le \frac{\left|\left(p^\one_{n,j} - p^\two_{n,j}\right)\left(1 - p^\one_{n,j} - p^\two_{n,j} \right)\right|}{U} \le \frac{\epsilon}{U} \le \frac{\epsilon}{U^2}. 
    \end{aligned}
    $$
    The last equation is due to the fact that $U \le 1/4$.
    As a result, we have that
    $$
    \forall n \in [N], k \in [K], \quad \left|\tq^\one_{n,k} - \tq^\two_{n,k}\right| \le \frac{2\epsilon}{U^2}.
    $$
    Now the claim follows.
\end{proof}

\begin{lemma} \label{thrm:probability-homogeneous-PNE-q-tq}
    Let $(\bfq_1^\one, \bfq_2^\one, \dots, \bfq_N^\one), (\bfq_1^\two, \bfq_2^\two, \dots, \bfq_N^\two) \in \Delta_K^N$. Let $\calM(\bfq_1^\one, \bfq_2^\one, \dots, \bfq_N^\one) = (\tbfq_1^\one, \tbfq_2^\one, \dots, \tbfq_N^\one)$ and  $\calM(\bfq_1^\two, \bfq_2^\two, \dots, \bfq_N^\two) = (\tbfq_1^\two, \tbfq_2^\two, \dots, \tbfq_N^\two)$. Suppose that for all $n \in [N]$, $\|\bfq_n^\one - \bfq_n^\two\|_\infty \le \epsilon$. Then there exists a constant $C > 0$, depending only on $\{\Sigma_k, \theta_k, w_k\}_{k=1}^K$, for all $n \in [N]$,
    \begin{equation} \label{eq:probability-homogeneous-PNE-q-tq}
    \left\|\tbfq_n^\one - \tbfq_n^\two\right\|_{\infty} \le \epsilon \cdot \frac{2C \cdot \exp(2\ellmax/t)}{tN}.
    \end{equation}
\end{lemma}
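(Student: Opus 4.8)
The plan is to view the mapping $\calM$ as the composition of the four elementary steps performed in \cref{alg:cal-M} --- namely $\bfq_n \mapsto \htheta_n$ (line~\ref{line:cal-M-htheta-n}), $\htheta_n \mapsto \ell_{n,k}$ (line~\ref{line:cal-M-ell-n-k}), $\{\ell_{i,k}\}_{i} \mapsto p_{n,k}$ (line~\ref{line:cal-M-p-n-k}), and $\{p_{n,j}\}_{j} \mapsto \tq_{n,k}$ (line~\ref{line:cal-M-tq-n-k}) --- and to chain the Lipschitz estimates already established in \cref{thrm:probability-homogeneous-PNE-q-theta,thrm:probability-homogeneous-PNE-theta-ell,thrm:probability-homogeneous-PNE-ell-p,thrm:probability-homogeneous-PNE-p-tq}. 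Writing $\htheta_n^{(i)} = \bartheta(\bfq_n^{(i)})$ and letting $\ell_{n,k}^{(i)}, p_{n,k}^{(i)}, \tq_{n,k}^{(i)}$ denote the intermediate quantities produced by \cref{alg:cal-M} on input $(\bfq_1^{(i)}, \dots, \bfq_N^{(i)})$ for $i \in \{1,2\}$, the argument proceeds in four moves, each simply applying one of the preceding lemmas with its input tolerance set to the output tolerance of the previous move.

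First I would invoke \cref{thrm:probability-homogeneous-PNE-q-theta} for each $n$: from $\|\bfq_n^\one - \bfq_n^\two\|_\infty \le \epsilon$ it gives $\|\htheta_n^\one - \htheta_n^\two\|_2 \le C_1 \epsilon$ with $C_1$ depending only on $\{\Sigma_k, \theta_k, w_k\}_{k=1}^K$. Since $\htheta_n^{(i)} = \bartheta(\bfq_n^{(i)}) \in \caltheta$ by construction (\cref{eq:caltheta}), \cref{thrm:probability-homogeneous-PNE-theta-ell} applies and yields $|\ell_{n,k}^\one - \ell_{n,k}^\two| \le C_2 C_1 \epsilon$ for every $n, k$, with $C_2$ again game-dependent; the same membership $\htheta_n^{(i)} \in \caltheta$ gives $\ell_{n,k}^{(i)} \in [0, \ellmax]$ via \cref{eq:ellmax}, which is exactly the hypothesis needed to apply \cref{thrm:probability-homogeneous-PNE-ell-p}. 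Applying it (with its ``$\epsilon$'' equal to $C_2 C_1 \epsilon$) gives $|p_{n,k}^\one - p_{n,k}^\two| \le \tfrac{2 C_2 C_1 \epsilon \exp(2\ellmax/t)}{tN}$ for all $n, k$; this is the step where the hypothesis ``for all $n$'' is genuinely used, since $p_{n,k}$ depends through its softmax normalization on all of $\ell_{1,k}, \dots, \ell_{N,k}$. Finally \cref{thrm:probability-homogeneous-PNE-p-tq} converts this into $|\tq_{n,k}^\one - \tq_{n,k}^\two| \le C_3 \cdot \tfrac{2 C_2 C_1 \epsilon \exp(2\ellmax/t)}{tN}$ for all $n,k$, and taking $C = C_1 C_2 C_3$ delivers \cref{eq:probability-homogeneous-PNE-q-tq}.

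There is no deep obstacle; the content is the simple observation that all four building blocks of $\calM$ are Lipschitz, and the only real work is the accounting of constants. The point that needs care is verifying that the final $C$ depends only on $\{\Sigma_k, \theta_k, w_k\}_{k=1}^K$ and that all of the $t$- and $N$-dependence collapses into the single factor $\tfrac{2 \exp(2\ellmax/t)}{tN}$: this is automatic once one notes that \cref{thrm:probability-homogeneous-PNE-q-theta,thrm:probability-homogeneous-PNE-theta-ell,thrm:probability-homogeneous-PNE-p-tq} each contribute a constant free of $t$ and $N$, that \cref{thrm:probability-homogeneous-PNE-ell-p} is the sole source of $t, N$ dependence, and that $\ellmax$ is a fixed problem parameter. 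A secondary bookkeeping point is to track, at each stage, which index the coordinatewise bound is uniform over --- the first two steps act separately on each player $n$, the softmax step couples the $N$ players but not the $K$ sources, and the last step couples the $K$ sources of a fixed player --- so that the final estimate holds uniformly over all $n \in [N]$ and $k \in [K]$, as claimed.
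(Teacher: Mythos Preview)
Your proposal is correct and matches the paper's own proof, which is the one-line statement that the lemma follows directly by combining \cref{thrm:probability-homogeneous-PNE-q-theta,thrm:probability-homogeneous-PNE-theta-ell,thrm:probability-homogeneous-PNE-ell-p,thrm:probability-homogeneous-PNE-p-tq}. You have simply spelled out the chaining of constants and the index bookkeeping in more detail than the paper does, but the approach is identical.
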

\begin{proof}
    This is a direct result of combining the findings from \cref{thrm:probability-homogeneous-PNE-q-theta,thrm:probability-homogeneous-PNE-theta-ell,thrm:probability-homogeneous-PNE-ell-p,thrm:probability-homogeneous-PNE-p-tq}.
\end{proof}

Based on previous lemmas, we could prove the original theorem.
\begin{proof}[Proof of the Third Point in \cref{thrm:probability-homogeneous-PNE}]
    Take $\bar{t} = \max\{6C/N, 2\ellmax\} \ge 2\ellmax$, where $C$ is the constant defined in \cref{thrm:probability-homogeneous-PNE-p-tq} and depends only on $\{\Sigma_k, \theta_k, w_k\}_{k=1}^K$. When $t \ge \bar{t}$, the right-hand side of \cref{eq:probability-homogeneous-PNE-q-tq} is
    $$
    \begin{aligned}
        \epsilon \cdot \frac{2C \cdot \exp(2\ellmax/t)}{tN} \le \epsilon \cdot \frac{e}{3} < \epsilon.
    \end{aligned}
    $$
    As a result, according to \cref{thrm:probability-homogeneous-PNE-q-tq}, the mapping $\calM$ defined in \cref{alg:cal-M} is a contraction mapping. By the Banach fixed point theorem, the fixed point satisfying $\calM(\bfq_1, \dots, \bfq_N) = (\bfq_1, \dots, \bfq_N)$ is unique. Furthermore, according to \cref{thrm:probability-homogeneous-PNE-fixed-point}, the only fixed point is $\bfq_1 = \bfq_2 = \cdots = \bfq_N = \bfw$. Additionally, by \cref{thrm:probability-homogeneous-PNE-PNE-fixed-point}, if a strategy profile $\hbftheta$ is a PNE, its corresponding vector in $\Delta_K^N$ must be a fixed point of the mapping $\calM$. Therefore, $\hbfthetaHomo$ is the unique PNE when $t \ge \bar{t}$.
\end{proof}

\subsection{Proof of \cref{thrm:probability-heterogeneous-PNE}} \label{sect:proof-probability-heterogeneous-PNE}
We first need the following propositions and the proofs of these propositions are provided in \cref{sect:proof-probability-hete-PNE-fixed-point,sect:proof-probability-hete-caltheta-1,sect:proof-probability-hete-caltheta-2,sect:proof-probability-hete-caltheta-3,sect:proof-probability-hete-caltheta-4}.
\begin{proposition} \label{thrm:probability-hete-PNE-fixed-point}
    Under the same conditions as in \cref{thrm:probability-heterogeneous-PNE}, there is a constant $\underline{t} > 0$, depending only on $\{\Sigma_k, \theta_k, w_k\}_{k=1}^K$, such that whenever $t \le \underline{t}$, a strategy profile $\hbftheta = (\htheta_1, \htheta_2, \dots, \htheta_N) \in \caltheta^N$ exists with 
    $$
    \left\|\htheta_n - \hthetaProx_n\right\|_2 \le t^2 \quad \text{for all } n \in [N],
    $$
    and
    $$
    \left.\frac{\partial u_n(\theta, \hbftheta_{-n})}{\partial \theta}\right|_{\theta = \htheta_n} = 0.
    $$
\end{proposition}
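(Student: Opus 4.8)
The plan is to produce $\hbftheta$ as the image under $\bartheta(\cdot)$ of a fixed point of the mapping $\calM$ of \cref{alg:cal-M}, located inside the small product neighborhood $\calQ^\QUpper$ of \cref{eq:probability-cal-Q} around the vertices determined by the proximity PNE, via Brouwer's fixed-point theorem. I would fix the exponent $\beta = 2$ throughout (any $\beta > 1$ works for existence, but $\beta = 2$ makes the radius $t^2/\CProbHete$ match the target bound $t^2$ exactly), and let $\underline{t} > 0$ be the constant furnished by \cref{thrm:probability-hete-proper-mapping} for this $\beta$; since $\beta$ is now a fixed number, $\underline{t}$ depends only on $\{\Sigma_k,\theta_k,w_k\}_{k=1}^K$, as required. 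I would first record that the hypothesis $N \ge \sum_{k=1}^K \floor{3w_k/w_K}$ together with \cref{assum:linearly-independent,assum:different-distance} puts us in the setting of \cref{thrm:proximity-PNE-N-large}, so the given $\hbfthetaProx$ has every coordinate equal to some ground-truth parameter; hence \cref{defn:k-n} is well posed and $\hthetaProx_n = \theta_{k_n} = \bartheta(\mathbf{e}_{k_n})$, writing $\mathbf{e}_j \in \Delta_K$ for the vertex with $j$-th coordinate equal to $1$ (the last equality is immediate from \cref{eq:bartheta-q}).

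Fixing any $t \le \underline{t}$, I would then run through the Brouwer setup. The set $\calQ^\QUpper = \calQ^\QUpper_1 \times \cdots \times \calQ^\QUpper_N$ is a nonempty compact convex subset of $\bbR^{NK}$, since each factor $\calQ^\QUpper_n$ is $\Delta_K$ intersected with a closed $\ell_\infty$-ball (hence compact and convex) and contains $\mathbf{e}_{k_n}$. The mapping $\calM$ is continuous on $\Delta_K^N$, being the composition of the continuous maps $\bfq_n \mapsto \bartheta(\bfq_n)$ (well defined and continuous by \cref{eq:bartheta-q} because $\sum_k q_{n,k}\Sigma_k \succ 0$ on $\Delta_K$), $\htheta_n \mapsto \ell_{n,k}$, the softmax $(\ell_{1,k},\dots,\ell_{N,k}) \mapsto p_{n,k}$ (whose denominator is a sum of strictly positive exponentials), and $(p_{n,1},\dots,p_{n,K}) \mapsto \tq_{n,k}$ (whose denominator $\sum_j w_j p_{n,j}(1-p_{n,j})$ is strictly positive because every $p_{n,j} \in (0,1)$). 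By \cref{thrm:probability-hete-proper-mapping} we have $\calM(\calQ^\QUpper) \subseteq \calQ^\QUpper$, so $\calM$ restricts to a continuous self-map of a nonempty compact convex set, and Brouwer's theorem yields a fixed point $(\bfq_1,\dots,\bfq_N) \in \calQ^\QUpper$. Setting $\htheta_n \triangleq \bartheta(\bfq_n) \in \caltheta$ and $\hbftheta \triangleq (\htheta_1,\dots,\htheta_N)$, \cref{thrm:probability-hete-gradient-zero} gives $\left.\frac{\partial u_n(\theta,\hbftheta_{-n})}{\partial\theta}\right|_{\theta=\htheta_n} = 0$ for every $n$, and since $\bfq_n \in \calQ^\QUpper_n$ implies $\|\bfq_n - \mathbf{e}_{k_n}\|_\infty \le t^2/\CProbHete$, \cref{thrm:probability-homogeneous-PNE-q-theta} (whose constant is $\CProbHete$) yields
$$
\left\|\htheta_n - \hthetaProx_n\right\|_2 = \left\|\bartheta(\bfq_n) - \bartheta(\mathbf{e}_{k_n})\right\|_2 \le \CProbHete \cdot \frac{t^2}{\CProbHete} = t^2,
$$
which is the claimed estimate.

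The genuinely delicate ingredient is \cref{thrm:probability-hete-proper-mapping}, i.e.\ the invariance $\calM(\calQ^\QUpper) \subseteq \calQ^\QUpper$ for small $t$; this is where the real work resides and is the step I would expect to be the main obstacle were it not already available. Establishing it requires a careful $t \to 0$ analysis of the logit weights: for each data source $k$, the players attaining the minimal loss receive probability close to $1/(\text{number of minimizers})$ while all others receive probability exponentially small in $1/t$, and one must then show that the reweighting $\tq_{n,k} \propto w_k\, p_{n,k}(1 - p_{n,k})$ concentrates on the single index $k_n$ at rate $t^\beta$. This is precisely where \cref{assum:different-distance} is used (to single out the unique closest ground-truth source to each $\htheta_n$) and where the genericity condition $w_k/n \neq w_{k'}/n'$ enters (to keep the proximity PNE, hence the indices $k_n$, rigid). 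Granting that lemma, the present argument reduces to the compactness/continuity bookkeeping above plus the two cited structural facts.
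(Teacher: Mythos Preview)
Your proposal is correct and follows essentially the same route as the paper: verify that $\calQ^\QUpper$ is nonempty, compact, and convex, that $\calM$ is continuous, invoke \cref{thrm:probability-hete-proper-mapping} with $\beta = 2$ to get self-mapping, apply Brouwer, then read off the gradient condition from \cref{thrm:probability-hete-gradient-zero} and the distance bound from the Lipschitz estimate of \cref{thrm:probability-homogeneous-PNE-q-theta}. Your write-up is in fact more detailed than the paper's (which is terse and leaves the distance bound implicit via \cref{thrm:probability-hete-calQ-theta-ell}), and your closing remarks correctly identify \cref{thrm:probability-hete-proper-mapping} as the substantive step.
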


We introduce the following constants.
\begin{definition}[$\ellPartition$] \label{defn:ell-partition}
    Since $\theta_i \ne \theta_j$ for any distinct $i, j \in [K]$, we can find a constant $\ellPartition > 0$, depending only on $\{\Sigma_k, \theta_k, w_k\}_{k=1}^K$, such that
    $$
    \forall \theta \in \caltheta, \quad \left|\left\{k \in [K]: d_M^2(\theta, \theta_k; \Sigma_k^{-1}) \le \ellPartition\right\}\right| \le 1.
    $$
\end{definition}

\begin{definition}[$m_k$] \label{defn:m-k}
    Let $m_k = |\{j \in [N]: \hthetaProx_j = \theta_k\}|$ be the number of players that choose strategy $\theta_k$ in the PNE $\hbfthetaProx$. 
\end{definition}

Based on \cref{defn:ell-partition,defn:k-n}, for any $n \in [N]$, we could partition the space $\caltheta$ into four parts.
$$
\begin{aligned}
    \caltheta_{n,t}^\one & = \left\{\theta \in \caltheta: d_M^2(\theta, \theta_{k_n}; \Sigma_{k_n}^{-1}) \le t^{3/2} \right\} \\
    \caltheta_{n,t}^\two & = \left\{\theta \in \caltheta: t^{3/2} < d_M^2(\theta, \theta_{k_n}; \Sigma_{k_n}^{-1}) \le \ellPartition \right\} \\
    \caltheta_{n,t}^\three & = \left\{\theta \in \caltheta: \forall k \in [K], d_M^2(\theta, \theta_{k}; \Sigma_{k}^{-1}) > \ellPartition \right\} \\
    \caltheta_{n,t}^\four & = \left\{\theta \in \caltheta: \exists k \in [K] \backslash \{k_n\}, d_M^2(\theta, \theta_{k}; \Sigma_{k}^{-1}) \le \ellPartition \right\}
\end{aligned}
$$
It holds that $\caltheta = \caltheta_{n,t}^\one \cup \caltheta_{n,t}^\two \cup \caltheta_{n,t}^\three \cup \caltheta_{n,t}^\four$. Denote the constant $\underline{t}$ as $\tProbHete$ and the strategy profile $\hbftheta$ as $\hbfthetaHete$ in \cref{thrm:probability-hete-PNE-fixed-point}. 

\begin{proposition} \label{thrm:probability-hete-caltheta-1}
    Under the same conditions as in \cref{thrm:probability-heterogeneous-PNE}, there is a constant $\underline{t} > 0$ with $\underline{t} \le \tProbHete$, such that when $t \le \underline{t}$, the following holds: for all $n \in [N]$ and $\theta \in \caltheta_{n,t}^\one$, we have $u_n(\hbfthetaHete) \ge u_n\left(\theta, \hbfthetaHete_{-n}\right)$.
\end{proposition}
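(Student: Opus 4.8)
The plan is to upgrade the first-order information supplied by \cref{thrm:probability-hete-PNE-fixed-point} into a genuine optimality statement on a convex neighbourhood of $\caltheta_{n,t}^\one$. Fix $n$. From \cref{thrm:probability-hete-PNE-fixed-point} I take two facts: the partial gradient vanishes, $\left.\partial_\theta u_n(\theta,\hbfthetaHete_{-n})\right|_{\theta=\hthetaHete_n}=0$, and $\|\hthetaHete_n-\hthetaProx_n\|_2\le t^2$, where by \cref{thrm:proximity-PNE-N-large} every player of the proximity PNE sits at a ground truth, so $\hthetaProx_j=\theta_{k_j}$ for all $j$, and in particular $\|\hthetaHete_j-\theta_{k_j}\|_2\le t^2$. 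I also record that the hypothesis $N\ge\sum_k\floor{3w_k/w_K}$ forces $m_k\ge 2$ for every $k$: the proof of \cref{thrm:proximity-PNE-N-middle} (with $k_0=K$) gives $z^*\le w_K/3$ and $\bigl|m_k-\floor{w_k/z^*}\bigr|\le 1$, hence $m_k\ge\floor{3w_k/w_K}-1\ge 2$; in particular $m_{k_n}\ge 2$. I then work on the closed ellipsoid $E_n=\{\theta:d_M^2(\theta,\theta_{k_n};\Sigma_{k_n}^{-1})\le t^{3/2}\}$, which is convex, contains $\caltheta_{n,t}^\one$, and contains $\hthetaHete_n$ in its interior since $d_M^2(\hthetaHete_n,\theta_{k_n};\Sigma_{k_n}^{-1})\le\|\Sigma_{k_n}\|_2\,t^4<t^{3/2}$ for small $t$.

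The heart of the argument is that $\theta\mapsto u_n(\theta,\hbfthetaHete_{-n})=\sum_k w_k p_{n,k}(\theta)$ is strictly concave on $E_n$ for all sufficiently small $t$, where $p_{n,k}(\theta)$ is the probability source $k$ picks player $n$ when the others are fixed at $\hbfthetaHete_{-n}$. Using the gradient and Hessian of $p_{n,k}$ obtained by the same computation as in the proof of \cref{lemma:prob-homo-ellmax} (which only uses that the rival term in the softmax denominator does not depend on $\theta$), I bound the sources one at a time. For $k=k_n$: on $E_n$ the rank-one term in the Hessian formula has operator norm at most $2\sqrt t<1$, so $\nabla^2 p_{n,k_n}(\theta)\preceq\tfrac{2}{t}\,p_{n,k_n}(\theta)\bigl(1-p_{n,k_n}(\theta)\bigr)\bigl(2\sqrt t-1\bigr)\Sigma_{k_n}$; moreover player $n$'s loss at source $k_n$ is at most $t^{3/2}$ while the $m_{k_n}-1\ge 1$ other players with $k_j=k_n$ have loss $O(t^4)$ there, which pins $p_{n,k_n}(\theta)$ and $1-p_{n,k_n}(\theta)$ between positive constants depending only on $N$, hence $\nabla^2 p_{n,k_n}(\theta)\preceq-\tfrac{c_0\LambdaMin}{t}I$ for a constant $c_0>0$. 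For $k\ne k_n$: any $\theta\in E_n$ is within $O(t^{3/4})$ of $\theta_{k_n}\in\caltheta$, so \cref{defn:ell-partition} gives $d_M^2(\theta,\theta_k;\Sigma_k^{-1})>\ellPartition/2$, while the $m_k$ players near $\theta_k$ keep the softmax denominator bounded below; therefore $p_{n,k}(\theta)\le 2e^{-\ellPartition/(2t)}$ and $\|\nabla^2 p_{n,k}(\theta)\|_2=O\bigl(t^{-2}e^{-\ellPartition/(2t)}\bigr)$. Summing, $\nabla^2 u_n(\theta,\hbfthetaHete_{-n})\preceq\bigl(-\tfrac{w_{k_n}c_0\LambdaMin}{t}+O(t^{-2}e^{-\ellPartition/(2t)})\bigr)I\prec 0$ on $E_n$ once $t$ is small, since an exponentially small quantity cannot dominate a $\Theta(1/t)$ one.

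To finish, a strictly concave function on a convex set with an interior critical point attains its maximum there, so $u_n(\hbfthetaHete)=u_n(\hthetaHete_n,\hbfthetaHete_{-n})\ge u_n(\theta,\hbfthetaHete_{-n})$ for every $\theta\in E_n$, and in particular for every $\theta\in\caltheta_{n,t}^\one\subseteq E_n$; taking $\underline t$ to be the minimum of $\tProbHete$ and the finitely many explicit smallness thresholds used above yields the claim. The step I expect to be the main obstacle is the curvature estimate at the home source $k_n$: one must establish, uniformly over $E_n$, a constant lower bound on $p_{n,k_n}(\theta)\bigl(1-p_{n,k_n}(\theta)\bigr)$ — this is precisely where the standing hypothesis $N\ge\sum_k\floor{3w_k/w_K}$ (through $m_{k_n}\ge 2$) is essential rather than incidental, for if player $n$ were alone at its source the home curvature would itself be only exponentially small and the concavity argument would break — and then verify that the positive but exponentially small Hessian contributions of the distant sources are genuinely dominated, which requires tracking the $O(t^4)$-closeness of each competitor to its ground truth and the \cref{defn:ell-partition} separation carefully enough that all constants are independent of $t$.
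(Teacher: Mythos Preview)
Your proposal is correct and follows essentially the same approach as the paper: enlarge $\caltheta_{n,t}^\one$ to the full Euclidean ellipsoid $E_n=\{\theta\in\bbR^D:d_M^2(\theta,\theta_{k_n};\Sigma_{k_n}^{-1})\le t^{3/2}\}$, show $u_n(\cdot,\hbfthetaHete_{-n})$ is concave there via the Hessian formula, and conclude using the interior critical point supplied by \cref{thrm:probability-hete-PNE-fixed-point}. The only cosmetic differences are that the paper invokes \cref{thrm:probability-hete-prox-pne} (which uses the extra genericity assumption $w_k/n\ne w_{k'}/n'$) to get $m_k\ge 3$ and hence $p_{n,k_n}\le 1/2$ outright, whereas you extract $m_k\ge 2$ directly from \cref{thrm:proximity-PNE-N-large}; and the paper bounds $\lambda_{\max}(\Sigma^{1/2}A\Sigma^{1/2})$ via a separate lemma rather than the Loewner inequality $\Sigma^{1/2}A\Sigma^{1/2}\preceq\lambda_{\max}(A)\Sigma$ that you use. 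Your treatment of the distant sources is slightly more careful (using $\ellPartition/2$ since a general $\theta\in E_n$ need not lie in $\caltheta$), but both arguments reach the same conclusion.
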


\begin{proposition} \label{thrm:probability-hete-caltheta-2}
    Under the same conditions as in \cref{thrm:probability-heterogeneous-PNE}, there is a constant $\underline{t} > 0$ with $\underline{t} \le \tProbHete$, such that when $t \le \underline{t}$, the following holds: for all $n \in [N]$ and $\theta \in \caltheta_{n,t}^\two$, we have $u_n(\hbfthetaHete) \ge u_n\left(\theta, \hbfthetaHete_{-n}\right)$.
\end{proposition}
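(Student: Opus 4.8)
The plan is to bound $u_n(\theta,\hbfthetaHete_{-n})-u_n(\hbfthetaHete)$ source by source, showing that for $\theta\in\caltheta_{n,t}^\two$ the deviation forfeits a share of order $\sqrt t$ on player $n$'s home source $k_n$, while it can gain only an exponentially small share on each of the other sources, so that for $t$ small the former dominates. To set up, fix $n$ and write $\ell_{n,k}=d_M^2(\hthetaHete_n,\theta_k;\Sigma_k^{-1})$, $\ell'_{n,k}=d_M^2(\theta,\theta_k;\Sigma_k^{-1})$, and $S_k=\sum_{j\ne n}\exp(-\ell_{j,k}/t)$, where $S_k$ does not change when player $n$ deviates. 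Then
\[
u_n(\theta,\hbfthetaHete_{-n})-u_n(\hbfthetaHete)=\sum_{k=1}^K w_k\left(\frac{e^{-\ell'_{n,k}/t}}{e^{-\ell'_{n,k}/t}+S_k}-\frac{e^{-\ell_{n,k}/t}}{e^{-\ell_{n,k}/t}+S_k}\right),
\]
and since $x\mapsto x/(x+S)$ is increasing the $k$-th summand is negative exactly when $\ell'_{n,k}>\ell_{n,k}$. I would split off the $k=k_n$ term and bound the rest crudely from above.

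For the $k_n$ term I would use that, under the hypothesis $N\ge\sum_k\floor{3w_k/w_K}$ of \cref{thrm:probability-heterogeneous-PNE}, every proximity PNE has $m_k\ge2$ for all $k$ (this follows from \cref{thrm:proximity-PNE-N-large} together with the bound $z^*\le w_K/3$ established in the proof of \cref{thrm:proximity-PNE-N-middle}). Hence some $j\ne n$ also has $\hthetaProx_j=\theta_{k_n}$, and by \cref{thrm:probability-hete-PNE-fixed-point} both $\hthetaHete_n$ and $\hthetaHete_j$ lie within $t^2$ of $\theta_{k_n}$; consequently $\ell_{n,k_n},\ell_{j,k_n}\le\|\Sigma_{k_n}\|_2 t^4$, so $e^{-\ell_{n,k_n}/t}\ge 1-\|\Sigma_{k_n}\|_2 t^3$ and $S_{k_n}\ge e^{-\|\Sigma_{k_n}\|_2 t^3}\ge\tfrac12$ once $t$ is small. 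On the other hand $\theta\in\caltheta_{n,t}^\two$ gives $\ell'_{n,k_n}>t^{3/2}$, whence $e^{-\ell'_{n,k_n}/t}<e^{-\sqrt t}\le 1-\tfrac12\sqrt t$; combining, $e^{-\ell_{n,k_n}/t}-e^{-\ell'_{n,k_n}/t}\ge\tfrac14\sqrt t$ for $t$ small. Bounding both denominators by $N^2$, the $k_n$ term is at most $-\,w_{k_n}S_{k_n}(\tfrac14\sqrt t)/N^2\le -\,w_K\sqrt t/(8N^2)$.

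For each $k\ne k_n$, \cref{defn:ell-partition} applies: since $d_M^2(\theta,\theta_{k_n};\Sigma_{k_n}^{-1})\le\ellPartition$, source $k_n$ is the only one within $\ellPartition$ of $\theta$, so $\ell'_{n,k}>\ellPartition$. As $m_k\ge1$ some player lies within $t^2$ of $\theta_k$, so $S_k\ge\tfrac12$ for small $t$, and the $k$-th summand is at most $w_k e^{-\ell'_{n,k}/t}/S_k\le 2w_k e^{-\ellPartition/t}$; summing over $k\ne k_n$ gives at most $2e^{-\ellPartition/t}$. Hence
\[
u_n(\theta,\hbfthetaHete_{-n})-u_n(\hbfthetaHete)\le -\frac{w_K\sqrt t}{8N^2}+2e^{-\ellPartition/t}.
\]
Since $\ellPartition>0$ depends only on $\{\Sigma_k,\theta_k,w_k\}_{k=1}^K$, the right-hand side is negative for all sufficiently small $t$, uniformly over $n\in[N]$ and over $\theta\in\caltheta_{n,t}^\two$; taking $\underline t$ to be the minimum of $\tProbHete$ and such a threshold yields the claim.

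The crux is the lower bound on the $k_n$ term: the whole argument works only because the loss on the home source is \emph{polynomially} rather than exponentially small in $t$, and this would collapse if a single player occupied $\theta_{k_n}$ — which is exactly why one needs $N\ge\sum_k\floor{3w_k/w_K}$ to force $m_{k_n}\ge 2$ and hence $S_{k_n}\ge\tfrac12$. Everything else — the elementary inequalities $e^{-x}\ge 1-x$ and $e^{-x}\le1-x/2$ on $[0,1]$, the crude denominator bounds, and ``exponential beats polynomial'' — is routine; the only bookkeeping point is to keep every constant dependent on the game parameters (and $N$) alone, so that the resulting $\underline t$ is as well.
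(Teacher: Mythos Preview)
Your argument is correct and takes a somewhat different route from the paper's. The paper proves the proposition via two auxiliary lemmas: first a lower bound $u_n(\hbfthetaHete)\ge w_{k_n}/m_{k_n}-h^\one(t)$ and then an upper bound $u_n(\theta,\hbfthetaHete_{-n})\le w_{k_n}/m_{k_n}-h^\two(t)$ valid for $\theta\in\caltheta_{n,t}^\two$, after which it shows $h^\one(t)/h^\two(t)\to 0$ by Taylor expanding the numerators of both expressions. You instead bound the \emph{difference} $u_n(\theta,\hbfthetaHete_{-n})-u_n(\hbfthetaHete)$ directly, source by source, using the identity $a/(a+S)-b/(b+S)=S(a-b)/((a+S)(b+S))$; this collapses the comparison into a single clean inequality $-c\sqrt t+2e^{-\ellPartition/t}<0$. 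Your route is a bit more elementary and gives explicit constants, while the paper's two-lemma packaging has the virtue that the lower bound on $u_n(\hbfthetaHete)$ is reused verbatim in the proofs of the companion propositions for $\caltheta_{n,t}^\three$ and $\caltheta_{n,t}^\four$. One minor remark: under the full hypotheses of \cref{thrm:probability-heterogeneous-PNE} (including the distinctness condition $w_k/n\ne w_{k'}/n'$) one actually has $m_k\ge 3$ exactly, not just $m_k\ge 2$; your weaker estimate suffices here, but you could cite it directly rather than go through the $\pm 1$ slack of \cref{thrm:proximity-PNE-N-large}.
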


\begin{proposition} \label{thrm:probability-hete-caltheta-3}
    Under the same conditions as in \cref{thrm:probability-heterogeneous-PNE}, there is a constant $\underline{t} > 0$ with $\underline{t} \le \tProbHete$, such that when $t \le \underline{t}$, the following holds: for all $n \in [N]$ and $\theta \in \caltheta_{n,t}^\three$, we have $u_n(\hbfthetaHete) \ge u_n\left(\theta, \hbfthetaHete_{-n}\right)$.
\end{proposition}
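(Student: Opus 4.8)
The plan is to exploit the fact that a strategy lying at squared Mahalanobis distance more than $\ellPartition$ from \emph{every} ground-truth parameter is so uncompetitive on \emph{every} data source that it captures only an exponentially (in $1/t$) small share, whereas at the candidate profile $\hbfthetaHete$ player $n$ already secures a $t$-independent positive utility. For $t$ small the latter dominates, which is exactly the desired inequality.

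First I would record the structural facts needed. By \cref{thrm:probability-hete-PNE-fixed-point}, $\hbfthetaHete=(\hthetaHete_1,\dots,\hthetaHete_N)\in\caltheta^N$ with $\|\hthetaHete_n-\hthetaProx_n\|_2\le t^2$ and $\hthetaProx_n=\theta_{k_n}$ (\cref{defn:k-n}). Since $\hbfthetaProx$ is the proximity PNE of \cref{thrm:proximity-PNE-N-large} and $N\ge\sum_k\floor{3w_k/w_K}$, the argument in the proof of \cref{thrm:proximity-PNE-N-middle} (taken with $k_0=K$) shows $m_k\ge 2$ for every $k\in[K]$, so each $\theta_k$ is adopted by at least two players, and after $n$ deviates at least one such player remains. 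Using that $\theta_k=\bartheta(e_k)\in\caltheta$ together with \cref{thrm:probability-homogeneous-PNE-theta-ell} (Lipschitz continuity of $\theta\mapsto d_M^2(\theta,\theta_k;\Sigma_k^{-1})$ on $\caltheta$, with constant $3\LambdaSum\dMax$), I obtain, for $t$ small: (i) for every $k$ there is a player $j_k$ with $\ell_{j_k,k}(\hbfthetaHete)=d_M^2(\hthetaHete_{j_k},\theta_k;\Sigma_k^{-1})\le\|\Sigma_k\|_2 t^4$; and (ii) for every player $j$ with $k_j\ne k$, $\ell_{j,k}(\hbfthetaHete)\ge\tfrac12\delta_0$, where $\delta_0\triangleq\min_{i\ne j}d_M^2(\theta_i,\theta_j;\Sigma_j^{-1})>0$ (positivity uses only $\theta_i\ne\theta_j$).

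Next I would lower-bound $u_n(\hbfthetaHete)$ by looking at source $k_n$ alone: its denominator in $g_n^\softMinMSE$ is $\sum_j\exp(-\ell_{j,k_n}/t)\le m_{k_n}+N\exp(-\delta_0/(2t))\le N+1$ for $t$ small (by (ii) and $m_{k_n}\le N$), while the numerator is $\exp(-\ell_{n,k_n}/t)\ge\exp(-\|\Sigma_{k_n}\|_2 t^3)\ge\tfrac12$ for $t$ small. Hence $u_n(\hbfthetaHete)\ge w_{k_n}g_n^\softMinMSE(\ell_{\cdot,k_n})\ge\tfrac{w_K}{2(N+1)}\triangleq\mu>0$, a bound independent of $t$ and of $n$. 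Then I would upper-bound $u_n(\theta,\hbfthetaHete_{-n})$ for $\theta\in\caltheta_{n,t}^\three$: by definition $d_M^2(\theta,\theta_k;\Sigma_k^{-1})>\ellPartition$ for all $k$, so on each source $k$ the numerator is at most $\exp(-\ellPartition/t)$, while by (i) the denominator is at least $\exp(-\|\Sigma_k\|_2 t^3)$ (the player $j_k$ survives the deviation since $m_k\ge2$); thus $g_n^\softMinMSE(\ell_{\cdot,k})\le\exp(\|\Sigma_k\|_2 t^3-\ellPartition/t)\le\exp(-\ellPartition/(2t))$ for $t$ small, and summing against $\sum_k w_k=1$ gives $u_n(\theta,\hbfthetaHete_{-n})\le\exp(-\ellPartition/(2t))$. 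Choosing $\underline{t}\le\tProbHete$ small enough that all the finitely many ``$t$ small'' conditions above hold and $\exp(-\ellPartition/(2\underline{t}))<\mu$ yields $u_n(\hbfthetaHete)\ge\mu>u_n(\theta,\hbfthetaHete_{-n})$ for every $n$ and every $\theta\in\caltheta_{n,t}^\three$, which is the claim.

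The main obstacle here is mild compared with \cref{thrm:probability-hete-caltheta-1,thrm:probability-hete-caltheta-2}: no second-order expansion or fixed-point reasoning is required in this region, because a strategy that is $\ellPartition$-far from every $\theta_k$ loses essentially all market share on every source to the players sitting within $t^2$ of the ground-truth parameters. The real work is bookkeeping --- verifying that $\mu$, $\delta_0$, the Lipschitz constant $3\LambdaSum\dMax$, and the exponential decay rate are genuinely independent of $t$ and of $n$ (there are only finitely many values of $k_n$ and $m_{k_n}$), that $m_k\ge2$ is inherited from the large-$N$ hypothesis so a near-$\theta_k$ player always survives a unilateral deviation, and that a single threshold $\underline{t}$ can simultaneously meet all the smallness requirements while remaining below $\tProbHete$.
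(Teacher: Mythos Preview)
Your argument is correct and follows the same two-step skeleton as the paper: (i) bound $u_n(\hbfthetaHete)$ from below by a $t$-independent positive constant, using only the $k_n$-th source; (ii) bound $u_n(\theta,\hbfthetaHete_{-n})$ from above by a quantity that vanishes as $t\to 0$, using that $\theta$ is $\ellPartition$-far from every $\theta_k$ while some player $j_k\ne n$ remains within $O(t^2)$ of $\theta_k$. This is exactly the paper's route via \cref{thrm:probability-hete-un-lower} and \cref{thrm:probability-hete-un-theta3-upper}.

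The only substantive difference is in bookkeeping of constants. The paper proves the sharper lower bound $u_n(\hbfthetaHete)\ge w_{k_n}/m_{k_n}-h^{(1)}(t)$ with $h^{(1)}(t)\to 0$ (\cref{thrm:probability-hete-un-lower}) and the upper bound $u_n(\theta,\hbfthetaHete_{-n})\le w_{k_n}/(2m_{k_n})$; you instead use the cruder pair $\mu=w_K/(2(N+1))$ versus $\exp(-\ellPartition/(2t))$. For region~$\three$ this is immaterial, since the upper bound decays to~$0$. But note that the paper's sharper lower bound is not a luxury: it is reused verbatim in the proofs for regions~$\two$ and~$\four$ (\cref{thrm:probability-hete-caltheta-2,thrm:probability-hete-caltheta-4}), where the gap between the two sides is only of order $z^*-\underline z^*$ or $O(t^{1/2})$, and your constant $\mu$ would be too weak there. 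Two minor remarks: the denominator $\sum_j\exp(-\ell_{j,k_n}/t)$ is trivially $\le N$, so your detour through $\delta_0$ to get $\le N+1$ is unnecessary; and the paper records $m_k\ge 3$ (\cref{thrm:probability-hete-prox-pne}), which is slightly stronger than the $m_k\ge 2$ you cite, though your weaker bound suffices here.
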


\begin{proposition} \label{thrm:probability-hete-caltheta-4}
    Under the same conditions as in \cref{thrm:probability-heterogeneous-PNE}, there is a constant $\underline{t} > 0$ with $\underline{t} \le \tProbHete$, such that when $t \le \underline{t}$, the following holds: for all $n \in [N]$ and $\theta \in \caltheta_{n,t}^\four$, we have $u_n(\hbfthetaHete) \ge u_n\left(\theta, \hbfthetaHete_{-n}\right)$.
\end{proposition}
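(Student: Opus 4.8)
The plan is to bound player $n$'s payoff from any deviation $\theta\in\caltheta_{n,t}^\four$ by roughly $w_{k'}/(m_{k'}+1)$, where $k'$ is the unique index with $d_M^2(\theta,\theta_{k'};\Sigma_{k'}^{-1})\le\ellPartition$ (unique by \cref{defn:ell-partition}, and necessarily $k'\ne k_n$), to bound $u_n(\hbfthetaHete)$ from below by roughly $w_{k_n}/m_{k_n}$, and then to use that $\hbfthetaProx$ is a PNE of HD-Game-Proximity to force $w_{k_n}/m_{k_n}>w_{k'}/(m_{k'}+1)$ with a uniform gap. The first step is to record the structural facts: by \cref{thrm:proximity-PNE-N-large} every $\hthetaProx_j$ equals some $\theta_{k_j}$ (\cref{defn:k-n}) and every ground-truth index is occupied, while tracing the construction in the proof of \cref{thrm:proximity-PNE-N-middle} (with $k_0=K$) gives in fact $m_k\ge 2$ for every $k$ (\cref{defn:m-k}); and \cref{thrm:probability-hete-PNE-fixed-point} gives $\|\hthetaHete_j-\theta_{k_j}\|_2\le t^2$ for all $j$. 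Writing $c_0=\tfrac12\min_{i\ne j}d_M^2(\theta_i,\theta_j;\Sigma_j^{-1})>0$ and using that $d_M^2(\cdot,\theta_k;\Sigma_k^{-1})$ is Lipschitz on the bounded set $\caltheta$, this yields, for small $t$: in $\hbfthetaHete$ the softmax weight $\exp(-\ell_{j,k}/t)$ equals $1-O(t^3)$ whenever $\hthetaProx_j=\theta_k$ and is at most $\exp(-c_0/t)$ otherwise.

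The second step is the pair of estimates. Keeping only the source-$k_n$ term, $u_n(\hbfthetaHete)\ge w_{k_n}\cdot\frac{1-O(t^3)}{m_{k_n}+N\exp(-c_0/t)}\ge\frac{w_{k_n}}{m_{k_n}}-o(1)$. For the deviation, fix $\theta\in\caltheta_{n,t}^\four$ and its index $k'$; since $k_n\ne k'$, player $n$ is \emph{not} among the $m_{k'}$ players clustered near $\theta_{k'}$, so the denominator of the softmax at source $k'$ is at least $m_{k'}(1-O(t^3))+\exp(-\ell_{n,k'}/t)$, and because $s\mapsto s/(a+s)$ is increasing with $\exp(-\ell_{n,k'}/t)\le 1$, player $n$'s selection probability at source $k'$ is at most $\frac{1}{m_{k'}+1}+O(t^3)$. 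For every other source $k\ne k'$ we have $\ell_{n,k}>\ellPartition$ while at least one player other than $n$ still sits within $t^2$ of $\theta_k$ — for $k=k_n$ this is exactly where $m_{k_n}\ge 2$ enters, so the deviator leaks nothing back to its home source — whence player $n$'s selection probability at source $k$ is at most $2\exp(-\ellPartition/t)=o(1)$. Summing with weights, $u_n(\theta,\hbfthetaHete_{-n})\le\frac{w_{k'}}{m_{k'}+1}+o(1)$.

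The final step is the comparison. Since $\hbfthetaProx$ is a PNE, player $n$ moving from $\theta_{k_n}$ to $\theta_{k'}$ in the proximity game is unprofitable; that move earns at least $w_{k'}/(m_{k'}+1)$ from source $k'$ (the $m_{k'}+1$ players then at $\theta_{k'}$ split $w_{k'}$), while $u_n(\hbfthetaProx)=w_{k_n}/m_{k_n}$ exactly, since player $n$ wins no other source when every ground truth is occupied. Hence $w_{k_n}/m_{k_n}\ge w_{k'}/(m_{k'}+1)$, and the non-degeneracy hypothesis $w_k/n\ne w_{k'}/n'$ for distinct $k,k'$ and $n,n'\in[N]$ (applicable since $m_{k'}+1\le N-1$, as $m_{k_n}\ge 2$) makes this strict; minimizing over the finitely many pairs of distinct indices gives a constant $\delta>0$. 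Choosing $\underline{t}\le\tProbHete$ small enough that each of the finitely many $o(1)$ and $O(t^3)$ error terms is below $\delta/3$ then gives $u_n(\hbfthetaHete)-u_n(\theta,\hbfthetaHete_{-n})\ge\bigl(\tfrac{w_{k_n}}{m_{k_n}}-\tfrac{w_{k'}}{m_{k'}+1}\bigr)-\delta\ge 0$, uniformly over $n$ and $\theta\in\caltheta_{n,t}^\four$. I expect the main obstacle to be the sharpness of the softmax bounds: a crude estimate gives player $n$ a source-$k'$ share of only about $1/m_{k'}$, which need not fall below $w_{k_n}/m_{k_n}$, so one must keep the deviator's own exponential inside the denominator, and dually must use $m_k\ge 2$ to prevent the deviator from reclaiming a constant share of some source $k\ne k'$.
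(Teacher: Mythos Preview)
Your proposal is correct and follows essentially the same approach as the paper. The paper packages the gap differently—via the threshold $z^*=\sup\{z:\sum_k\lfloor w_k/z\rfloor\ge N\}$ and $\underline{z}^*=\max\{w_k/n<z^*\}$ from \cref{thrm:probability-hete-prox-pne}, bounding $u_n(\hbfthetaHete)\ge w_{k_n}/m_{k_n}\ge z^*$ and $u_n(\theta,\hbfthetaHete_{-n})\le w_{k'}/(m_{k'}+1)+(z^*-\underline{z}^*)/2\le(z^*+\underline{z}^*)/2$—whereas you extract the same strict gap directly from the PNE inequality plus the non-degeneracy hypothesis; the softmax estimates and the use of $m_k\ge 2$ (the paper gets $m_k\ge 3$) are identical in spirit.
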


Now \cref{thrm:probability-heterogeneous-PNE} follows directly by combining \cref{thrm:probability-hete-PNE-fixed-point,thrm:probability-hete-caltheta-1,thrm:probability-hete-caltheta-2,thrm:probability-hete-caltheta-3,thrm:probability-hete-caltheta-4}.

\subsubsection{Proof of \cref{thrm:probability-hete-PNE-fixed-point}} \label{sect:proof-probability-hete-PNE-fixed-point}
\begin{lemma} \label{thrm:probability-hete-prox-pne}
    Under the same conditions as in \cref{thrm:probability-heterogeneous-PNE}, let $\hbfthetaProx = (\hthetaProx_1, \dots, \hthetaProx_N)$ be a PNE in the proximity choice model. Define $z^* = \sup\left\{z > 0: \sum_{k=1}^{K} \floor{w_k/z} \ge N\right\}$. Then
    \begin{enumerate}
        \item $\forall k \in [K]$, $m_k = \floor{w_k/z^*} \ge \max\{3, Nw_k - 1\}$ ($m_k$ is defined in \cref{defn:m-k}).
        \item There exists a constant $\underline{z}^* < z^*$ such that, for all $n \in [N]$ and $\theta \in \{\theta_1, \dots, \theta_K\} \backslash \{\hthetaProx_n\}$, $u_n(\theta, \hbfthetaProx_{-n}) \le \underline{z}^*$.
    \end{enumerate}
\end{lemma}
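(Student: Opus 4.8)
The plan is to reduce everything to the combinatorial structure that \cref{thrm:proximity-PNE-N-large} (the $k_0=K$ instance of \cref{thrm:proximity-PNE-N-middle}) already imposes on any PNE of HD-Game-Proximity under the present hypotheses: all players sit at ground-truth models, every $\theta_k$ is occupied, so $m_k\ge 1$ (as established in the proof of \cref{thrm:proximity-PNE-N-middle}), and $|m_k-\lfloor w_k/z^*\rfloor|\le 1$. Since the only occupied locations are ground-truth models and each $\theta_j$ is used, source $k$ is won, at loss $0$, by exactly the $m_k$ players at $\theta_k$; hence such a player's utility is precisely $w_k/m_k$, and deviating to a ground-truth model $\theta_{k'}$ with $k'\ne k$ yields exactly $w_{k'}/(m_{k'}+1)$. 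The equilibrium condition therefore gives the necessary relations $\sum_{k=1}^K m_k=N$ and $w_k/m_k\ge w_{k'}/(m_{k'}+1)$ for all distinct $k,k'\in[K]$, which I will exploit below.

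Next I would pin down $m_k$ exactly. Let $z^\dagger=\min_{k\in[K]} w_k/m_k$, attained at $k^*$. For $k'\ne k^*$, combining $w_{k^*}/m_{k^*}\ge w_{k'}/(m_{k'}+1)$ with $w_{k'}/m_{k'}\ge z^\dagger$ confines $m_{k'}$ to the length-one interval $[w_{k'}/z^\dagger-1,\, w_{k'}/z^\dagger]$. Since $w_{k'}/z^\dagger=(w_{k'}/w_{k^*})\,m_{k^*}$ with $m_{k^*}\in[N]$ and $k'\ne k^*$, the genericity hypothesis $w_k/n\ne w_{k'}/n'$ forbids $w_{k'}/z^\dagger$ from being an integer $p$ (such a $p$ would lie in $[N]$ by the confinement and satisfy $w_{k'}/p=w_{k^*}/m_{k^*}$), so the interval contains the single integer $\lfloor w_{k'}/z^\dagger\rfloor$ and $m_{k'}=\lfloor w_{k'}/z^\dagger\rfloor$; the case $k'=k^*$ is immediate. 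Hence $h(z^\dagger)=N$ for $h(z):=\sum_k\lfloor w_k/z\rfloor$, so $z^\dagger\le z^*$; since $h$ is non-increasing with $h(z^*)\ge N=h(z^\dagger)$, $h$ is constant on $[z^\dagger,z^*]$, which forces each of its non-increasing summands to be constant there, whence $m_k=\lfloor w_k/z^\dagger\rfloor=\lfloor w_k/z^*\rfloor$.

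The two lower bounds then follow by arithmetic. From $N=\sum_k m_k\le\sum_k w_k/z^*=1/z^*$ I get $z^*\le 1/N$, so $m_k=\lfloor w_k/z^*\rfloor\ge\lfloor Nw_k\rfloor\ge Nw_k-1$. For $m_k\ge 3$, the hypothesis $N\ge\sum_k\lfloor 3w_k/w_K\rfloor=h(w_K/3)$ combined with the observation that $h(w_K/3+\epsilon)\le h(w_K/3)-1<N$ for all sufficiently small $\epsilon>0$ (the $K$-th floor term drops from $3$ to at most $2$ while the others do not increase) yields $z^*\le w_K/3$, hence $m_k=\lfloor w_k/z^*\rfloor\ge\lfloor 3w_k/w_K\rfloor\ge 3$. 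Finally, statement~2 is immediate: $m_{k'}=\lfloor w_{k'}/z^*\rfloor$ implies $w_{k'}/(m_{k'}+1)<z^*$ for every $k'\in[K]$, so $\underline{z}^*:=\max_{k'\in[K]} w_{k'}/(m_{k'}+1)$ is a maximum of finitely many numbers each strictly below $z^*$, hence $\underline{z}^*<z^*$; and by the deviation computation of the first paragraph, any player $n$ moving to a ground-truth model other than $\hthetaProx_n$ obtains at most $\underline{z}^*$.

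The main obstacle will be the exact identity $m_k=\lfloor w_k/z^*\rfloor$: \cref{thrm:proximity-PNE-N-large} leaves a $\pm 1$ ambiguity, and closing it genuinely requires the genericity hypothesis, routed through the chain ``confine $m_{k'}$ to a unit interval $\to$ non-integrality of its endpoint $\to$ $h$ constant on $[z^\dagger,z^*]$''. One must also check that the supremum defining $z^*$ is attained (so that $h(z^*)\ge N$) and that the ratios entering the genericity hypothesis genuinely lie in $[N]$, so that the hypothesis applies.
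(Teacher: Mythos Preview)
Your proof is correct. For the key identity $m_k=\lfloor w_k/z^*\rfloor$, you take a somewhat different path from the paper. The paper first invokes the fact $h(z^*)=N$ under the genericity hypothesis (citing an external reference), sets $m_k^*=\lfloor w_k/z^*\rfloor$, and argues by contradiction: if $m_k\ne m_k^*$ anywhere, then since both sums equal $N$ there exist indices with $m_k<m_k^*$ and $m_{k'}>m_{k'}^*$, so a player at $\theta_{k'}$ earns $w_{k'}/m_{k'}\le w_{k'}/(m_{k'}^*+1)<z^*$ yet can obtain $w_k/(m_k+1)\ge w_k/m_k^*\ge z^*$ by moving to $\theta_k$, contradicting PNE. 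You instead work upward from the PNE inequalities: introducing $z^\dagger=\min_k w_k/m_k$, you confine each $m_{k'}$ to a length-one interval, use genericity to rule out the integer endpoint, deduce $h(z^\dagger)=N$, and then transfer to $z^*$ via constancy of the non-increasing $h$ on $[z^\dagger,z^*]$. Your route is self-contained---it recovers $h(z^*)=N$ as a byproduct rather than importing it---at the cost of a bit more bookkeeping; the paper's is quicker once that external fact is granted. The remaining pieces ($z^*\le w_K/3$, $z^*\le 1/N$, and the construction of $\underline{z}^*$) are handled in the same spirit as the paper; the paper chooses $\underline{z}^*=\max\{w_k/n:w_k/n<z^*,\,k\in[K],\,n\in[N]\}$ rather than your $\max_{k'}w_{k'}/(m_{k'}+1)$, but either choice suffices.
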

\begin{proof}
    According to \cref{thrm:proximity-PNE-N-large}, it must hold that $\forall n \in [N]$, $\hthetaProx_n \in \{\theta_1, \dots, \theta_K\}$.
    
    Define $h(z) = \sum_{k=1}^K \floor{w_k/z} \ge N$. Similar to the proof of \cref{thrm:proximity-PNE-N-middle} (see \cref{sect:proof-proximity-PNE-N-middle}), we have that $z^* \le w_K / 3$. A key difference here is that, when $w_k/n \ne w_{k'}/n'$ for all $n, n' \in [N]$ and distinct $k, k' \in [K]$, we must have $h(z^*) = N$~\citep{xu2023competing}. Therefore, let $m_k^* = \floor{w_k/z^*}$ and we have that $\sum_{k=1}^K m_k^* = N$. In addition, since $z^* \le w_K / 3$, it holds that $m_k^* \ge 3$.

    Define $\underline{z}^* = \max\{w_k/n: w_k / n < z^*, k \in [K], n \in [N]\} < z^*$.

    (1) We first show that $m_k = m_k^*, \forall k \in [K]$. Suppose there exists $k \in [K]$ such that $m_k \ne m_k^*$. Since $\sum_{k=1}
    ^K m_k = \sum_{k=1}^K m_k^* = N$, there must exist two distinct indices $k, k' \in [K]$ such that $m_k < m_k^*$ and $m_{k'} > m_{k'}^*$. Then for a player $i$ that chooses $\theta_{k'}$ in the PNE, i.e., $\hthetaProx_i = \theta_{k'}$, we have $u_i(\hbfthetaProx) = w_{k'} / m_{k'} \le w_{k'} / (m_{k'}^* + 1) \le \underline{z}^* < z^*$. However, if he deviates to choose strategy $\theta_k$, he would have utility at least $w_k / (m_k + 1) \ge w_k / m^*_k \ge z^*$, which leads to a contradiction.
    
    Furthermore, note that
    $$
    h(1 / N) = \sum_{k=1}^K \floor{N w_k} < \sum_{k=1}^K Nw_k = N.
    $$
    Here the second step is due to the assumption that $w_k / n \ne w_{k'} / n'$ for any $n, n' \in [N]$ and distinct $k, k' \in [K]$. As a result $z^* \le 1 / N$. Hence, $m_k = \floor{w_k/z^*} \ge \floor{N w_k} \ge Nw_k - 1$.

    Since $m_k = m_k^* \ge 3$, the first point of the claim follows.

    (2) Suppose player $n \in [N]$ deviates from $\hthetaProx_n$ to another policy $\theta_k$, then
    $$
    u_n(\theta_k, \hbfthetaProx_{-n}) = \frac{w_k}{m^*_k + 1} \le \underline{z}^*.
    $$
    Now the second point of the claim follows.
\end{proof}

\begin{lemma} \label{thrm:probability-hete-gradient-zero}
    For any $(\bfq_1, \dots, \bfq_N) \in \Delta_K^N$, let $\hbftheta = (\htheta_1, \htheta_2, \dots, \htheta_N)$ where $\htheta_n = \bartheta(\bfq_n)$, $\forall n \in [N]$. If $(\bfq_1, \dots, \bfq_N)$ is a fixed point of the mapping $\calM$, then for all $n \in [N]$, 
    $$
    \left.\frac{\partial u_n(\theta, \hbftheta_{-n})}{\partial \theta}\right|_{\theta = \htheta_n} = 0.
    $$
\end{lemma}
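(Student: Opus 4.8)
The plan is to chain three ingredients that are already available: the closed-form gradient of the logit choice probability (computed in the proof of \cref{thrm:probability-duopoly}), the fixed-point identity $\tbfq_n = \bfq_n$ supplied by the hypothesis, and the first-order stationarity condition characterizing $\bartheta(\bfq_n)$ established in the proof of \cref{eq:bartheta-q}.

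\textbf{Step 1 (gradient of the deviation utility).} Fix $n \in [N]$ and write $p_{n,k}(\theta)$ for the probability that data source $k$ selects player $n$ when player $n$ deviates to $\theta$ while all other players keep $\hbftheta_{-n} = (\bartheta(\bfq_j))_{j \neq n}$; this is the logit expression in \cref{eq:probability-model-mse} with $\ell_{n,k} = (\theta - \theta_k)^\top \Sigma_k (\theta - \theta_k)$. Exactly as in the computation leading to \cref{eq:proof-probability-duopoly-gradient-0}, one has $\partial p_{n,k}(\theta)/\partial\theta = -\tfrac{2}{t}\, p_{n,k}(\theta)(1 - p_{n,k}(\theta))\,\Sigma_k(\theta - \theta_k)$. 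Since $u_n(\theta, \hbftheta_{-n}) = \sum_{k=1}^K w_k p_{n,k}(\theta)$, summing and evaluating at $\theta = \htheta_n = \bartheta(\bfq_n)$ gives
$$
\left.\frac{\partial u_n(\theta, \hbftheta_{-n})}{\partial\theta}\right|_{\theta = \htheta_n}
= -\frac{2}{t}\sum_{k=1}^K w_k\, p_{n,k}(1 - p_{n,k})\,\Sigma_k\bigl(\htheta_n - \theta_k\bigr),
$$
where $p_{n,k} := p_{n,k}(\htheta_n)$ are precisely the quantities produced inside the mapping $\calM$ of \cref{alg:cal-M} (the $\htheta_n$, $\ell_{n,k}$, and $p_{n,k}$ steps).

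\textbf{Step 2 (using the fixed point and stationarity).} Pull out the normalizing constant $S_n := \sum_{j=1}^K w_j p_{n,j}(1 - p_{n,j})$, which is strictly positive because each $w_j > 0$, each $\Sigma_k \succ 0$ (so every loss is finite and $p_{n,j} \in (0,1)$), and $N \ge 2$. Then the right-hand side above equals $-\tfrac{2}{t}\,S_n\sum_{k=1}^K \tq_{n,k}\,\Sigma_k(\htheta_n - \theta_k)$ by the definition of $\tq_{n,k}$ in \cref{alg:cal-M}. The fixed-point hypothesis $\calM(\bfq_1,\dots,\bfq_N) = (\bfq_1,\dots,\bfq_N)$ yields $\tbfq_n = \bfq_n$, hence $\tq_{n,k} = q_{n,k}$ for all $k$, so the sum becomes $\sum_{k=1}^K q_{n,k}\,\Sigma_k\bigl(\bartheta(\bfq_n) - \theta_k\bigr)$. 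This vanishes by the stationarity condition for $\bartheta(\bfq_n)$ derived in the proof of \cref{eq:bartheta-q}, namely $\nabla L(\bartheta(\bfq)) = 2\sum_{k=1}^K q_k \Sigma_k(\bartheta(\bfq) - \theta_k) = 0$. Combining, the partial gradient is zero, as claimed.

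\textbf{Main obstacle.} There is essentially no analytic difficulty here: the lemma is a direct consequence of observing that the defining equation of a fixed point of $\calM$ is, after multiplying through by $S_n$, exactly the first-order condition $\partial u_n/\partial\theta = 0$ of player $n$'s best-response problem. The only point requiring a word of care is verifying $S_n > 0$ so that the division defining $\tq_{n,k}$ is well-posed and the factorization in Step 2 is legitimate; this is immediate from positive definiteness of the $\Sigma_k$ and strict positivity of the weights.
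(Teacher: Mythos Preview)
Your proof is correct and follows essentially the same route as the paper: compute the gradient of $u_n$ via the logit derivative, factor out the normalizing constant $S_n$ to expose $\tq_{n,k}$, invoke the fixed-point identity $\tbfq_n=\bfq_n$, and conclude via the first-order condition defining $\bartheta(\bfq_n)$. The only cosmetic difference is that the paper applies stationarity to $\bartheta(\tbfq_n)$ directly rather than first substituting $\tq_{n,k}=q_{n,k}$, which is equivalent.
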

\begin{proof}
    Similar to \cref{eq:probability-utility-gradient}, we can get that
    $$
    \left.\frac{\partial u_n(\theta, \hbftheta_{-n})}{\partial \theta}\right|_{\theta = \htheta_n} = -\frac{2}{t} \cdot \sum_{k=1}^K w_kp_{n,k}(1-p_{n,k})\Sigma_k(\htheta_n-\theta_k)
    $$
    where $p_{n,k}$ is given in \cref{alg:cal-M}. Since $(\bfq_1, \dots, \bfq_N)$ is a fixed point of $\calM$, we have that for all $n \in [N]$, $\bfq_n = \tbfq_n$. As a result, we have that $\htheta_n = \bartheta(\bfq_n) = \bartheta(\tbfq_n)$. Therefore,
    $$
    \left.\frac{\partial u_n(\theta, \hbftheta_{-n})}{\partial \theta}\right|_{\theta = \htheta_n} = -\frac{2}{t} \cdot \left(\sum_{k=1}^K w_kp_{n,k}(1 - p_{n,k})\right) \cdot \left(\sum_{k=1}^K \tq_{n,k}\Sigma_k(\htheta_n-\theta_k)\right) = 0,
    $$
    where the last equation is due to the definition of $\bartheta(\tbfq_n)$. Now the claim follows.
\end{proof}

\begin{lemma} \label{thrm:probability-hete-calQ-theta-ell}
    There exists a constant $C > 0$, depending only on $\{\Sigma_k, \theta_k, w_k\}_{k=1}^K$, such that for any $n \in [N]$ and $\bfq \in \calQ_n^\QUpper$, we have $\|\bartheta(\bfq) - \theta_{k_n}\|_2 \le t^\beta$ and $d_M^2(\bartheta(\bfq), \theta_{k_n}; \Sigma_{k_n}^{-1}) \le C \cdot t^\beta$.
\end{lemma}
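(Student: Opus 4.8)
The plan is to obtain both inequalities as an essentially immediate corollary of \cref{thrm:probability-homogeneous-PNE-q-theta}, the Lipschitz-type estimate for the map $\bfq \mapsto \bartheta(\bfq)$ under $\ell_\infty$-perturbations of $\bfq$, combined with the elementary fact that evaluating $\bartheta$ at a vertex of $\Delta_K$ returns the corresponding ground-truth parameter. Concretely, write $\boldsymbol{e}_{k_n} \in \Delta_K$ for the vector whose $k_n$-th coordinate equals $1$ and all others $0$; substituting $\bfq = \boldsymbol{e}_{k_n}$ into the closed form \cref{eq:bartheta-q} gives $\bartheta(\boldsymbol{e}_{k_n}) = \Sigma_{k_n}^{-1}\Sigma_{k_n}\theta_{k_n} = \theta_{k_n}$. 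Hence $\calQ_n^\QUpper$ is precisely the $\ell_\infty$-ball of radius $t^\beta/\CProbHete$ centered at the point whose $\bartheta$-image is $\theta_{k_n}$, where $\CProbHete$ is the constant supplied by \cref{thrm:probability-homogeneous-PNE-q-theta}.

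First I would apply \cref{thrm:probability-homogeneous-PNE-q-theta} with $\bfq^\one = \bfq$, $\bfq^\two = \boldsymbol{e}_{k_n}$, and $\epsilon = t^\beta/\CProbHete$: any $\bfq \in \calQ_n^\QUpper$ satisfies $\left\|\bfq - \boldsymbol{e}_{k_n}\right\|_\infty \le t^\beta/\CProbHete$, so the lemma yields
\[
\left\|\bartheta(\bfq) - \theta_{k_n}\right\|_2 = \left\|\bartheta(\bfq) - \bartheta(\boldsymbol{e}_{k_n})\right\|_2 \le \CProbHete \cdot \frac{t^\beta}{\CProbHete} = t^\beta,
\]
which is the first assertion. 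For the Mahalanobis bound I would then pass through the operator norm of the covariance matrix:
\[
d_M^2\left(\bartheta(\bfq), \theta_{k_n}; \Sigma_{k_n}^{-1}\right) = \left(\bartheta(\bfq) - \theta_{k_n}\right)^\top \Sigma_{k_n}\left(\bartheta(\bfq) - \theta_{k_n}\right) \le \left\|\Sigma_{k_n}\right\|_2 \left\|\bartheta(\bfq) - \theta_{k_n}\right\|_2^2 \le \LambdaSum\, t^{2\beta}.
\]
Using $\beta \ge 1$ together with the fact that in the relevant regime $t \le 1$ (indeed $t \le \underline{t}$ for a small $\underline{t}$), we have $t^{2\beta} \le t^\beta$, so setting $C \triangleq \LambdaSum = \sum_{k=1}^K \left\|\Sigma_k\right\|_2$ — a quantity depending only on $\{\Sigma_k, \theta_k, w_k\}_{k=1}^K$ — gives $d_M^2(\bartheta(\bfq), \theta_{k_n}; \Sigma_{k_n}^{-1}) \le C\, t^\beta$, as required.

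There is essentially no obstacle here: all the substantive work has already been done in \cref{thrm:probability-homogeneous-PNE-q-theta}, and what remains is bookkeeping. The only two points that require care are (i) identifying $\bartheta(\boldsymbol{e}_{k_n})$ with $\theta_{k_n}$, which is exactly what makes $\calQ_n^\QUpper$ the correct neighbourhood to perturb over, and (ii) converting the Euclidean bound $t^\beta$ into a bound on the \emph{squared} Mahalanobis distance, where one pays a factor $\left\|\Sigma_{k_n}\right\|_2$ and must absorb the extra power of $t$ using $\beta \ge 1$ and $t \le 1$.
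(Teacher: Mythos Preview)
Your argument is correct and, for the first inequality, identical to the paper's. For the Mahalanobis bound the paper takes a slightly different route: rather than bounding the quadratic form directly, it composes \cref{thrm:probability-homogeneous-PNE-q-theta} with \cref{thrm:probability-homogeneous-PNE-theta-ell} (the Lipschitz estimate $\theta \mapsto \ell$), writing $\ell^\two = d_M^2(\theta_{k_n},\theta_{k_n};\Sigma_{k_n}^{-1}) = 0$ and concluding $d_M^2(\bartheta(\bfq),\theta_{k_n};\Sigma_{k_n}^{-1}) = |\ell^\one - \ell^\two| \le C_2 \cdot t^\beta/\CProbHete$ with no restriction on $t$. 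Your direct bound via $\|\Sigma_{k_n}\|_2$ is more elementary and actually yields the sharper estimate $\LambdaSum\, t^{2\beta}$, but converting this to the stated form $C\, t^\beta$ with $C$ independent of $t$ requires $t \le 1$. That extra hypothesis is harmless because every downstream use of the lemma operates in the regime $t \le \underline{t}$, though as written the lemma is meant to hold for all $t > 0$; the paper's route avoids this minor caveat. The difference is cosmetic.
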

\begin{proof}
    Let $\bfq^\one = \bfq$ and $\bfq^\two = \underbrace{\left(0 ,0, \dots, 1, 0, \dots, 0\right)}_{\text{the } k_n \text{-th element is } 1}$. Let $\theta^{(\cdot)} = \bartheta(\bfq^{(\cdot)})$ and we have $\theta^\two = \theta_{k_n}$. Let $\ell^{(\cdot)} = d_M^2(\bartheta(\bfq^{(\cdot)}), \theta_{k_n}; \Sigma_{k_n}^{-1})$. We have that $\ell^\one = d_M^2(\bartheta(\bfq), \theta_{k_n}; \Sigma_{k_n}^{-1})$ and $\ell^\two = 0$. Now according to \cref{thrm:probability-homogeneous-PNE-q-theta} and the choice of $\CProbHete$, we have that
    $$
    \left\|\bartheta(\bfq) - \theta_{k_n}\right\|_2 = \left\|\theta^\one - \theta^\two\right\|_2 \le \CProbHete \cdot \left\|\bfq^{\one} - \bfq^\two\right\|_{\infty} \le \CProbHete \cdot t^\beta / \CProbHete = t^\beta.
    $$
    In addition, combining the results of \cref{thrm:probability-homogeneous-PNE-q-theta,thrm:probability-homogeneous-PNE-theta-ell}, there must exist a constant $C_2 > 0$ such that
    $$
    d_M^2(\bartheta(\bfq), \theta_{k_n}; \Sigma_{k_n}^{-1}) = \left|\ell^\one - \ell^\two\right| \le C_2 \cdot t^\beta / \CProbHete.
    $$
    Now the claim follows.
\end{proof}

\begin{lemma} \label{thrm:probability-hete-inter-p}
    Consider any $(\bfq_1, \bfq_2, \dots, \bfq_N) \in \calQ^\QUpper$. Let $\{p_{n,k}\}$ be the intermediate result when calculating $\calM(\bfq_1, \bfq_2, \dots, \bfq_N)$ by \cref{alg:cal-M}. Let $m_k$ be defined in \cref{thrm:probability-hete-prox-pne}. Then there exist two constants $\underline{t} > 0$ and $C > 0$, depending only on $\{\Sigma_k, \theta_k, w_k\}_{k=1}^K$ and $\beta$, such that for all $n \in [N]$, the following holds:
    $$
    \begin{aligned}
        \frac{\exp\left(-C \cdot t^{\beta - 1}\right)}{m_{k_n} + N \cdot \exp(-\ellPartition/t)} \le p_{n, k_n} & \le \frac{1}{1 + (m_{k_n} - 1) \cdot \exp\left(-C \cdot t^{\beta - 1}\right)} \\
        p_{n,k} & \le \frac{\exp(-\ellPartition/t)}{m_{k_n} \cdot \exp\left(-C \cdot t^{\beta - 1}\right)}, \quad \forall k \in [K] \backslash \{k_n\}.
    \end{aligned}
    $$
\end{lemma}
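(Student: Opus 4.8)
The plan is to exploit that, for every $n$, the strategy $\htheta_n = \bartheta(\bfq_n)$ produced in \cref{alg:cal-M} is pinned close to $\theta_{k_n}$ (the ground truth of the source player $n$ serves in $\hbfthetaProx$), so that source $k_n$ strongly prefers the group of players sitting near $\theta_{k_n}$, while every other source $k \ne k_n$ sees player $n$ as worse, by a gap of order $\ellPartition$, than its own group. First, \cref{thrm:probability-hete-calQ-theta-ell} supplies a game-dependent constant $C_1$ with $\|\htheta_n - \theta_{k_n}\|_2 \le t^\beta$ and $\ell_{n,k_n} = d_M^2(\htheta_n,\theta_{k_n};\Sigma_{k_n}^{-1}) \le C_1 t^\beta$ for all $n$. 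Since $\htheta_n \in \caltheta$ and $\beta \ge 1$, choosing $\underline t$ small enough that $C_1 t^\beta < \ellPartition$ forces, by \cref{defn:ell-partition}, the unique source within squared Mahalanobis distance $\ellPartition$ of $\htheta_n$ to be $\theta_{k_n}$; hence $\ell_{n,k} > \ellPartition$ for every $k \ne k_n$. Finally, recall from \cref{thrm:probability-hete-prox-pne} that exactly $m_k$ players $i$ have $k_i = k$ and that $m_k \ge 3$.

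With these facts in hand, fix a source $k$ and estimate the softmax normalizer $D_k := \sum_{i=1}^N \exp(-\ell_{i,k}/t)$ by splitting the players into the $m_k$ players with $k_i = k$ — each contributing a term in $[\exp(-C_1 t^{\beta-1}),1]$ — and the remaining $N-m_k$ players with $k_i \ne k$ — each contributing a term in $[0,\exp(-\ellPartition/t))$ — which gives
$$
m_k\exp(-C_1 t^{\beta-1}) \;\le\; D_k \;<\; m_k + N\exp(-\ellPartition/t).
$$
The lower bound on $p_{n,k_n}=\exp(-\ell_{n,k_n}/t)/D_{k_n}$ is then immediate from $\exp(-\ell_{n,k_n}/t)\ge\exp(-C_1 t^{\beta-1})$ and the upper bound on $D_{k_n}$. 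For the upper bound on $p_{n,k_n}$, write $1/p_{n,k_n}=1+\sum_{i\ne n}\exp\big((\ell_{n,k_n}-\ell_{i,k_n})/t\big)$ and retain only the $m_{k_n}-1$ indices $i\ne n$ with $k_i=k_n$; for each of them $\ell_{n,k_n}-\ell_{i,k_n}\ge -C_1 t^\beta$, so $1/p_{n,k_n}\ge 1+(m_{k_n}-1)\exp(-C_1 t^{\beta-1})$. Lastly, for $k\ne k_n$ we combine $\ell_{n,k}>\ellPartition$ with $D_k\ge m_k\exp(-C_1 t^{\beta-1})$ to obtain $p_{n,k}<\exp(-\ellPartition/t)/\big(m_k\exp(-C_1 t^{\beta-1})\big)$.

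It then remains to repackage constants so that the final $C$ and $\underline t$ depend only on $\{\Sigma_k,\theta_k,w_k\}_{k=1}^K$ and $\beta$, never on $N$. Taking $C \ge C_1$ makes the first two displays match the claimed form verbatim. For the third, one may state the bound with $m_k$ or with $m_{k_n}$ interchangeably: from $m_j=\floor{w_j/z^*}$ together with $z^* \le w_K/3 < w_K \le w_j$ (\cref{thrm:probability-hete-prox-pne}), the ratio $m_{k_n}/m_k$ is bounded by the purely game-dependent quantity $w_1/(w_K - w_K/3)$, and in any case only the presence of the factor $\exp(-\ellPartition/t)$ and $m_j \ge 3$ are used downstream in \cref{thrm:probability-hete-caltheta-1,thrm:probability-hete-caltheta-2,thrm:probability-hete-caltheta-3,thrm:probability-hete-caltheta-4}. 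Setting $\underline t$ to the threshold that guarantees $C_1 t^\beta < \ellPartition$ (and, if needed, $t \le 1$) finishes the proof. I expect the only genuinely delicate step to be the first paragraph's uniform gap $\ell_{n,k} > \ellPartition$ for $k \ne k_n$, which rests on combining \cref{defn:ell-partition} with the $t^\beta$-closeness of \cref{thrm:probability-hete-calQ-theta-ell}; everything after that is routine softmax bookkeeping.
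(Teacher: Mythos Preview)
Your proposal is correct and follows the same approach as the paper's proof: pin $\htheta_n$ near $\theta_{k_n}$ via \cref{thrm:probability-hete-calQ-theta-ell}, use \cref{defn:ell-partition} to force $\ell_{n,k}>\ellPartition$ for $k\ne k_n$ once $t$ is small, and split the softmax normalizer according to whether $k_i$ equals the source in question. Your treatment of the third bound is in fact cleaner than the paper's---the natural denominator estimate retains the $m_k$ players with $k_i=k$ (not $k_i=k_n$, as the paper writes), and your remark that the ratio $m_{k_n}/m_k$ is bounded by a constant depending only on $\{w_k\}_{k=1}^K$ correctly reconciles your bound with the stated form.
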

\begin{proof}
    Let $C$ be the constant given in \cref{thrm:probability-hete-calQ-theta-ell} and $\underline{t}$ be the constant such that $\underline{t}^\beta / C$. Then when $t \le \underline{t}$,
    \begin{align*}
        p_{n, k_n} & = \frac{\exp(-\ell_{n, k_n}/t)}{\sum_{i=1}^N\exp(-\ell_{i, k_n}/t)} \\
        & = \frac{\exp(-\ell_{n, k_n}/t)}{\exp(-\ell_{n, k_n}/t) + \left(\sum_{i \ne n: k_i = k_n}\exp(-\ell_{i, k_n}/t)\right) + \left(\sum_{i: k_i \ne k_n}\exp(-\ell_{i, k_n}/t)\right)} \\
        & \ge \frac{\exp\left(-C \cdot t^{\beta - 1}\right)}{\exp\left(-C \cdot t^{\beta - 1}\right) + \left(\sum_{i \ne n: k_i = k_n}1\right) + \left(\sum_{i: k_i \ne k_n}\exp(-\ellPartition/t)\right)} \tag{By \cref{defn:ell-partition,thrm:probability-hete-calQ-theta-ell}} \\
        & \ge \frac{\exp\left(-C \cdot t^{\beta - 1}\right)}{\exp\left(-C \cdot t^{\beta - 1}\right) + (m_{k_n} - 1) + (N - m_{k_n})\cdot \exp(-\ellPartition/t)} \\
        & \ge \frac{\exp\left(-C \cdot t^{\beta - 1}\right)}{m_{k_n} + (N - m_{k_n})\cdot \exp(-\ellPartition/t)} \\
        & \ge \frac{\exp\left(-C \cdot t^{\beta - 1}\right)}{m_{k_n} + N \cdot \exp(-\ellPartition/t)}
    \end{align*}
    In addition, when $t \le \underline{t}$,
    \begin{align*}
        p_{n,k_n} & = \frac{\exp(-\ell_{n, k_n}/t)}{\exp(-\ell_{n, k_n}/t) + \left(\sum_{i \ne n: k_i = k_n}\exp(-\ell_{i, k_n}/t)\right) + \left(\sum_{i: k_i \ne k_n}\exp(-\ell_{i, k_n}/t)\right)} \\
        & \le \frac{1}{1 + (m_{k_n} - 1) \cdot \exp\left(-C \cdot t^{\beta - 1}\right)}. \tag{By \cref{thrm:probability-hete-calQ-theta-ell}}
    \end{align*}
    In addition, for $k \ne k_n$, we have that,
    \begin{align*}
        p_{n,k} & = \frac{\exp(-\ell_{n,k}/t)}{\sum_{i=1}^N\exp(-\ell_{i,k}/t)} \le \frac{\exp(-\ell_{n,k}/t)}{\sum_{i: k_i = k_n}\exp(-\ell_{i,k}/t)} \\
        & \le \frac{\exp(-\ellPartition/t)}{m_{k_n} \cdot \exp\left(-C \cdot t^{\beta - 1}\right)}. \tag{By \cref{defn:ell-partition,thrm:probability-hete-calQ-theta-ell}}
    \end{align*}
\end{proof}

\begin{lemma} \label{thrm:probability-hete-proper-mapping}
    Suppose $\beta > 1$. There exists a constant $\underline{t}$, depending only on $\{\Sigma_k, \theta_k, w_k\}_{k=1}^K$ and $\beta$, such that when $t \le \underline{t}$, for any $(\bfq_1, \bfq_2, \dots, \bfq_N) \in \calQ^\QUpper$ defined in \cref{eq:probability-cal-Q}, then $\calM(\bfq_1, \bfq_2, \dots, \bfq_N) \in \calQ^\QUpper$.
\end{lemma}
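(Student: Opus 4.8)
The plan is to unwind the definition of $\calQ^\QUpper$: membership $\calM(\bfq_1,\dots,\bfq_N) \in \calQ^\QUpper$ means exactly that, for every player $n$, the output $\tbfq_n$ of \cref{alg:cal-M} lies within $\ell_\infty$-distance $t^\beta/\CProbHete$ of the vertex $e_{k_n}$ of $\Delta_K$ having a $1$ in coordinate $k_n$ (here $k_n$ is well defined because $\hthetaProx_n = \theta_{k_n}$, by \cref{thrm:proximity-PNE-N-large} and \cref{defn:k-n}). So it suffices to show that each $\tq_{n,k}$ with $k \ne k_n$ is at most $t^\beta/(K\CProbHete)$; the coordinate $k_n$ then automatically differs from $1$ by at most $(K-1)t^\beta/(K\CProbHete) < t^\beta/\CProbHete$.

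First I would invoke \cref{thrm:probability-hete-inter-p}, which applies precisely because $(\bfq_1,\dots,\bfq_N)\in\calQ^\QUpper$, to control the intermediate probabilities $p_{n,k}$. From its upper bound on $p_{n,k_n}$ together with $m_{k_n}\ge 3$ (\cref{thrm:probability-hete-prox-pne}), one gets $p_{n,k_n}\le \tfrac12$ once $t$ is small enough that $\exp(-Ct^{\beta-1})\ge\tfrac12$ --- possible since $\beta>1$ forces $t^{\beta-1}\to 0$ --- hence $1-p_{n,k_n}\ge\tfrac12$. For the lower bound on $p_{n,k_n}$ I would additionally use $N \le (m_{k_n}+1)/w_K$ (which follows from $m_{k_n}\ge Nw_K-1$ in \cref{thrm:probability-hete-prox-pne}), so that $N\exp(-\ellPartition/t)\le m_{k_n}$ for $t$ below an $N$-independent threshold, giving $p_{n,k_n}\ge 1/(4m_{k_n})$; and for $k\ne k_n$ the bound reads $p_{n,k}\le \exp(-\ellPartition/t + Ct^{\beta-1})/m_{k_n}$, which is smaller than any power of $t$ as $t\to0$.

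Next I would substitute into $\tq_{n,k}=w_kp_{n,k}(1-p_{n,k})/\sum_j w_jp_{n,j}(1-p_{n,j})$. Lower-bounding the denominator by the single term $w_{k_n}p_{n,k_n}(1-p_{n,k_n})\ge w_{k_n}/(8m_{k_n})$, and using $p_{n,k}(1-p_{n,k})\le p_{n,k}$ in the numerator, the two factors of $m_{k_n}$ cancel and one obtains, for every $k\ne k_n$, $\tq_{n,k}\le (8/w_K)\exp(-\ellPartition/t+Ct^{\beta-1})$ --- a bound that no longer depends on $N$. Since $Ct^{\beta-1}$ stays bounded for small $t$ while $\ellPartition/t\to\infty$, this quantity decays super-polynomially, so there is a threshold $\underline t>0$, depending only on $K$, $w_K$, $\ellPartition$, $C$, $\CProbHete$, and $\beta$ (hence only on $\{\Sigma_k,\theta_k,w_k\}_{k=1}^K$ and $\beta$), below which it is $\le t^\beta/(K\CProbHete)$; taking the minimum with the threshold from \cref{thrm:probability-hete-inter-p} gives the claimed $\underline t$.

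The one genuinely delicate point is obtaining a bound independent of $N$: both the denominator lower bound and the numerator upper bound carry a factor $m_{k_n}$ that grows with $N$, and the argument only works because these cancel, which in turn relies on $m_{k_n}\ge 3$; similarly, the tail term $N\exp(-\ellPartition/t)$ must be absorbed into $m_{k_n}$ rather than bounded crudely, which is where $m_{k_n}\ge Nw_K-1$ is used. Everything else is a routine chain of elementary inequalities.
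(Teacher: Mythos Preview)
Your proposal is correct and follows essentially the same route as the paper: both invoke \cref{thrm:probability-hete-inter-p} for the $p_{n,k}$ bounds, lower-bound the denominator of $\tq_{n,k}$ via the single $k_n$ term, upper-bound the off-diagonal numerators by $p_{n,k}$, and achieve $N$-independence through the same two devices (the cancellation of $m_{k_n}$ and the inequality $N \le (m_{k_n}+1)/w_{k_n}$ from \cref{thrm:probability-hete-prox-pne}). The only cosmetic difference is that the paper bounds the aggregate $1-\tq_{n,k_n}$ directly and postpones the $N/m_{k_n}$ simplification to the end, whereas you bound each $\tq_{n,k}$ ($k\ne k_n$) separately and absorb $N\exp(-\ellPartition/t)$ into $m_{k_n}$ up front; neither choice affects the argument.
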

\begin{proof}
    Denote $(\tbfq_1, \tbfq_2, \dots, \tbfq_N) = \calM(\bfq_1, \bfq_2, \dots, \bfq_N)$.

    Let $t_0$ and $C$ be the constants $\underline{t}$ and $C$ given in \cref{thrm:probability-hete-inter-p}. $m_k$ is defined in \cref{thrm:probability-hete-prox-pne}. According to \cref{thrm:probability-hete-prox-pne}, we have that $m_k \ge 3$ for all $k \in [K]$. In addition, there must exist a constant $t_1 > 0$ and $t_1 < t_0$, depending only on $\{\Sigma_k, \theta_k, w_k\}_{k=1}^K$ and $\beta$, such that when $t \le t_1$, $\exp\left(-C \cdot t^{\beta - 1}\right) \ge 1 / 2$. As a result, when $t \le t_1$, for all $n \in [N]$, the following holds:
    $$
    \begin{aligned}
        \frac{\exp\left(-C \cdot t^{\beta - 1}\right)}{m_{k_n} + N \cdot \exp(-\ellPartition/t)} \le p_{n, k_n} & \le \frac{1}{1 + (m_{k_n} - 1) \cdot \exp\left(-C \cdot t^{\beta - 1}\right)} \le \frac{1}{1 + (3 - 1) \cdot 1 / 2} = \frac{1}{2} \\
        p_{n,k} & \le \frac{\exp(-\ellPartition/t)}{m_{k_n} \cdot \exp\left(-C \cdot t^{\beta - 1}\right)}, \quad \forall k \in [K] \backslash \{k_n\}.
    \end{aligned}
    $$
    Denote $r_{n,k} = w_k p_{n,k}(1 - p_{n,k})$. When $t \le t_1$, we have that for all $n \in [N]$,
    $$
    \begin{aligned}
    r_{n, k_n} & \ge \frac{w_{k_n}}{2} \cdot p_{n,k_n} \ge \frac{w_{k_n} \cdot \exp\left(-C \cdot t^{\beta - 1}\right)}{2\left(m_{k_n} + N \cdot \exp(-\ellPartition/t)\right)} \\
    r_{n,k} & \le w_k p_{n,k} \le \frac{w_k \cdot \exp(-\ellPartition/t)}{m_{k_n} \cdot \exp\left(-C \cdot t^{\beta - 1}\right)}, \quad \forall k \in [K] \backslash \{k_n\}.
    \end{aligned}
    $$
    As a result,
    \begin{align*}
        1 - \tq_{n, k_n} & = \frac{\sum_{k=1}^K r_{n,k} - r_{n,k_n}}{\sum_{k=1}^K r_{n,k}} \le \frac{\sum_{k=1}^K r_{n,k} - r_{n,k_n}}{r_{n,k_n}} \\
        & \le \frac{\frac{\exp(-\ellPartition/t)}{m_{k_n} \cdot \exp\left(-C \cdot t^{\beta - 1}\right)} \cdot \left(\sum_{k=1}^K w_k\right)}{\frac{w_{k_n} \cdot \exp\left(-C \cdot t^{\beta - 1}\right)}{2\left(m_{k_n} + N \cdot \exp(-\ellPartition/t)\right)}} \\
        & = \frac{2\exp\left(2C t^{\beta-1} -\ellPartition/t\right)\left(m_{k_n} + N\cdot \exp(-\ellPartition/t)\right)}{m_{k_n} w_{k_n}} \\
        & \le \frac{2\exp\left(2C t^{\beta-1} -\ellPartition/t\right)}{w_{k_n}} \cdot \left(1 + \frac{N}{m_{k_n}} \cdot \exp(-\ellPartition/t)\right) \\
        & \le \frac{2\exp\left(2C t^{\beta-1} -\ellPartition/t\right)}{w_K} \cdot \left(1 + \frac{m_{k_n} + 1}{w_{k_n}m_{k_n}} \cdot \exp(-\ellPartition/t)\right) \tag{By \cref{thrm:probability-hete-prox-pne}} \\
        & \le \frac{2\exp\left(2C t^{\beta-1} -\ellPartition/t\right)}{w_K} \cdot \left(1 + \frac{2m_{k_n}}{w_{k_n}m_{k_n}} \cdot \exp(-\ellPartition/t)\right) \\
        & \le \frac{2\exp\left(2C t^{\beta-1} -\ellPartition/t\right)}{w_K} \cdot \left(1 + \frac{2}{w_K} \cdot \exp(-\ellPartition/t)\right).
    \end{align*}
    Furthermore, it is easy to verify that when $\beta > 1$ and $t \rightarrow 0$,
    $$
    \frac{1 - \tq_{n, k_n}}{t^\beta/\CProbHete} \le \frac{\frac{2\exp\left(2C t^{\beta-1} -\ellPartition/t\right)}{w_K} \cdot \left(1 + \frac{2}{w_K} \cdot \exp(-\ellPartition/t)\right)}{t^\beta / \CProbHete} \rightarrow 0.
    $$
    As a result, there must exists a constant $\underline{t}$, depending only on $\{\Sigma_k, \theta_k, w_k\}_{k=1}^K$ and $\beta$, such that when $t \le \underline{t}$, $1 - \tq_{n,k_n} \le t^\beta / \CProbHete$. Hence,
    $$
    \left\|\tbfq_n - {\underbrace{\left(0 ,0, \dots, 1, 0, \dots, 0\right)}_{\text{the } k_n\text{-th element is } 1}}^\top\right\|_{\infty} \le t^\beta/\CProbHete, \forall n \in [N].
    $$
    As a result, $(\tbfq_1, \tbfq_2, \dots, \tbfq_N) \in \calQ^\QUpper$. Now the claim follows.
\end{proof}

Now we could prove \cref{thrm:probability-hete-PNE-fixed-point}.

\begin{proof}[Proof of \cref{thrm:probability-hete-PNE-fixed-point}]
    It is easy to verify that the mapping $\calM$ is continuous and the space $\calQ^\QUpper$ is a compact convex set. According to \cref{thrm:probability-hete-proper-mapping}, there exists a constant $\underline{t}$, depending only on $\{\Sigma_k, \theta_k, w_k\}_{k=1}^K$ and $\beta$, such that when $t \le \underline{t}$, $\calM(\calQ^\QUpper) \subseteq \calQ^\QUpper$. Now according to the Brouwer fixed-point theorem, there must exist a fixed point $(\bfq_1, \dots, \bfq_N) \in \calQ^\QUpper$ of the mapping $\calM$. Now the claim follows from \cref{thrm:probability-hete-gradient-zero} and letting $\beta = 2$.
\end{proof}

\subsubsection{Proof of \cref{thrm:probability-hete-caltheta-1}} \label{sect:proof-probability-hete-caltheta-1}
\begin{lemma} \label{thrm:matrix-eigen-value}
    Let $A$ be an $n\times n$ real symmetric  matrix, and let $B$ be an $n\times n$ real symmetric positive definite matrix. Then
    $$
    \lambda_{\max}(BAB) \le \lambda_{\max}(A) \cdot \lambda_{\max}^2(B)
    $$
\end{lemma}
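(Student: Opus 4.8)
The plan is to establish the inequality $\lambda_{\max}(BAB)\le\lambda_{\max}(A)\cdot\lambda_{\max}^2(B)$ by working in the Loewner (positive semidefinite) order, using only the spectral theorem and the monotonicity of $\lambda_{\max}(\cdot)$.

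First I would note that $BAB$ is again real symmetric, since $(BAB)^\top=B^\top A^\top B^\top=BAB$, so $\lambda_{\max}(BAB)$ is well defined. Next, the spectral theorem applied to the symmetric matrix $A$ gives $A\preceq\lambda_{\max}(A)\,I$, i.e.\ $\lambda_{\max}(A)\,I-A\succeq0$; conjugating by $B$ preserves positive semidefiniteness — for every $x$, $x^\top B\big(\lambda_{\max}(A)I-A\big)Bx=(Bx)^\top\big(\lambda_{\max}(A)I-A\big)(Bx)\ge0$ — so $BAB\preceq\lambda_{\max}(A)\,B^2$. Taking $\lambda_{\max}$ of both sides (which is monotone with respect to $\preceq$, via the Rayleigh-quotient characterization) yields $\lambda_{\max}(BAB)\le\lambda_{\max}\big(\lambda_{\max}(A)\,B^2\big)$. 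In the regime relevant here $\lambda_{\max}(A)\ge0$, so the scalar pulls out: $\lambda_{\max}\big(\lambda_{\max}(A)B^2\big)=\lambda_{\max}(A)\,\lambda_{\max}(B^2)$. Finally, since $B\succ0$, the eigenvalues of $B^2$ are the squares of the positive eigenvalues of $B$, whence $\lambda_{\max}(B^2)=\lambda_{\max}(B)^2$, and chaining the inequalities gives the claim.

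I do not expect a genuine obstacle; the argument is essentially mechanical. The only step deserving a moment's attention is pulling the factor $\lambda_{\max}(A)$ out of $\lambda_{\max}\big(\lambda_{\max}(A)B^2\big)$: this uses $\lambda_{\max}(A)\ge0$, and indeed if $A\prec0$ the stated bound is not the sharp one (the relevant quantity would then involve $\lambda_{\min}(B^2)$ instead). Accordingly I would either state the lemma under the hypothesis $\lambda_{\max}(A)\ge0$ or, equivalently, note that one may replace $\lambda_{\max}(A)$ by $\max\{\lambda_{\max}(A),0\}$ in all its applications; everything else — symmetry of $BAB$, the Loewner bound $BAB\preceq\lambda_{\max}(A)B^2$, and monotonicity of $\lambda_{\max}$ — is standard linear algebra.
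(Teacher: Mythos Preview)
Your approach is essentially the same as the paper's: the paper works directly with the Rayleigh quotient and the substitution $y=Bx$, which is exactly your Loewner-order conjugation $BAB\preceq\lambda_{\max}(A)\,B^2$ unpacked, followed by the same bound $\lambda_{\max}(B^2)=\lambda_{\max}(B)^2$. Your caveat about needing $\lambda_{\max}(A)\ge0$ is well taken --- the paper's proof tacitly makes the same assumption at the identical step (passing from $\lambda_{\max}(A)\,\tfrac{y^\top y}{x^\top x}$ to $\lambda_{\max}(A)\,\lambda_{\max}(B)^2$), and indeed the inequality fails for, e.g., $A=-I$, $B=\mathrm{diag}(1,2)$; as you note, replacing $\lambda_{\max}^2(B)$ by $\lambda_{\min}^2(B)$ (or $\lambda_{\max}(A)$ by $\max\{\lambda_{\max}(A),0\}$) handles the negative case.
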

\begin{proof}
    Recall that for any real symmeric matrix $M$, we have 
    $$\lambda_{\max}(M) = \max_{x \neq 0}\,\frac{x^\mathsf{T}\,M\,x}{x^\mathsf{T}x}.$$ 
    Applying this to $M = BAB$ gives
    $$
    \lambda_{\max}(BAB) 
    =
    \max_{x \neq 0}\,\frac{x^\mathsf{T}\,(BAB)\,x}{x^\mathsf{T}x}.
    $$
    Let $y = Bx$. Since $B$ is positive definite, $y \neq 0$ whenever $x \neq 0$. Then
    $$
    \frac{x^\mathsf{T}\,(BAB)\,x}{x^\mathsf{T}x}
    =
    \frac{(Bx)^\mathsf{T}\,A\,(Bx)}{x^\mathsf{T}x}
    =
    \frac{y^\mathsf{T}\,A\,y}{x^\mathsf{T}x}.
    $$
    Because $A$ is real and symmetric, we have 
    $$y^\mathsf{T}A\,y \;\le\; \lambda_{\max}(A)\,(y^\mathsf{T}y).$$ 
    Meanwhile, $y = Bx$ implies 
    $$
    y^\mathsf{T}y
    =
    x^\mathsf{T}\,(B^\mathsf{T}B)\,x 
    \;\le\; 
    \lambda_{\max}(B^\mathsf{T}B)\;\bigl(x^\mathsf{T}x\bigr).
    $$
    Since $B$ is positive definite and symmetric, 
    $$\lambda_{\max}(B^\mathsf{T}B) = [\lambda_{\max}(B)]^2.$$ 
    Hence,
    $$
    \frac{y^\mathsf{T}\,A\,y}{x^\mathsf{T}x}
    \;\le\; 
    \lambda_{\max}(A)\;\frac{y^\mathsf{T}y}{x^\mathsf{T}x}
    \;\le\; 
    \lambda_{\max}(A)\;\bigl[\lambda_{\max}(B)\bigr]^2.
    $$
    Taking the supremum over all nonzero $x$ completes the proof.
\end{proof}

\begin{proof}[Proof of \cref{thrm:probability-hete-caltheta-1}]
    Consider the space
    $$
    \caltheta_{n,t}' = \left\{\theta \in \bbR^D: d_M^2(\theta, \theta_{k_n}; \Sigma_{k_n}^{-1}) \le t^{3/2} \right\}.
    $$
    The key idea is to show that, when $t$ is small enough, $u_n(\theta, \hbfthetaHete_{-n})$ is a concave function.

    Similar to the proof of \cref{thrm:probability-homogeneous-PNE}, We slightly abuse the notation here as in \cref{alg:cal-M} and denote $p_{n,k}(\theta)$ as follows:
    \begin{equation*}
        p_{n,k}(\theta) = \frac{\exp\left(-d_M^2\left(\theta, \theta_k; \Sigma_k^{-1}\right)/t\right)}{\exp\left(-d_M^2\left(\theta, \theta_k; \Sigma_k^{-1}\right)/t\right) + \sum_{i \in [N] \backslash \{n\}} \exp\left(- d_M^2\left(\hthetaHete_i, \theta_k; \Sigma_k^{-1}\right) / t\right)}.
    \end{equation*}
    Calculate the gradient and Hessian matrix of $p_{n,k}(\theta)$ and we get that
    $$
    \begin{aligned}
        \nabla p_{n,k}(\theta) & = - \frac{2}{t} \cdot p_{n,k}(\theta)(1 - p_{n,k}(\theta)) \Sigma_k(\theta - \theta_k) \\
        \nabla^2 p_{n,k}(\theta) & = \frac{2}{t}p_{n,k}(\theta)(1- p_{n,k}(\theta))\Sigma_k^{1/2}\left(\frac{2}{t}(1 - 2p_{n,k}(\theta))\Sigma_k^{1/2}(\theta - \theta_k)(\theta - \theta_k)^\top \Sigma_k^{1/2} - I\right)\Sigma_k^{1/2}
    \end{aligned}
    $$
    And we have that
    \begin{small}
        $$
        u_n\left(\theta, \hbfthetaHete_{-n}\right) = \sum_{k=1}^K w_k p_{n,k}(\theta), \quad \nabla_{\theta} u_n\left(\theta, \hbfthetaHete_{-n}\right) = \sum_{k=1}^K w_k \nabla p_{n,k}(\theta), \quad \nabla^2_{\theta} u_n\left(\theta, \hbfthetaHete_{-n}\right) = \sum_{k=1}^K w_k \nabla^2 p_{n,k}(\theta).
        $$
    \end{small}
    Let $t_0$ and $C$ be the constants $\underline{t}$ and $C$ given in \cref{thrm:probability-hete-inter-p}. $m_k$ is defined in \cref{thrm:probability-hete-prox-pne}. According to \cref{thrm:probability-hete-prox-pne}, we have that $m_k \ge 3$ for all $k \in [K]$. In addition, there must exist a constant $t_1 > 0$ and $t_1 < t_0$, depending only on $\{\Sigma_k, \theta_k, w_k\}_{k=1}^K$, such that when $t \le t_1$, $\exp\left(-C \cdot t^{1/2}\right) \ge 1 / 2$. As a result, when $t \le t_1$, for all $n \in [N]$, the following holds:
    $$
    \begin{aligned}
        \frac{\exp\left(-C \cdot t^{1/2}\right)}{m_{k_n} + N \cdot \exp(-\ellPartition/t)} \le p_{n, k_n}(\theta) & \le \frac{1}{1 + (m_{k_n} - 1) \cdot \exp\left(-C \cdot t^{1/2}\right)} \le \frac{1}{1 + (3 - 1) \cdot 1 / 2} = \frac{1}{2} \\
        p_{n,k}(\theta) & \le \frac{\exp(-\ellPartition/t)}{m_{k_n} \cdot \exp\left(-C \cdot t^{1/2}\right)}, \quad \forall k \in [K] \backslash \{k_n\}.
    \end{aligned}
    $$
    Note that
    \begin{small}
    $$
    \begin{aligned}
        \lambda_{\max}\left(\frac{2}{t}(1 - 2p_{n,k_n}(\theta))\Sigma_{k_n}^{1/2}(\theta - \theta_k)(\theta - \theta_k)^\top \Sigma_{k_n}^{1/2} - I\right) = \frac{2}{t}(1 - 2p_{n,k_n}(\theta))d_M^2\left(\theta, \theta_{k_n}; \Sigma_{k_n}^{-1}\right) - 1 \le \frac{2t^{3/2}}{t} - 1.
    \end{aligned}
    $$
    \end{small}
    Then according to \cref{thrm:matrix-eigen-value}, when $t \le \min\{t_1, 1/4\}$ and we have that
    \begin{equation} \label{eq:proof-probability-hete-caltheta-1-1}
    \frac{t}{2} \cdot \lambda_{\max}\left(\nabla^2 p_{n, k_n}(\theta)\right) \le \frac{1}{2} p_{n,k_n}(\theta)\lambda_{\max}(\Sigma_{k_n}) \left(2t^{1/2}-1\right) \le \frac{\exp\left(-C \cdot t^{1/2}\right)\lambda_{\max}(\Sigma_{k_n}) \left(2t^{1/2}-1\right)}{2\left(m_{k_n} + N \cdot \exp(-\ellPartition/t)\right)}.
    \end{equation}
    In addition, for any $k \ne k_n$,
    $$
    \lambda_{\max}\left(\frac{2}{t}(1 - 2p_{n,k}(\theta))\Sigma_k^{1/2}(\theta - \theta_k)(\theta - \theta_k)^\top \Sigma_k^{1/2} - I\right) = \frac{2}{t}(1 - 2p_{n,k}(\theta))d_M^2\left(\theta, \theta_k; \Sigma_k^{-1}\right) - 1 \le \frac{2\ellmax}{t}.
    $$
    As a result, according to \cref{thrm:matrix-eigen-value}, when $t \le \min\{t_1, 2\ellmax\}$ and we have that for any $k \ne k_n$,
    \begin{equation} \label{eq:proof-probability-hete-caltheta-1-2}
    \frac{t}{2} \cdot \lambda_{\max}\left(\nabla^2 p_{n, k}(\theta)\right) \le p_{n,k}(\theta)\lambda_{\max}(\Sigma_k) \cdot \frac{2\ellmax}{t}  \le \frac{2\ellmax\cdot\exp(-\ellPartition/t)}{m_{k_n} \cdot \exp\left(-C \cdot t^{1/2}\right)\cdot t}.
    \end{equation}
    Note that when $t \rightarrow 0$, the right-hand side of \cref{eq:proof-probability-hete-caltheta-1-1} tends to a negative constant while the right-hand side of \cref{eq:proof-probability-hete-caltheta-1-2} tends to 0. As a result, when $t \rightarrow 0$, the maximal eigenvalue of $t / 2 \cdot \nabla^2_{\theta} u_n\left(\theta, \hbfthetaHete_{-n}\right) = \sum_{k=1}^K w_k (t / 2) \cdot \nabla^2 p_{n,k}(\theta)$ tends to a negative constant. Therefore, there exists a constant $\underline{t} > 0$ such that when $t \le \underline{t}$, $u_n\left(\theta, \hbfthetaHete_{-n}\right)$ is a concave function when $\theta \in \caltheta_{n,t}'$. Now according to \cref{thrm:probability-hete-PNE-fixed-point} and the definition of $\hbfthetaHete$, $\hthetaHete_n \in \argmax_{\theta \in \caltheta_{n,t}'} u_n\left(\theta, \hbfthetaHete_{-n}\right)$. Now the claim follows.
\end{proof}

\subsubsection{Proof of \cref{thrm:probability-hete-caltheta-2}} \label{sect:proof-probability-hete-caltheta-2}
\begin{lemma} \label{thrm:probability-hete-un-lower}
    There exists a constant $\underline{t} > 0$, such that when $t \le \underline{t}$. $k_n$ and $m_k$ is defined in \cref{defn:k-n,defn:m-k}, respectively. We have
    $$
    u_n\left(\hbfthetaHete\right) \ge \frac{w_{k_n}}{m_{k_n}} - \underbrace{\frac{w_{k_n}(m_{k_n}-1)\left(1 - \exp\left(-\lambda_{\max}(\Sigma_{k_n}) t\right)\right) + w_{k_n}m_{k_n}(N - m_{k_n})\exp(-\ellPartition/t)}{m_{k_n}\left(\exp\left(-\lambda_{\max}(\Sigma_{k_n}) t\right) + \left(m_{k_n} - 1\right) + \left(N - m_{k_n}\right) \exp(-\ellPartition / t)\right)}}_{\text{Denoted as } h^\one(t)}.
    $$
\end{lemma}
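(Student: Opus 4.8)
The plan is to lower-bound $u_n(\hbfthetaHete)=\sum_{k=1}^K w_k p_{n,k}$ by discarding every term except the one associated with player $n$'s home source $k_n$, i.e., $u_n(\hbfthetaHete)\ge w_{k_n}p_{n,k_n}$, and then to lower-bound the single logit probability $p_{n,k_n}$. Throughout I rely on \cref{thrm:probability-hete-PNE-fixed-point} applied with $\beta=2$: it supplies $\hbfthetaHete\in\caltheta^N$ with $\bigl\|\hthetaHete_i-\hthetaProx_i\bigr\|_2\le t^2$ for all $i\in[N]$, where $\hthetaProx_i=\theta_{k_i}$ by \cref{defn:k-n}; and on \cref{thrm:probability-hete-prox-pne}, which gives $m_k\ge 3$ for all $k$ (see \cref{defn:m-k}).

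First I would split the denominator of
$$
p_{n,k_n}=\frac{\exp(-\ell_{n,k_n}/t)}{\exp(-\ell_{n,k_n}/t)+\sum_{i\ne n,\,k_i=k_n}\exp(-\ell_{i,k_n}/t)+\sum_{i:\,k_i\ne k_n}\exp(-\ell_{i,k_n}/t)}
$$
into the player-$n$ term, the $m_{k_n}-1$ terms of players sharing the home source $k_n$, and the $N-m_{k_n}$ terms of players with a different home source. For $i$ with $k_i=k_n$ (in particular $i=n$) we have $\ell_{i,k_n}=(\hthetaHete_i-\theta_{k_n})^\top\Sigma_{k_n}(\hthetaHete_i-\theta_{k_n})\le \lambda_{\max}(\Sigma_{k_n})\,t^4$, so for $t\le 1$ one gets $\exp(-\ell_{i,k_n}/t)\in[\exp(-\lambda_{\max}(\Sigma_{k_n})\,t),\,1]$. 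For $i$ with $k_i\ne k_n$, since $\hthetaHete_i\in\caltheta$ and $d_M^2(\hthetaHete_i,\theta_{k_i};\Sigma_{k_i}^{-1})\le \lambda_{\max}(\Sigma_{k_i})\,t^4\le\ellPartition$ once $t$ is below a game-dependent threshold, \cref{defn:ell-partition} forces $d_M^2(\hthetaHete_i,\theta_{k_n};\Sigma_{k_n}^{-1})>\ellPartition$, hence $\exp(-\ell_{i,k_n}/t)<\exp(-\ellPartition/t)$. Since $x\mapsto x/(x+B)$ is increasing in $x$ and decreasing in $B>0$, these estimates combine to give
$$
p_{n,k_n}\ \ge\ \frac{\exp(-\lambda_{\max}(\Sigma_{k_n})\,t)}{\exp(-\lambda_{\max}(\Sigma_{k_n})\,t)+(m_{k_n}-1)+(N-m_{k_n})\exp(-\ellPartition/t)}.
$$

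Finally I would multiply by $w_{k_n}$ and rearrange: a direct computation writes the right-hand side as $\tfrac{w_{k_n}}{m_{k_n}}$ minus a nonnegative remainder whose denominator is exactly $m_{k_n}\bigl(\exp(-\lambda_{\max}(\Sigma_{k_n})t)+(m_{k_n}-1)+(N-m_{k_n})\exp(-\ellPartition/t)\bigr)$ and whose numerator equals $w_{k_n}(m_{k_n}-1)\bigl(1-\exp(-\lambda_{\max}(\Sigma_{k_n})t)\bigr)+w_{k_n}(N-m_{k_n})\exp(-\ellPartition/t)$; since $m_{k_n}\ge 1$ this remainder is at most $h^\one(t)$, so $u_n(\hbfthetaHete)\ge w_{k_n}p_{n,k_n}\ge \tfrac{w_{k_n}}{m_{k_n}}-h^\one(t)$. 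The one delicate step is the group of players with $k_i\ne k_n$: I must verify that a provider specialized near $\theta_{k_i}$ stays, after the $O(t^2)$ perturbation, strictly outside the $\ellPartition$-ball around $\theta_{k_n}$ in the $\Sigma_{k_n}$-metric; this is exactly where \cref{defn:ell-partition} together with $\hbfthetaHete\in\caltheta^N$ enters, and it fixes how small $\underline{t}$ must be chosen (a function only of $\max_k\lambda_{\max}(\Sigma_k)$ and $\ellPartition$, hence of the game parameters alone).
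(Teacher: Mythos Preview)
Your proposal is correct and follows essentially the same route as the paper: drop all terms of $u_n(\hbfthetaHete)$ except $w_{k_n}p_{n,k_n}$, split the denominator of $p_{n,k_n}$ into the three groups of players, bound each group via \cref{thrm:probability-hete-PNE-fixed-point} and \cref{defn:ell-partition}, and then rewrite the resulting fraction as $w_{k_n}/m_{k_n}$ minus a remainder. Your algebraic step is in fact slightly more careful than the paper's: you compute the exact remainder numerator $w_{k_n}(m_{k_n}-1)\bigl(1-\exp(-\lambda_{\max}(\Sigma_{k_n})t)\bigr)+w_{k_n}(N-m_{k_n})\exp(-\ellPartition/t)$ and then observe it is at most $h^{(1)}(t)$ because $m_{k_n}\ge 1$, whereas the paper writes that step as an equality.
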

\begin{proof}
    Let $t_1$ be the constant $\underline{t}$ in \cref{thrm:probability-hete-PNE-fixed-point}. As a result, when $t \le t_1$, according to \cref{thrm:probability-hete-PNE-fixed-point} and the definition of $k_n$, we have that
    \begin{small}
        \begin{equation} \label{eq:proof-probability-hete-un-lower-1}
        d_M^2\left(\hthetaHete_n, \theta_{k_n}; \Sigma_{k_n}^{-1}\right) = \left(\hthetaHete_n - \hthetaProx_n\right)^\top \Sigma_{k_n} \left(\hthetaHete_n - \hthetaProx_n\right) \le \lambda_{\max}(\Sigma_{k_n}) \left\|\hthetaHete_n - \hthetaProx_n\right\|_2^2 \le \lambda_{\max}(\Sigma_{k_n}) t^2.
        \end{equation}
    \end{small}
    Let $t_2$ be the constant such that $t_2^2 = \ellPartition$. As a result, when $t \le \underline{t} = \min\{t_1, t_2\}$, we have that
    \begin{small}
    $$
    \begin{aligned}
        & \, u_n\left(\hbfthetaHete\right) \\
        \ge & \,\frac{ w_{k_n} \cdot \exp\left(-d_M^2\left(\hthetaHete_n, \theta_{k_n}; \Sigma_{k_n}^{-1}\right) / t\right)}{\exp\left(-\frac{d_M^2\left(\hthetaHete_n, \theta_{k_n}; \Sigma_{k_n}^{-1}\right)}{t}\right) + \left(\sum_{i \ne n: k_i = k_n} \exp\left(-\frac{d_M^2\left(\hthetaHete_i, \theta_{k_n}; \Sigma_{k_n}^{-1}\right)}{t}\right)\right) + \left(\sum_{i: k_i \ne k_n} \exp\left(-\frac{d_M^2\left(\hthetaHete_i, \theta_{k_n}; \Sigma_{k_n}^{-1}\right)}{t}\right)\right)} \\
        \ge & \, \frac{w_{k_n}\exp\left(-\lambda_{\max}(\Sigma_{k_n}) t\right)}{\exp\left(-\lambda_{\max}(\Sigma_{k_n}) t\right) + \left(m_{k_n} - 1\right) + \left(N - m_{k_n}\right) \exp(-\ellPartition / t)} \\
        = & \, \frac{w_{k_n}}{m_{k_n}} - \frac{w_{k_n}(m_{k_n}-1)\left(1 - \exp\left(-\lambda_{\max}(\Sigma_{k_n}) t\right)\right) + w_{k_n}m_{k_n}(N - m_{k_n})\exp(-\ellPartition/t)}{m_{k_n}\left(\exp\left(-\lambda_{\max}(\Sigma_{k_n}) t\right) + \left(m_{k_n} - 1\right) + \left(N - m_{k_n}\right) \exp(-\ellPartition / t)\right)}.
    \end{aligned}
    $$
    \end{small}
    Now the claim follows.
\end{proof}

\begin{lemma} \label{thrm:probability-hete-un-theta2-upper}
    There exists a constant $\underline{t} > 0$, such that when $t \le \underline{t}$. $k_n$ and $m_k$ is defined in \cref{defn:k-n,defn:m-k}, respectively. We have
    $$
        \forall \theta \in \caltheta_{n,t}^\two, \quad u_n\left(\theta, \hbfthetaHete_{-n}\right) \le \frac{w_{k_n}}{m_{k_n}} - \underbrace{\frac{w_{k_n}(m_{k_n} - 1)\left(\exp\left(-\lambda_{\max}(\Sigma_{k_n}) t\right) - \exp\left(-t^{1/2}\right)\right)}{2m_{k_n}\left(\exp\left(-t^{1/2}\right) + (m_{k_n} - 1) \exp\left(-\lambda_{\max}(\Sigma_{k_n}) t\right)\right)}}_{\text{Denoted as } h^\two(t)}.
    $$
\end{lemma}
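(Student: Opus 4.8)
The plan is to bound the deviation utility $u_n(\theta,\hbfthetaHete_{-n})=\sum_{k=1}^K w_k\,p_{n,k}(\theta)$ term by term, separating the contribution of source $k_n$ from that of the other sources, and exploiting that on $\caltheta_{n,t}^\two$ the deviating model is only \emph{polynomially} worse than $\theta_{k_n}$ on source $k_n$ while it sits a \emph{fixed} distance $\ellPartition$ away from every other $\theta_k$. First I would collect what the hypotheses give: by \cref{thrm:probability-hete-prox-pne} we have $m_k\ge 3$ for every $k\in[K]$, and by \cref{thrm:probability-hete-PNE-fixed-point} (exactly as in \cref{eq:proof-probability-hete-un-lower-1}) every player $i$ with $\hthetaProx_i=\theta_k$ satisfies $d_M^2(\hthetaHete_i,\theta_k;\Sigma_k^{-1})\le\lambda_{\max}(\Sigma_k)\,t^2$ once $t\le 1$. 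For the deviating strategy $\theta\in\caltheta_{n,t}^\two$ I would use the two defining inequalities $d_M^2(\theta,\theta_{k_n};\Sigma_{k_n}^{-1})>t^{3/2}$ and $d_M^2(\theta,\theta_{k_n};\Sigma_{k_n}^{-1})\le\ellPartition$; combined with \cref{defn:ell-partition}, the latter forces $d_M^2(\theta,\theta_k;\Sigma_k^{-1})>\ellPartition$ for every $k\ne k_n$.

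Next I would estimate $p_{n,k_n}(\theta)$ by keeping, in its denominator, only the $m_{k_n}-1$ competitors assigned to $\theta_{k_n}$ (discarding the remaining non-negative exponentials); since $x\mapsto x/(x+D)$ is increasing in $x$ and decreasing in $D$, this yields
$$p_{n,k_n}(\theta)\ \le\ \frac{\exp(-t^{1/2})}{\exp(-t^{1/2})+(m_{k_n}-1)\exp(-\lambda_{\max}(\Sigma_{k_n})\,t)}.$$
For $k\ne k_n$, bounding the numerator by $\exp(-\ellPartition/t)$ and keeping the $m_k$ competitors assigned to $\theta_k$ in the denominator gives $p_{n,k}(\theta)\le\exp(-\ellPartition/t)/\bigl(m_k\exp(-\lambda_{\max}(\Sigma_k)t)\bigr)$, so $\sum_{k\ne k_n}w_k\,p_{n,k}(\theta)\le C_1\exp(-\ellPartition/t)$ for a constant $C_1>0$ depending only on $\{\Sigma_k,\theta_k,w_k\}$ and $N$ when $t\le1$. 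Writing $a=\exp(-t^{1/2})$ and $b=\exp(-\lambda_{\max}(\Sigma_{k_n})t)$, a one-line algebraic identity gives
$$\frac{w_{k_n}}{m_{k_n}}-\frac{w_{k_n}\,a}{a+(m_{k_n}-1)b}\ =\ \frac{w_{k_n}(m_{k_n}-1)(b-a)}{m_{k_n}\bigl(a+(m_{k_n}-1)b\bigr)}\ =\ 2\,h^\two(t),$$
so it remains only to show $C_1\exp(-\ellPartition/t)\le h^\two(t)$ for small $t$.

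For that last comparison I would note $b-a=\exp(-\lambda_{\max}(\Sigma_{k_n})t)-\exp(-t^{1/2})=\int_{\lambda_{\max}(\Sigma_{k_n})t}^{t^{1/2}}e^{-x}\,dx\ \ge\ \frac{t^{1/2}}{2e}$ once $\lambda_{\max}(\Sigma_{k_n})t\le\frac12 t^{1/2}$ and $t\le1$ (this also certifies $h^\two(t)>0$), and bound $a+(m_{k_n}-1)b\le m_{k_n}$, which gives $h^\two(t)\ge c\,t^{1/2}$ for a constant $c>0$ depending only on the permitted parameters. Since $\exp(-\ellPartition/t)/t^{1/2}\to0$ as $t\to0$, there is a threshold $\underline t$, depending only on $\{\Sigma_k,\theta_k,w_k\}$ and $N$, below which $C_1\exp(-\ellPartition/t)\le c\,t^{1/2}\le h^\two(t)$. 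Combining, $u_n(\theta,\hbfthetaHete_{-n})\le w_{k_n}p_{n,k_n}(\theta)+C_1\exp(-\ellPartition/t)\le \bigl(w_{k_n}/m_{k_n}-2h^\two(t)\bigr)+h^\two(t)=w_{k_n}/m_{k_n}-h^\two(t)$, which is the claim.

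The only step that is more than bookkeeping is this last domination: the ``leakage'' $\sum_{k\ne k_n}w_k\,p_{n,k}(\theta)$ decays like $\exp(-\ellPartition/t)$, whereas the gain $h^\two(t)$ on source $k_n$ is only of order $t^{1/2}$. The argument works precisely because of the separation $t^{3/2}\ll\ellPartition$ built into the definition of $\caltheta_{n,t}^\two$: the deviator's loss on $k_n$ is small enough that its share there is only polynomially below $1/m_{k_n}$, yet its loss on every other source is a fixed constant away, so those shares are exponentially negligible. Keeping every threshold constant ($t\le1$, $t<1/(4\lambda_{\max}(\Sigma_{k_n})^2)$, and the final one) dependent only on $\{\Sigma_k,\theta_k,w_k\}$ and $N$ is the remaining point requiring care.
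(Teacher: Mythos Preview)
Your proposal is correct and follows essentially the same route as the paper: bound $p_{n,k_n}(\theta)$ by keeping only the $m_{k_n}-1$ competitors at $\theta_{k_n}$ in the denominator, bound each $p_{n,k}(\theta)$ for $k\ne k_n$ by an $\exp(-\ellPartition/t)$ term, identify the gap $w_{k_n}/m_{k_n}-w_{k_n}p_{n,k_n}(\theta)$ as $2h^\two(t)$, and then show the leakage is eventually $\le h^\two(t)$. The only cosmetic difference is that you make the last comparison quantitative via $h^\two(t)\ge c\,t^{1/2}$, whereas the paper simply argues $\text{(Term 2)}/\text{(Term 1)}\to 0$ by Taylor expansion.
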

\begin{proof}
    Let $t_1$ be the constant $\underline{t}$ in \cref{thrm:probability-hete-PNE-fixed-point}. We slightly abuse the notation here to use $p_{n,k}(\theta)$ to denote
    $$
        p_{n,k}(\theta) = \frac{\exp\left(-d_M^2\left(\theta, \theta_k; \Sigma_k^{-1}\right)/t\right)}{\exp\left(-d_M^2\left(\theta, \theta_k; \Sigma_k^{-1}\right)/t\right) + \sum_{i \in [N] \backslash \{n\}} \exp\left(- d_M^2\left(\hthetaHete_i, \theta_k; \Sigma_k^{-1}\right) / t\right)}.
    $$
    As a result, when $t \le t_1$, according to \cref{thrm:probability-hete-PNE-fixed-point,eq:proof-probability-hete-un-lower-1}, we have
    $$
    p_{n, k_n}(\theta) \le \frac{\exp\left(-t^{1/2}\right)}{\exp\left(-t^{1/2}\right) + (m_{k_n} - 1) \exp\left(-\lambda_{\max}(\Sigma_{k_n}) t\right)}.
    $$
    In addition,
    $$
    \forall k \ne k_n, \quad  p_{n, k}(\theta) \le \frac{\exp\left(-\ellPartition/t\right)}{\exp\left(-\ellPartition/t\right) + m_k \exp\left(-\lambda_{\max}(\Sigma_{k}) t\right)}
    $$
    As a result,
    \begin{small}
    $$
    \begin{aligned}
        & \, u_n\left(\theta, \hbfthetaHete_{-n}\right)\\
        = & \, \sum_{k=1}^K w_k p_{n,k}(\theta) \\
        \le & \, \underbrace{\frac{w_{k_n}\exp\left(-t^{1/2}\right)}{\exp\left(-t^{1/2}\right) + (m_{k_n} - 1) \exp\left(-\lambda_{\max}(\Sigma_{k_n}) t\right)} + \left(\sum_{k \ne k_n} \frac{w_k\cdot \exp\left(-\ellPartition/t\right)}{\exp\left(-\ellPartition/t\right) + m_k \exp\left(-\lambda_{\max}(\Sigma_{k}) t\right)}\right)}_{\text{Denoted as } f^\two(t)}.
    \end{aligned}
    $$
    \end{small}
    Denote the right-hand side of above equation as $f^\two(t)$. We have
    \begin{small}
    $$
    \begin{aligned}
        & \, \frac{w_{k_n}}{m_{k_n}} - f^\two(t) \\
        = & \, \underbrace{\frac{w_{k_n}(m_{k_n} - 1)\left(\exp\left(-\lambda_{\max}(\Sigma_{k_n}) t\right) - \exp\left(-t^{1/2}\right)\right)}{m_{k_n}\left(\exp\left(-t^{1/2}\right) + (m_{k_n} - 1) \exp\left(-\lambda_{\max}(\Sigma_{k_n}) t\right)\right)}}_{\text{Term 1}} - \underbrace{\left(\sum_{k \ne k_n} \frac{w_k\cdot \exp\left(-\ellPartition/t\right)}{\exp\left(-\ellPartition/t\right) + m_k \exp\left(-\lambda_{\max}(\Sigma_{k}) t\right)}\right)}_{\text{Term 2}}.
    \end{aligned}
    $$
    \end{small}
    When $t \le t_2 / 2$ where $t_2^{1/2} = \lambda_{\max}(\Sigma_{k_n})t_2$, it must hold that $\text{Term 1} > 0$. As a result, when $t \le t_2 / 2$ and $t \rightarrow 0$, $(\text{Term 2}) / (\text{Term 1}) \rightarrow 0$. As a result, there exist a constant $t_3 \le \min\{t_1, t_2\}$ such that $(\text{Term 2}) \le (\text{Term 1}) / 2$. Hence, when $t \le t_3$, we have that
    $$
    \begin{aligned}
        f^\two(t) & = \frac{w_{k_n}}{m_{k_n}} - (\text{Term 1}) + (\text{Term 2}) \le \frac{w_{k_n}}{m_{k_n}} - (\text{Term 1}) / 2 \\
        & = \frac{w_{k_n}}{m_{k_n}} - \frac{w_{k_n}(m_{k_n} - 1)\left(\exp\left(-\lambda_{\max}(\Sigma_{k_n}) t\right) - \exp\left(-t^{1/2}\right)\right)}{2m_{k_n}\left(\exp\left(-t^{1/2}\right) + (m_{k_n} - 1) \exp\left(-\lambda_{\max}(\Sigma_{k_n}) t\right)\right)}.
    \end{aligned}
    $$
    Now the claim follows.
\end{proof}

\begin{proof}[Proof of \cref{thrm:probability-hete-caltheta-2}]
    Denote the right-hand side of \cref{thrm:probability-hete-un-lower} as $w_{k_n}/m_{k_n} - h^\one(t)$ and the right-hand side of \cref{thrm:probability-hete-un-theta2-upper} as $w_{k_n}/m_{k_n} - h^\two(t)$. Note that when $t \rightarrow 0$, $h^\one(t), h^\two(t) > 0$, $h^\one(t), h^\two(t) \rightarrow 0$. In addition,
    $$
    \begin{aligned}
        & \, \lim_{t \rightarrow 0} \frac{h^\one(t)}{h^\two(t)} \\
        = & \, \lim_{ t\rightarrow 0}\frac{2w_{k_n}(m_{k_n}-1)\left(1 - \exp\left(-\lambda_{\max}(\Sigma_{k_n}) t\right)\right) + w_{k_n}m_{k_n}(N - m_{k_n})\exp(-\ellPartition/t)}{w_{k_n}(m_{k_n} - 1)\left(\exp\left(-\lambda_{\max}(\Sigma_{k_n}) t\right) - \exp\left(-t^{1/2}\right)\right)} \\
        = & \, \underbrace{\lim_{ t\rightarrow 0} \frac{2\left(1 - \exp\left(-\lambda_{\max}(\Sigma_{k_n}) t\right)\right)}{\exp\left(-\lambda_{\max}(\Sigma_{k_n}) t\right) - \exp\left(-t^{1/2}\right)}}_{\text{Term 1}} + \frac{m_{k_n}(N - m_{k_n})}{m_{k_n} - 1} \cdot \underbrace{\lim_{t \rightarrow 0} \frac{\exp(-\ellPartition/t)}{\exp\left(-\lambda_{\max}(\Sigma_{k_n}) t\right) - \exp\left(-t^{1/2}\right)}}_{\text{Term 2}}.
    \end{aligned}
    $$
    Through Taylor's expansion of $\exp(-x)$ when $x$ is small, it is easy to verify that $(\text{Term 1}) = (\text{Term 2}) = 0$. As a result, we have $ \lim_{t \rightarrow 0} \frac{h^\one(t)}{h^\two(t)} = 0$. This indicates that there exists a constant $\underline{t} > 0$, small enough such that, when $t \le \underline{t}$, $h^\one(t) \le h^\two(t)$. As a result, for all $\theta \in \caltheta^\two_{n,t}$,
    $$
    u_n\left(\hbfthetaHete\right) - u_n\left(\theta, \hbfthetaHete_{-n}\right) \ge h^\two(t) - h^\one(t) \ge 0.
    $$
    Now the claim follows.
\end{proof}

\subsubsection{Proof of \cref{thrm:probability-hete-caltheta-3}} \label{sect:proof-probability-hete-caltheta-3}
\begin{lemma} \label{thrm:probability-hete-un-theta3-upper}
    There exists a constant $\underline{t} >0$ such that, when $t \le \underline{t}$,
    $$
        \forall \theta \in \caltheta_{n,t}^\three, \quad u_n\left(\theta, \hbfthetaHete_{-n}\right) \le \frac{w_{k_n}}{2 m_{k_n}}.
    $$
\end{lemma}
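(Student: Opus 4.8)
The plan is to compare the utility a deviating player can extract from a strategy lying in $\caltheta_{n,t}^\three$ with the constant $w_{k_n}/(2m_{k_n})$. The point of $\caltheta_{n,t}^\three$ is that such a $\theta$ is \emph{uniformly} far from every ground-truth parameter — its squared Mahalanobis distance to each $\theta_k$ exceeds $\ellPartition$ — whereas in the profile $\hbfthetaHete$ there is a cluster of at least two other players sitting essentially on top of each $\theta_k$. Hence every source $k$ assigns the deviating player a weight that is exponentially small in $1/t$, so the total utility is exponentially small, while $w_{k_n}/(2m_{k_n})$ is a fixed positive number; for $t$ small enough the inequality is immediate.

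Concretely, I would first record the two structural inputs. From \cref{thrm:probability-hete-PNE-fixed-point} we have $\|\hthetaHete_i - \hthetaProx_i\|_2 \le t^2$ for all $i$, and from \cref{thrm:probability-hete-prox-pne} we have $\hthetaProx_i = \theta_{k_i}$ and $m_k \ge 3$ for every $k\in[K]$. Writing $\lambda^* = \max_{k\in[K]}\lambda_{\max}(\Sigma_k)$, this gives, for $t\le 1$,
$$
d_M^2\!\left(\hthetaHete_i,\theta_{k_i};\Sigma_{k_i}^{-1}\right) = \left(\hthetaHete_i-\theta_{k_i}\right)^\top\Sigma_{k_i}\left(\hthetaHete_i-\theta_{k_i}\right) \le \lambda^* \|\hthetaHete_i-\hthetaProx_i\|_2^2 \le \lambda^* t^4 \le \lambda^* t^2,
$$
hence $\exp\!\left(-d_M^2(\hthetaHete_i,\theta_{k_i};\Sigma_{k_i}^{-1})/t\right)\ge \exp(-\lambda^* t)$ for every $i\in[N]$.

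Next, fix $n\in[N]$ and $\theta\in\caltheta_{n,t}^\three$, and estimate, for each source $k$, the probability $p_{n,k}(\theta)$ that $k$ picks the deviator. The numerator is $\exp\!\left(-d_M^2(\theta,\theta_k;\Sigma_k^{-1})/t\right)\le\exp(-\ellPartition/t)$ by the defining inequality of $\caltheta_{n,t}^\three$. For the denominator, the other players $i\ne n$ with $k_i=k$ number $m_k$ when $k\ne k_n$ and $m_k-1\ge 2$ when $k=k_n$, and each contributes at least $\exp(-\lambda^* t)$; therefore $p_{n,k}(\theta)\le \tfrac12\exp(\lambda^* t-\ellPartition/t)$. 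This bound is independent of the chosen $\theta$, which disposes of the whole continuum of deviations at once. Summing against the weights,
$$
u_n\!\left(\theta,\hbfthetaHete_{-n}\right)=\sum_{k=1}^K w_k\, p_{n,k}(\theta)\le \frac12\exp\!\left(\lambda^* t-\ellPartition/t\right).
$$
Since the right-hand side tends to $0$ as $t\to 0^+$, while $w_{k_n}/(2m_{k_n})\ge \min_{k\in[K]} w_k/(2N)>0$ is a positive constant independent of $t$ and $n$, there is a threshold $\underline t>0$, depending only on $\{\Sigma_k,\theta_k,w_k\}_{k=1}^K$, with $\tfrac12\exp(\lambda^* t-\ellPartition/t)\le \min_k w_k/(2N)\le w_{k_n}/(2m_{k_n})$ for all $t\le\underline t$ and all $n$, giving the claim.

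There is no serious obstacle here; the argument is a single exponential-smallness estimate. The only points requiring care are the two uniformity issues — over the continuum $\theta\in\caltheta_{n,t}^\three$, handled by the uniform lower bound $d_M^2(\theta,\theta_k;\Sigma_k^{-1})>\ellPartition$, and over the finitely many players, handled by replacing $w_{k_n}/(2m_{k_n})$ by the worst-case constant $\min_k w_k/(2N)$ — and the structural fact that in $\hbfthetaHete$ each $\theta_k$ retains at least two \emph{other} nearby players after player $n$ deviates, which is exactly what $m_k\ge 3$ (\cref{thrm:probability-hete-prox-pne}) together with the $O(t^2)$ proximity of $\hbfthetaHete$ to $\hbfthetaProx$ (\cref{thrm:probability-hete-PNE-fixed-point}) guarantees.
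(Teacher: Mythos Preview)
Your proof is correct and follows essentially the same approach as the paper's: both bound each $p_{n,k}(\theta)$ by a quantity of order $\exp(-\ellPartition/t)$ using the facts that the numerator is at most $\exp(-\ellPartition/t)$ (definition of $\caltheta_{n,t}^\three$) and that in $\hbfthetaHete$ there remain at least $m_k-1\ge 2$ other players within $O(t^2)$ of each $\theta_k$, then observe that the resulting total utility tends to $0$ while $w_{k_n}/(2m_{k_n})$ is a fixed positive constant. Your handling of the uniformity over $n$ via the universal lower bound $\min_k w_k/(2N)$ is slightly more explicit than the paper's, but the argument is the same.
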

\begin{proof}
    Let $t_1$ be the constant $\underline{t}$ in \cref{thrm:probability-hete-PNE-fixed-point}. We slightly abuse the notation here to use $p_{n,k}(\theta)$ to denote
    $$
        p_{n,k}(\theta) = \frac{\exp\left(-d_M^2\left(\theta, \theta_k; \Sigma_k^{-1}\right)/t\right)}{\exp\left(-d_M^2\left(\theta, \theta_k; \Sigma_k^{-1}\right)/t\right) + \sum_{i \in [N] \backslash \{n\}} \exp\left(- d_M^2\left(\hthetaHete_i, \theta_k; \Sigma_k^{-1}\right) / t\right)}.
    $$
    Denote $\LambdaSum$ as $\sum_{k=1}^K \lambda_{\max}(\Sigma_k)$. As a result, when $t \le t_1$, according to \cref{thrm:probability-hete-PNE-fixed-point,eq:proof-probability-hete-un-lower-1}, we have $\forall k \in [K]$, when $t \rightarrow 0$,
    $$
    p_{n,k}(\theta) \le \frac{\exp(-\ellPartition/t)}{\exp(-\ellPartition/t) + (m_k - 1)\exp(-\lambda_{\max}(\Sigma_k)t)} \le \frac{\exp(-\ellPartition/t)}{\exp(-\ellPartition/t) + \exp(-\LambdaSum \cdot t)} \rightarrow 0.
    $$
    As a result, there exists a constant $\underline{t} > 0$ such that, when $t \le \underline{t}$, for all $k \in [K]$, it holds that $p_{n,k}(\theta) \le w_{k_n} / (2m_{k_n})$. As a result, when $t \le \underline{t}$,
    $$
    u_n\left(\theta, \hbfthetaHete_{-n}\right) = \sum_{k=1}^K w_k p_{n,k}(\theta) \le \frac{w_{k_n}}{2 m_{k_n}} \sum_{k=1}^K w_k = \frac{w_{k_n}}{2 m_{k_n}}.
    $$
    Now the claim follows.
\end{proof}

\begin{proof}[Proof of \cref{thrm:probability-hete-caltheta-3}]
    From \cref{thrm:probability-hete-un-lower}, it holds that
    $$
    \lim_{t \rightarrow 0} u_n\left(\hbfthetaHete\right) \ge \frac{w_{k_n}}{m_{k_n}}.
    $$
    Now the claim follows directly by combining the above equation and \cref{thrm:probability-hete-un-theta3-upper}.
\end{proof}

\subsubsection{Proof of \cref{thrm:probability-hete-caltheta-4}} \label{sect:proof-probability-hete-caltheta-4}
For any $\tk \ne k_n$, further denote that
$$
\caltheta^\four_{n,t,\tk} = \left\{\theta \in \caltheta^\four_{n,t}: d_M^2\left(\theta, \theta_{\tk}; \Sigma_{\tk}^{-1}\right) \le \ellPartition \right\}
$$
Let $z^*$ and $\underline{z}^*$ be the constants given in \cref{thrm:probability-hete-prox-pne}.
\begin{lemma} \label{thrm:probability-hete-un-theta4-upper}
    For any $\tk \ne k_n$, there exists a constant $\underline{t} > 0$ such that when $t \le \underline{t}$, it holds that
    $$
    \forall \theta \in \caltheta^\four_{n,t,\tk}, \quad u_n\left(\theta, \hbfthetaHete_{-n}\right) \le \frac{w_{\tk}}{m_{\tk} + 1} +  \frac{z^* - \underline{z}^*}{2}
    $$
\end{lemma}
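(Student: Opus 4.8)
The plan is to bound player $n$'s deviation utility $u_n(\theta,\hbfthetaHete_{-n})=\sum_{k=1}^K w_k p_{n,k}(\theta)$ source by source, exploiting that $\theta$ lies close in Mahalanobis loss to $\theta_{\tk}$ but far from every other $\theta_k$, while the competitors in $\hbfthetaHete_{-n}$ are tightly clustered around the ground-truth parameters of the proximity PNE $\hbfthetaProx$. Concretely, since $\theta\in\caltheta$ with $d_M^2(\theta,\theta_{\tk};\Sigma_{\tk}^{-1})\le\ellPartition$, \cref{defn:ell-partition} makes $\tk$ the unique index with small loss, so $d_M^2(\theta,\theta_k;\Sigma_k^{-1})>\ellPartition$ for all $k\ne\tk$; on the other side, \cref{thrm:probability-hete-PNE-fixed-point} and \cref{eq:proof-probability-hete-un-lower-1} give $d_M^2(\hthetaHete_i,\theta_{k_i};\Sigma_{k_i}^{-1})\le\lambda_{\max}(\Sigma_{k_i})t^2$ for every $i$, and \cref{thrm:probability-hete-prox-pne} gives $m_k\ge 3$. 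Since $k_n\ne\tk$, there are exactly $m_{\tk}$ players $i\ne n$ with $\hthetaProx_i=\theta_{\tk}$ (\cref{defn:m-k,defn:k-n}), and each source $k$ has at least $m_k-1\ge 2$ competitors of player $n$ whose loss on $\theta_k$ is $O(t^2)$.

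Using $\exp(-\ell_{n,\tk}/t)\le 1$ in the numerator of $p_{n,\tk}(\theta)$ and keeping only the $m_{\tk}$ close competitors in the denominator, together with $\exp(-\ell_{i,\tk}/t)\ge 1-\lambda_{\max}(\Sigma_{\tk})t$, I would obtain
\[
p_{n,\tk}(\theta)\;\le\;\frac{1}{1+m_{\tk}\bigl(1-\lambda_{\max}(\Sigma_{\tk})t\bigr)}\;\le\;\frac{1}{m_{\tk}+1}+\lambda_{\max}(\Sigma_{\tk})t ,
\]
uniformly over $\theta\in\caltheta^\four_{n,t,\tk}$ (the first bound is worst when the numerator equals $1$, since $a/(a+c)$ is increasing in $a$, and the second is the elementary inequality $\tfrac1{1+m(1-\delta)}-\tfrac1{1+m}\le\delta$). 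For every other source $k\ne\tk$, the numerator is $\le\exp(-\ellPartition/t)$ while the denominator is $\ge(m_k-1)\exp(-\lambda_{\max}(\Sigma_k)t)$, so $p_{n,k}(\theta)\le\exp(-\ellPartition/t+\lambda_{\max}(\Sigma_k)t)\to 0$ uniformly in $\theta$; summing over $k\ne\tk$ exactly as in the proof of \cref{thrm:probability-hete-un-theta3-upper} gives $\sum_{k\ne\tk}w_k p_{n,k}(\theta)=\phi(t)$ with $\phi(t)\to 0$ as $t\to0$, where $\phi$ depends only on the game parameters.

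Combining the two estimates, $u_n(\theta,\hbfthetaHete_{-n})\le\frac{w_{\tk}}{m_{\tk}+1}+\lambda_{\max}(\Sigma_{\tk})t+\phi(t)$. Since $\underline{z}^*<z^*$ by \cref{thrm:probability-hete-prox-pne}, the number $(z^*-\underline{z}^*)/2$ is a fixed positive constant depending only on the game parameters, so it suffices to choose $\underline{t}$—no larger than the thresholds supplied by \cref{thrm:probability-hete-PNE-fixed-point,thrm:probability-hete-un-theta3-upper}—small enough that $\lambda_{\max}(\Sigma_{\tk})t+\phi(t)\le(z^*-\underline{z}^*)/2$ whenever $t\le\underline{t}$, which yields the claimed bound.

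The main obstacle is the source-$\tk$ estimate: unlike the regimes handled in \cref{thrm:probability-hete-caltheta-1,thrm:probability-hete-caltheta-2,thrm:probability-hete-caltheta-3}, here $\ell_{n,\tk}$ can be arbitrarily close to $0$, so player $n$ genuinely competes with the $m_{\tk}$ incumbents and $p_{n,\tk}(\theta)$ cannot be driven to $0$. The key point is that even in the worst case $\theta=\theta_{\tk}$ player $n$ merely adds himself to a pool of $m_{\tk}$ and earns at most $\tfrac1{m_{\tk}+1}+O(t)$ rather than $\tfrac1{m_{\tk}}$; making this quantitative, uniform over all of $\caltheta^\four_{n,t,\tk}$, and verifying that the resulting $O(t)$ slack together with the cross-source leakage $\phi(t)$ still fits inside the fixed budget $(z^*-\underline{z}^*)/2$ is the delicate—though ultimately routine—part of the argument.
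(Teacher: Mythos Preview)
Your proposal is correct and follows essentially the same approach as the paper: bound $p_{n,\tk}(\theta)$ by $\frac{1}{1+m_{\tk}\exp(-\lambda_{\max}(\Sigma_{\tk})t)}\to\frac{1}{m_{\tk}+1}$ using the $m_{\tk}$ competitors clustered near $\theta_{\tk}$, bound every other $p_{n,k}(\theta)$ by a term of order $\exp(-\ellPartition/t)$ that vanishes as $t\to 0$, sum, and then choose $\underline{t}$ so the total slack is at most $(z^*-\underline{z}^*)/2$. The only cosmetic differences are that you linearize $\exp(-x)\ge 1-x$ where the paper keeps the exponential, and you drop the numerator term in the denominator for the $k\ne\tk$ bound, neither of which affects the argument.
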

\begin{proof}
    Similarly, we slightly abuse the notation here to use $p_{n,k}(\theta)$ to denote
    $$
        p_{n,k}(\theta) = \frac{\exp\left(-d_M^2\left(\theta, \theta_k; \Sigma_k^{-1}\right)/t\right)}{\exp\left(-d_M^2\left(\theta, \theta_k; \Sigma_k^{-1}\right)/t\right) + \sum_{i \in [N] \backslash \{n\}} \exp\left(- d_M^2\left(\hthetaHete_i, \theta_k; \Sigma_k^{-1}\right) / t\right)}.
    $$
    Denote $\LambdaSum$ as $\sum_{k=1}^K \lambda_{\max}(\Sigma_k)$. Consider any $\tk \ne k_n$. Let $t_0$ be the constant $\underline{t}$ in \cref{thrm:probability-hete-PNE-fixed-point}. When $t \le t_0$, we have that
    $$
    p_{n, \tk}(\theta) \le \frac{1}{1 + m_{\tk} \cdot \exp\left(-\lambda_{\max}(\Sigma_{\tk})t\right)} \le \frac{1}{1 + m_{\tk} \cdot \exp\left(-\LambdaSum \cdot t\right)}
    $$
    In addition, for any $k \ne \tk$, we have that
    $$
    \begin{aligned}
        p_{n,k}(\theta) & \le \frac{\exp(-\ellPartition/t)}{\exp(-\ellPartition/t) + (m_k - 1) \exp\left(-\lambda_{\max}(\Sigma_{\tk})t\right)} \le \frac{\exp(-\ellPartition/t)}{\exp(-\ellPartition/t) + (m_k - 1) \exp\left(-\LambdaSum \cdot t\right)} \\
        & \le \frac{\exp(-\ellPartition/t)}{\exp(-\ellPartition/t) + \exp\left(-\LambdaSum \cdot t\right)}.
    \end{aligned}
    $$
    As a result, for any $\theta \in \caltheta_{n,t,\tk}^\four$, we have
    $$
    u_n\left(\theta, \hbfthetaHete_{-n}\right) = \sum_{k=1}^k w_kp_{n,k}(\theta) \le \frac{w_{\tk}}{1 + m_{\tk} \cdot \exp\left(-\LambdaSum \cdot t\right)} + \left(\sum_{k \ne \tk}  \frac{w_k\exp(-\ellPartition/t)}{\exp(-\ellPartition/t) + \exp\left(-\LambdaSum \cdot t\right)} \right).
    $$
    It is easy to verify that, when $t \rightarrow 0$, the right-hand side of above equation tends to $w_{\tk} / (m_{\tk} + 1)$. As a result, there must exist a constant $\underline{t} > 0$ small enough, such that, when $t \le \underline{t}$,
    $$
    u_n\left(\theta, \hbfthetaHete_{-n}\right) \le \frac{w_{\tk}}{m_{\tk} + 1} + \frac{z^* - \underline{z}^*}{2}.
    $$
\end{proof}

\begin{lemma} \label{thrm:probability-hete-un-theta4-upper-all}
    There exists a constant $\underline{t} > 0$ such that when $t \le \underline{t}$, it holds that
    $$
    \forall \theta \in \caltheta^\four_{n,t}, \quad u_n\left(\theta, \hbfthetaHete_{-n}\right) \le \frac{z^* + \underline{z}^*}{2}.
    $$
\end{lemma}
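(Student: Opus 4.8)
The plan is to obtain the stated uniform bound by a finite covering argument over the ``competitor'' sources $\tk \ne k_n$, combining \cref{thrm:probability-hete-un-theta4-upper} with the proximity-deviation estimate from \cref{thrm:probability-hete-prox-pne}. First I would observe that, directly from the definitions, $\caltheta^\four_{n,t} = \bigcup_{\tk \in [K]\setminus\{k_n\}} \caltheta^\four_{n,t,\tk}$: any $\theta \in \caltheta^\four_{n,t}$ satisfies $d_M^2(\theta,\theta_k;\Sigma_k^{-1}) \le \ellPartition$ for some $k \ne k_n$, so $\theta \in \caltheta^\four_{n,t,k}$; by \cref{defn:ell-partition} this $k$ is in fact unique, so the union is disjoint, although disjointness is not needed.

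Next, I would fix $\tk \ne k_n$ and apply \cref{thrm:probability-hete-un-theta4-upper}, which yields a threshold $\underline{t}_{n,\tk} > 0$ such that for $t \le \underline{t}_{n,\tk}$ and $\theta \in \caltheta^\four_{n,t,\tk}$,
$$
u_n\left(\theta, \hbfthetaHete_{-n}\right) \le \frac{w_{\tk}}{m_{\tk} + 1} + \frac{z^* - \underline{z}^*}{2}.
$$
The key point is that the leading term is exactly the utility player $n$ would get by deviating to the ground-truth model $\theta_{\tk}$ in the proximity equilibrium: since $\tk \ne k_n$, we have $\theta_{\tk} \ne \hthetaProx_n$, so the second part of \cref{thrm:probability-hete-prox-pne} gives $\frac{w_{\tk}}{m_{\tk}+1} = u_n(\theta_{\tk}, \hbfthetaProx_{-n}) \le \underline{z}^*$ (here $m_{\tk} = \floor{w_{\tk}/z^*}$ by part (1) of that lemma, matching the notation in \cref{thrm:probability-hete-un-theta4-upper}). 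Plugging this in gives
$$
u_n\left(\theta, \hbfthetaHete_{-n}\right) \le \underline{z}^* + \frac{z^* - \underline{z}^*}{2} = \frac{z^* + \underline{z}^*}{2}
$$
for every $\theta \in \caltheta^\four_{n,t,\tk}$, once $t \le \underline{t}_{n,\tk}$.

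Finally I would set $\underline{t} = \min\{\underline{t}_{n,\tk} : n \in [N],\ \tk \in [K]\setminus\{k_n\}\}$, a minimum over finitely many positive numbers, hence positive; for $t \le \underline{t}$ each $\theta \in \caltheta^\four_{n,t}$ lies in some $\caltheta^\four_{n,t,\tk}$, so the displayed bound applies, which proves the lemma. I do not expect a genuine obstacle here --- the argument is essentially bookkeeping, with the only real inputs being \cref{thrm:probability-hete-un-theta4-upper} and the inequality $\frac{w_{\tk}}{m_{\tk}+1} \le \underline{z}^*$ from \cref{thrm:probability-hete-prox-pne}. The single point requiring care is the order of quantifiers: the threshold $\underline{t}$ must be chosen uniformly over both the player index $n$ and the competitor index $\tk$, which is harmless because $[N]$ and $[K]$ are finite.
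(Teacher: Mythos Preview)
Your proposal is correct and follows essentially the same approach as the paper: decompose $\caltheta^\four_{n,t}$ into the pieces $\caltheta^\four_{n,t,\tk}$, apply \cref{thrm:probability-hete-un-theta4-upper} on each piece, bound $w_{\tk}/(m_{\tk}+1) \le \underline{z}^*$ via \cref{thrm:probability-hete-prox-pne}, and take a finite minimum of thresholds. Your version is slightly more careful in making the minimum explicit over both $n$ and $\tk$, whereas the paper only writes $\underline{t} = \min_{\tk \ne k_n}\underline{t}_{\tk}$, but this is a cosmetic difference.
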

\begin{proof}
    For any $\tk \ne k_n$, according to \cref{thrm:probability-hete-prox-pne}, we have that $w_{\tk}/(m_{\tk} + 1) \le \underline{z}^*$. Let $\underline{t}_{\tk}$ be the constant $\underline{t}$ given in \cref{thrm:probability-hete-un-theta4-upper} for different $\tk \ne k_n$. Now the claim follows by letting $\underline{t} = \min\{\underline{t}_{\tk}\}_{\tk \ne k_n}$.
\end{proof}

\begin{proof}[Proof of \cref{thrm:probability-hete-caltheta-4}]
    According to \cref{thrm:probability-hete-prox-pne}, we have that $w_{k_n}/m_{k_n} \ge z^*$. Therefore, according to \cref{thrm:probability-hete-un-lower}, it holds that when $t\rightarrow 0$,
    $$
    u_n\left(\hbfthetaHete\right) \ge \frac{w_{k_n}}{m_{k_n}} \ge z^* > \frac{z^* + \underline{z}^*}{2}.
    $$
    Now the claim follows directly by combining the above equation and \cref{thrm:probability-hete-un-theta4-upper-all}.
\end{proof}

\section{Important Lemmas}
We need the following variants of Farkas's Lemma~\citep{perng2017class}.

\begin{lemma}[Gordan's Theorem~\citep{mangasarian1994nonlinear}] \label{lemma:stiemke}
    For each given matrix $A$, exactly one of the following is true.
    \begin{enumerate}
        \item There exists a vector $x$ such that $Ax > 0$.
        \item There exists a vector $y \ge 0$ and $y \ne 0$ such that $A^\top y = 0$. 
    \end{enumerate}
\end{lemma}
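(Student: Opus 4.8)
The plan is to prove Gordan's theorem as a theorem of the alternative, combining an elementary contradiction argument for mutual exclusivity with a compactness-plus-separation argument for exhaustiveness. Write $A$ as an $m\times n$ matrix with rows $a_1^\top,\dots,a_m^\top$. First I would check that the two alternatives cannot hold simultaneously: if $Ax>0$ and $y\ge 0$, $y\ne 0$ with $A^\top y=0$, then $0 = x^\top(A^\top y) = \sum_{i=1}^m y_i\,(Ax)_i$, which is strictly positive because every summand is nonnegative and at least one is strictly positive --- a contradiction.

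For the direction that at least one alternative holds, I would pass to the probability simplex $\Delta=\{y\in\mathbb{R}^m: y\ge 0,\ \mathbf 1^\top y = 1\}$ and consider $K=\{A^\top y: y\in\Delta\}\subseteq\mathbb{R}^n$. Since $\Delta$ is compact and convex and $y\mapsto A^\top y$ is linear, $K$ is compact and convex. If $0\in K$, pick $y\in\Delta$ with $A^\top y=0$; then $y\ge 0$ and $\mathbf 1^\top y=1$ force $y\ne 0$, so alternative~2 holds. If $0\notin K$, I would apply the strict separating hyperplane theorem for a point and a disjoint compact convex set: there exist $x\in\mathbb{R}^n$ and $\alpha>0$ with $x^\top z\ge\alpha$ for all $z\in K$. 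Evaluating at $z=A^\top e_i$ --- the standard basis vector $e_i$ lies in $\Delta$ --- gives $(Ax)_i = x^\top A^\top e_i \ge \alpha > 0$ for every $i$, i.e.\ $Ax>0$, so alternative~1 holds.

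The hard part --- really the only non-bookkeeping point --- will be the appeal to \emph{strict} separation, which is exactly what forces the choice of the simplex $\Delta$ rather than the cone $\{A^\top y: y\ge 0\}$: the cone is closed but typically unbounded, and separating a point from it yields only the weak inequality $x^\top z\ge 0$, which is insufficient to conclude $Ax>0$. Working with the compact set $K$ upgrades this to $x^\top z\ge\alpha>0$, after which substituting the vertices $e_i$ of the simplex finishes the argument immediately. An alternative route is to deduce the statement from Farkas's lemma or from LP duality, but the compactness argument is the most self-contained.
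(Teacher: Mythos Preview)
Your proof is correct. The paper does not actually prove this lemma; it simply states it as a known result and cites \citet{mangasarian1994nonlinear}, so there is no ``paper's own proof'' to compare against. Your argument---mutual exclusivity by the inner-product contradiction, then exhaustiveness by mapping the simplex through $A^\top$ and invoking strict separation of $0$ from the resulting compact convex set---is the standard textbook route and is fully sound. Your remark about why the simplex rather than the cone is needed (to upgrade weak separation to strict, yielding $(Ax)_i \ge \alpha > 0$) is exactly the right diagnosis of the key step.
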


\end{document}